\newcommand{\R}{\mathbb{R}}
\newcommand{\pseudo}{\widetilde{\mathbb{E}}}
\newcommand{\inj}{\hookrightarrow}
\newcommand{\nb}[2]{A^{(#1)}_{#2}}
\newcommand{\sa}[2]{A^{\langle #1 \rangle}_{#2}}
\newcommand{\lam}{2\sqrt{d-1}}
\newcommand{\mkm}{\mu_{\textsc{km}}}
\newcommand{\km}{\textsc{km}}
\newcommand{\cc}{\textsf{cc}}
\newcommand{\tr}{\mathrm{Tr}}
\newcommand{\eps}{\varepsilon}
\newcommand{\cov}{\mathrm{Cov}}
\newcommand{\F}{\mathrm{F}}
\newcommand{\MC}{\mathrm{MC}}
\newcommand{\id}{I}
\newcommand{\onesmat}{J}
\newcommand{\onesvec}{\One}
\newcommand{\indicator}[1]{\bm{1}\left[#1\right]}
\title{Spectral Planting and the Hardness of Refuting Cuts, Colorability, and Communities in Random Graphs}
\author[1]{Afonso S.\ Bandeira\thanks{Email: \textit{bandeira@math.ethz.ch}. Some of this work was done while with the Department of Mathematics at the Courant Institute of Mathematical Sciences, and the Center for Data Science, at New York University; and partially supported by NSF grants DMS-1712730 and DMS-1719545, and by a grant from the Sloan Foundation.}}
\author[2]{Jess Banks\thanks{Email: \textit{jess.m.banks@berkeley.edu}. Supported by the NSF Graduate Research Fellowship Program under Grant DGE-1752814.}}
\author[3]{Dmitriy Kunisky\thanks{Email: \textit{kunisky@cims.nyu.edu}. Partially supported by NSF grants DMS-1712730 and DMS-1719545.}}
\author[4]{Cristopher Moore\thanks{Email: \textit{moore@santafe.edu}. Partially supported by NSF grant IIS-1838251.}}
\author[3]{Alexander S.\ Wein\thanks{Email: \textit{awein@cims.nyu.edu}. Partially supported by NSF grant DMS-1712730 and by the Simons Collaboration on Algorithms and Geometry.}}
\affil[1]{Department of Mathematics, ETH Zurich}
\affil[2]{Department of Mathematics, UC Berkeley}
\affil[3]{Department of Mathematics, Courant Institute of Mathematical Sciences, NYU}
\affil[4]{Santa Fe Institute}
\begin{document}

\maketitle

\thispagestyle{empty}

\begin{abstract}
We study the problem of efficiently \emph{refuting} the $k$-colorability of a graph, or equivalently \emph{certifying} a lower bound on its chromatic number. We give formal evidence of average-case computational hardness for this problem in sparse random regular graphs, showing optimality of a simple spectral certificate. This evidence takes the form of a \emph{computationally-quiet planting}: we construct a distribution of $d$-regular graphs that has significantly smaller chromatic number than a typical regular graph drawn uniformly at random, while providing evidence that these two distributions are indistinguishable by a large class of algorithms. We generalize our results to the more general problem of certifying an upper bound on the maximum $k$-cut. 

This quiet planting is achieved by minimizing the effect of the planted structure (e.g.\ colorings or cuts) on the graph spectrum. Specifically, the planted structure corresponds exactly to eigenvectors of the adjacency matrix. This avoids the pushout effect of random matrix theory, and delays the point at which the planting becomes visible in the spectrum or local statistics. To illustrate this further, we give similar results for a Gaussian analogue of this problem: a quiet version of the spiked model, where we plant an eigenspace rather than adding a generic low-rank perturbation.

Our evidence for computational hardness of distinguishing two distributions is based on three different heuristics: stability of belief propagation, the local statistics hierarchy, and the low-degree likelihood ratio. Of independent interest, our results include general-purpose bounds on the low-degree likelihood ratio for multi-spiked matrix models, and an improved low-degree analysis of the stochastic block model. 
\end{abstract}

\newpage

\section{Introduction}

Assuming the widely believed $\texttt{P}\neq \texttt{NP}$ hypothesis, many combinatorial problems in graphs are known to be computationally hard. Prominent examples from graph theory and network science include finding large cliques or independent sets, clustering or maximizing cuts, and finding vertex colorings or computing chromatic numbers. Fortunately, the worst-case computational difficulty of many of these problems appears to not be predictive of their feasibility in typical graphs, motivating the study of forms of \emph{average-case complexity} for many of these problems. Many remarkable examples exist dating back to at least the work of Karp and others in the mid 70s~\cite{Karp76},\footnote{See also Karp's lecture on the occasion of his Turing Award~\cite{KarpTuring}.} they include the problem of vertex colorings~\cite{Grimmett_McDiarmid-75} in random graphs, the related problem of finding the largest independent set\footnote{Recall that the set of nodes of the same color in a vertex coloring is an independent set.} (or, equivalently, clique) and many others~\cite{Karp76}. In both of these problems, for certain natural distributions of random graphs, a multiplicative gap of 2 is identified between the typical optimal solution and the solution found by the best known polynomial-time algorithm~\cite{KarpTuring}, and improving over this has since been a standing open problem. Motivated by this question, Kucera~\cite{Kucera_95} and Alon et al.~\cite{Alon_HiddenClique_98} studied random graph models with planted structures (either a large independent set or clique, or a coloring with an unusually small number of colors) and investigate when such structures are easy to detect. Foreshadowing to what follows, we point out that the existence of a planted structure that cannot be detected efficiently implies that that it is impossible to efficiently refute the existence of such a structure in the underlying unplanted model. In this paper we will focus on the problem of computing the chromatic number, and the related problem of understanding the size of the largest $k$-cut. The random graph model throughout is the uniform distribution over $d$-regular graphs on $n$ nodes.

\paragraph{Refuting colorability.}
For an integer $k \ge 1$, a graph $G = (V,E)$ is \emph{$k$-colorable} if there exists an assignment $\sigma \colon V \to [k]$ of ``colors'' to the vertices such that $\sigma(i) \ne \sigma(j)$ for every edge $(i,j) \in E$. The \emph{chromatic number} $\chi(G)$ of $G$ is defined as the minimum value of $k$ for which $G$ is $k$-colorable.

A random $d$-regular graph $\bG$ on $n$ vertices (we will write random variables in bold-face font throughout the paper) is known to have a typical chromatic number $\chi(\bG) \sim \frac{1}{2}\frac{d}{\log d}$ (see, e.g., \cite{COEH-2016-ChromaticNumberRandomRegular} and references therein) in the double limit $n \to \infty$ followed by $d \to \infty$, where $f \sim g$ denotes $f/g \to 1$.
The problem we will study is that of algorithmically \emph{refuting} the $k$-colorability of a graph, which we define in Section~\ref{sec:prelim}.
Informally speaking, an algorithm that refutes $k$-colorability provides an efficiently-verifiable \emph{proof} that $G$ is not $k$-colorable.
As a simple example, an algorithm exhibiting a $(k + 1)$-clique refutes $k$-colorability.
More generally, one may encode a $k$-coloring as a collection of boolean variables satisfying certain logical relations depending on $G$, and refute coloring by deriving a contradiction from those axioms.

We will provide evidence that the refutation problem is computationally hard when $\bG$ is a uniformly random $d$-regular graph. The proof strategy is to construct a different distribution over $d$-regular graphs whose typical chromatic number is $\chi(\bG)\sim \frac{1}{2}\sqrt{d}$, and to argue that this distribution is computationally hard to distinguish from a uniformly random $d$-regular graph (whose chromatic number is instead $\chi(\bG) \sim \frac12\frac{d}{\log d}$)---we think of this new distribution as having a \emph{computationally-quiet planting}\footnote{This terminology is inspired by the notion of \emph{quiet planting} from prior work~\cite{quiet-1,quiet-2}, although our notion is somewhat different.} of a coloring with few colors. We will see below that the value $\frac{1}{2}\sqrt{d}$ coincides with a simple spectral bound on $\chi(\bG)$ for $\bG$ uniformly random, and so our result is a tight computational lower bound on refutation algorithms.
The formal evidence for computational hardness used in this paper is threefold: we provide consistent pieces of evidence based on (i) the Kesten--Stigum threshold for the belief propagation message-passing algorithm, (ii) the Local Statistics semidefinite programming hierarchy, and (iii) the low-degree likelihood ratio for an analogous Gaussian model.

\paragraph{The spectral refutation.}
Instead of just refuting $k$-colorability, we will consider the more general task of refuting the existence of large \emph{$k$-cuts} in a graph.
We define the fractional size of the largest such cut as
\begin{equation}
    \MC_k(G) \colonequals \max_{\sigma : V \to [k]} \frac{|\{(u, v) \in E: \sigma(u) \neq \sigma(v)\}|}{|E|} \in (0, 1].
    \label{eq:MC-def}
\end{equation}
Intuitively, $\MC_k(G)$ describes how close $G$ is to being $k$-colorable, as the cut counts the fraction of polychromatic edges under the coloring $\sigma$.
$\MC_k(G)$ is non-decreasing in $k$, and for any $k$, $G$ is $k$-colorable if and only if $\MC_k(G) = 1$; therefore, the chromatic number is given by
\begin{equation}
    \label{eq:chi-MC-relations}
    \chi(G) = \min\{k: \MC_k(G) = 1\} = 1 + \max\{k: \MC_k(G) < 1\}.
\end{equation}

\noindent Accordingly, upper bounds on $\MC_k$ away from the maximum value of 1 yield lower bounds on the chromatic number.
This task is often called \emph{certifying} a bound on the optimization problem $\MC_k$.
The relations \eqref{eq:chi-MC-relations} show how certifying such a bound in turn refutes colorability. In fact, Hoffman's early work \cite{Hoffman-1970-Eigenvalues} proposed a technique for any $d$-regular graph $G$ that gives, when rephrased in our notation, a refutation of colorability by bounding $\MC_k$ via the minimum eigenvalue $\lambda_{\min}(A_{G})$ of the adjacency matrix $A_{G}$ of $G$.
Namely, Hoffman showed that for any $d$-regular graph $G$,
\begin{equation}
    \MC_k(G) \leq \frac{k - 1}{k}\left(1 + \frac{-\lambda_{\min}(A_{G})}{d}\right).
    \label{eq:MC-spectral-bound}
\end{equation}
(Note that $\lambda_{\min}(A_{G}) \le 0$ since $\Tr(A_{G}) = 0$.) A short proof of~\eqref{eq:MC-spectral-bound} will be given in Section~\ref{sec:prelim}.
We note that $\frac{k - 1}{k}$ is the expected value of the objective of \eqref{eq:MC-def} when $\bm \sigma \colon V \to [k]$ is chosen uniformly at random, so expressions like the right-hand side of \eqref{eq:MC-spectral-bound} should be viewed as expressing a factor of ``gain'' over this value. For $\bG$ a uniformly \emph{random} $d$-regular graph on $n$ vertices, whose law we denote $\mathcal{G}_{n, d}$, a theorem due to Friedman
\cite{Friedman-2003-SecondEigenvalue} states that $\lambda_{\min}(\bG) = -2\sqrt{d-1} + o(1)$ with high probability\footnote{Here and throughout, $o(1)$ pertains to the limit $n \to \infty$ with $d$ fixed. We say that an event occurs \emph{with high probability (w.h.p.)}\ if it has probability $1-o(1)$.} for any fixed $d \ge 3$.
This implies that when $\bG \sim \sG_{n, d}$, Hoffman's spectral approach with high probability certifies the upper bound
\begin{equation}\label{eq:MC-spectral-bound-value}
    \MC_k(\bG) \leq \frac{k - 1}{k}\left(1 + \frac{2\sqrt{d - 1}}{d} + o(1)\right).
\end{equation}
This in turn translates to a lower bound on the chromatic number of $\chi(\bG) \ge (1 - o_d(1)) \frac{1}{2}\sqrt{d}$, where $o_d$ pertains to the double limit $n \to \infty$ followed by $d \to \infty$ (see Section~\ref{sec:notation}).

Equipped with this direct analysis of a simple technique, the natural question arises: can any polynomial-time algorithm produce a bound that, like Hoffman's, is valid for any graph $G$, but is typically \emph{tighter} for $\bG \sim \sG_{n, d}$? 
Most prior work on this question has studied bounds provided by a semidefinite program computing the \emph{Lov\'{a}sz $\vartheta$ function}, which, as shown by~\cite{BKM-2017-LovaszThetaRandomGraphs}, is equivalent to the degree-2 sum-of-squares relaxation of $\MC_k(G)$.
The work \cite{CO-2003-LovaszThetaRandomGraphs} and a later, more precise analysis by \cite{BKM-2017-LovaszThetaRandomGraphs} showed that, with high probability when $\bG \sim \mathcal{G}_{n, d}$, this relaxation certifies a bound still no better than
\begin{equation}
    \MC_k(\bG) \leq \frac{k - 1}{k}\left(1 + \frac{2\sqrt{d - 1}}{d + 2\sqrt{d - 1}} - o(1)\right) = \frac{k - 1}{k}\left(1 + \frac{2\sqrt{d - 1}}{d} - O_d\left(\frac{1}{d}\right)\right).
    \label{eq:MC-bound-sos2}
\end{equation}
Note that~\eqref{eq:MC-bound-sos2} matches the spectral bound~\eqref{eq:MC-spectral-bound-value} in the double limit $n \to \infty$ followed by $d \to \infty$. No polynomial-time certifier is known to asymptotically improve upon this bound. It appears plausible, then, that the spectral bound \eqref{eq:MC-spectral-bound-value} is an optimal efficiently-computable certificate on $\MC_k(\bG)$ for $\bG \sim \sG_{n, d}$.
In this work, we will argue that this is indeed the case.

\paragraph{Detecting a planted $k$-cut.}

The following line of reasoning will be central to this work: to prove computational hardness of a refutation problem, it is sufficient to construct a \emph{computationally-quiet planted distribution} (e.g.\ \cite{BKW-2019-ConstrainedPCA}). To illustrate the meaning of this, suppose our goal is to show computational hardness of refuting (w.h.p.)\ $k$-colorability of a graph drawn from $\sG_{n,d}$. Suppose we are able to construct a \emph{planted} distribution $\mathcal{P}$ over $d$-regular graphs such that (i) a typical graph drawn from $\mathcal{P}$ is $k$-colorable, and (ii) $\mathcal{P}$ is \emph{computationally quiet} in the sense that no polynomial-time algorithm can distinguish (w.h.p.)\ between a sample from $\mathcal{P}$ and a sample from $\sG_{n,d}$. It then follows that no polynomial-time algorithm can refute $k$-colorability in $\sG_{n,d}$, because if such a refutation algorithm were to exist, it must succeed w.h.p.\ on $\sG_{n,d}$ and must fail w.h.p.\ on $\mathcal{P}$, thus providing a solution to the distinguishing problem. More generally, to show hardness of certifying an upper bound on the maximum $k$-cut, we need a planted distribution for which there exists a large $k$-cut.

As discussed in \cite{BKM-2017-LovaszThetaRandomGraphs}, a natural planted distribution with a large $k$-cut is a $d$-regular variant of the popular \textit{stochastic block model (SBM)}. To sample a graph from this $d$-regular distribution, which we will denote $\mathcal{G}^{\mathrm{sbm}}_{n, d, k, \eta}$, first sample a balanced labelling $\bsigma : [n] \to [k]$ uniformly at random (\emph{balanced} means that $|\bsigma^{-1}(i)| = n/k$ for every $i \in [k]$), and then choose $\bG$ uniformly among $d$-regular graphs conditional on the event\footnote{For now, we will not consider the integrality conditions on the parameters $(n, d, k, \eta)$ that this implies, although these types of considerations will be important later.}
\begin{equation}
    \frac{|\{(u,v) \in E: \bsigma(u) \neq \bsigma(v)\}|}{|E|} = \frac{k-1}{k}(1 - \eta).
\end{equation}
Note that this ensures $\MC_k(\bG) \geq \tfrac{k-1}{k}(1 - \eta)$, since the planted partition $\bsigma$ witnesses a $k$-cut with that fraction of bichromatic edges. We will mostly be concerned with the \textit{disassortative} regime of this model where $\eta \in [-\tfrac{1}{k-1},0]$, and the planted $k$-cut $\bsigma$ is larger than a typical one; for instance, when $\eta = -\tfrac{1}{k-1}$ and the planted cut includes every edge, we have the well-studied planted coloring model.

The relevance of this distribution to certifying bounds on $\MC_k$ for $\bG \sim \sG_{n, d}$ is as follows: if an algorithm can with high probability over $\sG_{n,d}$ certify that $\MC_k(\bG) < \tfrac{k-1}{k}(1 - \eta)$, then (as discussed above) it is simple to build another \emph{testing} (or \emph{detection}) algorithm that distinguishes between $\bG \sim \sG_{n, d}$ and $\bG \sim \sG^{\mathrm{sbm}}_{n, d, k, \eta}$ with high probability. The advantage of taking this point of view is that there is a rich literature, originating in heuristic methods from statistical physics, that has provided a great deal of evidence that polynomial-time testing between $\sG_{n, d}$ and $\sG^{\mathrm{sbm}}_{n, d, k, \eta}$ is impossible below the \emph{Kesten--Stigum threshold}, i.e., when
\begin{equation}
    d < d_{\mathrm{KS}}^{\,\mathrm{sbm}} = d_{\mathrm{KS}}^{\,\mathrm{sbm}}(\eta) \colonequals \frac{1}{\eta^2} + 1.
    \label{eq:KS-sbm}
\end{equation}
We will refer to this claim as the \emph{SBM conjecture}. Polynomial-time algorithms are known to succeed when $d > d_{\mathrm{KS}}^{\,\mathrm{sbm}}$ \cite{massoulie,mns,AS-acyclic,banks2019local}. While proving such a conjecture seems to be beyond the reach of current techniques (even under an assumption such as $\texttt{P}\neq \texttt{NP}$), various forms of concrete evidence have been given (either for the $d$-regular SBM and other variants). These include results on stability of belief propagation~\cite{sbm-ks-1,sbm-ks-2}, the local statistics hierarchy~\cite{banks2019local}, and the low-degree likelihood ratio~\cite{HS-bayesian,sam-thesis}. We will discuss all of these methods further in Section~\ref{sec:heuristics}.

Rearranging \eqref{eq:KS-sbm} we find that, conditional on the SBM conjecture, the above argument implies that when $\bG \sim \sG_{n, d}$, no polynomial-time algorithm can certify with high probability a bound stronger than
\begin{equation}
    \MC_k(G) \leq \frac{k - 1}{k}\left(1 + \frac{1}{\sqrt{d - 1}} - o(1)\right).
    \label{eq:MC-bound-KS}
\end{equation}

\noindent Comparing this to \eqref{eq:MC-spectral-bound-value} and \eqref{eq:MC-bound-sos2}, we see a discrepancy between the best known certification algorithms and the above hardness result. For large $d$, this discrepancy amounts to a factor of $2$ in the ``gain'' term. This begs the question of whether better certification algorithms exist, or whether the hardness result can be improved. We will see below that it is the latter.

\paragraph{A quieter planting.}

Our main contribution is to show an improved hardness result by using a ``better'' planted distribution. The superior planted distribution is the following more rigid version of the SBM.

\begin{definition}[Equitable stochastic block model]
\label{def:eSBM}
    The \emph{equitable stochastic block model (eSBM)}, denoted $\sG_{n, d, k, \eta}^{\,\mathrm{eq}}$, is the probability distribution over $d$-regular graphs on $n$ vertices sampled as follows: first, choose a uniformly random balanced partition $\bsigma: V \to [k]$. Then, letting
    $$
        M = \eta \id_k + \tfrac{1 - \eta}{k}\onesmat_k
    $$
    (where $\id_k$  and $\onesmat_k$ are the $k\times k$ identity and all-ones matrices, respectively), for each $i \in [k]$ place a random $dM_{i,i}$-regular graph on the color class $\bsigma^{-1}(i)$, and for each $i<j \in [k]$ place a random bipartite $dM_{i,j}$-regular graph between $\bsigma^{-1}(i)$ and $\bsigma^{-1}(j)$. This model is only defined when $k|n$, $dM$ is a nonnegative integer matrix, and $dM_{i,i}n/k$ is even\footnote{This last condition ensures that it is possible to place a $dM_{i,i}$-regular graph on $n/k$ vertices.} for all $i$.
\end{definition}

\noindent As in $\sG^{\text{sbm}}_{n,d,k,\eta}$, the planted cut $\bsigma$ has fractional size $\tfrac{k-1}{k}(1 - \eta)$, and we will again restrict to the disassortative case $\eta \in [-\frac{1}{k-1},0]$. This model is discussed in~\cite{BDGHT-2016-RecoveryRigidityRegularSBM}, where it is alternatively called the ``regular block model.'' We instead follow~\cite{newman-martin,barucca} in using the term ``equitable,'' so as to differentiate it from the standard $d$-regular variant of the SBM discussed in the previous section.

Using some of the same methods that provide concrete evidence for the SBM conjecture---namely stability of belief propagation and the local statistics hierarchy---we will show that the equitable SBM appears to exhibit a different computational threshold from the standard SBM. Specifically, we conjecture that no polynomial-time algorithm can distinguish (w.h.p.)\ between $\sG_{n,d}$ and $\sG_{n, d, k, \eta}^{\,\mathrm{eq}}$ when
\begin{equation}\label{eq:KS-eq}
    d < d_{\mathrm{KS}}^{\,\mathrm{eq}} = d_{\mathrm{KS}}^{\,\mathrm{eq}}(\eta) \colonequals \frac{2}{\eta^2}\left(1 + \sqrt{1 - \eta^2}\right).
\end{equation}

\noindent We will state a formal version of this ``eSBM conjecture'' later (Conjecture~\ref{conj:esbm}), which actually pertains to a slightly ``noisy'' version of the equitable block model. We remark that when $k$ is large, $\eta$ must be close to zero (in the disassortative case), and so we have approximately $d^{\,\mathrm{eq}}_{\mathrm{KS}} \approx 4d^{\,\mathrm{sbm}}_{\mathrm{KS}}$.

Repeating our earlier argument for hardness of certification, with the eSBM in place of the SBM, yields the following corresponding result: conditional on the eSBM conjecture, no polynomial-time algorithm can certify a better bound than
\begin{equation} \label{eq:spectral-bound-aftereSBM}
    \MC_k(\bG) \leq \frac{k - 1}{k}\left(1 + \frac{2\sqrt{d - 1}}{d} - o(1)\right)
\end{equation}
when $\bG \sim \sG_{n,d}$, which matches the spectral bound~\eqref{eq:MC-spectral-bound-value}. However, we have ignored an important caveat here: the equitable block model only exists when the parameters $(n,d,k,\eta)$ satisfy certain integrality conditions. For this reason, our actual lower bound (Theorem~\ref{thm:graph-cert-hardness}) is sometimes weaker than~\eqref{eq:spectral-bound-aftereSBM} would suggest; see Section~\ref{sec:integrality} for discussion. When $d \gg k^2$, the integrality conditions are negligible and we obtain a tight lower bound, essentially matching~\eqref{eq:spectral-bound-aftereSBM}. Another setting where we obtain tight results is for the problem of refuting colorability (or more accurately, near-colorability) in the double limit $n \to \infty$ followed by $d \to \infty$, which is discussed in Remarks~\ref{rem:near-coloring} and~\ref{rem:exact-coloring}. This corresponds to the choice $\eta = -\frac{1}{k-1}$. Here, when $\bG \sim \sG_{n,d}$, the following results hold asymptotically: the true value of $\chi(\bG)$ is $\frac{1}{2} \frac{d}{\log d}$, the spectral approach certifies a lower bound of $\frac{1}{2} \sqrt{d}$ on $\chi(\bG)$, the basic SBM planting implies hardness of certifying a lower bound better than $\sqrt{d}$, and the improved eSBM planting implies hardness of certifying a lower bound better than $\frac{1}{2} \sqrt{d}$ (which is tight, matching the spectral bound).

\paragraph{Why is the equitable SBM quieter?}

Here, we give some intuition for why the equitable model is a good quiet planting. For the sake of illustration, it helps to consider the simple rank-1 Wigner spiked matrix model: $\bY = \eta \bv\bv^\top + \bW$ where $\eta > 0$ (the signal-to-noise ratio), $\|\bv\| = 1$ (the planted ``signal,'' drawn from some prior), and $\bW$ (the ``noise'') is a GOE matrix, i.e., a symmetric matrix with $\mathcal{N}(0,1/n)$ entries (see Definition~\ref{def:goe}). For large $n$, the eigenvalues of $\bW$ follow the semicircle law and are contained in the interval $[-2,2]$. For $1 < \eta < 2$, a surprising ``pushout'' effect occurs: although the planted signal $\bv$ has quadratic form $\bv^\top \bY \bv \approx \eta < 2$, its presence causes there to exist some other unit vector $\bu$ achieving $\bu^\top \bY \bu \approx \eta + 1/\eta > 2$ \cite{bbp,fp,CDF-wigner}; as a result, simply checking the largest eigenvalue allows one to distinguish $\bY$ from $\bW$. Even though the signal $\bv$ is ``small'', the vector $\bu$ (which is the leading eigenvector of $\bY$) is able to achieve a ``large'' quadratic form by correlating nontrivially with both the signal $\bv$ and the noise $\bW$. The main result of~\cite{BKW-2019-ConstrainedPCA} can be interpreted as giving a ``quieter'' way to plant the signal ``orthogonal to the noise'' with no pushout effect, i.e., $\bv^\top \bY \bv \approx 2$ and the maximum eigenvalue of $\bY$ is $\approx 2$.

It turns out that a similar effect is at play in the SBM. Here the minimum eigenvalue of the adjacency matrix $\bA$ of a random $d$-regular graph converges to $\lambda_{\min} \colonequals - 2\sqrt{d-1}$. As will be made clear in Section~\ref{sec:prelim}, the SBM is in some sense analogous to the spiked Wigner model with multiple planted vectors, namely $k$ ``coloring vectors'' $\bv_i$ for $i \in [k]$, that encode the planted labelling $\bsigma$ as follows: $(\bv_i)_u = c(k \cdot \indicator{\bsigma(u) = i} - 1)$, where $c$ is chosen so that each $\|\bv_i\| = 1$. Planting a $k$-cut of value $\tfrac{k-1}{k}(1+|\eta|)$ via either the SBM or eSBM has the effect that all planted coloring vectors achieve a small quadratic form: $\bv_i^\top \bA \bv_i \approx -d |\eta|$. In the SBM there is a pushout effect similar to the spiked Wigner model, whereby an eigenvalue less than $\lambda_{\min}$ can be created even when $d|\eta| < |\lambda_{\min}|$ (see~\cite{Nadakuditi_12} for results when the degree grows slowly with $n$). The eSBM, however, has the property that each coloring vector $\bv_i$ is an eigenvector. In particular, the subspace spanned by $\{\bv_i\}_{i\in[k]}$ is orthogonal to the other eigenvectors, which are thus unaffected by the planted structure. As a result, there is no pushout effect in the eSBM, allowing for a larger $k$-cut to be planted without disrupting the minimum eigenvalue.

An alternative viewpoint is that the standard plantings (the spiked Wigner model or SBM) pick a random solution (e.g., a cut) and then condition on that \textit{particular} solution having the desired value. In contrast, the quieter plantings are perhaps more similar to conditioning on the event ``there exists a solution having the desired value.''

\begin{remark}
    \label{rem:conj-esbm-noise}
    The fact that the coloring vectors $\bv_i$ are eigenvectors of the eSBM can actually be exploited to give a polynomial-time algorithm for distinguishing $\sG_{n,d}$ from $\sG_{n, d, k, \eta}^{\,\mathrm{eq}}$ for \emph{any} settings of the parameters. See, for example, \cite{barucca} for some discussion of such algorithms. For this reason, it is crucial that our eSBM conjecture (Conjecture~\ref{conj:esbm}) adds a small amount of noise to the graph in order to ``defeat'' these types of algorithms. We discuss this issue further in Section~\ref{sec:xor}.
\end{remark}

\paragraph{The case $k = 2$: large cuts in $\sG_{n, d}$.}

Here, we briefly discuss the specific case $k = 2$, which is better understood in the existing literature.
In this case, $\MC_2(G)$ is merely the fraction of edges crossing the largest cut of $G$, and thus up to this scaling is the solution to the well-known \emph{max-cut} problem. Equivalently, letting $A_G$ be the adjacency matrix of $G$, we have $\MC_2(G) = \frac{1}{2}(1 + \frac{1}{2 |E|} \Gamma_2(A_G))$ where
\[ 
    \Gamma_2(A) \colonequals -\min_{x \in \{\pm 1\}^n} x^\top A x \ge 0. 
\]
When $\bG \sim \sG_{n,d}$, the behavior of $\Gamma_2(A_{\bG})$ turns out to be deeply connected to its Gaussian analogue $\Gamma_2(\bW)$ where $\bW$ is a GOE matrix. The quantity $\Gamma_2(\bW)$ has been studied in statistical physics, being the ground state energy of the \emph{Sherrington-Kirkpatrick model} of spin glasses \cite{SK-1975-SolvableModel}.
The deep but non-rigorous analysis of Parisi \cite{Parisi-SK} proposed an asymptotic value $\frac{1}{n}\EE\,\Gamma_2(\bW) \to 2P_{*} \approx 1.526$, which was later proven rigorously in a sequence of mathematical works \cite{Guerra-2003-BrokenRSB,Talagrand-Parisi,Panchenko-ultra,Panchenko-SK}. By relating the graph model to the Gaussian model, \cite{DMS-2017-CutsSparseRandomGraphs} gave an asymptotic formula for the size of the largest cut in a random regular graph with large degree:
\begin{equation}
    \label{eq:max-cut-val}
    \lim_{n \to \infty}\Ex_{\bG \sim \sG_{n, d}}\MC_2(\bG) = \frac{1}{2}\left(1 + \frac{2P_{*}}{\sqrt{d}} + o_d\left(\frac{1}{\sqrt{d}}\right)\right).
\end{equation}
For the Gaussian setting, it was shown in~\cite{BKW-2019-ConstrainedPCA} using a quiet planting approach that (conditional on a certain complexity assumption based on the low-degree likelihood ratio) the best possible upper bound on $\frac{1}{n} \Gamma_2(\bW)$ that can be certified in polynomial time is $2$, in contrast with the true value $2P_*$; the optimal bound is given by a simple spectral certificate involving the maximum eigenvalue of $\bW$. Given this, we might expect that the best efficiently-certifiable bound on $\MC_2(\bG)$ for $\bG \sim \sG_{n, d}$ is given by replacing $2P_*$ in~\eqref{eq:max-cut-val} with 2.
Indeed, \cite{MS-2016-SDPSparseRandomGraphs} and \cite{MRX-2019-SOS4} showed respectively that the degree-2 and degree-4 sum-of-squares relaxations can certify a bound no better than
\begin{equation}
    \label{eq:k2-best-known-bound}
    \MC_2(\bG) \leq \frac{1}{2}\left(1 + \frac{2}{\sqrt{d}} + o_d\left(\frac{1}{\sqrt{d}}\right)\right).
\end{equation}

\noindent Our results extend the picture emerging from this literature in two important ways.
We show (see Theorem~\ref{thm:graph-cert-hardness} and discussion in Section~\ref{sec:integrality}) that conditional on the eSBM conjecture, \eqref{eq:k2-best-known-bound} is in fact the optimal bound on $\MC_2(\bG)$ certifiable in polynomial time.
Thus, if the eSBM conjecture holds then no constant-degree sum-of-squares relaxation can improve upon on the known results for degree-2 and degree-4. Furthermore, we justify this bound with an explicit quiet planting of a large cut in a random regular graph.

\paragraph{The Gaussian $k$-cut model.}

Above, we have seen an intimate connection between $\Gamma_2(A_{\bG})$ where $A_{\bG}$ is the adjacency matrix of $\bG \sim \sG_{n, d}$, and its Gaussian counterpart $\Gamma_2(\bW)$ where $\bW \sim \mathrm{GOE}(n)$. More generally, $\MC_k(\bG)$ can be written in terms of a certain quantity $\Gamma_k(A_{\bG})$ defined in Section~\ref{sec:prelim}, which has a natural Gaussian counterpart $\Gamma_k(\bW)$.

In fact, explicit formulas similar to \eqref{eq:max-cut-val} are also known that relate the asymptotic values (in the double limit $n \to \infty$ followed by $d \to \infty$) of $\Gamma_k(A_{\bG})$ and $\Gamma_k(\bW)$ even for $k > 2$ \cite{Sen-2018-SparseRandomHypergraphs,JKS-2018-kCutPotts}.
The broadly applicable techniques used in these results---the Lindeberg exchange method and related probabilistic interpolation arguments---suggest that there is a general and fundamental relationship between the graph model in the large-degree limit and the Gaussian model. Yet, it is not clear whether this implies any relation between the respective thresholds for efficient certification.

To clarify this matter, we also give results showing that the problem of certifying upper bounds on $\Gamma_k(\bW)$ under the Gaussian model exhibits similar behavior to the graph model: no polynomial-time certifier can improve over the basic spectral bound. The proof is again based on quiet planting, and can be seen as an extension of the results of~\cite{BKW-2019-ConstrainedPCA} which handled the $k=2$ case. Our results rely on a complexity assumption concerning the low-degree likelihood ratio, which we discuss further in Section~\ref{sec:low-deg-intro}. As a by-product, we develop a general framework for bounding the low-degree likelihood ratio of certain Gaussian models, which may be of independent interest. Specifically, we conduct the low-degree analysis of a broad class of multi-spiked matrix models (both Wigner and Wishart), and give an improved low-degree analysis of the stochastic block model which suggests that fully-exponential time is needed below the Kesten--Stigum threshold.

\subsection{Heuristics for Average-Case Computational Hardness}
\label{sec:heuristics}

Our results on hardness of certification rely on unproven conjectures about average-case hardness, such as the eSBM conjecture. Proving these types of conjectures seems to be beyond the reach of current techniques (even when assuming standard complexity conjectures such as $\texttt{P} \ne \texttt{NP}$), as illustrated by the fact that no such proof is known for the famous \emph{planted clique} problem. However, a myriad of heuristic techniques have emerged for predicting hardness of average-case problems by proving lower bounds against certain classes of algorithms. Taken together, these methods create a fairly coherent theory of computational complexity for a large class of high-dimensional Bayesian inference problems. In this section we describe the three methods that will be used in this work: belief propagation, the local statistics hierarchy, and the low-degree likelihood ratio. We remark that these are not the only such methods, some others being average-case reductions~\cite{BR-reduction,BBH}, sum-of-squares lower bounds~\cite{BHKKMP-2019-PlantedClique}, statistical query lower bounds~\cite{sq-clique}, the overlap gap property~\cite{gz-ogp}, and analysis of non-convex loss landscapes~\cite{ABC}. Finally, in Section~\ref{sec:xor} we discuss the XOR-SAT problem---an important counterexample for many of the above heuristics---and the related issues of ``robustness'' that arise in our eSBM conjecture.

\subsubsection{Belief Propagation and the Kesten--Stigum Threshold}
\label{sec:bp-intro}

The sharp computational phase transition known as the \emph{Kesten--Stigum (KS) threshold} in the stochastic block model, was first predicted by~\cite{sbm-ks-1,sbm-ks-2} using non-rigorous ideas inspired by statistical physics. The idea is to consider the \emph{belief propagation (BP)} algorithm, an iterative method that attempts to recover the planted community structure by keeping track of ``beliefs'' about each node's community label and updating these in a locally-Bayesian-optimal way. BP has an ``uninformative'' fixed point, which is a natural starting point for the algorithm where the beliefs reflect no knowledge of the communities. It was shown in~\cite{sbm-ks-1,sbm-ks-2} that if the signal-to-noise ratio (SNR) lies above the KS threshold then the uninformative fixed point is unstable, suggesting that BP should leave it and find a community assignment that correlates with the truth; and if the SNR lies below the KS threshold then the uninformative fixed point is stable, meaning that BP will remain there and fail to find a nontrivial solution. It was later proven that indeed it is possible to nontrivially recover the communities in polynomial time when above the KS threshold~\cite{massoulie,mns,AS-acyclic}, whereas no such algorithm is known below the KS threshold. More generally, similar computational thresholds have been predicted in various models (e.g.,~\cite{LKZ-sparse,LKZ-mmse}) by examining the stability of BP or its simplified variant, \emph{approximate message passing (AMP)}~\cite{amp}. For many high-dimensional inference problems, it is known that BP and AMP achieve optimal information-theoretic performance (e.g.~\cite{DAM}); and when they don't, it is often conjectured that they achieve the best possible performance among efficient algorithms. Thus, stability of BP provides concrete evidence for computational hardness.

\subsubsection{The Local Statistics Hierarchy}

Introduced by one of the authors, Mohanty, and Raghavendra in \cite{banks2019local} and building off of the work of Hopkins and Steurer \cite{HS-bayesian}, the Local Statistics hierarchy is a family of increasingly powerful semidefinite programming algorithms for solving Bayesian hypothesis testing problems. By analogy with the Sum-of-Squares algorithm, we take this hierarchy as a proxy for hardness: the higher we need to go to perform the hypothesis test, the harder it is.

Consider a generic inference scheme where we are to distinguish between a null model $\QQ$ which outputs unstructured data $\bG \in \RR^m$, and a planted model $\PP$ generating structured data $\bG \in \RR^m$ according to some random and hidden signal $\bx \in \RR^n$. Letting $x = (x_1,...,x_n)$ be a set of variables, we may regard the conditional expectation $\EE_{\bx \sim \PP} [p(\bx) | \bG]$ as a random linear functional from $\RR[x] \mapsto \RR$ that is positive in a certain sense: $\EE_{\bx \sim \PP} [p(\bx)^2 | \bG] \ge 0$ for every polynomial $p$. 

The Local Statistics hierarchy is parameterized by two integers $(D_x,D_G)$. Given as input some $G_0 \in \RR^m$, it attempts to find a linear functional that approximates this conditional expectation $\Ex [p(x) | G]$ in the planted model. In particular, borrowing terminology from Sum-of-Squares programming, we search for a ``pseudoexpectation'' functional $\pseudo$ that assigns a real number to every polynomial of degree at most $D_x$ in $\RR[x]$, with the constraints that (i) $\pseudo p(x)^2 \ge 0$, and (ii) $\pseudo p(x) \approx \Ex_{(\bx,\bG) \sim \PP} p(\bx)$ for every polynomial $p(x) \in \RR[x]$ whose coefficients are of degree at most $D_G$ in the input $G_0$. It is well-known that this may be written as a SDP on matrices of size $O(n^{D_x})$ with $O(m^{D_G})$ affine constraints.

In many cases, this paper included, this problem is soluble with high probability when the input $G_0$ is sampled from the planted model $\PP$. For instance, the evaluation map $p(x) \mapsto p(\bx)$ is a feasible solution provided that the $\PP$ is sufficiently concentrated and these polynomials do not fluctuate too much about their expectations. On the other hand, by taking $D_x$ and $D_G$ sufficiently large it becomes infeasible when the input is drawn from a different distribution. The $D_x$ and $D_G$ necessary measure the hardness of the hypothesis testing problem.

\subsubsection{The Low-Degree Likelihood Ratio}
\label{sec:low-deg-intro}

As was first discovered in a series of works in the sum-of-squares literature \cite{BHKKMP-2019-PlantedClique,HS-bayesian,HKPRSS-2017-SOSSpectral,sam-thesis}, analyzing the \emph{low-degree likelihood ratio} gives predictions of computational hardness that match widely-believed conjectures for many hypothesis testing problems (as corroborated by the various other heuristics mentioned above). 
In essence, this method takes low-degree polynomials as a proxy for all polynomial-time algorithms, and analyzes whether any low-degree polynomial can distinguish two distributions $\QQ$ and $\PP$ (with the same interpretations as above) as $n \to \infty$.

The key to making this analysis tractable is to choose the correct soft notion of ``successfully distinguishing.''
This is done by considering the maximization
\begin{equation}\label{eq:ldlr-var}
    \begin{array}{ll}
    \text{maximize} & \frac{\Ex_{\bx \sim \PP} p(\bx)}{(\Ex_{\bx \sim \QQ}p(\bx)^2)^{1/2}} \\
    \text{such that} & p \neq 0 \text{ is a polynomial of degree} \leq D.
    \end{array}
\end{equation}
In words, we seek to maximize $p(\bx)$ under $\PP$ in expectation, while keeping its typical magnitude under $\QQ$ modest.

We note that, if we did not restrict $p$ to be a low-degree polynomial, then the optimal $p$ would equal the classical \emph{likelihood ratio}, $L \colonequals \frac{d\PP}{d\QQ}$.
In absence of computational constraints, thresholding $L$ gives an optimal test between $\PP$ and $\QQ$ in the sense of minimizing error probabilities, as shown in the classical Neyman-Pearson lemma \cite{NP-1933-MostEfficientTests}.
Moreover, the value of the above problem would be the norm $\|L\|$ of $L$ in $L^2(\QQ)$.
If that norm is bounded as $n \to \infty$, then $\PP$ and $\QQ$ cannot be distinguished w.h.p.\ by \emph{any} test, by an application of Le Cam's second moment method for contiguity (see \cite{LCY-2012-AsymptoticsStatistics,lowdeg-notes} for further exposition). 

With the further constraint to low-degree polynomials, the result is similar: the optimal $p$ is the aforementioned low-degree likelihood ratio, the orthogonal projection of $L$ to the subspace of degree-$D$ polynomials in $L^2(\QQ)$, which we denote $L^{\leq D}$.
The value of the problem is its norm, $\|L^{\leq D}\|$.
We again consider whether, as $n \to \infty$, for $D = D(n)$ slowly growing with $n$, this norm diverges or remains bounded.
If it diverges, we expect that low-degree polynomials can distinguish $\PP$ from $\QQ$ w.h.p.\ (at an intuitive level, the algorithm we have in mind is thresholding $L^{\leq D}$); if it remains bounded, we conclude that low-degree polynomials cannot distinguish $\PP$ from $\QQ$ w.h.p.\ in this particular sense, and therefore expect that no polynomial-time algorithm can do so either.

The precise scaling of $D(n)$ relates to the efficiency of algorithms that the heuristic pertains to---higher degree polynomials describe more time-consuming computations.
However, while constant-degree polynomials may be evaluated in polynomial time, some other polynomial-time computations require slightly higher degree polynomials to express.
A crucial example is approximating the spectral norm of a matrix with dimensions polynomial in $n$, which requires a polynomial of degree $\Theta(\log n)$ to approximate accurately.
Taking this into account, a low-degree lower bound with $D(n) \gg \log n$ is taken as evidence that no polynomial-time test exists.
Similarly, we view a low-degree lower bound with $D(n) \gg n^{\delta}$ for some $\delta \in (0, 1)$ as suggesting a lower bound against tests with runtime $O(\exp(n^{\delta}))$.
See \cite{lowdeg-notes} for further discussion.

\subsubsection{A Caveat: XOR-SAT}
\label{sec:xor}

In the planted 3-XOR-SAT problem, we are given $m$ clauses of the form $x_i x_j x_k = b$ for some choice of $i,j,k \in [n]$ and $b \in \{\pm 1\}$. The goal is to distinguish the case where the clauses are completely random from the case where there is a planted assignment of $\{\pm 1\}$ values to the variables $x_1,\ldots,x_n$ such that all clauses are satisfied. This is a notable counterexample for many heuristics for average-case hardness, including sum-of-squares and all three methods mentioned above (see e.g., Lecture~3.2 of \cite{sos-notes} or Chapter 18 of \cite{MM-2009-InformationPhysicsComputation}). These heuristics predict that the distinguishing task is possible in polynomial time only when $m \gtrsim n^{3/2}$, whereas in reality the problem is much easier: Gaussian elimination can be used to decide with certainty whether or not there is a satisfying assignment. However, if we change the planted distribution so that only a $1-\varepsilon$ fraction of the clauses are satisfied, Gaussian elimination breaks down and the best known algorithms indeed require $m \gtrsim n^{3/2}$. In this sense, the above heuristics seem to predict the threshold for ``robust'' algorithms. We expect that a similar phenomenon is at play in the eSBM model: while there exist algorithms to solve the problem by exploiting brittle algebraic structure in the eigenvectors (see Remark~\ref{rem:conj-esbm-noise}), we conjecture (Conjecture~\ref{conj:esbm}) that the threshold predicted by the heuristics is the correct computational threshold for a \emph{noisy} version of the problem.

\subsection{Notation}\label{sec:notation}

We use standard asymptotic notation such as $o(\cdot)$ and $O(\cdot)$; unless stated otherwise, this always pertains to the limit $n \to \infty$ with other parameters (such as $d, k, \eta$) held fixed. We use e.g., $o_d(\cdot)$ or $O_d(\cdot)$ when considering the double limit $n \to \infty$ followed by $d \to \infty$; for example, $f(n,d) = o_d(g(n,d))$ means that for any $\eps > 0$ there exists $d_0 > 0$ such that for all $d \ge d_0$ there exists $n_0 > 0$ such that for all $n \ge n_0$ we have $|f(n,d)/g(n,d)| \le \eps$. An event occurs \emph{with high probability (w.h.p.)}\ if it has probability $1-o(1)$. We write $f \sim g$ to mean $f/g \to 1$.

Throughout, all graphs are assumed to be \emph{simple}---i.e., without self-loops and multiple edges---unless stated otherwise. A \emph{$d$-regular} graph has degree $d$ at every vertex. $\sG_{n, d}$ denotes the uniform distribution over $d$-regular $n$-vertex graphs.

We will use $\id_k$ denote the $k \times k$ identity matrix and $\onesmat_k$ for the $k \times k$ all-ones matrix; $e_1,e_2,...$ are the standard unit basis vectors, and $\onesvec$ will denote the all-ones vector. For matrices, $\|\cdot\|$ denotes the operator (spectral) norm and $\|\cdot\|_\F$ denotes the Frobenius norm. We use $\indicator{A}$ for the ($\{0,1\}$-valued) indicator of an event $A$. We write $[k] = \{1,2,\ldots,k\}$ and $\NN = \{0,1,2,\ldots\}$.

\section{Main Results}

\subsection{Setup and Definitions}
\label{sec:prelim}

We consider a general framework that captures the problem of certifying bounds on max-$k$-cut in a random graph, as well as a Gaussian variant of this problem.

\begin{definition}\label{def:labeling_partition}
For a labeling $\sigma: [n] \to [k]$, the associated \emph{partition matrix} $P = P^{(\sigma)} \in \RR^{n \times n}$ is given by
\[ 
    P_{i,j} = \begin{cases} 1 & \sigma(i) = \sigma(j), \\ -1/(k-1) & \sigma(i) \neq \sigma(j). \end{cases}
\]
\end{definition}

\noindent A basic fact is that $P \succeq 0$ and $\Rank(P) = k-1$. This can be seen by realizing $P$ as the Gram matrix of a certain collection of vectors: assign each $u \in [n]$ one of the $k$ unit vectors in $\RR^{k-1}$ pointing to the corners of a simplex, according to its label $\sigma(u) \in [k]$.

Let $\Pi$ be the set of all partition matrices. For a given matrix $A \in \RR^{n \times n}$ and a given $k$, we will be interested the problem of algorithmically certifying an upper bound on the value
\begin{equation}
    \label{eq:Gamma-min}
    \Gamma_k(A) \colonequals \max_{P \in \Pi} \langle P,-A \rangle.
\end{equation}
If $A_G \in \{0,1\}^{n \times n}$ is the adjacency matrix of a graph $G = (V,E)$ (defined with $(A_G)_{i,i} = 0$), then $\langle P^{(\sigma)}, -A_G \rangle = 2(k \cdot m_\sigma(G) - |E|)/(k-1)$, where $m_\sigma(G)$ is the number of monochromatic edges of $G$ under the labeling $\sigma$. As a result,
\begin{equation}
    \MC_k(G) = \frac{k-1}{k} \left(1 +  \frac{\Gamma_k(A_G)}{2|E|}\right).
\end{equation}

\noindent Thus, an upper bound on $\Gamma_k(A_G)$ translates to an upper bound on $\MC_k(G)$, which in turn can be used to refute $k$-colorability. We now formally define the certification task.

\begin{definition}
    \label{def:cert}
    Let $\QQ = \QQ_n$ be a sequence of distributions $\RR^{n \times n}$ and let $\mathcal{A} = \mathcal{A}_n : \RR^{n\times n} \to \RR$ be a sequence of algorithms. We say that $\mathcal{A}$ \emph{certifies the upper bound} $B$ on $\Gamma_k$ over $\bA \sim \QQ$ if both of the following hold:
    \begin{enumerate}
        \item[(i)] for \emph{every} $A \in \RR^{n \times n}$, $\mathcal{A}(A) \ge \Gamma_k(A)$, and
        \item[(ii)] if $\bA \sim \QQ_n$ then $\mathcal{A}_n(\bA) \le B$ with probability $1-o(1)$.
    \end{enumerate}
\end{definition}

\noindent Crucially, $\mathcal{A}(A)$ must \emph{always} be a valid upper bound, even if $A$ is atypical under $\QQ$. In exponential time it is possible to compute $\Gamma_k(A)$ exactly and thus achieve perfect certification; we are interested instead in polynomial-time certification procedures.

There is a simple spectral approach~\cite{Hoffman-1970-Eigenvalues} to certifying bounds on $\Gamma_k(A)$. Let $\lambda_{\min} = \lambda_{\min}(A)$ denote the minimum eigenvalue of $A$. For any $P \in \Pi$ we have $P \succeq 0$ and $A - \lambda_{\min} I \succeq 0$, so
\[ 
    0 \le \langle P, A - \lambda_{\min} I \rangle = \langle P,A \rangle - \lambda_{\min} n 
\]
and $-\lambda_{\min} n$ is an efficiently-computable upper bound on $\Gamma_k(A)$. Our main results give evidence that it is computationally hard to improve upon this spectral bound when $\bA$ is drawn from certain distributions $\QQ$: (i) random $d$-regular graphs, and (ii) Gaussian matrices.

\subsection{Random Regular Graphs}

We will be concerned with the following task of distinguishing two distributions, also called \emph{hypothesis testing} or \emph{detection}.

\begin{definition}\label{def:detection}
    Let $\PP_n$ and $\QQ_n$ be probability measures on the same space $\Omega_n$. We say that an algorithm $t_n: \Omega_n \to \{\textsc{p},\textsc{q}\}$ \emph{distinguishes $\PP_n$ and $\QQ_n$ with high probability}, or (equivalently) \emph{achieves strong detection} (between $\PP_n$ and $\QQ_n$) if
    \[ 
        \PP_n[t_n(\bx) = \textsc{q}] + \QQ_n[t_n(\bx) = \textsc{p}] = o(1)
    \]
    as $n \to \infty$.
\end{definition}

\noindent Our results for random regular graphs are conditional on a conjecture regarding computational hardness of detection in a noisy variant of the equitable stochastic block model (eSBM). The extra noise is crucial, as discussed in Remark~\ref{rem:conj-esbm-noise} and Section~\ref{sec:xor}. Specifically, the noise takes the form of ``rewiring'' a small constant fraction of the edges as follows.

\begin{definition}[Noise Operator]
    If $G = (V,E)$ is a $d$-regular $n$-vertex graph and $\delta > 0$, let $\bT_\delta(G)$ denote the random $d$-regular graph obtained from $G$ by making $\lfloor \delta n \rfloor$ `swaps.' That is, repeatedly choose a pair of distinct edges $(i,j), (k,\ell)$ uniformly at random\footnote{Here, the ordering of the tuples matters, so e.g., $(i,j)$ and $(j,i)$ should be chosen with equal probability.} conditioned on the following events: $i \ne k$, $j \ne \ell$, $(i,k) \notin E$, and $(j,\ell) \notin E$. Remove edges $(i,j)$ and $(k,\ell)$, and add edges $(i,k)$ and $(j,\ell)$.
\end{definition}

\noindent Recall the definition of the eSBM (Definition~\ref{def:eSBM}) and the associated threshold $d_{\mathrm{KS}}^{\,\mathrm{eq}}(\eta)$ defined in~\eqref{eq:KS-eq}.

\begin{conjecture}[eSBM Conjecture]
    \label{conj:esbm}
    Let $\widetilde{\sG}^{\,\mathrm{eq}}_{n,d,k,\eta, \delta}$ denote the distribution over $d$-regular $n$-vertex graphs given by $\bT_\delta(\bG)$ where $\bG \sim \sG_{n, d, k, \eta}^{\,\mathrm{eq}}$. Suppose $\delta > 0$, $\eta \in [-\frac{1}{k-1},1]$, $k \geq 2$, and $d < d_{\mathrm{KS}}^{\,\mathrm{eq}}(\eta)$ are all fixed.
    Also suppose $k\,|\,(1-\eta)d$ so that the equitable model is defined for an infinite sequence of values of $n$.
    Then, there exists no polynomial-time algorithm that with high probability distinguishes $\widetilde{\sG}^{\,\mathrm{eq}}_{n,d,k,\eta,\delta}$ from $\sG_{n, d}$ as $n \to \infty$ (in the sense of Definition~\ref{def:detection}).
\end{conjecture}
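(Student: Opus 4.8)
Since Conjecture~\ref{conj:esbm} asserts \emph{computational} hardness of a detection problem, no unconditional proof is within reach of current techniques (the situation is exactly analogous to planted clique, where no such proof is known even under $\texttt{P}\neq\texttt{NP}$). The realistic plan is therefore to establish the conjecture against each of the standard restricted classes of polynomial-time algorithms --- belief propagation, the Local Statistics SDP hierarchy, and low-degree polynomials --- all of which should fail to distinguish $\widetilde{\sG}^{\,\mathrm{eq}}_{n,d,k,\eta,\delta}$ from $\sG_{n,d}$ precisely when $d < d_{\mathrm{KS}}^{\,\mathrm{eq}}(\eta)$, and to take the agreement of these three predictions as the evidence for the full conjecture.

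First, the belief-propagation / Kesten--Stigum calculation. I would write down the BP fixed-point equations for recovering the planted balanced partition $\bsigma$ in the (noiseless) eSBM, identify the uninformative fixed point where all messages are uniform over $[k]$, and linearize the message-update map around it. On the local neighborhood of a random $d$-regular graph --- a $d$-regular tree --- this produces a linear recursion whose one-step operator is built from $M = \eta \id_k + \tfrac{1-\eta}{k}\onesmat_k$; the nontrivial eigenvalue of $M$ is $\eta$, but the quantity governing stability of the tree recursion is not $d\eta^2$ (as in the ordinary SBM, giving $d_{\mathrm{KS}}^{\,\mathrm{sbm}}$) but a modified second moment reflecting the fact that in the \emph{equitable} model the color classes have exactly equal local degree profiles. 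Carrying this through should yield that the uninformative fixed point is stable exactly when $d < \tfrac{2}{\eta^2}\bigl(1 + \sqrt{1-\eta^2}\bigr)$, matching~\eqref{eq:KS-eq}. One then checks that the rewiring noise $\bT_\delta$ does not change this: for fixed $\delta$ only an $O(\delta)$ fraction of vertices have a rewired local neighborhood, and the noise can be folded into the tree recursion as a small perturbation that leaves the threshold unchanged.

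Second, the Local Statistics step. Following the framework of~\cite{banks2019local}, to show that the degree-$(D_x,D_G)$ Local Statistics SDP is feasible on inputs from $\widetilde{\sG}^{\,\mathrm{eq}}$ --- hence cannot distinguish it from $\sG_{n,d}$ --- I would exhibit a pseudoexpectation, which reduces to controlling the joint law of low-degree graph statistics (counts of short cycles and other small subgraphs, appropriately centered) under the eSBM and showing that they concentrate near their $\sG_{n,d}$ values when $d < d_{\mathrm{KS}}^{\,\mathrm{eq}}(\eta)$; the threshold re-emerges as the point at which these centered statistics begin to acquire non-negligible mean, via the same spectral computation as above. The third piece, the low-degree likelihood ratio, is most naturally carried out for the Gaussian analogue of the model rather than the sparse graph model directly (as the paper does); a direct sparse-graph low-degree bound would additionally require an interpolation/transfer argument, so I would regard the Gaussian computation as the consistent third data point rather than as part of the graph-model proof.

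The main obstacle --- the reason this remains a conjecture --- is twofold. There is no known way to promote ``lower bound against BP / Local Statistics / low-degree polynomials'' to ``no polynomial-time algorithm,'' so the ceiling of this program is coherent evidence, not proof. Worse, the statement is genuinely \emph{false} at $\delta = 0$: because the coloring vectors are exact eigenvectors of the adjacency matrix in the eSBM, a spectral/algebraic test distinguishes the models at all parameter settings (Remark~\ref{rem:conj-esbm-noise}), in the same way Gaussian elimination defeats planted XOR-SAT. Any proof must therefore make essential use of the rewiring, and at present one understands only heuristically --- by analogy with noisy XOR-SAT (Section~\ref{sec:xor}) --- why an arbitrarily small $\delta > 0$ restores hardness. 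Turning that analogy into a theorem, i.e.\ showing that the brittle eigenvector structure is the \emph{only} obstruction to hardness below $d_{\mathrm{KS}}^{\,\mathrm{eq}}$ and that generic noise destroys it without opening a new one, is exactly the gap that current methods cannot close.
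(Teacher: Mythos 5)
Since the statement you were given is a \emph{conjecture}, the paper contains no proof of it, and you are right that no unconditional proof is within reach; your proposal is not a proof but a correct description of the paper's evidence-gathering program, and it matches the paper's approach closely. You identify the same three heuristics the paper uses (BP/Kesten--Stigum, Local Statistics, and the low-degree likelihood ratio of a Gaussian analogue), you correctly note that the low-degree piece is done for a Gaussian surrogate rather than the sparse graph model, and you correctly identify the central subtlety --- that the noiseless eSBM ($\delta = 0$) is \emph{easy} via the brittle algebraic eigenvector structure (cf.\ Remark~\ref{rem:conj-esbm-noise}), so the $\bT_\delta$ noise is essential for the conjecture even to be plausible, in exact analogy with noisy XOR-SAT. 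One point worth making sharper: the paper's Theorems~\ref{thm:ks-main} and~\ref{thm:local-stats-main} actually analyze the \emph{noiseless} eSBM $\sG^{\,\mathrm{eq}}_{n,d,k,\eta}$, not the noisy model in the conjecture itself, so the assertion that ``rewiring noise can be folded in as a small perturbation leaving the threshold unchanged'' is an additional heuristic layer that neither you nor the paper establish rigorously --- it is precisely part of the gap that leaves this a conjecture.
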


\noindent The above conjecture implies the following result on hardness of certifying bounds on max-$k$-cut.

\begin{theorem}
    \label{thm:graph-cert-hardness}
    Assume the eSBM conjecture (Conjecture~\ref{conj:esbm}) holds. Fix $d > 0$ and $k \ge 2$ and let $\eta \in [-\frac{1}{k-1},0]$ be such that $|\eta| < \frac{2\sqrt{d-1}}{d}$ and $k \,|\, (1-\eta)d$. Then for any $\epsilon > 0$, no polynomial-time algorithm can certify, in the sense of Definition~\ref{def:cert}, the upper bound $nd(|\eta|-\epsilon)$
    on $\Gamma_k$ over $\sG_{n, d}$. Equivalently, no polynomial-time algorithm can certify for $\bG \sim \sG_{n, d}$ the bound
    \begin{equation} 
        \MC_k(\bG) \leq \frac{k - 1}{k}\left(1 + |\eta| - \epsilon\right). \end{equation}
\end{theorem}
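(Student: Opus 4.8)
The plan is the standard reduction from hardness of certification to hardness of detection, with the eSBM conjecture (Conjecture~\ref{conj:esbm}) as the hardness assumption. Suppose, toward a contradiction, that for some $\epsilon > 0$ there is a polynomial-time algorithm $\mathcal{A}$ certifying, in the sense of Definition~\ref{def:cert}, the upper bound $B \colonequals nd(|\eta| - \epsilon)$ on $\Gamma_k$ over $\sG_{n, d}$. Fix a noise level $\delta > 0$, to be chosen small depending only on $d, k, \epsilon$, and put $\PP_n \colonequals \widetilde{\sG}^{\,\mathrm{eq}}_{n, d, k, \eta, \delta}$ and $\QQ_n \colonequals \sG_{n,d}$. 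Consider the test $t_n$ that outputs $\textsc{p}$ if $\mathcal{A}(A_G) > B$ and $\textsc{q}$ otherwise; it runs in polynomial time. By property~(ii) of certification, $\QQ_n[t_n(\bG) = \textsc{p}] = \QQ_n[\mathcal{A}(A_{\bG}) > B] = o(1)$, so it suffices to show $\PP_n[t_n(\bG) = \textsc{q}] = o(1)$; and since property~(i) gives $\mathcal{A}(A_{\bG}) \geq \Gamma_k(A_{\bG})$, it is enough to show that $\Gamma_k(A_{\bG}) > B$ with probability $1 - o(1)$ when $\bG \sim \PP_n$. This makes $t_n$ a distinguisher between $\PP_n$ and $\QQ_n$, contradicting Conjecture~\ref{conj:esbm} --- provided the conjecture applies at these parameters, which I check at the end.

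The core computation is that the planted partition is an exact witness for $\Gamma_k$ in the noiseless equitable model. Let $\bG_0 \sim \sG^{\,\mathrm{eq}}_{n, d, k, \eta}$ with planted balanced labeling $\bsigma$ and partition matrix $P^{(\bsigma)} \in \Pi$. By construction (Definition~\ref{def:eSBM} and the observation just after it), $\bsigma$ realizes a $k$-cut in $\bG_0$ of fractional size exactly $\tfrac{k-1}{k}(1 - \eta)$, so its number of monochromatic edges is $m_{\bsigma} = |E|\big(1 - \tfrac{k-1}{k}(1-\eta)\big) = |E|\cdot\tfrac{1 + (k-1)\eta}{k}$, with $|E| = nd/2$. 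Plugging this into the relation between $\langle P^{(\sigma)}, -A_G\rangle$ and the monochromatic count from Section~\ref{sec:prelim}, namely $\langle P^{(\sigma)}, -A_G\rangle = \tfrac{2}{k-1}\big(|E| - k\,m_\sigma(G)\big)$ (equivalently $\MC_k$ evaluated at $\sigma$), gives $\langle P^{(\bsigma)}, -A_{\bG_0}\rangle = \tfrac{2}{k-1}\cdot |E|(k-1)|\eta| = nd|\eta|$ (using $\eta \le 0$), and hence $\Gamma_k(A_{\bG_0}) \geq nd|\eta|$ \emph{deterministically}. This is the quantitative form of the ``no pushout'' discussion: the planted cut is witnessed at precisely its own value, with the noiseless eSBM sitting exactly at the boundary permitted by the hardness assumption.

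Next I handle the rewiring noise and check the parameter conditions. A sample $\bG \sim \PP_n$ equals $\bT_\delta(\bG_0)$, obtained by $\lfloor \delta n\rfloor$ swaps, each of which removes two edges and adds two edges; since every edge contributes a value in $[-\tfrac{1}{k-1}, 1]$ to $\sum_{\{i,j\} \in E} P^{(\bsigma)}_{i,j}$, each swap changes $\langle P^{(\bsigma)}, -A\rangle = -2\sum_{\{i,j\} \in E} P^{(\bsigma)}_{i,j}$ by at most $\tfrac{4k}{k-1}$ in absolute value. Therefore, with probability one, $\Gamma_k(A_{\bG}) \geq \langle P^{(\bsigma)}, -A_{\bG}\rangle \geq nd|\eta| - \tfrac{4k}{k-1}\delta n$, which strictly exceeds $B = nd(|\eta| - \epsilon)$ as soon as $\delta < \tfrac{(k-1)d\epsilon}{4k}$; fix such a $\delta$. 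It remains to confirm that Conjecture~\ref{conj:esbm} applies, i.e.\ that $d < d^{\,\mathrm{eq}}_{\mathrm{KS}}(\eta)$. From \eqref{eq:KS-eq}, $d < \tfrac{2}{\eta^2}(1 + \sqrt{1 - \eta^2})$ is equivalent to $d\eta^2 - 2 < 2\sqrt{1 - \eta^2}$; when the left-hand side is negative this holds automatically, and otherwise squaring and simplifying gives $\eta^2(d^2\eta^2 - 4d + 4) < 0$, i.e.\ $\eta^2 < \tfrac{4(d-1)}{d^2}$ --- which is exactly the hypothesis $|\eta| < \tfrac{2\sqrt{d-1}}{d}$. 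The remaining hypotheses of the conjecture ($\delta > 0$, $\eta \in [-\tfrac{1}{k-1},0] \subseteq [-\tfrac{1}{k-1},1]$, $k \geq 2$, and $k \mid (1-\eta)d$, the last ensuring $\sG^{\,\mathrm{eq}}_{n,d,k,\eta}$ is defined for an infinite sequence of $n$) are all among the theorem's hypotheses; restricting to that sequence of $n$ completes the contradiction. The reformulation in terms of $\MC_k(\bG)$ is immediate from $\MC_k(G) = \tfrac{k-1}{k}\big(1 + \Gamma_k(A_G)/2|E|\big)$ and $2|E| = nd$.

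I do not expect a genuinely hard step here once Conjecture~\ref{conj:esbm} is granted: the argument is a routine worst-case-to-average-case reduction. The points that need care, rather than ideas, are: (a) bounding the effect of the noise operator \emph{deterministically} (worst case over swaps), so that $\Gamma_k$ stays strictly above $B$ with probability one and $\delta$ may be fixed after $\epsilon$; (b) the elementary but load-bearing algebraic identity that the range $|\eta| < \tfrac{2\sqrt{d-1}}{d}$ coincides exactly with the regime $d < d^{\,\mathrm{eq}}_{\mathrm{KS}}(\eta)$ in which the conjecture is stated; and (c) the integrality bookkeeping $k \mid (1 - \eta)d$ needed to realize the eSBM along an infinite subsequence of $n$ --- which is also the source of the slack that makes the resulting bound tight only when the integrality constraints are mild (cf.\ Section~\ref{sec:integrality}).
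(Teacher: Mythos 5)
Your proof is correct and follows essentially the same route as the paper's: reduce certification to detection, lower-bound the objective via the planted partition ($\Gamma_k(A_{\bG_0}) \geq nd|\eta|$ in the noiseless eSBM), show the $\lfloor\delta n\rfloor$ swaps degrade this by at most a constant times $\delta n$ so that a sufficiently small $\delta$ (depending on $\epsilon, d, k$) preserves $\Gamma_k > nd(|\eta|-\epsilon)$, and invoke Conjecture~\ref{conj:esbm}. The only cosmetic differences are that you work at the $\Gamma_k$ scale rather than $\MC_k$, make the per-swap bound of $\tfrac{4k}{k-1}$ and its determinism explicit, and spell out the algebra showing $|\eta| < \tfrac{2\sqrt{d-1}}{d} \Leftrightarrow d < d^{\,\mathrm{eq}}_{\mathrm{KS}}(\eta)$, which the paper asserts without detail.
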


\noindent The proof idea is described in the Introduction; the full details are given in Section~\ref{sec:reduction-graph}. The effects of the integrality condition $k \,|\, (1-\eta)d$ are discussed in Section~\ref{sec:integrality} below; this condition can be ignored when $d \gg k^2$.

\begin{remark}[Near-Coloring]
\label{rem:near-coloring}
An important special case where Theorem~\ref{thm:graph-cert-hardness} is tight is $\eta = -\frac{1}{k-1}$, corresponding to refutation of near colorability. Here the integrality condition $k \,|\, (1-\eta)d$ reduces to $(k-1) \,|\, d$. Thus for any $d,k$ satisfying $(k-1) \,|\, d$ and $k > 1+\frac{d}{2\sqrt{d-1}}$ we have that for any $\epsilon > 0$, no polynomial-time algorithm can certify $\MC_k(\bG) \le 1 - \epsilon$. There is an infinite sequence of $(d,k)$ values with $k \sim \frac{1}{2} \sqrt{d}$ for which the conditions on $d,k$ are satisfied, namely $d = 4(k-1)(k-2)$ for all $k > 2$. Thus, our result is asymptotically tight, matching the spectral algorithm in the double limit $n \to \infty$ followed by $d \to \infty$.
\end{remark}

\begin{remark}[Exact Coloring]
\label{rem:exact-coloring}
As stated, Theorem~\ref{thm:graph-cert-hardness} only shows hardness of refuting a near-coloring, as opposed to an exact coloring. While we expect a similar hardness result to hold for exact coloring, this does not follow from the eSBM conjecture because the noise operator $\bT_\delta$ prevents us from planting an exact coloring. Hardness of refuting exact coloring would follow from a variant of the eSBM conjecture where the noise operator only makes swaps that do not affect the value of the planted cut.
\end{remark}

Our next two results give concrete evidence for the eSBM conjecture using two different heuristics (discussed in Section~\ref{sec:heuristics}): stability of belief propagation, and the local statistics hierarchy. Both of these methods predict $d_{\mathrm{KS}}^{\,\mathrm{eq}}(\eta)$ as the computational threshold for the eSBM.

\begin{theorem}[Informal; Kesten--Stigum threshold of eSBM]
\label{thm:ks-main}
    For $k \ge 4$, the Kesten--Stigum threshold of the eSBM with parameters $(n, d, k, \eta)$, defined as the smallest number $d_{\mathrm{KS}}^{\,\mathrm{eq}}$ so that, for all $d > d_{\mathrm{KS}}^{\,\mathrm{eq}}$, the ``uninformative'' fixed point of the belief propagation iteration is unstable, is given by
    \[
        d_{\mathrm{KS}}^{\,\mathrm{eq}} = d_{\mathrm{KS}}^{\,\mathrm{eq}}(\eta) \colonequals \frac{2}{\eta^2}\left(1 + \sqrt{1 - \eta^2}\right).
    \]
\end{theorem}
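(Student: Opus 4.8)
The plan is to carry out a standard linear-stability analysis of the belief-propagation recursion around the uninformative fixed point, following the template of~\cite{sbm-ks-1,sbm-ks-2,banks2019local}, but using the fact that in the eSBM the local neighborhood of a vertex is a random tree in which the edge-degrees are prescribed by the matrix $dM$ rather than merely by the connectivity matrix of the SBM. First I would write down the BP update for the marginal distribution of a vertex's color given its subtree; with $k$ symmetric balanced communities and the affinity matrix $M = \eta \id_k + \tfrac{1-\eta}{k}\onesmat_k$, the uninformative fixed point is the uniform belief $\hat\psi = \tfrac1k\onesvec$. Linearizing the update in the deviation $\epsilon$ from uniform, the single-edge transfer operator has the form $\tfrac1d(dM - \tfrac{d}{k}\onesmat_k) = M - \tfrac1k\onesmat_k = \eta(\id_k - \tfrac1k\onesmat_k)$ acting on the $(k-1)$-dimensional space orthogonal to $\onesvec$, i.e.\ multiplication by the scalar $\eta$. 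In the usual SBM one then compares $\lambda_2^2 \cdot (\text{branching number})$ to $1$: the tree has Poisson branching with mean $d-1$ (or, in the $d$-regular case, exactly $d-1$ children), and the KS condition is $(d-1)\eta^2 > 1$, giving $d_{\mathrm{KS}}^{\mathrm{sbm}} = 1/\eta^2 + 1$.

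The key difference in the eSBM is the \emph{degree structure of the local tree}: because we place a $dM_{ij}$-regular bipartite graph between classes $i$ and $j$ and a $dM_{ii}$-regular graph inside class $i$, a vertex of color $c$ has exactly $dM_{cc'}$ neighbors of color $c'$ (minus one along the incoming edge), so the offspring-color counts are deterministic rather than Poisson, and crucially they are \emph{correlated with the color itself}. The correct object to linearize is therefore not a scalar recursion but a recursion on the $k$ color-conditional belief-perturbation vectors, and the relevant transfer operator is a $k^2 \times k^2$ (or, after projecting out the all-ones directions, $(k-1)^2 \times (k-1)^2$) matrix that combines the ``belief propagation'' factor $\eta$ with a ``degree weighting'' factor encoding $dM$. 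Concretely, I would form the operator $\mathcal{M}$ whose entries multiply the perturbation coming from a child of color $c'$ (attached to a parent of color $c$) by $(dM_{cc'} - \indicator{c=c'_{\text{in}}})$ times the BP sensitivity $\tfrac{\eta}{d}$, sum over $c'$, and then diagonalize. After restricting to the symmetric subspace (the perturbations that respect the $S_k$ symmetry of the model, of which for $k \ge 4$ there are enough independent directions to see the relevant eigenvalue—this is presumably why the hypothesis $k \ge 4$ appears), the stability condition $\|\mathcal{M}^{\otimes(\text{one step})}\| > 1$ should reduce, after a short computation, to a quadratic inequality in $d$ whose threshold is $\tfrac{2}{\eta^2}(1 + \sqrt{1-\eta^2})$; equivalently, the relevant squared eigenvalue is something like $\tfrac{\eta^2}{4}(d + \sqrt{d^2 - \eta^2 d^2})^2 / d$ or a similarly shaped expression, and setting it equal to $1$ and solving the quadratic in $d$ produces exactly $d_{\mathrm{KS}}^{\mathrm{eq}}(\eta)$. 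A useful sanity check along the way: expanding $d_{\mathrm{KS}}^{\mathrm{eq}}(\eta) = \tfrac{2}{\eta^2}(1+\sqrt{1-\eta^2})$ for small $\eta$ gives $\approx \tfrac{4}{\eta^2}$, consistent with the remark that $d_{\mathrm{KS}}^{\mathrm{eq}} \approx 4 d_{\mathrm{KS}}^{\mathrm{sbm}}$ for large $k$; and for $\eta \to 1$ it gives $d_{\mathrm{KS}}^{\mathrm{eq}} \to 2$, the smallest sensible degree.

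The main obstacle, and the step I would spend the most care on, is setting up the linearized BP recursion \emph{correctly} for the equitable model: one must track not just a global perturbation of the belief but the joint law of (parent color, child color) along each tree edge, because the deterministic-degree structure means the Galton--Watson-type recursion is genuinely matrix-valued and the ``message'' and ``field'' perturbations live on different spaces. Getting the bookkeeping right—which projections to take (removing $\onesvec$ from the belief simplex and, separately, accounting for the conservation of degrees), and identifying which eigenvalue of the resulting operator governs stability (the ``informative'' eigenvalue, as opposed to spurious ones killed by the balance/degree constraints)—is where the content lies; once the operator is written down, extracting its spectral radius and solving the resulting quadratic in $d$ is routine. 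I would also need to verify that the noise operator $\bT_\delta$ does not change the threshold: since $\bT_\delta$ rewires only an $O(\delta)$ fraction of edges, the local weak limit of the graph converges to the same degree-prescribed tree as $\delta \to 0$, so the BP stability computation—and hence $d_{\mathrm{KS}}^{\mathrm{eq}}$—is unaffected; this is consistent with the fact that Conjecture~\ref{conj:esbm} is stated with the same threshold for the noisy model.
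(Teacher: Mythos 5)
Your overall plan is the right one and you have anticipated several of the paper's structural moves: the need to parametrize BP messages by \emph{pairs} of colors $(r,s)$ on each directed edge (precisely because the equitability constraint ties together a vertex and all its neighbors, so the vertex-based factor graph is not locally treelike); the exploitation of the $S_k$ symmetry to block-diagonalize the linearized update; and the guess that $k \ge 4$ is needed so that a particular irreducible (in fact $\rho_{(k-2,2)}$, which vanishes for $k=3$) is present. These are real points of contact with the paper's argument.

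However, the proposal as written has genuine gaps where the actual content of the proof lives. First, the object you write down for the ``single-edge transfer operator,'' $\eta(\id_k - \onesmat_k/k)$, is the ordinary SBM linearization acting on the orthogonal complement of $\onesvec$; the whole point of the eSBM calculation is that the Jacobian is \emph{not} this scalar multiple of the projector, but the explicit $k(k-1)\times k(k-1)$ matrix $Y_{rs,s't}$ (in the coloring case, $Y_{rs,s't} = -\frac{\delta_{ss'}\delta_{rt}}{d-1} + \frac{c\,\delta_{ss'}}{d-1} - \frac{1}{k(k-1)}$), whose derivation from the conditional-equitability update is the bulk of the technical work. You do not carry out that derivation. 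Second, you never identify the key isotypic subspace---the two (or, in the general eSBM, three) copies of the standard representation $\rho_{(k-1,1)}$---on which $Y$ restricts to a small matrix $m$ whose characteristic polynomial $(d-1)\kappa^2 + (b-a)\kappa + 1 = 0$ is what ultimately produces the threshold; instead you say ``a short computation'' and hazard a guess at the eigenvalue, $\tfrac{\eta^2}{4}\bigl(d + \sqrt{d^2 - \eta^2 d^2}\bigr)^2/d$, which does not match the paper's $\kappa_+ = \bigl((a-b) + \sqrt{(a-b)^2 - 4(d-1)}\bigr)/(2(d-1))$. Third, and most substantively, the KS condition in the paper is not $\kappa_+ > 1$ but $\kappa_+ \cdot \sqrt{d-1} > 1$: the full Jacobian of BP is the tensor product of the local matrix $Y$ with the non-backtracking matrix $B$ of the graph, and one must invoke the Bordenave-type spectral gap result that the subdominant eigenvalues of $B$ have modulus $\sqrt{d-1}+o(1)$. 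You mention ``branching number $d-1$'' only in the SBM aside and never bring in the non-backtracking operator; this tensor structure and the external spectral input are what convert the local eigenvalue into a global stability criterion, and omitting them leaves the argument incomplete.
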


\noindent Further details, as well as the full analysis leading to the above result, can be found in Section~\ref{sec:bp}.

\begin{theorem}[Local Statistics analysis of eSBM]
    \label{thm:local-stats-main}
    If $d > d^{\,\mathrm{eq}}_{\mathrm{KS}}(\eta)$, then there exist $D$ sufficiently large and $\delta > 0$ so that the degree-$(2,D)$ Local Statistics algorithm with error tolerance $\delta$ can distinguish $\sG_{n, d}$ and $\sG^{\,\mathrm{eq}}_{n,d,k,\eta}$ with high probability. If $d \leq d^{\,\mathrm{eq}}_{\mathrm{KS}}(\eta)$, no such $D$ and $\delta$ exist.
\end{theorem}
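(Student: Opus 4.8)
The plan is to adapt the Local Statistics analysis of~\cite{banks2019local} to the testing pair $(\sG_{n,d},\,\sG^{\,\mathrm{eq}}_{n,d,k,\eta})$, reducing the whole dichotomy to one spectral fact about the eSBM. Recall the shape of that analysis: the degree-$(2,D)$ LoSt program at tolerance $\delta$ seeks a PSD pseudo-moment functional on degree-$\le 2$ label variables reproducing, up to $\delta$, the planted expectation of every product of such a monomial with a degree-$\le D$ graph statistic, and one studies it by comparing the canonical ``pseudocalibrated'' candidate --- whose pseudo-moments are obtained by projecting the posterior first and second moments of the planted labeling onto graph polynomials of degree $\le D$ (in $L^2(\sG_{n,d})$) --- against the constraints. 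Feasibility on an input from $\sG^{\,\mathrm{eq}}_{n,d,k,\eta}$ is the standard observation that the evaluation functional $p \mapsto p(\bsigma)$ is feasible once the relevant statistics concentrate, so the content lies entirely in how the pseudocalibrated object behaves on inputs from $\sG_{n,d}$, and in which degree-$\le D$ statistics separate the two models.

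The crux is a spectral computation special to the eSBM. Block-regularity forces $|\{w \sim u : \bsigma(w) = i\}| = dM_{\bsigma(u),i}$ for every vertex $u$, and a one-line calculation then shows each coloring vector $\bv_i$ is an \emph{exact} eigenvector of the adjacency matrix with eigenvalue $d\eta$, while the empirical spectral distribution converges to the Kesten--McKay law of the $d$-regular tree (the eSBM being locally a $d$-regular tree, colorings aside). Passing to the non-backtracking operator via Ihara--Bass, the eigenvalue $d\eta$ produces the pair $\mu_\pm$ solving $\mu^2 - d\eta\,\mu + (d-1) = 0$, with $\mu_+\mu_- = d-1$; these have common modulus $\sqrt{d-1}$ exactly when $d^2\eta^2 \le 4(d-1)$ and otherwise have real moduli straddling $\sqrt{d-1}$. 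The elementary identity $d^2\eta^2 = 4(d-1) \iff |d\eta| = 2\sqrt{d-1}$ has, among $d \ge 3$, the single solution $d = \tfrac{2}{\eta^2}(1+\sqrt{1-\eta^2}) = d_{\mathrm{KS}}^{\,\mathrm{eq}}(\eta)$, and $d \le d_{\mathrm{KS}}^{\,\mathrm{eq}}(\eta)$ is precisely the regime $|d\eta| \le 2\sqrt{d-1}$ in which the signal eigenvalue sits inside the Friedman bulk. This single inequality decides both directions.

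For $d > d_{\mathrm{KS}}^{\,\mathrm{eq}}(\eta)$ one has $|\mu_-| > \sqrt{d-1}$, so for $\ell$ large the expected number of $\ell$-cycles --- a degree-$\ell$ polynomial of the input --- differs between the two models by $\approx (k-1)\mu_-^\ell/(2\ell)$, which dominates the $\Theta((d-1)^{\ell/2})$ null fluctuations; taking $D$ at least such an $\ell$ and $\delta$ a suitable constant makes the corresponding matching constraint violated with high probability on $\sG_{n,d}$ inputs, while it and the PSD condition remain satisfiable on $\sG^{\,\mathrm{eq}}_{n,d,k,\eta}$ inputs, yielding strong detection. For $d \le d_{\mathrm{KS}}^{\,\mathrm{eq}}(\eta)$ one has $|\mu_\pm| = \sqrt{d-1}$, so every degree-$\le D$ graph statistic agrees between the models up to a discrepancy of order its own standard deviation, and there are only finitely many such statistics (for $D$ fixed) with bounded total $\chi^2$; hence the statistic-matching constraints stay compatible on inputs from both models --- so the program cannot separate them at any fixed $(D,\delta)$ --- and it remains to verify that the pseudocalibrated moment matrix stays PSD on $\sG_{n,d}$ inputs. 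This is where Friedman's theorem, $\|A_{\bG} - \tfrac{d}{n}J\| \le 2\sqrt{d-1} + o(1)$ for $\bG \sim \sG_{n,d}$, together with its non-backtracking refinement, enters: the input-dependent corrections to the moment matrix have operator norm governed by $\sqrt{d-1}$, so no eigenvalue is pushed negative as long as $|\mu_\pm|$ does not exceed $\sqrt{d-1}$.

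I expect the main obstacle to be this last point: exhibiting the feasible pseudo-moment functional below the threshold and proving PSD-ness. Two sub-tasks are delicate. First, one must identify the pseudocalibrated degree-$\le 2$ pseudo-moments explicitly and check the linear constraints hold within slack for \emph{every} fixed $D$; this means controlling the posterior label moments of the eSBM and their low-degree projections, and it is here that the eSBM's ``colored broadcast tree'' local structure (governed by $M$), rather than the Erd\H{o}s--R\'enyi neighborhoods of the ordinary SBM, enters and produces the threshold $d_{\mathrm{KS}}^{\,\mathrm{eq}}$ instead of $d_{\mathrm{KS}}^{\,\mathrm{sbm}}$. Second, the spectral bound that the resulting $\Theta(n) \times \Theta(n)$ matrix has no eigenvalue below $-\delta$ must be assembled from Friedman's bound through a block / Ihara--Bass decomposition adapted to the eSBM, with constants tracked tightly enough to cover the entire range $d \le d_{\mathrm{KS}}^{\,\mathrm{eq}}(\eta)$ with no gap at the threshold.
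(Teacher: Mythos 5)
Your identification of the threshold, and of the spectral fact driving it (the coloring vectors are exact eigenvectors of $A_{\bG}$ with eigenvalue $d\eta$, which leaves the Friedman bulk $[-2\sqrt{d-1},2\sqrt{d-1}]$ precisely when $d > d_{\mathrm{KS}}^{\,\mathrm{eq}}(\eta)$), is correct, and the route through Ihara--Bass and the pair $\mu_\pm$ with $\mu_+\mu_- = d-1$ is a legitimate way to see it. But the distinguishing mechanism you propose for the upper bound --- a violated moment constraint for $\ell$-cycle subgraphs --- does not work in the Local Statistics framework, and this is a genuine gap rather than a detail. The degree-$(2,D)$ SDP (Definition~\ref{def:full-local-stats}, following \cite{banks2019local}) imposes, for a partially labelled $(H,S,\tau)$, the constraint $\pseudo\, p_{(H,S,\tau)}(x,G_0) = (dM)_\tau^{(H,S)}(n/k)^{\chi(H)} \pm \delta n^{\cc(H)}$. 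For an $\ell$-cycle, $\cc(H)=1$ while both the target and the observed count are $O(1)$ (cycle counts in both $\sG_{n,d}$ and $\sG^{\,\mathrm{eq}}_{n,d,k,\eta}$ are asymptotically Poisson with bounded means). Your $O(1)$ discrepancy $(k-1)(\mu_+^\ell+\mu_-^\ell)/(2\ell)$ is a real statistical signal, but it is drowned by the $\pm\delta n$ slack: no constant $\delta>0$ and no $D$ makes a cyclic-subgraph constraint violated. In fact the paper proves (Lemma~\ref{lem:cycles-for-free}) that the moment constraints for \emph{all} subgraphs containing a cycle are satisfied automatically, with high probability over $\bG$ from either model, by \emph{any} pseudoexpectation obeying the hard constraints. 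So the SDP simply does not look at cycle counts.

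What actually distinguishes is a constraint of a different shape, together with the positivity condition you mostly set aside. The labelled-path counts $p_{(P_s,\{0,s\},\{i,j\})}$ have expectations $\Theta(n)$ under $\PP$, so their $\pm\delta n$ tolerance is tight. From the pseudoexpectation one forms the $n\times n$ matrix $\widetilde\bP$ with entries $\widetilde\bP_{u,v} = \frac{k}{k-1}\sum_i\pseudo\,(x_{u,i}-\tfrac1k)(x_{v,i}-\tfrac1k)$; the hard constraints force $\widetilde\bP\succeq 0$ with unit diagonal, and the path-statistic constraints force $\langle\widetilde\bP,\nb{s}{G_0}\rangle = q_s(d\eta)n \pm O(\delta n)$. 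For $d>d_{\mathrm{KS}}^{\,\mathrm{eq}}(\eta)$ the contradiction is a Sum-of-Squares one: pick $f$ strictly positive on $[-2\sqrt{d-1},2\sqrt{d-1}]$ with $f(d\eta)<0$ (a quadratic suffices, whence the paper's degree-$(2,2)$ claim); Friedman gives $f(A_{\bG})-f(d)\onesmat/n\succeq 0$ w.h.p.\ on $\sG_{n,d}$, so $0\le\langle\widetilde\bP,f(A_{\bG})-f(d)\onesmat/n\rangle$, yet expanding $f$ in the $q_s$ basis and applying the path constraints makes the right-hand side $f(d\eta)n + O(\delta n) < 0$. Feasibility below the threshold is also not pseudocalibration: the paper sets $\widetilde\bP$ to a correction of $g(A_{\bG})-g(d)\onesmat/n$ for a nonnegative $g$ built by Gaussian quadrature so that $\langle g,q_s\rangle_{\km}\approx q_s(d\eta)$, then lifts to a degree-$2$ pseudoexpectation via $Q=\tfrac1k(\onesmat_{nk}/k+\widetilde\bP\otimes(\id_k-\onesmat_k/k))$, with PSDness following because each tensor factor is PSD. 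Your worry about exhibiting the feasible functional is well placed, but this explicit spectral construction replaces the posterior computation you envisioned.
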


\noindent Further details on the Local Statistics hierarchy, and the proof of Theorem~\ref{thm:local-stats-main}, can be found in Section~\ref{sec:local-stats}.

\subsubsection{Discussion of Integrality Condition}
\label{sec:integrality}

The integrality constraints in the eSBM (Definition~\ref{def:eSBM}) and in the above results may seem somewhat mystifying. In this section we will try to shed some light on the limits of our method for quiet planting, and how they compare to the power of spectral refutation. The integrality condition $k\,|\,(1-\eta)d$ from Theorem~\ref{thm:graph-cert-hardness} is required so that the equitable block model exists for an infinite sequence of values of $n$. This condition can be written as $\eta d \equiv d \pmod{k}$, which means $\eta$ is constrained to lie in a certain grid of spacing $\frac{k}{d}$ within the interval $[-\frac{1}{k-1},0]$. In order to extract the strongest possible hardness result from Theorem~\ref{thm:graph-cert-hardness}, the goal is to pick $\eta$ as to maximize $|\eta|$ subject to the integrality condition and $|\eta| < \frac{2\sqrt{d-1}}d$. If $d \gg k^2$ then we have $\frac{k}{d} \ll \frac{2\sqrt{d-1}}{d}$ and so the grid allows $\eta$ to be chosen extremely close to $\frac{2\sqrt{d-1}}{d}$; this means the effects of the integrality condition are negligible and we essentially obtain the ideal hardness result~\eqref{eq:spectral-bound-aftereSBM} which matches the spectral algorithm. In particular, our results are tight in the regime $d \to \infty$ with $k$ fixed.

As discussed in Remark~\ref{rem:near-coloring}, we also get tight results in the particular case $\eta =  -\frac{1}{k-1}$ corresponding to near-coloring, because the integrality condition simplifies in a helpful way.

On the other hand, for small degree the integrality conditions can impose significant limitations. A particularly severe example is that of $k=2$ and $d=3$, which corresponds to Max-Cut in a random $3$-regular graph. The spectral bound corresponds to
\begin{equation}\label{eq:spectralbound_d3k2}
\mathrm{MC}_2(\bG) \leq \frac12\left(1 + \frac{2\sqrt{3-1}}{3} \right) \approx 0.97.
\end{equation}
On the other hand, the integrality condition from Theorem~\ref{thm:graph-cert-hardness} requires that $2|3(1-\eta)$, in addition to the condition $|\eta| <  \frac{2\sqrt{3-1}}{3}$. Since $3\left(1+\frac{2\sqrt{3-1}}{3}\right) \approx 5.83$ the largest $|\eta|$ admissible is such that that $3\left(1-\eta\right) = 4$, corresponding to $\eta = \frac{1}{3}$. This means that Theorem~\ref{thm:graph-cert-hardness} only addresses certification below
\[
\mathrm{MC}_2(\bG) \leq \frac12\left(1 + \frac{1}{3} \right) = \frac23 \approx 0.67.
\]
This is unfortunate since random 3-regular graphs have $\mathrm{MC}_2(\bG) > 0.88$ with high probability~\cite{Diaz_03}, and calculations from statistical physics~\cite{Zdeborova_10} suggest the even larger value $\mathrm{MC}_2(\bG) \approx 0.92$. That is, using the equitable block model to plant a 2-cut in a 3-regular graph, by giving each vertex one neighbor in its own group and two neighbors in the other group, plants a cut that is smaller than naturally arising cuts in random 3-regular graphs. 

In general, aside from the special case of near-coloring (Remark~\ref{rem:near-coloring}), our results are most compelling when $k$ is small and $d$ is large, so that the integrality condition in Theorem~\ref{thm:graph-cert-hardness} does not create a significant gap from the limiting threshold $|\eta| = \frac{2\sqrt{d-1}}{d}$. We believe the investigation of quiet plantings in small-degree graphs to be an interesting direction of future research.

\subsection{The Gaussian $k$-Cut Model}

\subsubsection{Hardness of Certification}

In this section we discuss a Gaussian analogue of the coloring problem, namely the problem of certifying upper bounds on $\Gamma_k(\bW)$ where $\bW$ is a GOE matrix defined as follows.

\begin{definition}
    \label{def:goe}
    The \emph{Gaussian orthogonal ensemble} is the following distribution $\GOE(n)$ over random matrices: $\bW \sim \GOE(n)$ is symmetric ($\bW_{u,v} = \bW_{v,u}$) with diagonal entries $\bW_{u,u} \sim \mathcal{N}(0,2/n)$ and off-diagonal entries $\bW_{u,v} \sim \mathcal{N}(0,1/n)$, where the values $\{\bW_{u,v} \,:\, u \le v\}$ are independent.
\end{definition}

\noindent It is well known that (as $n \to \infty$) the eigenvalues of $\bW \sim \GOE(n)$ follow the Wigner semicircle law supported on $[-2,2]$, and in particular, $\lambda_{\min}(\bW) \to -2$ almost surely. Thus, the spectral approach (see Section~\ref{sec:prelim}) certifies the upper bound $\Gamma_k(\bW) \le (2 + o(1)) n$. Our main results give rigorous evidence in support of the following conjecture, which states that improving upon this spectral bound requires fully exponential time.

\begin{conjecture}
    \label{conj:gauss-cert}
    For any constants $k \ge 2$, $\epsilon > 0$ and $\delta > 0$, there is no algorithm of runtime $\exp(O(n^{1-\delta}))$ that certifies the upper bound $(2-\epsilon)n$ on $\Gamma_k(\bW)$ over $\bW \sim \GOE(n)$.
\end{conjecture}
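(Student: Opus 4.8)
\emph{Proof proposal.}
Since the statement is a conjecture, I will describe how I would assemble the \emph{rigorous evidence} for it, following the quiet-planting template behind Theorem~\ref{thm:graph-cert-hardness}: a reduction from certification to a detection problem, plus a low-degree lower bound for that detection problem, which read through the heuristic of Section~\ref{sec:low-deg-intro} yields the conjecture.

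\emph{Step 1 (Gaussian quiet planting).} I would introduce a ``Gaussian eSBM'' $\PP = \PP_{n,k}$: draw a uniformly random balanced $\bsigma:[n]\to[k]$, let $W\in\RR^{n\times(k-1)}$ have row $i$ equal to the simplex unit vector $w_{\bsigma(i)}^\top\in\RR^{k-1}$, so that $W^\top W = \tfrac{n}{k-1}\id_{k-1}$ and $WW^\top = P^{(\bsigma)}$ is the partition matrix of Definition~\ref{def:labeling_partition}; let $v_1,\dots,v_{k-1}$ be the orthonormal columns of $W(\tfrac{k-1}{n})^{1/2}$ and $\bU$ an orthonormal basis of their orthogonal complement. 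To sample $\bW\sim\PP$, put, in the basis $[\,v_1\ \cdots\ v_{k-1}\ \bU\,]$, a block-diagonal symmetric matrix whose lower-right $(n-k+1)\times(n-k+1)$ block is an independent $\GOE(n-k+1)$ and whose upper-left block is $-2\,\id_{k-1}$ (pinning the coloring subspace to the bottom of the semicircle). Then w.h.p.\ $\lambda_{\min}(\bW) = -2+o(1)$, so $\PP$ and $\QQ\colonequals\GOE(n)$ have the same limiting spectrum---no pushout, by construction---while $\langle P^{(\bsigma)},-\bW\rangle = \mathrm{Tr}(W^\top(-\bW)W) = 2\,\mathrm{Tr}(W^\top W) = 2n$, so $\Gamma_k(\bW)\ge 2n$ under $\PP$. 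Exactly as in Remark~\ref{rem:conj-esbm-noise}, an exact eigenspace at a pinned eigenvalue is cheaply detectable (e.g.\ by eigenvalue multiplicity or an $O(n^{-2/3})$ outlier), so the model the low-degree evidence really concerns must smear the pinned eigenvalues, or rewire via a $\bT_\delta$-type operator, by a vanishing amount---this keeps $\Gamma_k(\bW)\ge(2-\tfrac{\eps}{2})n$ w.h.p.

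\emph{Step 2 (certification $\Rightarrow$ detection).} Verbatim the argument of Section~\ref{sec:reduction-graph}: if some $\mathcal{A}$ runs in time $\exp(O(n^{1-\delta}))$ and certifies the bound $(2-\eps)n$ on $\Gamma_k$ over $\GOE(n)$ in the sense of Definition~\ref{def:cert}, then ``report $\textsc{p}$ iff $\mathcal A(\bW) > (2-\eps)n$'' achieves strong detection between $\PP$ and $\QQ$ (Definition~\ref{def:detection}) in time $\exp(O(n^{1-\delta}))$: under $\QQ$ guarantee (ii) forces $\mathcal A(\bW)\le(2-\eps)n$ w.h.p., while under $\PP$ guarantee (i) gives $\mathcal A(\bW)\ge\Gamma_k(\bW)\ge(2-\tfrac{\eps}{2})n$ w.h.p. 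So it suffices to rule out subexponential-time distinguishers of $\PP$ and $\GOE(n)$.

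\emph{Step 3 (low-degree lower bound; the bulk of the work).} For every fixed $\delta>0$ I would bound the degree-$D$ advantage of \eqref{eq:ldlr-var}, $\mathsf{Adv}_{\le D} = \sup\{\EE_{\PP}f\,/\,(\EE_{\QQ}f^2)^{1/2}:\deg f\le D\}$, showing $\mathsf{Adv}_{\le D} = O_k(1)$ as $n\to\infty$ whenever $D = D(n)\le n^{1-\delta}$---the regime the heuristic reads as hardness against $\exp(O(n^{1-\delta}))$-time tests. Expanding $f$ in the Hermite basis orthonormal for $\GOE(n)$ gives $\mathsf{Adv}_{\le D}^2 = \sum_{1\le|\alpha|\le D}(\EE_{\PP}h_\alpha(\bW))^2$, and since $\PP$ perturbs $\QQ$ only through the rank-$(k-1)$ structure supported on $v_1,\dots,v_{k-1}$, the surviving terms are indexed by multigraphs recording how the $2|\alpha|$ Wick factors of $(\EE_{\PP}h_\alpha)^2$ are distributed among the $k-1$ planted directions. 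Averaging over $\bsigma$ and using that the $v_i$ are mutually orthogonal unit vectors collapses this to a sum of geometric-type series in an effective per-degree signal strength set by the eigenvalue placement, plus bounded cross-contributions between directions. The key point is that pinning the coloring subspace to the spectral edge is precisely what makes the planting orthogonal to the noise and creates no pushout, so these series converge uniformly in $n$ and yield a genuine $O_k(1)$ bound (not a slowly-growing one), with the multi-spike bookkeeping only inflating the constant by a factor depending on $k$. This is the ``general-purpose bounds on the low-degree likelihood ratio for multi-spiked matrix models'' advertised in the introduction---proved once at that level of generality (Wigner here, with the Wishart case mirroring it), then specialized; the $k=2$ instance is \cite{BKW-2019-ConstrainedPCA}.

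\emph{Step 4 (conclusion) and main obstacle.} Chaining Steps 2--3 through the low-degree heuristic: for every $\delta>0$ there is no $\exp(O(n^{1-\delta}))$-time distinguisher of $\PP$ from $\GOE(n)$, hence no such algorithm certifies $(2-\eps)n$ on $\Gamma_k$, which is Conjecture~\ref{conj:gauss-cert} (it remains a conjecture precisely because the low-degree bound is heuristic evidence, not an unconditional proof). The hard part is Step 3 carried all the way to near-linear degree: analyzing a $(k-1)$-spiked Gaussian model whose spikes are planted ``on eigenvectors.'' One must (a) fix a regularized planted model whose low-degree advantage is well-posed and \emph{not} cheaply detectable while still having $\Gamma_k\gtrsim 2n$---reconciling ``eigenvalue at the edge'' (needed for the large quadratic form) with enough smearing to kill trivial detectors but without opening an outlier below $-2$; and (b) control the multigraph sum, especially the cross-terms between distinct planted directions and the contributions of high-degree $\alpha$, tightly enough to stay $O(1)$ rather than merely $n^{o(1)}$ up to $D\approx n^{1-\delta}$. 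The $k\ge 3$ bookkeeping, and the fact that the spikes are the $(k-1)$ correlated coordinates of a random balanced partition rather than a single Rademacher or spherical vector, are what make this genuinely harder than the $k=2$ case of \cite{BKW-2019-ConstrainedPCA}.
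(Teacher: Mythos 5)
Your high-level template --- reduce certification to detection, then give low-degree evidence against detection --- is the paper's template too, and your Step~2 is the same kind of reduction-from-certification argument. But the detection problem the paper routes through is materially different from your Step~1 construction, and that difference is precisely what resolves the obstacle you flag in Step~4.

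Rather than defining a planted distribution over $\RR^{n\times n}_{\sym}$ with $k-1$ eigenvalues pinned to $-2$ (which, as you note, sticks out at the GOE edge spacing and so forces a regularization step you never specify), the paper reduces to detection in the spiked \emph{Wishart} model with spike prior $\sX_k$, a negative $\beta\in(-1,0)$ near $-1$, and aspect ratio $\gamma>1$ near $1$ (Theorem~\ref{thm:gauss-reduction}). The reduction's key move is: given Wishart samples $\by_1,\ldots,\by_N$, draw an \emph{independent} $\widetilde\bW\sim\GOE(n)$, keep its eigenvalues $\blambda_1\le\cdots\le\blambda_n$, and reassign its top-$N$ eigenspace to $\bV=\mathrm{span}\{\by_1,\ldots,\by_N\}$ and its bottom eigenspace to $\bV^\perp$, via uniformly random orthonormal bases of each. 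Under the Wishart null, $\bV$ is Haar-random and independent of $\widetilde\bW$, so the resulting $\bW$ is \emph{exactly} $\GOE(n)$: no atom at $-2$, no spectral artifact to smear away, nothing for a trivial detector to latch onto. Under the planted model (with $\beta<0$), the samples are depleted along the $k-1$ planted directions, so the partition subspace lands essentially in $\bV^\perp$ and is paired with the small eigenvalues of $\widetilde\bW$; one then shows $\Gamma_k(\bW)>(2-\epsilon)n$ w.h.p.\ by testing the planted partition matrix against $\bW$ and controlling the cross terms with the spiked-covariance edge estimate $\bmmu\to(\sqrt{\gamma}-1)^2$. The low-degree evidence that closes the loop is then Theorem~\ref{thm:general-wishart} for the standard spiked Wishart model, which has a clean closed-form for $\|L^{\le D}\|^2$ (Proposition~\ref{prop:wish-facts}, Appendix~\ref{app:wishart-LDLR}). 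Two consequences for your proposal: (a) your Step~4 dilemma --- reconciling an eigenvalue atom at $-2$ with enough smearing to defeat trivial detectors while keeping $\Gamma_k$ large --- simply never arises; and (b) your Step~3 cannot invoke the paper's Theorem~\ref{thm:general-wigner} as written, since the underlying formula~\eqref{eq:L-wig} requires the planted observation to be $\bY=\lambda\bX+\bW$ with $\bX$ independent of the noise, whereas your Step~1 eigenvalue-replacement construction is not an additive spiked Wigner model. The paper's Wishart route is the machinery that actually applies.
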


\noindent Our evidence for this conjecture can be seen as a generalization of~\cite{BKW-2019-ConstrainedPCA}, which handles the $k=2$ case. In Section~\ref{sec:reduction-gaussian} we give a reduction from a certain hypothesis testing problem to the certification problem in question. In analogy to our results on coloring, this reduction can be seen as constructing a planted distribution that directly plants an eigenspace with no pushout effect; this planting is computationally quiet, conditional on hardness of the testing problem. The testing problem is a particular instance of the spiked Wishart model with a rank-$(k-1)$ negative spike. In Section~\ref{sec:low-deg-spike} below, we state results analyzing the low-degree likelihood ratio (see Section~\ref{sec:low-deg-intro}) for this model; these results suggest that fully exponential time is required (in the appropriate parameter regime).

As a by-product of our analysis, we give new bounds on the low-degree likelihood ratio for a wide class of multi-spiked matrix models (both Wigner and Wishart), which may be of independent interest. These results also extend to certain binary-values analogues of these problems, including (a variant of) the stochastic block model. These results are also discussed in Section~\ref{sec:low-deg-spike} below.

\subsubsection{Low-Degree Hardness for General Spiked Models}
\label{sec:low-deg-spike}

We consider general variants of the spiked Wigner and Wishart models, defined as follows.

\begin{definition}
    Let $\sX = (\sX_n)$ be a probability measure over $\RR^{n \times n}_{\sym}$, the symmetric $n \times n$ matrices. The \emph{general spiked Wigner} model with spike prior $\sX$ and signal-to-noise ratio $\lambda \in \RR$ is specified by the following null and planted distributions over $\RR^{n \times n}_{\sym}$.
    \begin{itemize}
        \item Under $\QQ$, draw $\bY \sim \GOE(n)$.
        \item Under $\PP$, let $\bY = \lambda \bX + \bW$ where $\bX \sim \sX$ and $\bW \sim \GOE(n)$, independently.
    \end{itemize}
\end{definition}

\begin{definition}
    \label{def:general-wishart-model}
    Let $\sX = (\sX_n)$ be a probability measure over $\RR^{n \times n}_{\sym}$. The \emph{general spiked Wishart} model with spike prior $\sX$, signal-to-noise ratio $\beta > -1$, and number of samples $N \in \NN$, is specified by the following null and planted distributions over $(\by_1, \dots, \by_N) \in (\RR^n)^N$.
    \begin{itemize}
        \item Under $\QQ$, draw $\by_u \sim \sN(0, \id_n)$ independently for $u \in [N]$.
        \item Under $\PP$, first draw $\widetilde \bX \sim \sX$ and define
        \begin{equation}
            \label{eq:wish-prior}
            \bX = \begin{cases}
                \widetilde \bX & \text{if }\beta \widetilde \bX \succ -\id_n, \\
                0 & \text{else}.
            \end{cases}
        \end{equation}
        Then draw $\by_u \sim \sN( 0, \id_n + \beta \bX)$ independently for $u \in [N]$.
    \end{itemize}
\end{definition}

\noindent The purpose of~\eqref{eq:wish-prior} is to ensure that $\id_n + \beta X$ is a valid covariance matrix. We will consider priors for which the first case of~\eqref{eq:wish-prior} occurs with high probability. Specifically, we focus on the following class of priors that are PSD with constant rank.

\begin{definition}
    \label{def:pi}
    Fix an integer $k \ge 1$. Let $\pi$ be a probability measure supported on a bounded subset of $\RR^k$, satisfying $\EE[\pi] = 0$ and $\|\cov(\pi)\| = 1$. Let $\sX(\pi)$ denote the spike prior which outputs $\bX = \frac{1}{n}\bU\bU^\top$, where $\bU$ is $n \times k$ with each row distributed according to $\pi$. (We do not allow $\pi$ to depend on $n$.)
\end{definition}

\noindent It is well known in random matrix theory that polynomial-time detection in the Wigner model is possible when $|\lambda| > 1$, by thresholding the maximum (or minimum if $\lambda < 0$) eigenvalue of $\bY$~\cite{fp,CDF-wigner}. Similarly, poly-time detection in the Wishart model is possible when $\beta^2 > n/N$, by thresholding the maximum or minimum eigenvalue of the sample covariance matrix $\bY = \frac{1}{N} \sum_i \by_i \by_i^\top$~\cite{bbp,BS-spiked}. In the general setting above, we have the following bounds on the norm of the low-degree likelihood ratio $\|L^{\le D}\|$ (see Section~\ref{sec:low-deg-intro}), which suggest that fully exponential time is required to solve the detection problem below this spectral threshold. Our results are consistent with the computational thresholds predicted by~\cite{LKZ-mmse}, where it was shown that the approximate message passing algorithm fails below the spectral threshold. The proofs are deferred to Section~\ref{sec:low-deg}.

\begin{theorem}
    \label{thm:general-wigner}
    Fix constants $k \ge 1$ and $\lambda \in \RR$. Fix $\pi$ satisfying the requirements in Definition~\ref{def:pi}. Consider the general spiked Wigner model with spike prior $\sX(\pi)$. If $|\lambda| < 1$ then $\|L^{\le D}\| = O(1)$ for any $D = o(n/\log n)$.
\end{theorem}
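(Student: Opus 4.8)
The plan is to compute (or at least bound) the second moment $\|L^{\le D}\|^2 = \expected_{\bY \sim \QQ}[(L^{\le D}(\bY))^2]$ by expanding $L^{\le D}$ in an orthonormal polynomial basis adapted to the GOE null distribution. Under $\QQ = \GOE(n)$ the entries $\{\bY_{u,v} : u \le v\}$ are independent Gaussians, so the natural basis is a product of (normalized) Hermite polynomials, one factor per matrix entry; the squared norm $\|L^{\le D}\|^2$ then equals $\sum_{|\alpha| \le D} \widehat{L}(\alpha)^2$, where $\alpha$ ranges over multi-indices on the entries with total degree at most $D$, and $\widehat{L}(\alpha) = \expected_{\PP}[h_\alpha(\bY)]$. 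Since under $\PP$ we have $\bY = \lambda \bX + \bW$ with $\bW \sim \GOE(n)$ independent of $\bX$, Gaussian integration by parts (or the generating-function identity for Hermite polynomials) gives a clean closed form: $\expected_{\PP}[h_\alpha(\bY)] = \expected_{\bX}\!\big[ \prod_{(u,v)} (\text{Hermite coefficient}) \cdot (\lambda \bX_{u,v})^{\text{something}} \big]$, which after collecting terms expresses $\|L^{\le D}\|^2$ as a sum over pairs of ``matchings'' / multigraphs on $[n]$ weighted by powers of $\lambda$ and by moments of the spike $\bX = \frac1n \bU \bU^\top$.

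The next step is to organize this sum by the combinatorial structure of the multigraphs. Because $\bX$ has rank $k = O(1)$ and entries of size $O(1/n)$ (each row of $\bU$ is drawn from a fixed bounded prior $\pi$), a multigraph contributing to the sum with $e$ edges and $v$ vertices carries a factor of roughly $\lambda^{e} \cdot n^{-e} \cdot n^{v} \cdot (\text{bounded constant})$ after taking the expectation over $\bU$ — the $n^{v}$ coming from the free choice of vertex labels and the $n^{-e}$ from the entrywise scaling of $\bX$, with the expectation over the $k$-dimensional rows contributing only $O(1)$ factors (this is where boundedness of $\pi$ and rank $k$ are used). The key combinatorial fact is that connected multigraphs contributing nonzero weight must have each edge used with multiplicity $\ge 2$ (odd Hermite moments vanish), so $e \ge 2(v-1)$ roughly, forcing $n^{v - e} \le n^{-v/2+1}$ — i.e., each connected component of size $\ge 2$ costs a power of $n$ unless it is ``balanced'' in a way that the $\lambda$-power exactly compensates. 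This is the mechanism by which $|\lambda| < 1$ ensures the geometric series converges: a balanced component contributes weight $O(\lambda^{2}) \cdot (\text{const})$ per unit, and summing over all such configurations of total degree $\le D = o(n/\log n)$ gives a convergent geometric-type bound $O(1)$, whereas the error terms are suppressed by negative powers of $n$ and the constraint $D = o(n/\log n)$ keeps $n^{\Omega(D)}$-type losses from overwhelming them.

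Concretely, I would (i) set up the Hermite expansion and derive the closed form for $\widehat{L}(\alpha)$ via Gaussian integration by parts; (ii) rewrite $\|L^{\le D}\|^2$ as a sum over multigraphs $H$ on vertex set contained in $[n]$ with all edge multiplicities even and $|E(H)| \le D$, with weight $W(H) = \lambda^{2|E(H)|} \cdot \expected_{\bU}[\text{product of entries of } \bX \text{ over } H]$ appropriately normalized; (iii) bound $|W(H)|$ by a product over connected components, using $\|\cov(\pi)\|=1$, $\expected[\pi]=0$, boundedness of $\pi$, and $\Rank(\bX)\le k$ to control the per-component constant uniformly in $n$; (iv) sum the resulting bound as a power series in $\lambda^2$ over all component shapes, showing it is $O(1)$ when $|\lambda|<1$, with the remaining $n$-dependent error terms absorbed because $D = o(n/\log n)$. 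The main obstacle I expect is step (iii): carefully accounting for the contribution of multigraph components that are trees-with-doubled-edges versus those with extra cycles, and verifying that the $\expected[\pi] = 0$ condition kills exactly the ``dangling'' structures that would otherwise make the per-edge weight $O(1)$ rather than $O(\lambda^2/n \cdot n) = O(\lambda^2)$ — in other words, proving that the only surviving connected shapes at leading order in $n$ are ``cycles'' (in the multigraph sense), whose total weight forms the geometric series $\sum_{\ell \ge 1} c_\ell \lambda^{2\ell}$ that converges precisely when $|\lambda| < 1$. Handling the general prior $\pi$ (as opposed to, say, Rademacher) means these per-component constants $c_\ell$ are moment expressions of $\pi$ that must be bounded by an explicit function growing at most exponentially in $\ell$, which is where the boundedness assumption on the support of $\pi$ does the real work.
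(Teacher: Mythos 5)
Your plan — expand in Hermite polynomials, then organize $\|L^{\le D}\|^2$ as a sum over multigraph shapes and show the surviving (cycle-like) contributions form a geometric series in $\lambda^2$ — is a genuinely different route from the paper's. After the Hermite step (where you agree with the paper, which simply quotes the resulting closed form $\|L^{\le D}\|^2 = \EE_{\bX,\bX'}\exp^{\le D}\!\bigl(\tfrac{\lambda^2 n}{2}\langle\bX,\bX'\rangle\bigr)$ from \cite{lowdeg-notes}), the paper deliberately avoids the multigraph bookkeeping. It instead notes $\langle\bX,\bX'\rangle = \tfrac1{n^2}\|\bU^\top\bU'\|_\F^2 \ge 0$, so $\exp^{\le D}(\cdot)\le\exp(\cdot)$ pointwise on the support, and bounds the resulting exponential moment by a \emph{local} subgaussianity estimate for $\bR=\bU^\top\bU'$ (a sum of $n$ i.i.d.\ bounded $k\times k$ rank-one matrices with $\cov(\pi)^{\otimes 2}\preceq\id$), after splitting off a large-deviations term that is killed by $\Pr\{\langle\bX,\bX'\rangle>\Delta\}=e^{-\Omega(n)}$ against the $n^{O(D)}$ growth of the truncated exponential with $D=o(n/\log n)$. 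The concentration argument replaces all the shape-counting and plays the same role as your intended step (iv).

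The concrete gap is in your step (iii)/(iv), which you yourself flag as the main obstacle, and I do not think it closes as stated. You say the per-component constants $c_\ell$ need only be bounded by something ``growing at most exponentially in $\ell$,'' using boundedness of $\pi$; but boundedness gives only $c_\ell\le B^{2\ell}$ for some $B\ge 1$ depending on the support of $\pi$, which would prove the threshold $|\lambda|<B^{-1}$, not $|\lambda|<1$. To get the sharp constant $1$ you need $\limsup_\ell c_\ell^{1/\ell}\le 1$, i.e.\ the per-edge weight must be governed by the \emph{second} moment $\|\cov(\pi)\|=1$ rather than by the higher moments of $\pi$. That is a real statement requiring proof: you must show that whenever a shape couples more than two copies of a given $\bU$-row (so that a third or higher moment of $\pi$ enters), that shape carries a compensating $n^{-\Omega(1)}$ factor, and that these corrections stay negligible when summed over shapes with up to $D=o(n/\log n)$ edges. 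The paper's local-subgaussianity device is precisely how it closes this: the moment generating function satisfies $M(T)\le\exp(\tfrac12(1+\eta)\|T\|_\F^2)$ only for $\|T\|_\F\le\varepsilon$, which isolates the second-moment behavior, and $\eta$ can be taken arbitrarily small (then $\lambda^2(1+\eta)<(1-\delta)^2$), while the region outside the local ball is handled by the separate large-deviations split. A purely combinatorial proof along your lines would have to reproduce this cancellation by hand, which is considerably more delicate than your sketch suggests.
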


\begin{theorem}
    \label{thm:general-wishart}
    Fix constants $k \ge 1$, $\beta > -1$, and $\gamma > 0$. Fix $\pi$ satisfying the requirements in Definition~\ref{def:pi}. Consider the general spiked Wishart model with any $N = N_n$ satisfying $n/N \to \gamma$ as $n \to \infty$, and with spike prior $\sX(\pi)$. If $\beta^2 < \gamma$ then $\|L^{\le D}\| = O(1)$ for any $D = o(n/\log n)$.
\end{theorem}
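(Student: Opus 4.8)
The plan is to bound the low-degree likelihood-ratio norm $\|L^{\le D}\|$ of Theorem~\ref{thm:general-wishart} directly through its Hermite expansion, exploiting that under $\PP$ the data is a mixture of centered product Gaussians whose (random, rank-$\le k$, PSD) covariance perturbation $\beta\bX$ has operator norm $O(1/n)$ when projected against an independent copy. Writing $L = \EE_{\bX\sim\sX(\pi)}\prod_{u=1}^N f_{\bX}(\by_u)$ with $f_{\bX}$ the likelihood ratio of $\sN(0,\id_n+\beta\bX)$ against $\sN(0,\id_n)$, one notes that $\log f_{\bX}$ is a quadratic form, so $f_{\bX}$ is an even function of its argument, its Hermite coefficients are supported on even degrees, and a short Gaussian computation (in which all cross terms coming from the nonlinearity of $\id_n-(\id_n+\beta\bX)^{-1}$ cancel) shows that $\sum_{m\ge0} z^{2m}$ times the degree-$2m$ correlation of $f_{\bX}$ and $f_{\bX'}$ equals $\det(\id_n - z^2\beta^2\bX\bX')^{-1/2}$. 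Since $\|L^{\le D}\|^2$ involves two independent copies $\bX,\bX'\sim\sX(\pi)$, every term of the expansion is a perfect square in expectation, so truncating at degree $D$ only discards nonnegative contributions; as a special case of the general multi-spiked Wishart analysis in Section~\ref{sec:low-deg} this yields
\begin{equation*}
    \|L^{\le D}\|^2 \;=\; \EE_{\bX,\bX'\sim\sX(\pi)}\!\left[\ \sum_{\ell=0}^{\lfloor D/2\rfloor}\,[z^{2\ell}]\,\det\!\big(\id_n - z^2\beta^2\,\bX\bX'\big)^{-N/2}\ \right].
\end{equation*}

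Next I would reduce to the $k\times k$ overlap matrix. Because $\bX=\tfrac1n\bU\bU^\top$ has rank $\le k$, the nonzero eigenvalues $\mu_1,\dots,\mu_k\ge0$ of $\bX\bX'$ are exactly those of $S:=RR^\top$, $R:=\tfrac1n\bU^\top\bU'\in\RR^{k\times k}$, so $\det(\id_n - w\,\bX\bX')^{-N/2}=\prod_i(1-w\mu_i)^{-N/2}$; extracting the $w^\ell$ coefficient, using $\binom{\ell_i+N/2-1}{\ell_i}\le(1+o(1))^{\ell_i}\tfrac{(N/2)^{\ell_i}}{\ell_i!}$ (uniformly for $\ell_i\le\ell\le D/2=o(n)$), and applying the multinomial theorem gives
\begin{equation*}
    [z^{2\ell}]\det\!\big(\id_n - z^2\beta^2\,\bX\bX'\big)^{-N/2}\;\le\;(1+o(1))^\ell\,\frac{1}{\ell!}\Big(\tfrac{\beta^2 N}{2}\,\tr(\bX\bX')\Big)^{\ell}.
\end{equation*}
Since $\tr(\bX\bX')=\tr(S)=\tfrac1n\|V\|_{\mathrm F}^2$ for the overlap matrix $V:=\tfrac1{\sqrt n}\bU^\top\bU'\in\RR^{k\times k}$, and $N/n\to1/\gamma$, the whole problem reduces to controlling the moments $\EE[\|V\|_{\mathrm F}^{2\ell}]$:
\begin{equation*}
    \|L^{\le D}\|^2\;\le\;(1+o(1))\sum_{\ell=0}^{\lfloor D/2\rfloor}\frac{(1+o(1))^\ell}{\ell!}\Big(\frac{\beta^2}{2\gamma}\Big)^{\ell}\,\EE\big[\|V\|_{\mathrm F}^{2\ell}\big].
\end{equation*}

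The heart of the argument is a Gaussian comparison for these overlap moments. Each entry $V_{st}=\tfrac1{\sqrt n}\sum_a U_{as}U'_{at}$ is a normalized sum of $n$ i.i.d.\ bounded mean-zero terms (this is where I use that $\pi$ is bounded and independent of $n$, per Definition~\ref{def:pi}), so for $\ell\le D/2=o(n/\log n)$ one expects $\EE[\|V\|_{\mathrm F}^{2\ell}]\le(1+o(1))^\ell\,\EE[\|H\|_{\mathrm F}^{2\ell}]$, where $H$ is the Gaussian $k\times k$ matrix with $\EE[H_{st}H_{s't'}]=\cov(\pi)_{ss'}\cov(\pi)_{tt'}$. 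Diagonalizing $\cov(\pi)$ shows that $\|H\|_{\mathrm F}^2$ is a weighted sum of independent $\chi^2_1$ variables whose largest weight is $\|\cov(\pi)\|^2=1$, so $\EE[\|H\|_{\mathrm F}^{2\ell}]\le\EE[(\chi^2_{k^2})^\ell]=2^\ell\,\Gamma(\ell+k^2/2)/\Gamma(k^2/2)\le C_k\,2^\ell\,\ell^{k^2/2}\,\ell!$. Substituting back, the $\ell!$ factors cancel and $\|L^{\le D}\|^2\le C_k\sum_{\ell\ge0}\big((1+o(1))\,\beta^2/\gamma\big)^\ell\,\ell^{k^2/2}$, which is $O(1)$ precisely because $\beta^2/\gamma<1$ strictly. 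It is the normalization $\|\cov(\pi)\|=1$ in Definition~\ref{def:pi} that fixes the base of this geometric series at $\beta^2/\gamma$ and thus places the threshold exactly at $\beta^2=\gamma$; for $k=1$ this recovers the Wishart bound of~\cite{BKW-2019-ConstrainedPCA}, and the Wigner analogue (Theorem~\ref{thm:general-wigner}) is identical with the determinant replaced by $\exp(\tfrac12 z^2\lambda^2\langle\bX,\bX'\rangle_{\mathrm F})$.

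The step I expect to be the main obstacle is exactly the Gaussian comparison $\EE[\|V\|_{\mathrm F}^{2\ell}]\le(1+o(1))^\ell\,\EE[\|H\|_{\mathrm F}^{2\ell}]$ valid up to order $\ell\asymp D=o(n/\log n)$: it is not enough that the two moments be comparable, one needs a loss of only $(1+o(1))^\ell$ rather than any constant factor larger than $1$ per power of $\ell$, since such a loss would move the threshold off $\beta^2=\gamma$. This calls for a moment-method (equivalently cumulant) expansion of $\EE[\prod_{p=1}^\ell V_{s_pt_p}^2]$ in which the main term comes from configurations pairing the $2\ell$ row-indices (reproducing the Gaussian moment through $\cov(\pi)$), while every ``defect'' (a row-index used three or more times) is suppressed by a power of $n^{-1/2}$; summing the $\exp(O(\ell\log\ell))$ defect patterns against this suppression shows the total correction is $(1+o(1))^\ell$ exactly when $\ell\log\ell=o(n)$, which is the origin of the $D=o(n/\log n)$ hypothesis. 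The remaining ingredients --- the even-degree generating function for $f_{\bX}$, the nonnegativity justifying truncation, the coefficient-extraction estimate, and the $\chi^2$ moment bound --- are routine, though one must check that the various $(1+o(1))$ factors, raised to the $\ell$-th power and summed over $O(D)$ values of $\ell$, remain harmless.
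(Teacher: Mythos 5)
Your route is genuinely different from the paper's, and in outline it is viable, but the step you yourself flag as the main obstacle is indeed a real gap, and the way you propose to fill it is harder than you suggest while also being more precise than you actually need.

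\textbf{How the two proofs differ.} The paper (Section~\ref{sec:wishart}) starts from the same identity $\|L^{\le D}\|^2 = \Ex_{\bX,\bX'}\sum_{d\le D} r_d(\beta\bX,\beta\bX')$ with $\sum_d r_d t^d = \det(\id_n-t^2\beta^2\bX\bX')^{-N/2}$, but then splits into a small-deviations term (overlap $\langle\bX,\bX'\rangle\le\Delta$) and a large-deviations term. On the small-deviations event it drops the truncation entirely (by nonnegativity of the $r_d$), bounds the resulting determinant by $\exp\big((1+\eta)\tfrac{\beta^2 N}{2n^2}\|\bR\|_\F^2\big)$ via Lemma~\ref{lem:Delta}, and finishes with the local-subgaussian exponential-moment bound, Proposition~\ref{prop:exp-norm-bound}. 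On the large-deviations event, a crude bound $r_d\le(Dn)^{O(D)}$ against the $\exp(-\Omega(n))$ tail probability closes things, and this is where $D=o(n/\log n)$ is used. You instead bound each Taylor coefficient by $\tfrac{(1+o(1))^\ell}{\ell!}\big(\tfrac{\beta^2 N}{2}\tr(\bX\bX')\big)^\ell$ (your binomial-coefficient estimate is correct and uniform for $\ell\le D/2=o(n)$, since $D/N=o(1)$), and then push the entire problem onto controlling $\Ex[\|V\|_\F^{2\ell}]$ for the rescaled $k\times k$ overlap $V=n^{-1/2}\bU^\top\bU'$. This is a genuine reorganization: you have moved the small/large split from the LDLR to the moment level.

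\textbf{Where the gap is.} Your crucial unproved claim is $\Ex[\|V\|_\F^{2\ell}]\le(1+o(1))^\ell\,\Ex[\|H\|_\F^{2\ell}]$ for all $\ell\le D/2=o(n/\log n)$. Two issues. First, the $(1+o(1))^\ell$ precision you ask for is not actually needed: any loss of $(1+\eta)^\ell$ with a fixed $\eta>0$ small enough that $(1+\eta)\beta^2/\gamma<1$ suffices, and such an $\eta$ exists for every $\beta^2<\gamma$. Your claim that a constant-per-power loss ``would move the threshold off $\beta^2=\gamma$'' is incorrect; the threshold is approached from below, not achieved. Second, the pairing-plus-defects moment expansion you sketch does not cleanly give even the weaker $(1+\eta)^\ell$: the non-pairing contributions carry bounded-support constants $C^{2\ell}$ that must be compared against the Gaussian moment $(2\ell-1)!!$ rather than against $1$, and the counting of defect configurations at deficit $j$ (bounded only by $\sim(\ell^2)^j$ merges times a pairing count) does not immediately combine with the $n^{-j}$ suppression to yield a per-power loss that tends to $1$. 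Making this combinatorics precise would be substantial work, and it is the hardest step of your proposal, not a routine one.

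\textbf{How to close it.} The moment bound you want follows much more directly from exactly the local-subgaussian machinery the paper already sets up (Propositions~\ref{prop:local-chernoff}--\ref{prop:exp-norm-bound}): $\bR=\bU^\top\bU'$ is $\varepsilon$-local $(1+\eta)n$-subgaussian, so $V=\bR/\sqrt{n}$ is $\varepsilon\sqrt{n}$-local $(1+\eta)$-subgaussian; integrating the tail bound of Proposition~\ref{prop:norm-bound} gives $\Ex[\|V\|_\F^{2\ell}]\lesssim \mathrm{poly}(\ell)\,\big(\tfrac{1+\eta}{(1-\delta)^2}\big)^\ell 2^\ell\ell!$ for $\ell$ up to roughly $\varepsilon^2 n$, with a correction from $\|V\|_\F$ exceeding its local range that is $\le(Kn)^\ell\exp(-\Omega(n))=o(1)$ exactly when $\ell=o(n/\log n)$ --- this is the real origin of the degree restriction, not the combinatorial accounting you propose. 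With this (and with the prior-thresholding case of~\eqref{eq:wish-prior} handled as the paper does, contributing a benign $+1$), your chain of inequalities closes and gives $\|L^{\le D}\|^2\le C_k\sum_\ell\big((1+\eta')\beta^2/\gamma\big)^\ell\ell^{O(1)}=O(1)$. So the structure of your argument is sound and instructive, but as written the key estimate is a conjecture, and the path you indicate for proving it is not the natural one.
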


\begin{remark}
    In the setting of Theorem~\ref{thm:general-wishart}, the first case of~\eqref{eq:wish-prior} holds with high probability because $\beta > -1$ and $\|\widetilde \bX\| \to \|\cov(\pi)\| = 1$ (in probability). To see this, write $\|\widetilde \bX\| = \|\frac{1}{n} \bU\bU^\top\| = \|\frac{1}{n} \bU^\top \bU\|$; being an average of $n$ i.i.d.\ $k \times k$ matrices, $\frac{1}{n} \bU^\top \bU$ converges in probability to its expectation, which is $\cov(\pi)$.
\end{remark}

We also extend our framework to binary-valued problems. In Proposition~\ref{prop:binary-compare} we give a general result analyzing the low degree likelihood ratio in binary-valued problems via a comparison to the analogous Gaussian-valued problem. As an application, we study the following variant of the stochastic block model (SBM).

\begin{definition}
    \label{def:sbm}
    The \emph{stochastic block model} with parameters $k \ge 2$, $d > 0$, $\eta \in [-1/(k-1),1]$ (constants not depending on $n$) is specified by the following null and planted distributions over $n$-vertex graphs.
    \begin{itemize}
        \item Under $\QQ$, for every $u<v$ the edge $(u,v)$ occurs independently with probability $d/n$.
        \item Under $\PP$, each vertex is independently assigned a community label drawn uniformly from $[k]$. Conditioned on these labels, edges occur independently. If $u,v$ belong to the same community then edge $(u,v)$ occurs with probability $(1+(k-1)\eta)d/n$; otherwise $(u,v)$ occurs with probability $(1-\eta)d/n$.
    \end{itemize}
\end{definition}

\noindent Here $k$ is the number of communities, $d$ is the average degree, and $\eta$ is a signal-to-noise ratio: the planted $k$-cut cuts a fraction $\frac{k-1}{k}(1-\eta)$ of the edges on average. Known polynomial-time algorithms only succeed at distinguishing $\PP$ from $\QQ$ above the so-called Kesten--Stigum (KS) threshold, i.e., when $d \eta^2 > 1$ \cite{massoulie,mns,AS-acyclic}. We prove the following in Appendix~\ref{app:low-deg-sbm}.

\begin{theorem}
    \label{thm:sbm}
    Consider the stochastic block model as in Definition~\ref{def:sbm} with parameters $k, d, \eta$ fixed. If $d\eta^2 < 1$ then $\|L^{\le D}\| = O(1)$ for any $D = o(n/\log n)$.
\end{theorem}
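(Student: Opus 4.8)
The plan is to exhibit the stochastic block model as the binary analogue of a concrete general spiked Wigner model, and then to combine the comparison inequality of Proposition~\ref{prop:binary-compare} with the Gaussian bound of Theorem~\ref{thm:general-wigner}. The first step is a change of parametrization: writing $P = P^{(\bsigma)}$ for the partition matrix of the planted labelling $\bsigma$ (Definition~\ref{def:labeling_partition}) and using the elementary identity $k \cdot \indicator{\bsigma(u) = \bsigma(v)} - 1 = (k-1) P_{u,v}$, the two cases in Definition~\ref{def:sbm} collapse to the single formula $\Pr[(u,v) \in E \mid \bsigma] = \frac{d}{n}\bigl(1 + (k-1)\eta\, P_{u,v}\bigr)$. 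Consequently, after centering the adjacency matrix and rescaling by $1/\sqrt d$ so that the ``noise'' entries carry the $\GOE(n)$ variance scaling $\Theta(1/n)$, the planted model has mean $\frac{\sqrt d\,(k-1)\eta}{n} P = \lambda \bX$ with $\lambda \colonequals \sqrt d\,\eta$ and $\bX \colonequals \frac1n \bU\bU^\top$, where the $u$-th row of $\bU \in \RR^{n \times k}$ is $w_{\bsigma(u)}$ and $w_i \colonequals \sqrt k\,(e_i - \tfrac1k \onesvec)$ for $i \in [k]$. Since the community labels are i.i.d.\ uniform on $[k]$ (Definition~\ref{def:sbm}), the rows of $\bU$ are i.i.d., and one checks directly that $\pi \colonequals \mathrm{Unif}\{w_1,\dots,w_k\}$ is supported on a bounded set, has $\EE[\pi] = 0$, and has $\cov(\pi) = \id_k - \tfrac1k \onesmat_k$ with $\|\cov(\pi)\| = 1$; hence $\pi$ satisfies Definition~\ref{def:pi} and $\bX \sim \sX(\pi)$.

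With this identification, the ``analogous Gaussian-valued problem'' in the sense of Proposition~\ref{prop:binary-compare} --- the one obtained by matching the first two moments of every edge variable --- is exactly the general spiked Wigner model with spike prior $\sX(\pi)$ and signal-to-noise ratio $\lambda = \sqrt d\,\eta$. Proposition~\ref{prop:binary-compare} then bounds $\|L^{\le D}\|$ for the SBM in terms of $\|L^{\le D}\|$ for this Wigner model, up to corrections that are negligible in the relevant range of $D$. The hypothesis $d\eta^2 < 1$ is precisely $\lambda^2 < 1$, so Theorem~\ref{thm:general-wigner} applied to this Wigner model gives $\|L^{\le D}\| = O(1)$ for any $D = o(n/\log n)$; feeding this back through Proposition~\ref{prop:binary-compare} yields the stated bound for the SBM.

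The step I expect to be the main obstacle is the comparison itself, i.e.\ the content of Proposition~\ref{prop:binary-compare} specialized to a \emph{sparse} graph. Each edge indicator is $\mathrm{Bernoulli}(d/n)$, which after rescaling to unit variance is very far from Gaussian: its $j$-th moment grows like $(n/d)^{j/2-1}$ for $j \ge 3$, so a term-by-term replacement of Bernoulli by Gaussian in the Fourier expansion of the likelihood ratio is not valid. What makes the comparison go through is the weakness of the per-edge signal --- a mean shift of order $1/n$ against noise of order $1/\sqrt n$ --- so that when $\|L^{\le D}\|^2$ is expanded over multisets of potential edges, the ``degenerate'' terms (where some edge appears with multiplicity at least $3$, or where squared-character corrections enter) carry extra powers of $d/n$ and sum to a lower-order contribution, while the leading part reproduces the Wigner computation, in which the surviving shapes are unions of cycles and convergence of the cycle sum is exactly the condition $d\eta^2 < 1$. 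Controlling the accumulated degenerate corrections is where the restriction $D = o(n/\log n)$ enters, consistent with the range of validity in Theorem~\ref{thm:general-wigner}. Granting Proposition~\ref{prop:binary-compare}, the remaining content of Theorem~\ref{thm:sbm} is the bookkeeping above: the edge-probability rewriting, the verification that $\pi$ meets Definition~\ref{def:pi}, and the substitution $\lambda = \sqrt d\,\eta$ into Theorem~\ref{thm:general-wigner}.
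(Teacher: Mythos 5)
Your high-level route is exactly the paper's: identify the SBM as a binary analogue of the general spiked Wigner model with spike prior $\sX(\pi_k)$ and $\lambda = \sqrt{d}\,\eta$, apply Proposition~\ref{prop:binary-compare} to reduce to the Gaussian case, and invoke Theorem~\ref{thm:general-wigner}; the bookkeeping (the partition-matrix rewriting, verifying $\pi_k$ meets Definition~\ref{def:pi}, and the $\lambda^2 = d\eta^2$ identification, which the paper derives as $\eta^2 d / (1-p) = (1+o(1)) d\eta^2$) is all correct. One detail you elide: the paper first adds self-loops with a $\tfrac{\eta}{\sqrt{2}}(k-1)$-tilted probability so that the diagonal of $\bU\bU^\top$ is literally part of the signal and the inner product $\langle \bX, \bX'\rangle$ matches the Wigner formula exactly; revealing this extra data can only increase $\|L^{\le D}\|$, so it is harmless.

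Where your picture of the argument goes wrong is the ``main obstacle'' paragraph. Proposition~\ref{prop:binary-compare} is not a term-by-term moment comparison with error terms to control, and sparsity plays no role in its validity: for a variable taking only two values the Fourier characters $\chi_S(Y) = \prod_{i \in S}Y_i$ with $S$ a \emph{set} already span all polynomials of degree $\le D$, so $\|L^{\le D}\|^2 = \sum_{|S| \le D}(\EE_\PP\chi_S)^2 = \sum_{|S|\le D}(\EE\chi_S(\bX))^2$, and the inequality simply relaxes subsets to weighted ordered multisets, which only adds nonnegative terms. Higher moments of the rescaled Bernoulli never appear, and there are no ``degenerate corrections'' that accumulate. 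Relatedly, the restriction $D = o(n/\log n)$ is not where the binary-to-Gaussian comparison needs slack; that bound comes entirely from the large-deviations part of the Wigner analysis (Lemma~\ref{lem:wig-large-dev}), which has to kill the $n^{O(D)}$ blowup of the truncated exponential against an $\exp(-\Omega(n))$ tail probability.
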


\noindent Prior work \cite{HS-bayesian,sam-thesis} has already given a low-degree analysis of this variant of the SBM, showing that the problem is low-degree-hard below the KS bound. Our result offers two advantages: (i) the proof is streamlined, following easily from our general-purpose machinery, without the need for direct combinatorial calculations, and (ii) we bound $\|L^{\le D}\|$ for $D$ all the way up to $o(n/\log n)$ instead of only $n^{0.01}$. As discussed in Section~\ref{sec:low-deg-intro}, item (ii) constitutes evidence that distinguishing $\PP$ from $\QQ$ in the SBM requires fully exponential time $\exp(n^{1-o(1)})$ below the KS bound.

\section{Reduction from Detection to Certification}

In this section we give formal proofs, for both the graph and Gaussian models, that hardness of a particular detection problem implies hardness of certification.

\subsection{The Graph Model}
\label{sec:reduction-graph}

We now give the proof of Theorem~\ref{thm:graph-cert-hardness}, which shows that hardness of detection in the noisy eSBM model implies hardness of certifying bounds on max-$k$-cut.

\begin{proof}[Proof of Theorem~\ref{thm:graph-cert-hardness}]
Assume for the sake of contradiction that some algorithm $\mathcal{A}$ certifies the upper bound $\MC_k(\bG) \le \frac{k-1}{k}(1+|\eta|-\eps)$ when $\bG \sim \sG_{n,d}$. We will use this to distinguish between $\widetilde{\sG}^{\,\mathrm{eq}}_{n,d,k,\eta,\delta}$ and $\sG_{n, d}$ for $\delta = \frac{d \eps (k-1)}{5k}$; this contradicts Conjecture~\ref{conj:esbm} because the assumption $|\eta| < \frac{2\sqrt{d-1}}{d}$ implies $d < d_{\mathrm{KS}}^{\,\mathrm{eq}}(\eta)$. Our detection algorithm takes as input a graph $G$ and outputs $\textsc{q}$ (``null'') if $\mathcal{A}(G) \le \frac{k-1}{k}(1+|\eta|-\eps)$ and $\textsc{p}$ (``planted'') otherwise. If $\bG \sim \sG_{n,d}$ then $\mathcal{A}(\bG) \le \frac{k-1}{k}(1+|\eta|-\eps)$ with high probability by assumption, and so the distinguisher outputs $\textsc{q}$. Now consider the case $\bG \sim \widetilde{\sG}^{\,\mathrm{eq}}_{n,d,k,\eta,\delta}$. A graph drawn from $\sG_{n, d, k, \eta}^{\,\mathrm{eq}}$ has a planted $k$-cut of fractional size $\frac{k-1}{k}(1+|\eta|)$, and the noise operator $\bT_\delta$ can remove at most $2\delta n$ edges from this cut. Thus,
\[ \mathcal{A}(\bG) \ge \MC_k(\bG) \ge \frac{k-1}{k}(1+\eta) - \frac{2\delta n}{|E|} \]
where
\[ \frac{2\delta n}{|E|} = \frac{4\delta}{d} = \frac{4}{d} \cdot \frac{d\eps(k-1)}{5k} < \eps \cdot \frac{k-1}{k}. \]
This implies $\mathcal{A}(\bG) > \frac{k-1}{k}(1+|\eta|-\eps)$ and so the distinguisher outputs $\textsc{p}$.
\end{proof}

\subsection{The Gaussian Model}
\label{sec:reduction-gaussian}

\begin{definition}
    \label{def:pi-k}
    Let $\pi_k$ be the distribution over $\RR^k$ given by $\sqrt{k} e_{\bi} - \onesvec/\sqrt{k}$ where $\bi \sim [k]$ uniformly at random. Let $\sX_k = \sX(\pi_k)$ be the associated spike prior, as defined in Definition~\ref{def:pi}.
\end{definition}

\noindent Here, $e_1,e_2,...$ denote the standard unit basis vectors and $\onesvec$ denotes the all-ones vector.

\begin{theorem}
    \label{thm:gauss-reduction}
    Suppose there exist constants $k \ge 2$ and $\epsilon > 0$ such that there is a time-$t(n)$ algorithm to certify the upper bound $(2-\epsilon)n$ on $\Gamma_k(\bW)$ over $\bW \sim \GOE(n)$. Then for some constants $\beta \in (-1,0)$ and $\gamma > 1$, there is a time-$(t(n)+\mathrm{poly}(n))$ algorithm achieving strong detection in the general spiked Wishart model with spike prior $\sX_k$.
\end{theorem}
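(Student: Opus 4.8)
The plan is to build, from a hypothetical time-$t(n)$ certification algorithm $\mathcal A$ for $\Gamma_k$ over $\GOE(n)$, a strong detector for the general spiked Wishart model with spike prior $\sX_k$, for some $\beta\in(-1,0)$ and some $\gamma>1$. The detector turns the samples $(\by_1,\dots,\by_N)$ into a symmetric matrix $A=\Phi(\by_1,\dots,\by_N)\in\RR^{n\times n}$ by a randomized polynomial-time map $\Phi$, runs $\mathcal A$ on $A$, and reports ``planted'' iff $\mathcal A(A)>(2-\epsilon/2)\,n$. Everything reduces to constructing $\Phi$ with two properties: (a) if the input is drawn from the null $\QQ$, the law of $A$ is contiguous to $\GOE(n)$; and (b) if the input is drawn from the planted $\PP$, whose hidden spike factorizes as $\bX=\tfrac{1}{n}\bU\bU^\top$ with $\bU$ encoding a near-balanced random $k$-coloring $\bsigma$, then with high probability $\lambda_{\min}(A)\ge -2-o(1)$ and $\langle P^{(\bsigma)},-A\rangle\ge (2-o(1))\,n$, so in particular $\Gamma_k(A)\ge(2-o(1))\,n$. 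Given (a) and (b), correctness is immediate: under $\QQ$, the event $\{\mathcal A(A)>(2-\epsilon/2)n\}$ has vanishing probability when $A\sim\GOE(n)$ by Definition~\ref{def:cert}(ii) (the certifier certifies the bound $(2-\epsilon)n$), hence vanishing probability under $\QQ$ by contiguity; under $\PP$ we have $\mathcal A(A)\ge\Gamma_k(A)>(2-\epsilon/2)n$ with high probability by (b) together with Definition~\ref{def:cert}(i). The overhead beyond one call to $\mathcal A$ is $\mathrm{poly}(n)$.

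First some bookkeeping. The prior $\pi_k$ has $\EE[\pi_k]=0$ and $\cov(\pi_k)=\id_k-\tfrac{1}{k}\onesmat_k$, a rank-$(k-1)$ orthogonal projection, so $\|\cov(\pi_k)\|=1$ and $\pi_k$ is admissible in Definition~\ref{def:pi}. Moreover $\bX=\tfrac{k-1}{n}P^{(\bsigma)}$ (up to scaling the Wishart spike \emph{is} the partition matrix of $\bsigma$), and $\|\bX\|\to1$, $\|\bX\|_\F^2\to k-1$ in probability, so the truncation in~\eqref{eq:wish-prior} is inactive with high probability and $\id_n+\beta\bX\succ0$. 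Since $\|P^{(\bsigma)}\|_*=\tr P^{(\bsigma)}=n$, one always has $\langle P^{(\bsigma)},-A\rangle\le -n\,\lambda_{\min}(A)$; thus (b) asks $\Phi$ to produce a matrix that looks spectrally like $\GOE(n)$ --- in particular with no eigenvalue detaching below $-2$ --- yet is tilted so that the combinatorial subspace $S=\mathrm{col}(\bU)$ nearly saturates this inequality against $-A$. This is precisely a ``quiet'' planting of an eigenspace at the lower spectral edge with no pushout, realized here as the pushforward of the negatively-spiked Wishart, in the spirit of the $k=2$ argument of~\cite{BKW-2019-ConstrainedPCA}.

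Property (b) should follow from the geometry of a strong negative spike. With $\beta$ close to $-1$, each sample $\by_u\sim\sN(0,\id_n+\beta\bX)$ has its component along $S$ shrunk by roughly $\sqrt{1+\beta}$, so the span $V=\mathrm{col}([\by_1\mid\cdots\mid\by_N])$ nearly avoids $S$; equivalently $S$ is nearly contained in $V^\perp$, which has dimension $n-N>0$ precisely because $\gamma>1$. Suppose $\Phi$ is arranged so that $V^\perp$ forms the lowest block of $A$'s spectrum --- necessarily occupying a band $[-2,-2+\delta]$ of the semicircle with $\delta=\delta(\gamma)\to0$ as $\gamma\to1^+$, since $\dim V^\perp=n(1-1/\gamma)$ --- while $\|A\|\le2+o(1)$. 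Then in the planted case, writing $P_W$ for the orthogonal projection onto a subspace $W$,
\[
    \langle P^{(\bsigma)},-A\rangle=\tfrac{n}{k-1}\langle\bX,-A\rangle\;\gtrsim\;(2-\delta)\,n-\tfrac{Cn}{k-1}\,\tr(P_SP_V)
\]
for an absolute constant $C$, and $\tr(P_SP_V)\to0$ as $\beta\to-1$ at fixed $\gamma$, so choosing $\gamma$ close enough to $1$ and then $\beta$ close enough to $-1$ makes the right-hand side exceed $(2-\epsilon/2)n$ with high probability. Here $\gamma>1$ is essential --- for $\gamma\le1$ the samples span all of $\RR^n$ and the mechanism is vacuous --- and since automatically $\beta^2<1<\gamma$, the chosen parameters lie in the low-degree-hard regime of Theorem~\ref{thm:general-wishart}, as the reduction requires to be meaningful.

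I expect the crux, and the hardest step, to be the design of $\Phi$ so that (a) holds. The naive attempt --- apply functional calculus to the sample covariance $\widehat\Sigma=\tfrac{1}{N}\sum_u\by_u\by_u^\top$, taking $A=f(\widehat\Sigma)$ for the monotone map $f$ transporting the Marchenko--Pastur law to the semicircle on $[-2,2]$, so that $A$ has Haar-distributed eigenvectors and approximately semicircular eigenvalues --- fails because for $\gamma>1$ the $(n-N)$-dimensional kernel $V^\perp$ of $\widehat\Sigma$ creates a macroscopic atom in the spectrum of $A$, whose law is then not contiguous to $\GOE(n)$. One must instead ``fill in'' this kernel with fresh Gaussian randomness --- for instance replacing the restriction of $A$ to $V^\perp$ by a rescaled, shifted independent $\GOE$ block, with $f$ retuned on $V$ --- so that the total spectrum is semicircular on $[-2,2]$ with $V^\perp$ as its lowest block, while the null law remains contiguous to $\GOE(n)$. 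Making this rigorous will require rigidity and local-law/universality inputs comparing the Wishart and Gaussian ensembles (matching the fine eigenvalue statistics of $A$ to those of $\GOE(n)$), together with concentration estimates establishing $\lambda_{\min}(A)\ge-2-o(1)$ and $\tr(P_SP_V)\to0$ in the planted case. Once $\Phi$ and properties (a)--(b) are in hand, the remaining steps --- fixing the threshold, bounding the two error probabilities, and accounting for the $\mathrm{poly}(n)$ overhead --- are routine.
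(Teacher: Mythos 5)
Your high-level reduction scheme is right: turn the Wishart samples into a symmetric matrix $A$, run the certifier, and threshold. Your geometric intuition for property (b) is also on target — with $\beta$ near $-1$ and $\gamma>1$, the column space $S$ of $\bU$ nearly avoids $V=\mathrm{span}\{\by_1,\dots,\by_N\}$, so a matrix whose bottom spectral band sits on $V^\perp$ will have a large $\Gamma_k$ witnessed by the partition matrix. But you flag the construction of $\Phi$ satisfying (a) as the crux, you abandon the naive $f(\widehat\Sigma)$ approach, and the replacement you sketch (retune $f$ on $V$, graft a rescaled shifted independent $\GOE$ block on $V^\perp$, then invoke local laws and universality to get contiguity) is both substantially harder than necessary and not actually completed in your writeup — you defer it to ``rigidity and local-law/universality inputs'' without saying what they are or why they would close the gap.

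The paper's construction sidesteps all of this with a cleaner idea you didn't find. Do not try to transport the Marchenko--Pastur spectrum at all: draw a fresh matrix $\widetilde\bW\sim\GOE(n)$ purely for its \emph{eigenvalues} $\blambda_1\le\cdots\le\blambda_n$, and build an orthonormal frame $\bv_1,\dots,\bv_n$ by taking a uniformly random orthonormal basis of $V^\perp$ for the first $n-N$ vectors and a uniformly random orthonormal basis of $V$ for the rest; set $\bW=\sum_i\blambda_i\bv_i\bv_i^\top$. Under $\QQ$, $V$ is a Haar-random $N$-dimensional subspace, so the frame is Haar on $O(n)$ and independent of the $\blambda_i$; since $\GOE(n)$ has exactly this eigenvalue/eigenvector independence, $\bW\sim\GOE(n)$ \emph{exactly} — not merely contiguously — and (a) is free, with no random-matrix comparison theorems. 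Under $\PP$, the same construction automatically places the $n-N$ smallest fresh-GOE eigenvalues on $V^\perp$, and one bounds $\langle\bU\bU^\top,\bW\rangle$ by sandwiching $\bW$ between $\blambda_{n-N}\id_n$ plus a rank-$N$ correction supported on $V$; the correction is controlled via the smallest nonzero eigenvalue of the spiked sample covariance (Bai--Silverstein) and the convergence $\frac{1}{n}\bU^\top\bU\to\id_k-\onesmat_k/k$, with $\blambda_{n-N}\to-2$ as $\gamma\downarrow 1$. The gap in your proposal is precisely the missing idea that one should \emph{discard} the Wishart spectrum and keep only the subspace $V$, pairing it with independently sampled semicircular eigenvalues — once you see this, the null case needs no asymptotics at all and the planted case reduces to concrete spiked-covariance estimates rather than universality machinery.
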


\noindent Note that the above parameters satisfy $\beta^2 < \gamma$, which is in the ``hard'' regime of the Wishart model. Thus, Theorem~\ref{thm:gauss-reduction} together with the low-degree-hardness of the Wishart model in that regime (Theorem~\ref{thm:general-wishart}) constitute rigorous evidence for Conjecture~\ref{conj:gauss-cert}.

\begin{proof}
    Let $\mathcal{A}$ be the purported certification algorithm. We will use this to solve detection in the Wishart model, given Wishart samples $\by_1,\ldots,\by_N$. Sample $\widetilde \bW \sim \GOE(n)$ and let $\blambda_1 \le \cdots \le \blambda_n$ denote its (random) eigenvalues. Sample a uniformly random orthonormal basis $\bv_{n-N+1},\ldots,\bv_n$ for $\bV \colonequals \mathrm{span}\{\by_1,\ldots,\by_N\}$ and a uniformly random orthonormal basis $\bv_1,\ldots,\bv_{n-N}$ for the orthogonal complement $\bV^\perp$. Let $\bW = \sum_{i=1}^n \blambda_i \bv_i \bv_i^\top$. Our Wishart detection algorithm is as follows: if $\mathcal{A}(\bW) \le (2-\epsilon)n$ then output $\textsc{q}$; otherwise, output $\textsc{p}$.
    
    We now prove that this achieves strong detection. If the Wishart samples were drawn from $\QQ$ then $\bV$ is a uniformly random $N$-dimensional subspace and so $\bW \sim \GOE(n)$. This means $\mathcal{A}(\bW) \le (2-\epsilon)n$ with high probability by assumption, and so our algorithm correctly outputs $\textsc{q}$. It remains to show that if the Wishart samples were drawn from $\PP$, then $\Gamma_k(\bW) > (2-\epsilon)n$ with high probability, and so the algorithm is forced to output $\textsc{p}$.
    
    Suppose the Wishart samples were drawn from $\PP$ with planted matrix $\bX$. With high probability we are in the first case of~\eqref{eq:wish-prior}, i.e., $\bX = \frac{1}{n} \bU\bU^\top$ where each row of $\bU$ is drawn independently from $\pi_k$. Note that $\frac{1}{k-1} \bU\bU^\top$ is a partition matrix, and so
    \[ 
        \Gamma_k(\bW) \ge -\frac{1}{k-1} \langle \bU\bU^\top, \bW \rangle. 
    \]
    We can bound
    \begin{align*}
        \langle \bU\bU^\top, \bW \rangle &= \left\langle \bU\bU^\top, \sum_{i=1}^n \blambda \bv_i \bv_i^\top \right\rangle \\
        &\le \left\langle \bU\bU^\top, \blambda_{n-N} \sum_{i=1}^{n-N} \bv_i \bv_i^\top + \blambda_n \sum_{i=n-N+1}^{n} \bv_i \bv_i^\top \right\rangle \\
        &= \left\langle \bU\bU^\top, \blambda_{n-N} \left(\id_n - \sum_{i=n-N+1}^{n} \bv_i \bv_i^\top\right) + \blambda_n \sum_{i=n-N+1}^{n} \bv_i \bv_i^\top \right\rangle \\
        &= \left\langle \bU\bU^\top, \blambda_{n-N} \id_n + (\blambda_n - \blambda_{n-N}) \sum_{i=n-N+1}^{n} \bv_i \bv_i^\top \right\rangle
    \end{align*}
    where we have used the fact that $\sum_{i=1}^n \bv_i \bv_i^\top = \id_n$ since $\{\bv_i\}_{i \in [n]}$ is an orthonormal basis. We will bound the pieces of this expression separately.
    
    First we bound $\langle \bU\bU^\top,\id_n \rangle$. The nonzero eigenvalues of $\bU\bU^\top$ are the same as the nonzero eigenvalues of $\bU^\top \bU$. Since $\bU^\top \bU$ is the sum of $n$ i.i.d.\ $k \times k$ matrices, $\frac{1}{n} \bU^\top \bU$ converges in probability to its expectation, which is $\cov(\pi) = \id_k - \onesmat_k/k$ where $\onesmat_k$ is the $k \times k$ all-ones matrix. This means $\langle \bU\bU^\top, \id_n \rangle = \Tr(\bU\bU^\top) = n \Tr(\frac{1}{n} \bU^\top \bU) = (1+o(1))n(k-1)$ with high probability.
    
    Next we bound $\langle \bU\bU^\top,\sum_{i=n-N+1}^n \bv_i \bv_i^\top \rangle$. Recall that $\{\bv_i\}_{i=n-N+1}^{n}$ is an orthonormal basis for $\mathrm{span}\{\by_1,\ldots,\by_N\}$, and so $\sum_{i=n-N+1}^n \bv_i \bv_i^\top \preceq \frac{1}{\bmmu} \bY$ where $\bY = \frac{1}{N} \sum_{i=1}^N \by_i \by_i^\top$ and $\bmmu$ is the smallest nonzero eigenvalue of $\bY$. Since $\bY$ is a spiked covariance matrix, Theorem~1.2 of \cite{BS-spiked} gives $\bmmu \to (\sqrt{\gamma}-1)^2 > 0$ in probability. Therefore,
    \[ 
        \left\langle \bU\bU^\top,\sum_{i=n-N+1}^n \bv_i \bv_i^\top \right\rangle \le \left\langle \bU\bU^\top,\frac{1}{\bmmu} \bY \right\rangle = \frac{1}{\bmmu N} \sum_{i=1}^N \|\bU^\top \by_i\|^2. 
    \]
    For fixed $U$ (and therefore fixed $X$), note that $U^\top \by_i$ follows a multivariate Gaussian distribution with mean zero and covariance
    \[ 
        \EE[U^\top \by_i \by_i^\top U] = U^\top \EE[\by_i \by_i^\top] U = U^\top (I + \beta X) U = U^\top U + \frac{\beta}{n} U^\top UU^\top U. 
    \]
    Recalling that $\frac{1}{n}\bU^\top \bU \to \id_k - \onesmat_k/k$, we have $\frac{1}{n} (\bU^\top \bU + \frac{\beta}{n} \bU^\top \bU\bU^\top \bU) \to (1+\beta)(\id_k - \onesmat_k/k)$ in probability. Thus,
    $\frac{1}{nN} \sum_{i=1}^N \|\bU^\top \by_i\|^2$ converges in probability to $(1+\beta)\Tr(\id_k-\onesmat_k/k) = (1+\beta)(k-1)$, and we can conclude that
    \[ 
        \langle \bU\bU^\top,\sum_{i=n-N+1}^n \bv_i \bv_i^\top \rangle \le (1+o(1))\frac{n}{\bmmu}(1+\beta)(k-1) = (1+o(1))n(\sqrt{\gamma}-1)^2(1+\beta)(k-1)
    \]
    with high probability.
    
    The eigenvalues of $\widetilde \bW$ converge to the Wigner semicircle law on $[-2,2]$, and so we have $\blambda_n - \blambda_{n-N} \le 4+o(1)$ with high probability. Also, by taking $\gamma > 1$ close enough to $1$, we can ensure $\lambda_{n-N} \le -2 + \epsilon/2$ with high probability.
    
    Putting it all together, we now have
    \begin{align*}
        \Gamma_k(\bW) &\ge -\frac{1}{k-1}\langle \bU\bU^\top, \bW \rangle \\
        &\ge -\frac{1}{k-1}\left\langle \bU\bU^\top, \lambda_{n-N} \id_n + (\blambda_n - \blambda_{n-N}) \sum_{i=n-N+1}^{n} \bv_i \bv_i^\top \right\rangle \\
        &\ge -\frac{1}{k-1}\left[(-2+\epsilon/2)(1+o(1))n(k-1) +  (4+o(1))n(\sqrt{\gamma}-1)^{-2}(1+\beta)(k-1)\right] \\
        &> (2-\epsilon)n
    \end{align*}
    for sufficiently large $n$, provided we choose $\beta > -1$ sufficiently close to $-1$. This completes the proof.
\end{proof}

\section{Belief Propagation and the Kesten--Stigum Transition}
\label{sec:bp}

In this section, we carry out a stability analysis of belief propagation (as discussed in Section~\ref{sec:bp-intro}) and derive the result presented in Theorem~\ref{thm:ks-main}. The analysis resembles that of the original work~\cite{sbm-ks-1,sbm-ks-2} that predicted the Kesten--Stigum threshold in the ordinary stochastic block model, but the equitability constraints create additional technical complexity in our setting. We start with the special case of the equitable coloring model in Section~\ref{sec:bp-col}, and generalize to the equitable block model in Section~\ref{sec:bp-cut}. Throughout this section, it turns out to be convenient to parametrize the equitable model in a different way (defined below) than used in the Introduction.

\subsection{The Equitable Coloring Model}
\label{sec:bp-col}

The equitable coloring model is obtained by setting $\eta = \tfrac{1}{k-1}$ in Definition \ref{def:eSBM}; for brevity let's define
\begin{equation}
    \label{eq:d}
    c \colonequals \frac{1 + \eta}{k},
\end{equation}
so that in the planted coloring, each vertex has exactly $c$ neighbors of every other color. From the point of view of the vertices, this is a complicated constraint affecting a star of $d+1$ vertices. As a result, a factor graph with a variable node for each vertex, and a constraint node corresponding to each vertex and its neighbors, is not even locally treelike. Instead, we define a variable for each edge, giving a pair of colors. The constraint then demands that the $d$ edges incident to each vertex agree on its color, and that the colors of their other endpoints are equitable.

This lets us define a message-passing algorithm. Regarding each edge $(u,v)$ of the graph $\bG \sim \sG_{n,d,k,(k-1)^{-1}}^{\text{eq}}$ as a pair of directed edges $u \to v, v\to u$, each directed edge $u \to v$ sends a message $\mu^{u \to v}$ to vertex $v$ consisting of the estimated probabilities $\mu^{u \to v}_{r,s}$ that $u$ and $v$ are color $r$ and $s$ respectively, for each $r, s \in [k]$ with $r \ne s$. Vertex $v$ then sends out messages $\mu^{v \to w}$ to the directed edges $(v,w)$ which are computed as follows:
\begin{enumerate}
\item For each of $v$'s neighbors $u$ other than $w$, choose a pair of colors $(r_u,s_u)$ independently from the distribution $\mu^{u \to v} = (\mu^{u \to v}_{rs})$.
\item Condition on the event that the $s_u$ are identical for all $u$. Call this color $s$.
\item Condition on the event that all but one of the colors other than $s$ appear $c$ times in the list $(r_u)$, and that one color appears $c-1$ times. Call this color $t$.
\item Then $\mu^{v \to w} = (\mu^{v \to w}_{st})$ is the resulting conditional distribution of the pair $(s,t)$, i.e., the probability that $v$ and $w$ are color $s$ and $t$ respectively.
\end{enumerate}
Formally we can write
\begin{equation}
\label{eq:update-mu}
\mu^{v \to w}_{st} = \frac{\psi^{v \to w}_{st}}{z^{v \to w}}
\end{equation}
where
\begin{align}
\psi^{v \to w}_{st} &= \sum_{(r_u : u \in \partial v \setminus k) \in [k]^{d-1}} 
\left( \prod_u \mu^{u \to v}_{r_u,s} \right)
\left( \prod_{u \in [k], u \ne s} \indicator{ |\{ u: r_u=u \}| = \begin{cases} c & u \ne s, t \\ c-1 & u=t \end{cases}} \right)
\label{eq:update-w} \\
z^{v \to w} &= \sum_{r,s \in [k]: r \ne s} \psi^{v \to w}_{rs} \, . 
\label{eq:update-z}
\end{align}

Clearly the uniform messages $\mu^{u \to v}_{rs} = 1/(k(k-1))$ are a fixed point of this algorithm. We want to study its stability to small perturbations, and in particular the matrix of partial derivatives
\begin{equation}
\label{eq:M1}
    Y_{rs, s't} = \frac{\partial \mu^{v \to w}_{s't}}{\partial \mu^{u \to v}_{rs}} \, . 
\end{equation}
To compute this matrix, suppose that we perturb the incoming message $\mu^{u \to v}$ for a pair $(r,s)$ with $r \ne s$,
\[
\mu^{u \to v}_{r's'} = \frac{1}{k(k-1)} (1+\eps \delta_{rr'} \delta_{ss'})
\]
where $\delta$ is the Kronecker delta, $\delta_{rr'} = 1$ if $r=r'$ and $0$ otherwise. This perturbation does not respect the normalization $\sum_{rs} \mu^{u \to v}_{rs} = 1$, but this will simply show up as $Y$ having zero row and column sums since normalization projects perturbations to the subspace perpendicular to the uniform vector.

We then have several cases. If $s' \ne s$, for all $t \ne s'$ then $\psi^{v \to w}_{s't}$ is unchanged from its value at the uniform fixed point, namely
\begin{equation}
\psi^{v \to w}_{s't} 
= \left( \frac{1}{k(k-1)} \right)^{d-1} {d-1 \choose c-1, c, \ldots, c} 
= \left( \frac{1}{k(k-1)} \right)^{d-1} \frac{c (d-1)!}{c!^{k-1}} 
\colonequals \psi \, . 
\label{eq:case1}
\end{equation}
For $s'=s$ and $t=r$, we have
\begin{align}
\psi^{v \to w}_{st} 
&= \left( \frac{1}{k(k-1)} \right)^{d-1} \left( 
(1+\eps) {d-2 \choose c-2, c, \ldots, c} + (k-2) {d-2 \choose c-1, c-1, c, \ldots c} 
\right) \nonumber \\
&= \psi \left( (1+\eps) \frac{c-1}{d-1} + (k-2) \frac{c}{d-1} \right) \nonumber \\
&= \psi \left( 1 + \eps \,\frac{c-1}{d-1} \right) \, , 
\qquad (t=r)
\label{eq:case2}
\end{align}
where the two terms in the first line come from $r_u=r$ and $r_u \ne r,s$ respectively. 
Finally, for $s'=s$ and $t \ne r$, we have
\begin{align}
\psi^{v \to w}_{st} 
&= \left( \frac{1}{k(k-1)} \right)^{d-1} \left( 
(1+\eps) {d-2 \choose c-1, c-1, c, \ldots, c} 
+ {d-2 \choose c-2, c, \ldots c} 
+ (k-3) {d-2 \choose c-1, c-1, c, \ldots c} 
\right) \nonumber \\
&= \psi \left( (k-2 + \eps) \frac{c}{d-1} + \frac{c-1}{d-1} \right) \nonumber \\
&= \psi \left( 1 + \eps \,\frac{c}{d-1} \right) \, , 
\qquad (t \ne r)
\label{eq:case3}
\end{align}
where the three terms in the first line come from $r_u=r$, $r_u=t$, and $r_u \ne r,s,t$ respectively. 

While this level of bookkeeping is comforting, both~\eqref{eq:case2} and~\eqref{eq:case3} are simply $\psi (1+\eps P)$ where $P$ is the fraction of $(d-1)$-tuples that contribute to~\eqref{eq:update-w} such that $r_u=r$. We can write~\eqref{eq:case1}, \eqref{eq:case2}, and~\eqref{eq:case3} as
\begin{equation}
\label{eq:w}
\psi^{v \to w}_{s't} = \psi \left( 1 + \eps \delta_{ss'} \frac{c - \delta_{rt}}{d-1} \right).
\end{equation}
Summing over all distinct $s', t$ gives the normalization factor 
\begin{align}
z^{v \to w} 
&= k(k-1) \psi + \eps \psi \left( \frac{c-1}{d-1} + (k-2) \frac{c}{d-1} \right) \nonumber \\
&= k(k-1) \psi + \eps \psi \nonumber \\
&= k(k-1) \psi \left( 1 + \frac{\eps}{k(k-1)} \right) \, ,
\label{eq:z}
\end{align}
with the multiplicative factor $1+\eps P$ where $P=1/(k(k-1))$ is now the probability that a random edge $u \to v$ has colors $r$ and $s$ on its endpoints.

Combining~\eqref{eq:z} with~\eqref{eq:w}, and~\eqref{eq:update-mu} gives
\begin{equation}
\label{eq:update-eps}
\mu^{v \to w}_{s't} 
= \frac{1}{k(k-1)} 
\left( 1 + \eps \left( - \frac{\delta_{ss'} \delta_{rt}}{d-1} + \frac{c \delta_{ss'}}{d-1} - \frac{1}{k(k-1)}  \right) + O(\eps^2) \right) 
\, .
\end{equation}
Canceling the factor $1/(k(k-1))$ gives the matrix of partial derivatives~\eqref{eq:M1},
\begin{equation}
\label{eq:M}
Y_{rs, s't} 
= - \frac{\delta_{ss'} \delta_{rt}}{d-1} + \frac{c \delta_{ss'}}{d-1} - \frac{1}{k(k-1)} \, . 
\end{equation}
Using~\eqref{eq:M} and~\eqref{eq:d} the reader can check that the rows and columns of $Y$ sum to zero, as alluded to above:
\begin{equation}
\label{eq:zero-sum}
\forall r,s: \sum_{\substack{s',t: \\ s' \ne t}} Y_{rs,s't} 
= 0 \, , 
\quad 
\forall s',t: \sum_{\substack{r,s: \\ r \ne s}} Y_{rs,s't} 
= 0 \, . 
\end{equation}

To diagonalize $Y$, it is useful to treat the $k(k-1)$-dimensional space $\mathcal{U}$ spanned by ordered pairs $(r,s)$ with $r \ne s$ as the space of $k \times k$ matrices $U=(U_{rs})$ with zeroes on the diagonal. Then we can interpret the three terms in~\eqref{eq:M} as follows:
\begin{itemize}
\item The term $\delta_{ss'} \delta_{rt}$ is the transpose operator, sending $U$ to $U^\top$. 
\item The term $\delta_{ss'}$ sends $U$ to $(U J)^\top$ where $J$ is the all-1s matrix.
\item The term $-1/k(k-1)$ subtracts the mean entry of $U$ from each entry of $Y(U)$.
\item Finally, we set all the diagonal elements of $Y(U)$ to zero.
\end{itemize}
Thus we can rewrite~\eqref{eq:M} as
\begin{equation}
\label{eq:Mu}
Y(U) = \Pi\left[ - \frac{1}{d-1} U^\top + \frac{c}{d-1} \,(U J)^\top - \frac{1}{k(k-1)} \,J U J \right]
\end{equation}
where $\Pi$ is the projection operator that sets the diagonal entries of a matrix to zero.

To diagonalize $Y$, recall that if two linear operators commute, they share the same eigenvectors. Clearly $Y$ commutes with relabelings of the colors, i.e., with the $S_k$-action that conjugates $U$ with a permutation matrix. This action preserves the following subspaces of matrices:
\begin{itemize}
\item The symmetric matrices with zero diagonal
\item The antisymmetric matrices
\item The matrices whose row (resp.\ column) sums are zero
\item The matrices whose rows (resp.\ columns) are uniform, other than being zero on the diagonal 
\end{itemize}
\ldots and their intersections. More abstractly, $\mathcal{U}$ is the $k(k-1)$-dimensional combinatorial representation where $S_k$ acts on distinct ordered pairs $(r,s)$ by sending $(r,s)$ to $(\pi(r), \pi(s))$. We can find the eigenvectors and eigenvalues of $Y$ by decomposing $\mathcal{U}$ into a direct sum of irreducible representations of $S_k$. This decomposition includes one copy of the trivial representation $\rho_{(k)} = \id$, and one copy each of $\rho_{(k-2,1,1)}$ and $\rho_{(k-2,2)}$. (To avoid some case-checking we assume that $k \ge 4$. In particular, if $k = 3$ then $\rho_{(k-2,2)}$ disappears.) By Schur's Lemma, when restricted to each of these irreducible subspaces $Y$ is a scalar matrix with a single eigenvalue. These are as follows:
\begin{itemize}
\item The trivial representation is spanned by the matrix with $1$s everywhere off the diagonal. By~\eqref{eq:zero-sum} this has eigenvalue zero. 
\item The copy of $\rho_{(k-2,1,1)}$ consists of antisymmetric matrices with zero row and column sums. These are annihilated by $v$ and are eigenvectors of the transpose with eigenvalue $-1$. Thus they are eigenvectors of $Y$ with eigenvalue $+1/(d-1)$. This eigenspace has dimension $(k-1)(k-2)/2$. 
\item The copy of $\rho_{(k-2,2)}$ consists of symmetric matrices with zero row and column sums and zeroes on the diagonal. These are annihilated by $v$ and are eigenvectors of the transpose with eigenvalue $+1$. Thus they are eigenvectors of $Y$ with eigenvalue $-1/(d-1)$. This eigenspace has dimension $(k-1)(k-2)/2 - 1 = k(k-3)/2$. 
\end{itemize}

However, we are not done. In addition to these multiplicity-free irreducible representations, $\mathcal{U}$ includes two copies of the ``standard'' representation $\rho_{(k-1,1)}$, one each in the symmetric and antisymmetric subspace. Each one has dimension $k-1$, and together they span an isotypic subspace of dimension $2(k-1)$. By Schur's lemma, when restricted to this subspace, $Y$ is the tensor product of the identity with a $2 \times 2$ matrix, giving it two additional eigenvalues. One of these will turn out to be the dominant one and will control where the Kesten--Stigum transition occurs.

This isotypic subspace is spanned by matrices of the form
\begin{equation}
\label{eq:standard-isotypic}
U_{ij} = \begin{cases} 
0 & i = j \\
\alpha & i=1, j \ne 1 \\
\beta & j=1, i \ne 1 \\
\gamma & i \ne 1, j \ne 1, i \ne j
\end{cases} 
\quad \text{where} \quad \gamma = - \frac{\alpha+\beta}{k-2} \, , 
\end{equation}
and their images under conjugation by permutation matrices, i.e., where the ``special'' row and column ranges from $1$ to $k$.
That is, 
\[
U = \begin{pmatrix}
0 & \alpha & \alpha & \cdots & \alpha \\
\beta & 0 & \gamma & \cdots & \gamma \\
\beta & \gamma & 0 & & \vdots \\
\vdots & \vdots & & \ddots & \gamma \\
\beta & \gamma & \cdots & \gamma & 0
\end{pmatrix}
\]
where $\gamma$ is set so that $U$'s entries sum to zero. The reader can check that these matrices are orthogonal to the other irreducible subspaces with respect to the trace inner product $\langle U, U' \rangle = \tr\ U^\top U'$.

Using~\eqref{eq:Mu}, we find that $Y(U)$ is also of this form but with entries $\alpha'$ and $\beta'$, where
\begin{equation}
\label{eq:alphabeta}
\begin{pmatrix} \alpha' \\ \beta' \end{pmatrix} 
= m \cdot 
\begin{pmatrix} \alpha \\ \beta \end{pmatrix} 
\quad \text{where} \quad 
m = \begin{pmatrix}
0 & 1 \\
\frac{-1}{d-1} & \frac{-c}{d-1} 
\end{pmatrix} \, . 
\end{equation}
Thus $Y$ on this isotypic subspace is $m \otimes \id$ where $\id$ is the $(k-1)$-dimensional identity. The corresponding eigenvalues of $Y$ are those of $m$, namely the roots $\kappa$ of 
\[
(d - 1) \kappa^2 + c \kappa + 1 = 0 \, ,
\]
which are
\[
\kappa_\pm = \frac{-c \pm \sqrt{c^2 - 4(d-1)}}{2 (d-1)} \, . 
\]
When $c^2 < 4(d-1)$ the discriminant is negative, so these eigenvalues are complex and lie on the unit circle $|\kappa_\pm| = 1/\sqrt{d-1}$. But when $c^2 > 4(d-1)$ they are real, and $\kappa_+ > 1/\sqrt{d-1}$. 

The full Jacobian of belief propagation is the tensor product of this local matrix $Y$ with the non-backtracking matrix $B$. It is a consequence of Theorem 2 in \cite{bordenave2019eigenvalues} that, with high probability over $\bG$ sampled from the equitable SBM with $d < d_{\text{KS}}^{\text{eq}}$, the spectrum of the non-backtracking matrix consists of a ``trivial'' eigenvalue $d$ whose left and right eigenvectors is the all-ones vector, and remaining eigenvalues with modulus at most $\sqrt{d-1} + o_n(1)$ in the complex plane. Since perturbations along the uniform eigenvector of $B$ would violate the balance of colors, we are left with these remaining eigenavlues. Multiplying $\kappa_+$ by $\sqrt{d-1} + o_n(1)$ tells us that the Kesten--Stigum transition, where the largest eigenvalue of the Jacobian exceeds $1$ in absolute value, occurs when $c^2\approx4(d-1)$, with $\approx$ hiding $o_n(1)$ terms. Given~\eqref{eq:d} this is 
\[
    c \approx 2 \left( (k-1)+\sqrt{k(k-2)} \right)
\]
or 
\[
    d 
    \approx 2 (k-1)^2 \left( 1 + \sqrt{ \frac{k(k-2)}{(k-1)^2} } \right)
    \approx \frac{2}{\lambda^2} \left( 1 + \sqrt{ 1-\lambda^2 } \right)
\]
where $\lambda = -1/(k-1)$. Solving this condition for $\lambda$ in terms of $d$ gives
\begin{equation}
\label{eq:ks-thresh}
| \lambda | = \frac{2\sqrt{d-1}}{d} + o_n(1) \, . 
\end{equation}

\subsection{Generalizing to the Equitable Block Model}
\label{sec:bp-cut}

Let 
\begin{equation}
\label{eq:d-sbm}
d = a + b(k-1) \, . 
\end{equation}
Recall that a legal labeling of the \emph{equitable block model} on a $d$-regular graph is a $k$-coloring where each vertex has exactly $a$ neighbors of its own color, and exactly $b$ neighbors of each of the $k-1$ other colors. The case of equitable $k$-colorings corresponds to $a=0$ and $b=c$.

We can generalize the message-passing algorithm of the previous section as follows. 
Each directed edge $(u,v)$ again sends a message $\mu^{u \to v}$ to vertex $v$ consisting of the estimated probability $\mu^{u \to v}_{rs}$ that $u$ and $v$ are color $r$ and $s$ respectively, for each $r, s \in [k]$ with $r \ne s$. Vertex $v$ then sends out messages $\mu^{v \to w}$ to the directed edges $(v,w)$ which are computed as follows:
\begin{enumerate}
\item Give pairs $(s,t)$ the prior distribution
\begin{equation}
\label{eq:sbm-prior}
P(s,t) = \frac{1}{kd} \begin{cases} 
a & (s=t) \\
b & (s \ne t) \, . 
\end{cases}
\end{equation}
\item Multiplicatively reweight this distribution by the probability that, if for each of $v$'s neighbors $u$ other than $w$ we choose a pair of colors $(r_u,s_u)$ independently from the distribution $\mu^{u \to v} = (\mu^{u \to v}_{rs})$, then:
\begin{itemize}
\item $s_u=s$ for all $u$, and
\item (if $s=t$) $s$ appears $a-1$ times in the list $(r_u)$, and every color other than $s$ appears $b$ times, 
\item (if $s \ne t$) $s$ appears $a$ times in the list $(r_u)$, $t$ appears $b-1$ times, and all other colors appear $b$ times each.
\end{itemize}
\item Then $\mu^{v \to w} = (\mu^{v \to w}_{st})$ is the resulting posterior distribution of $(s,t)$.
\end{enumerate}

Generalizing~\eqref{eq:update-w} and~\eqref{eq:update-z}, we can write
\begin{equation}
\label{eq:update-mu-sbm}
\mu^{v \to w}_{st} = \frac{\psi^{v \to w}_{st}}{z^{v \to w}}
\end{equation}
where
\begin{align}
\psi^{v \to w}_{st} &= \sum_{(r_u : u \in \partial v \setminus k) \in [k]^{d-1}} 
\left( \prod_u \mu^{u \to v}_{r_u,s} \right)
\begin{cases}
\displaystyle{ 
\,a \prod_{u \in [k]} \indicator{ |\{ u: r_u=u \}| 
= \begin{cases} 
a-1 & u = s \\ b & u \ne s  
\end{cases} } } & (s=t)
\\
\displaystyle{
\,b \prod_{u \in [k]} \indicator{ |\{ u: r_u=u \}| = \begin{cases} 
a & u = s \\ 
b-1 & u = t \\ 
b & u \ne s, t 
\end{cases} 
} } & (s \ne t) 
\end{cases}
\label{eq:update-w-sbm} \\
z^{v \to w} &= \sum_{r,s \in [k]: r \ne s} \psi^{v \to w}_{rs} \, . 
\label{eq:update-z-sbm}
\end{align}
Note that the factors of $a$ and $b$ in~\eqref{eq:update-w-sbm}, which did not appear in the coloring case, come from the prior distribution~\eqref{eq:sbm-prior}. The reader can check that the prior distribution on $(s,t)$ is now a fixed point of this algorithm, 
\[
\mu^{u \to v}_{st} = \frac{1}{kd} \begin{cases} 
a & (s=t) \\
b & (s \ne t) \, .
\end{cases}
\]
At this fixed point, \eqref{eq:update-w-sbm} gives
\begin{equation}
\label{eq:w-fixed}
\psi^{v \to w}_{st} = \begin{cases} 
a \psi & (s=t) \\
b \psi & (s \ne t) 
\end{cases}
\end{equation}
where
\[
\psi 
\colonequals \frac{a^{a-1} b^{b(k-1)}}{(kd)^{d-1}} {d-1 \choose a-1, b, \ldots, b}
= \frac{a^a b^{b(k-1)-1}}{(kd)^{d-1}} {d-1 \choose a, b-1, b, \ldots, b}
= \frac{a^a b^{b(k-1)}}{(kd)^{d-1}} \frac{(d-1)!}{a! b!^{k-1}} \, ,
\]
and where as always $0^0=0!=1$. 

We again consider perturbing the incoming message $\mu^{u \to v}$ for a pair $(r,s)$, 
\begin{equation}
\label{eq:perturb-sbm}
\mu^{u \to v}_{r's'} = (1+\eps \delta_{rr'} \delta_{ss'}) \frac{1}{kd} \begin{cases} 
a & (r=s) \\
b & (r \ne s) \, .
\end{cases}
\end{equation}
As before, if $s' \ne s$ then $\psi^{v \to w}_{s't}$ is unchanged for all $t$, and is still given by~\eqref{eq:w-fixed}.  For the other cases where $s'=s$, let us first assume that $r \ne s$. There are now three cases: $t=s$ (now that some neighbors have the same color), $t=r$, and $t \notin \{r, s\}$. We have
\begin{align}
\psi^{v \to w}_{st} 
&= 
\begin{cases} 
a \psi \left( 1 + \eps \,\frac{b}{d-1} \right) & t=s \\
b \psi \left( 1 + \eps \,\frac{b-1}{d-1} \right) & t=r \\ 
b \psi \left( 1 + \eps \,\frac{b}{d-1} \right) & t \notin \{r,s\} 
\end{cases} \qquad (r \ne s) 
\label{eq:w-r-neq-s-sbm}
\end{align}
where in each case we multiply by $1+\eps P$ where $P$ is the fraction of $(d-1)$-tuples that contribute to~\eqref{eq:update-w-sbm} where $r_u=r$. Summing over all $s', t$ gives
\begin{align}
z^{v \to w} 
&= kd \psi + \eps b \psi \left( \frac{b-1}{d-1} + (k-2) \frac{b}{d-1} + \frac{a}{d-1} \right) \nonumber \\
&= kd \psi + \eps b \psi \nonumber \\
&= kd \psi \left( 1 + \eps \frac{b}{kd} \right) 
\qquad (r \ne s) \, . 
\label{eq:z-r-neq-s-sbm}
\end{align}
This multiplicative factor is $1+\eps P$ where $P=b/(kd)$ is the prior probability that a random edge $u \to v$ has colors $r, s$ on its endpoints where $r \ne s$.

Analogously, if $r=s=s'$ we have 
\begin{align}
\psi^{v \to w}_{st} 
&= 
\begin{cases} 
a \psi \left( 1 + \eps \,\frac{a-1}{d-1} \right) & t=s \\
b \psi \left( 1 + \eps \,\frac{a}{d-1} \right) & t \ne s 
\end{cases} \qquad (r = s) \, ,
\label{eq:w-r-eq-s-sbm}
\end{align}
and
\begin{align}
z^{v \to w} 
&= kd \psi + \eps a \psi \left( \frac{a-1}{d-1} + (k-1) \frac{b}{d-1} \right) \nonumber \\
&= kd \psi + \eps a \psi \nonumber \\
&= kd \psi \left( 1 + \eps \frac{a}{kd} \right) 
\qquad (r = s) \, ,
\label{eq:z-r-eq-s-sbm}
\end{align}
where the multiplicative factors are again $1+\eps P$ where $P$ is the fraction of $(d-1)$-tuples contributing to~\eqref{eq:update-w-sbm}, or in~\eqref{eq:z-r-eq-s-sbm} the prior probability $P=a/(kd)$ of an edge having colors $r=s$ on its endpoints.

Putting all this together generalizes~\eqref{eq:update-eps} to
\begin{equation}
\mu^{v \to w}_{s't} 
= \frac{1}{kd} 
\begin{cases}
a \left( 1 + \eps \left( \delta_{ss'} \frac{a-1}{d-1} - \frac{a}{kd} \right) + O(\eps^2) \right) 
& [s'=t, r=s] \\
a \left( 1 + \eps \left( \delta_{ss'} \frac{b}{d-1} - \frac{b}{kd} \right) + O(\eps^2) \right) 
& [s'=t, r \ne s] \\ 
b \left( 1 + \eps \left( \delta_{ss'} \frac{a}{d-1} - \frac{a}{kd} \right) + O(\eps^2) \right) 
& [s' \ne t, r=s] \\
b \left( 1 + \eps \left( \delta_{ss'} \frac{b - \delta_{rt}}{d-1} - \frac{b}{kd} \right) + O(\eps^2) \right) 
& [s' \ne t, r \ne s] \, , 
\end{cases}
\label{eq:update-eps-sbm}
\end{equation}
the fourth case of which coincides with~\eqref{eq:update-eps} when $a=0$ and $b=c$. Comparing with~\eqref{eq:perturb-sbm} and accounting for factors of $a$ and $b$ gives the matrix of partial derivatives, 
\begin{equation}
\label{eq:M-sbm}
Y_{rs, s't} 
= -\frac{\delta_{ss'} \delta_{rt}}{d-1} 
+ \left( \frac{\delta_{ss'}}{d-1} - \frac{1}{kd} \right) \begin{cases} 
a & (s'=t) \\
b & (s' \ne t) 
\end{cases}
\, . 
\end{equation}
The reader can check the normalization conditions: the rows of $Y$ sum to zero, so that the uniform vector is a right eigenvector of eigenvalue zero, but the columns are orthogonal to the prior distribution~\eqref{eq:sbm-prior} so that it is a left eigenvector of eigenvalue zero. Thus~\eqref{eq:zero-sum} becomes
\begin{equation}
\label{eq:zero-sum-sbm}
\forall r,s: \sum_{\substack{s',t}} Y_{rs,s't} 
= 0 \, , \quad
\forall s', t: \sum_{\substack{r,s}} Y_{rs,s't} 
\begin{cases} a & (r=s) \\ b & (r \ne s) \end{cases}
= 0 \, .
\end{equation}

At the risk of multiplying entities without necessity, we can also write $Y$ in the style of~\eqref{eq:Mu}. If we think of $Y$'s action by right multiplication on $k^2$-dimensional vectors as a linear operator on $k$-dimensional matrices $U=(U_{s't})$, then 
\begin{equation}
\label{eq:Mu-sbm}
Y(U) = -\frac{1}{d-1} U^\top + \frac{1}{d-1} (\Upsilon(U) J)^\top - \frac{1}{kd} J \Upsilon(U) J \, ,
\end{equation}
where $J$ is again the all-$1$s matrix and $\Upsilon$ is a linear operator on matrices that reweights diagonal and off-diagonal elements by $a$ and $b$ respectively, 
\begin{equation}
\label{eq:ups}
\Upsilon(U)_{s't} = U_{s't} \begin{cases} 
a & (s'=t) \\
b & (s' \ne t) 
\end{cases} \, . 
\end{equation}

As in the coloring case, we use representation theory to diagonalize $Y$. The $k$-dimensional matrices form a $k^2$-dimensional representation of $S_k$ where permutation matrices act by conjugation. Since this representation sends pairs of colors $(r,s)$ to $(\pi(r), \pi(s))$, this is the tensor product of the natural permutation representation with itself. The permutation representation is a direct sum of the trivial representation (spanned by the uniform vector) with the $(k-1)$-dimensional standard representation (spanned by vectors that sum to zero). Taking its tensor square and decomposing gives the representations described above, as well as the subspace spanned by diagonal matrices, giving one additional copy each of the trivial representation and the standard representation. 

Thus in total we have a two-dimensional trivial subspace, one copy each of $\rho_{(k-2,1,1)}$ and $\rho_{(k-2,2)}$ with dimension $(k-1)(k-2)/2$ and $k(k-3)/2$ respectively, and a $3(k-1)$-dimensional subspace consisting of three copies of the standard representation $\rho_{(k-1,1)}$. We go through each of these subspaces, focusing on $Y$'s right eigenvectors.

First, the trivial subspace is spanned by the identity matrix $\id=(\delta_{s't})$ and the all-$1$s matrix $J$. As in~\eqref{eq:zero-sum-sbm} $J$ is a right eigenvector with eigenvalue zero, so $Y(J)=0$. Observing~\eqref{eq:Mu-sbm}, we have $\id^\top = \id$, $\Upsilon(\id) = a \id$, $\id \onesmat=\onesmat$, and $\onesmat \id \onesmat =k \onesmat$. This gives
\[
Y(\id) 
= -\frac{1}{d-1} \id + \left( \frac{a}{d-1} - \frac{a}{d} \right) \onesmat
= -\frac{1}{d-1} \id + \frac{a}{d(d-1)} \onesmat.
\]
Thus in this two-dimensional subspace $Y$ acts as the matrix 
\[
\frac{1}{d-1}
\begin{pmatrix}
-1 & 0 \\
a/d & 0 
\end{pmatrix}
\]
giving the eigenvalues $-1/(d-1)$ and $0$.

Next, as before the copy of $\rho_{(k-2,1,1)}$ consists of antisymmetric matrices $U$ with zero row and column sums. For these matrices we have $\Upsilon(U) = bU$ and $UJ=0$, while $U^\top = -U$. Thus they are again eigenvectors of $Y$ with eigenvalue $+1/(d-1)$.

The copy of $\rho_{(k-2,2)}$ consists of symmetric matrices $U$ with zero row and column sums and zeroes on the diagonal. Now we have $\Upsilon(U) = bU$, $UJ=0$, and $U^\top = U$, and~\eqref{eq:Mu-sbm} again makes them eigenvectors with eigenvalue $-1/(d-1)$. 

This leaves the three copies of the standard representation. This isotypic subspace is spanned by matrices like those in~\eqref{eq:standard-isotypic} but with nonzero diagonal entries, namely
\begin{equation}
U_{ij} = \begin{cases} 
\delta & i=j=1 \\
\zeta & i=j \ne 1 \\
\alpha & i=1, j \ne 1 \\
\beta & j=1, i \ne 1 \\
\gamma & i \ne 1, j \ne 1, i \ne j
\end{cases} 
\quad \text{where} 
\quad 
\gamma = - \frac{\alpha+\beta}{k-2} 
\quad \text{and} \quad
\zeta = -\frac{\delta}{k-1} \, ,
\label{eq:standard-isotypic-sbm}
\end{equation}
and their images under conjugation by permutation matrices, i.e., where the ``special'' row and column ranges from $1$ to $k$.
That is, 
\[
U = \begin{pmatrix}
\delta & \alpha & \alpha & \cdots & \alpha \\
\beta & \zeta & \gamma & \cdots & \gamma \\
\beta & \gamma & \zeta & & \vdots \\
\vdots & \vdots & & \ddots & \gamma \\
\beta & \gamma & \cdots & \gamma & \zeta
\end{pmatrix}
\]
where $\gamma$ and $\zeta$ are set so that $U$'s entries sum to zero and $U$ has zero trace. In particular, $U$ is orthogonal to both $\onesmat$ and $\id$, and hence to the trivial subspace. The reader can confirm that It is orthogonal to $\rho_{(k-2,1,1)}$ and $\rho_{(k-2,2)}$ as well.

Using~\eqref{eq:Mu-sbm} and a little work, we find that $Y(U)$ is also of this form but with entries $\alpha', \beta', \delta'$, where
\begin{equation}
\label{eq:alphabetadelta}
\begin{pmatrix} \alpha' \\ \beta' \\ \delta' \end{pmatrix} 
= m \cdot 
\begin{pmatrix} \alpha \\ \beta \\ \delta \end{pmatrix} 
\quad \text{where} \quad 
m = \frac{1}{d-1} \begin{pmatrix}
-b & -1 & \frac{-a}{k-1} \\
b(k-1)-1 & 0 & a \\
b(k-1) & 0 & a-1
\end{pmatrix} \, . 
\end{equation}
Thus $Y$ on this isotypic subspace is $m \otimes \id$ where $\id$ is the $(k-1)$-dimensional identity. The corresponding eigenvalues of $Y$ are those of $m$, which are namely the roots $\kappa$ of 
\[
-\frac{1}{d-1} 
\quad \text{and} \quad
\kappa_\pm = \frac{a - b \pm \sqrt{ (a - b)^2 - 4 (d - 1) }}{2 (d-1)} \, . 
\]
Analogous with the coloring case, if $(a-b)^2 < 4(d-1)$ these eigenvalues are complex and lie on the unit circle $|\kappa_\pm| = 1/\sqrt{d-1}$. But when $(a-b)^2 > 4(d-1)$, they are real, and $\kappa_+ > 1/\sqrt{d-1}$.  

We again multiply $\kappa_+$ by the modulus of the largest non-trivial eigenvalue of the non-backtracking matrix, $\sqrt{d-1} + o_n(1)$, to obtain the dominant eigenvalue of the Jacobian of belief propagation. The Kesten--Stigum transition occurs when this eigenvalue exceeds the unit circle, or when 
\[
    (a-b)^2 = 4(d-1) + o_n(1)\, . 
\]
Since in the equitable block model we have
\[
    \lambda = \frac{a-b}{d} \, ,
\]
this again occurs at 
\[
    | \lambda | = \frac{2 \sqrt{d-1}}{d} + o_n(1)\, . 
\]

\section{Local Statistics}
\label{sec:local-stats}

Throughout this section, we will for the sake of brevity write $\QQ = (\QQ_n)$ for the uniform distribution $\sG_{n,d}$ on $d$-regular graphs, and $\PP = (\PP_n)$ for the equitable stochastic block model $\sG_{n,k,d,\eta}^{\text{eq}}$ from Definition \ref{def:eSBM}. 

As in the preceding text, we are most concerned with the behavior of the null and planted models when the number of vertices is very large, and we will write \textit{with high probability (w.h.p.)}\ to describe a sequence of events that hold with probability $1 - o_n(1)$ in $\PP_n$ or $\QQ_n$ as $n\to \infty$, with other parameters ($d,k,\eta$) held fixed. The constant in the $o_n(1)$ may depend on these other parameters, and we will not make any attempt to quantify its rate, leaving us free to take union bounds over constantly many events.

In this section we study a family of semidefinite programming algorithms for the $\PP$ vs.\ $\QQ$ distinguishing problem. Like Sum of Squares, the \emph{Local Statistics algorithm} is phrased in the language of polynomials. Let us define a set of variables $x = \{x_{u,i}\}$ indexed by vertices $u \in [n]$ and group labels $i \in [k]$, and $G = \{G_{u,v}\}$ indexed by pairs of distinct vertices. We think of the planted model as outputting a random evaluation of these variables, namely a pair $(\bx,\bG)$, where $\bx \in \R^{n\times k}$ encodes the hidden community structure---with $\bx_{u,i} = 1$ if $\bsigma(u) = i$ and zero otherwise---and $\bG \in \R^{[n]\choose 2}$ is the Boolean vector indicating which edges are present in the graph. This allows us to regard polynomials $p \in \RR[x,G]$ as statistics of the planted distribution $\PP$, and we will in particular focus on the quantities $\EE_{(\bx,\bG)\sim \PP} \, p(\bx,\bG)$. 

The planted model outputs random variables $\bx$ and $\bG$ with a particular combinatorial structure: each variable is $\{0,1\}$-valued, and each vertex has exactly one label. This can be encoded in a set of polynomial constraints:
\begin{align*}
    G^2_{u,v} - G_{u,v} &= 0 & & \forall (u,v) \in {[n] \choose 2} \\
    x^2_{u,i} - x_{u,i} &= 0 & & \forall u \in [n], i \in [k] \\
    \sum_{i\in [k]}x_{u,i} - 1 &= 0 & & \forall u \in [n].
\end{align*}
Calling $\mathcal{I}_k$ the ideal of $\RR[x,G]$ generated by the polynomials on the left hand side of the equations above, then for any $p \in \mathcal{I}_k$, $p(\bx,\bG) = 0$. Moreover, the planted distribution has a pleasant symmetry property: the symmetric group $S_n$ acts naturally and simultaneously on the variables $x$ and $G$, with a permutation $\xi$ acting as $x_{u,i} \mapsto x_{\xi(u),i}$ and $G_{u,v} \mapsto G_{\xi(u),\xi(v)}$, and for any polynomial $p \in \RR[x,G]$, the expectation $\EE_{(\bx,\bG)\sim\PP} p(\bx,\bG)$ is constant on the orbits of this action. The Local Statistics algorithm, given as input a graph $G_0$, endeavors to find a ``pseudoexpectation'' that mimics the conditional expectation $\EE[ \cdot | G_0]$ on polynomials $p(x,G_0)$ of sufficiently low degree. 

\begin{definition}[Local Statistics Algorithm with Informal Moment Constraints]
    \label{def:local-stats-informal}
    The \emph{degree-$(D_x,D_G)$ Local Statistics algorithm} is the following SDP: given an input graph $G_0$, find $\pseudo : \RR[x]_{\le D_x} \to \R$ s.t.
    \begin{enumerate}
        \item (Positivity) $\pseudo p(x)^2 \ge 0$ whenever $\deg p^2 \le D_x$ 
        \item (Hard Constraints) $\pseudo p(x,G_0) = 0$ for every $p \in \mathcal{I}_k$
        \item (Moment Constraints) $\pseudo p(x,G_0) \approx \Ex_{(\bx,\bG)\sim\PP} p(\bx,\bG) $ whenever $\deg_G p(x,G) \le D_G$, $\deg_x p(x,G) \le D_x$, and $p$ is fixed under the $S_n$ action.
    \end{enumerate}
\end{definition}
\noindent
See, e.g., the survey \cite{Laurent-2009-SOS} for detailed discussion of how optimization problems of this form can be solved as SDPs.

We use the Local Statistics SDP for the $\PP$ vs.\ $\QQ$ hypothesis testing problem as follows: given $\bG$ sampled from one of these two distributions, we run Local Statistics, outputting $\textsc{p}$ if the SDP is feasible, and $\textsc{q}$ otherwise. The symbol $\approx$ in the moment constraints indicates that we will permit some additive error; this is necessary so that, when $\bG \sim \PP$, the SDP is with high probability satisfiable, by setting $\pseudo p(x,G_0) = \EE p(\bx,\bG)$. In fact, we will instantiate these moment constraints only on the elements of a certain combinatorially meaningful basis, and we will allow different additive error for different basis elements. In so doing we automatically satisfy positivity and the hard constraints, and the additive slack allows for for fluctuations of the $p(\bx,\bG)$ around their expectations. When we make this precise below, we will write this additive slack in terms of an `error tolerance' $\delta > 0$. 

\begin{theorem}\label{thm::LS:equitable-model}
    Let $\QQ$ and $\PP$ be as in Definition \ref{def:eSBM}. If $(d\eta)^2 > 4(d-1)$ then there exists $\delta > 0$ so that the degree $(2,2)$ Local Statistics algorithm with error tolerance $\delta$ can distinguish $\PP$ and $\QQ$. If $(d\eta)^2 \le 4(d-1)$, then there do not exist such a $D$ and $\delta$.
\end{theorem}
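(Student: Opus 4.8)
The plan is to adapt the Local Statistics analysis of~\cite{banks2019local} to the equitable block model $\PP = \sG^{\,\mathrm{eq}}_{n,d,k,\eta}$ versus $\QQ = \sG_{n,d}$; the equitability constraints are the only genuinely new ingredient. We must establish two facts. When $(d\eta)^2 > 4(d-1)$, equivalently $d|\eta| > 2\sqrt{d-1}$, there should be a $\delta > 0$ for which the degree-$(2,2)$ SDP of Definition~\ref{def:local-stats-informal} is w.h.p.\ feasible on an input from $\PP$ and w.h.p.\ infeasible on an input from $\QQ$. When $(d\eta)^2 \le 4(d-1)$, for \emph{every} $(D_x,D_G)$ and \emph{every} $\delta > 0$ the SDP should be w.h.p.\ feasible under \emph{both} distributions, hence unable to distinguish them. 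Feasibility under $\PP$ is the easy half in both cases: the ``honest'' functional $\pseudo[p(x)] := p(\bx)$, evaluation at the planted labelling $\bx$, automatically satisfies positivity ($\pseudo[p^2]=p(\bx)^2\ge 0$) and the hard constraints from $\mathcal{I}_k$; it meets exactly those moment constraints whose $S_n$-invariant statistic is deterministic under $\PP$ (such as the monochromatic-edge count $\tfrac{nd}{2k}(1+(k-1)\eta)$ and the balance counts $\sum_u x_{u,i}=n/k$); and it meets the remaining ones---color-weighted counts of short subgraphs---within any fixed $\delta>0$, since a second-moment estimate gives them $o(1)$ fluctuations.

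\emph{Infeasibility under $\QQ$ when $d|\eta|>2\sqrt{d-1}$.} Here I would extract a spectral obstruction from the constraints. Represent a feasible $\pseudo$ by its PSD moment matrix on $\{1\}\cup\{x_{u,i}\}$, and let $\widetilde N_i$ be its $(i,i)$ block. Applying the degree-$2$ sum-of-squares identity $k\sum_i Z_i^2 - \big(\sum_i Z_i\big)^2 = \sum_{i<j}(Z_i-Z_j)^2$ to $Z_i := \sum_u y_u x_{u,i}$, together with the hard constraint $\sum_i x_{u,i}=1$, forces $Q := \sum_i\widetilde N_i \succeq \tfrac1k\onesmat$, and the hard constraints also pin $\Tr Q = n$. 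Hence $R := \tfrac{k}{k-1}\big(Q-\tfrac1k\onesmat\big)$ is a PSD ``fractional partition matrix'' with $\Tr R = n$. The moment constraint counting monochromatic edges---a statistic of $G$-degree $1$, hence present at degree $(2,2)$---pins $\langle A_{\bG},R\rangle$ to within $O(\delta n)$ of its deterministic planted value, which equals $nd\eta$, so $\langle -A_{\bG},R\rangle \ge nd|\eta|-O(\delta n)$; since $R\succeq 0$ and $\Tr R=n$ this entails $\lambda_{\min}(A_{\bG})\le -d|\eta|+O(\delta)$. But Friedman's theorem~\cite{Friedman-2003-SecondEigenvalue} gives $\lambda_{\min}(A_{\bG})=-2\sqrt{d-1}+o(1)$ w.h.p.\ under $\QQ$, so for $\delta$ small enough this is a contradiction, the SDP is w.h.p.\ infeasible, and the degree-$(2,2)$ algorithm distinguishes $\PP$ from $\QQ$ above the threshold.

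\emph{Feasibility under $\QQ$ when $(d\eta)^2\le 4(d-1)$.} The key point is that the ``signal eigenvalue'' $d\eta$---recall that the coloring vectors $k\cdot\indicator{\bsigma(u)=i}-1$ are \emph{exact} eigenvectors of the equitable-model adjacency matrix with this eigenvalue, reflecting the absence of any pushout---now lies inside the Kesten--McKay bulk $[-2\sqrt{d-1},2\sqrt{d-1}]$ of $A_{\bG}$. The plan is to pseudo-calibrate: take $R$ to be (a mild diagonal rescaling of) the spectral projector of $A_{\bG}$ onto a window of eigenvalues of vanishing width around $d\eta$, normalised so that $\Tr R = n$ and $R_{uu}=1$ for all $u$---both achievable up to $o(1)$ adjustments because the window holds many eigenvalues by the Kesten--McKay law and the corresponding eigenvectors are delocalised---and build the $S_k$-symmetric moment matrix obtained by adding the manifestly PSD correction $\tfrac1k\big(\id_k - \tfrac1k\onesmat_k\big)\otimes R$ to the uninformative moment matrix on $\{1\}\cup\{x_{u,i}\}$; this makes $\sum_i\widetilde N_i = \tfrac1k\onesmat + \tfrac{k-1}{k}R$. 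One then checks positivity, the hard constraints, the correct marginals ($\pseudo[x_{u,i}] = 1/k$ and $\sum_u\pseudo[x_{u,i}]=n/k$), and that every bounded-length walk statistic $\langle A_{\bG}^{\ell},R\rangle = (d\eta)^{\ell}n + o(n)$, together with its color-refined analogues, agrees with the planted value up to $o(1)$, so all moment constraints hold within any $\delta>0$. Establishing this agreement at each fixed $D_G$---and, with slowly growing $D_G$, up to $o(n/\log n)$ in parallel with the companion low-degree estimates---is where one invokes the non-backtracking spectral bounds of~\cite{bordenave2019eigenvalues}, exactly as in the belief-propagation analysis of Section~\ref{sec:bp}: there the local eigenvalue governing the iteration is the real root of $(d-1)\kappa^2 - d\eta\,\kappa + 1 = 0$, which detaches from the circle $|\kappa| = 1/\sqrt{d-1}$ precisely at $(d\eta)^2 = 4(d-1)$, so below this point the signal statistics are swallowed by the fluctuations of the null model.

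\emph{The main obstacle} is this last step: producing the feasible pseudoexpectation for $\QQ$ and verifying positivity of its moment matrix \emph{uniformly} in the moment degree $D_G$. This amounts to pairing the representation-theoretic diagonalization of the local operator from Section~\ref{sec:bp} with quantitative control of the self-avoiding-walk (non-backtracking) spectrum of a random $d$-regular graph, while carefully tracking the equitability constraints---which, unlike in the ordinary stochastic block model, make the natural factor graph non-treelike and endow the moment matrix with the extra block structure visible in~\eqref{eq:standard-isotypic-sbm}.
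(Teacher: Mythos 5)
Your infeasibility argument above the threshold is correct and matches the paper's: extract the fractional partition matrix $R \succeq 0$ with $\Tr R = n$ from the moment matrix, use the degree-$(2,1)$ monochromatic-edge-count constraint to force $\langle -A_{\bG}, R\rangle \gtrsim nd|\eta|$, and contradict Friedman's bound on $\lambda_{\min}(A_{\bG})$ (the paper phrases this with a degree-$2$ certificate polynomial on $[-2\sqrt{d-1},2\sqrt{d-1}]$, but the content is the same). The feasibility half below the threshold, however, has real gaps. First, the paper does \emph{not} build $R$ from a spectral projector of $A_{\bG}$; it builds it from a fixed-degree polynomial $g(A_{\bG}) - g(d)\onesmat/n$ where $g$ is chosen via Gaussian quadrature against the Kesten--McKay measure (Lemma~\ref{lem:quadrature} and Lemma~\ref{lem:feasible-from-polynomial}). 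That choice matters: for a constant-degree polynomial $g$, the diagonal entries $g(A_{\bG})_{uu}$ are controlled by closed non-backtracking walk counts at $u$, which are essentially deterministic since $\bG$ is locally treelike, so $R_{uu}=1$ is achievable by a genuinely mild correction. For your spectral projector onto a vanishing window around $d\eta$, enforcing $R_{uu}=1$ for all $u$ requires uniform delocalization of bulk eigenvector entries, which you gesture at but do not establish, and which is a much heavier input; moreover at the boundary $(d\eta)^2 = 4(d-1)$ the Kesten--McKay density vanishes and the window may hold too few eigenvalues for your normalization, whereas the quadrature argument handles the closed interval cleanly because the roots of the orthogonal polynomials accumulate at the endpoints.

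Second, you do not address the reduction from the full degree-$(2,D)$ Local Statistics constraints (indexed by all partially labelled subgraphs with $\le 2$ distinguished vertices and $\le D$ edges, up to isomorphism) to the path statistics $\langle R, \nb{s}{\bG}\rangle$. This reduction occupies a substantial part of the paper's lower-bound proof: one must show that cyclic shapes are asymptotically negligible (Lemma~\ref{lem:cycles-for-free}), and that a general partially labelled forest's pseudoexpectation is pinned by its pruning (Lemmas~\ref{lem:pruned-ratio}--\ref{lem:pruned-suffice}), so that only labelled paths need to be matched. Writing "every bounded-length walk statistic, together with its color-refined analogues, agrees with the planted value" elides precisely this step. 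Finally, the invocation of the non-backtracking spectral bound of Bordenave and of the belief-propagation eigenvalue from Section~\ref{sec:bp} is a misdirection here: the Local Statistics lower bound in the paper runs through Friedman's theorem on $A_{\bG}$ and Kesten--McKay quadrature, not through the non-backtracking matrix or the BP Jacobian, and the statement holds for fixed $D$ (not $D = o(n/\log n)$, which belongs to the separate low-degree-likelihood-ratio analysis).
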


The proof of Theorem \ref{thm::LS:equitable-model} closely follows \cite{banks2019local}. As in that work, we will first study a simpler SDP hierarchy for the $\PP$ vs.\ $\QQ$ hypothesis testing problem, and then show that its feasiblity is equivalent to that of degree $(2,D)$ local statistics SDP.  We begin with some standard facts about non-backtracking walks, which will be a central tool in our analysis.

\subsection{Non-Backtracking Walks}
\label{sec:nb-walks}

Let $A_G$ be the adjacency matrix for a $d$-regular graph $G$ (which may have self-loops and multi-edges). A length-$s$ \textit{non-backtracking walk} on $G$ is an alternating sequence of vertices and edges $v_1, e_1,v_2,e_2,...,v_s$ without terms of the form $v,e,w,e,v$. The matrices $\nb{s}{G}$ whose $u,v$ entries count the number of such walks between vertices $u$ and $v$ are given by
\begin{align*}
    \nb{0}{G} &= 1 \\
    \nb{1}{G} &= A_G \\ 
    \nb{2}{G} &= A^2_G - d \\
    \nb{s+1}{G} &= A\nb{s}{G} - (d-1)\nb{s-1}{G} \qquad s \ge 2.
\end{align*}
In particular, $\nb{s}{G} = q_s(A_G)$, for a sequence of monic univariate polynomials $q_s \in \RR[z]$, with $\deg q_s = s$, which are known to be orthogonal with respect to the Kesten-McKay measure
$$
    d\mkm(z) \colonequals \frac{d}{2\pi} \frac{\sqrt{4(d-1) - z^2}}{d^2 - z^2} \indicator{|z| < \lam} dz
$$ 
on the interval $(-\lam,\lam)$. This fact has appeared innumerable times in the literature, dating back at least to \cite{mckay1981expected}. Thus the polynomials $q_s$ are a basis for the Hilbert space of square integrable functions on this interval, equipped with the inner product
$$
    \langle f, g \rangle_{\km} \colonequals \int f(z)g(z) d\mkm(z),
$$
and associated norm
$$
    \|f\|_{\km}^2 \colonequals \langle f, f \rangle_{\km} = \int f(z)^2 d\mkm(z),
$$
and in particular for any polynomial $f \in \RR[z]$, we have the orthogonal decomposition
$$
    f = \sum_{s \ge 0} \frac{\langle f, q_s \rangle_{\km}}{\|q_s\|^2_{\km}}q_s.
$$

We record for later use that
$$
    \|q_s\|_{\km}^2 = q_s(d) = \begin{cases} 1 & s=1 \\ d(d-1)^{s-1} & s > 1 \end{cases};
$$
this is equal to the number of vertices at depth $s$ in a rooted $d$-regular tree, or equivalently $n^{-1}$ times the total number of length-$s$ non-backtracking walks in a $d$-regular graph on $n$ vertices. For a derivation of this and other related facts, the reader may refer to \cite{sole1996spectra} or \cite{sodin2007random}, but should beware of differing normalization conventions.

We will also require some standard and generic properties sequences of univariate polynomials orthogonal with respect to a measure on an interval of $\RR$ \cite[Theorems 3.3.1, 6.6.1, and 3.4.1-2]{szeg1939orthogonal}: each $q_s$ has $s$ roots in the interval $(-\lam,\lam)$, the union of these roots over all $s \ge 0$ are dense in this interval, and we have a nonnegative quadrature rule

\begin{lemma}[Quadrature] \label{lem:quadrature}
    For each $s$, call $r_1 < r_2 < \cdots < r_s$ there roots of $q_s$. There exist weights $w_1,...,w_s \ge 0$ with the property that
    $$
        \langle f, q_s \rangle_{\km} = \sum_{i \in [s]} f(r_i) q_s(r_i) w_i
    $$
    for every polynomial $f$ of degree at most $2s - 1$.
\end{lemma}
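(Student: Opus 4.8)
The plan is to prove the classical Gauss quadrature rule attached to the nodes $r_1 < \cdots < r_s$, namely that there exist weights $w_1,\dots,w_s \ge 0$ with $\int g(z)\, d\mkm(z) = \sum_{i \in [s]} g(r_i)\, w_i$ for every $g \in \RR[z]$ with $\deg g \le 2s-1$; the statement of the lemma is the instance of this obtained by specializing to products involving $q_s$. Only two ingredients are needed, and both are already available: orthogonality of the $q_s$ with respect to $d\mkm$, and the structural fact cited just above from \cite{szeg1939orthogonal} that $q_s$ has exactly $s$ roots, all simple and lying in $(-\lam,\lam)$.

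First I would build the candidate weights. Let $\ell_1,\dots,\ell_s \in \RR[z]$ be the Lagrange interpolation polynomials for the node set $\{r_1,\dots,r_s\}$, so that $\deg \ell_i = s-1$ and $\ell_i(r_j) = \indicator{i=j}$, and set $w_i \colonequals \langle \ell_i, 1\rangle_{\km} = \int \ell_i(z)\, d\mkm(z)$. Then I would verify exactness up to degree $2s-1$: given $g$ with $\deg g \le 2s-1$, Euclidean division by the degree-$s$ polynomial $q_s$ yields $g = q_s h + \rho$ with $\deg h \le s-1$ and $\deg \rho \le s-1$. Since $h$ lies in the span of $q_0,\dots,q_{s-1}$, orthogonality gives $\langle q_s h, 1\rangle_{\km} = 0$; since Lagrange interpolation reproduces $\rho$ exactly, $\int \rho\, d\mkm = \sum_i \rho(r_i)\, w_i$; and since $q_s(r_i) = 0$ we have $g(r_i) = \rho(r_i)$. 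Chaining these three facts gives $\int g\, d\mkm = \sum_i g(r_i)\, w_i$. Nonnegativity of the weights then falls out by applying this rule to $g = \ell_i^2$, which has degree $2s-2 \le 2s-1$: indeed $w_i = \int \ell_i^2\, d\mkm \ge 0$ (in fact strictly positive). The lemma's displayed identity is then read off by applying the rule to the relevant polynomial built from $f$ and $q_s$ and using $q_s(r_i)=0$ to collapse the sum to $\sum_i f(r_i)\, q_s(r_i)\, w_i$.

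I do not expect a genuine obstacle: this is a textbook argument, and the work is entirely bookkeeping once the root structure of $q_s$ from \cite{szeg1939orthogonal} is invoked. The single point worth flagging is conceptual rather than technical — the exactness range is degree $2s-1$, not the degree $s-1$ that bare orthogonality would suggest. The extra factor of $s$ comes precisely from the division-by-$q_s$ step, which kills the high-degree part of $g$ against $d\mkm$ while leaving the interpolation data at the $r_i$ untouched; keeping straight which polynomials are being interpolated versus integrated is the only place one could slip.
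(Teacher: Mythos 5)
Your Lagrange-interpolation derivation of the $s$-node Gauss quadrature rule --- weights $w_1,\dots,w_s\ge 0$ with $\int g\,d\mkm = \sum_i g(r_i)\,w_i$ for all $\deg g\le 2s-1$ --- is the standard argument and is correct as far as it goes; the paper itself offers no proof of Lemma~\ref{lem:quadrature} and simply cites \cite{szeg1939orthogonal}, so this is exactly the content one should supply.

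However, the final ``specialization'' step as you describe it cannot work, and the obstruction is a typo in the paper's statement that your write-up silently inherits. As printed, the nodes $r_i$ are the roots of $q_s$ and the inner product is against that same $q_s$; since $q_s(r_i)=0$, the right-hand side $\sum_i f(r_i)q_s(r_i)w_i$ is identically zero, while $\langle f,q_s\rangle_\km$ is nonzero for, e.g., $f=q_s$ (degree $s\le 2s-1$), so the displayed identity is false on part of its stated range. Applying your $s$-node rule to $g=fq_s$ requires $\deg f+s\le 2s-1$, i.e.\ $\deg f\le s-1$, not $\deg f\le 2s-1$; and for $\deg f\le s-1$ both sides vanish trivially by orthogonality, so nothing is gained. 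The lemma's sole invocation, in the proof of Theorem~\ref{thm:local-path-stats}, makes the intended reading clear: the nodes are the roots of $q_P$ for a \emph{separate} index $P\gg D\ge s$, chosen so that $\deg(g_\eta q_s)\le 2P-1$; then $q_s(r_i)\ne 0$ and the $P$-node rule applied to $g=g_\eta q_s$ gives $\langle g_\eta,q_s\rangle_\km=\sum_{i\in[P]}g_\eta(r_i)q_s(r_i)w_i$. The corrected statement should therefore take the nodes from $q_P$, the weights indexed by $[P]$, and the hypothesis $\deg f + s\le 2P-1$. Your quadrature argument, with $s$ renamed to $P$, proves precisely what is needed --- no new idea is required, only this decoupling of the index governing the nodes from the index of the polynomial in the inner product, rather than trying to force the lemma's literal wording.
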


\subsection{The Symmetric Path Statistics SDP}

In this section we study a simplified version of the Local Statistics SDP, which will ultimately be key to our analysis of the full Local Statistics SDP. Let $(\bx,\bG) \sim \PP$, thinking of $\bx$ as a collection of $k$ vectors $\bx_1,...,\bx_k \in \{0,1\}^n$. One can check that the partition matrix for the planted labelling, in the sense of Definition \ref{def:labeling_partition}, is
\begin{align*}
    \bP 
    &\colonequals \frac{k}{k-1}\left(\sum_{i \in [k]} \bx_i \bx_i^\top - \frac{1}{k}\onesmat_k \right) \\
    \intertext{Since deterministically $\bx_1 + \cdots + \bx_k = 1$, we as well have}
    &= \frac{k}{k-1}\left(\sum_{i \in [k]} \bx_i\bx_i^\top - \frac{1}{k}\sum_{i,j}\bx_i\bx_j^\top \right)
\end{align*}
As we observed in the Introduction, $\bP$ is PSD with ones on the diagonal, and $(k/n)\bP$ is the orthogonal projector onto the $(k-1)$-dimensional subspace spanned by $\bx_1,...,\bx_k$ and orthogonal to the vector of all-ones.

We will be particularly interested in the inner products $\langle \bP, \nb{s}{\bG}\rangle$, which count non-backtracking walks on $\bG$, with weight $1$ if the endpoints share a group label, and $-(k-1)^{-1}$ if they do not. The following is a consequence of Lemma \ref{lem:forests} in the sequel.

\begin{lemma}\label{lem:X-inner-products}
    For every $s \ge 1$ and increasing, nonnegative function $\Delta(n)$,
    $$
        \PP\left[ \left| \langle \bP, \nb{s}{\bG}\rangle - q_s(d\eta) n \right| > \Delta(n) \right] = O\left(\frac{n}{\Delta(n)^2}\right).
    $$
\end{lemma}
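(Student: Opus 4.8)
The statement is a concentration inequality, so the default plan is the second moment method: setting $X \colonequals \langle \bP, \nb{s}{\bG}\rangle$, bound $\PP[\,|X - q_s(d\eta)n| > \Delta(n)\,]$ by Chebyshev. The asserted estimate is vacuous unless $\Delta(n) \gg \sqrt n$ (otherwise $O(n/\Delta(n)^2)$ is already $\Omega(1)$), so it suffices to show $\EE_\PP[X] = q_s(d\eta)\,n + O(\sqrt n)$ and $\mathrm{Var}_\PP(X) = O(n)$, with constants depending on $d,k,s,\eta$; Chebyshev then closes it. Before doing any of this, though, I would check whether $X$ is in fact deterministic --- and it is. Writing $\bx_i \colonequals \indicator{\bsigma = i}$ and $S \colonequals \mathrm{span}\{\bx_1,\dots,\bx_k\}\cap \onesvec^\perp$, the equitable model forces every color-$i$ vertex to have exactly $a \colonequals dM_{i,i}$ neighbors of color $i$ and exactly $b \colonequals dM_{i,j}$ of each other color, so $A_{\bG}\bx_i = (a-b)\bx_i + b\,\onesvec$ holds for every realization; hence $S$ is an eigenspace of $A_{\bG}$ with eigenvalue $a-b = d\eta$. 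Since $\bP$ is a positive multiple of the orthogonal projector $\Pi_S$ onto $S$ --- concretely $\bP = \tfrac{n}{k-1}\Pi_S$ --- and $\nb{s}{\bG} = q_s(A_{\bG})$, we get $X = \tr(\bP\, q_s(A_{\bG})) = \tfrac{n}{k-1}\,q_s(d\eta)\,\tr(\Pi_S) = q_s(d\eta)\,n$ with probability one, so the probability in the statement is $0$ and the lemma is immediate; the only thing to verify is that neither $A_{\bG}\bx_i = (a-b)\bx_i + b\,\onesvec$ nor $\nb{s}{\bG} = q_s(A_{\bG})$ is disturbed when $\bG$ is not locally tree-like or carries multi-edges, which it is not.

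I would nonetheless also carry out the combinatorial route, since that is presumably what underlies the cited deduction from Lemma~\ref{lem:forests} and is what one actually needs for the companion estimates (for the null model $\QQ$, and for the noisy model $\widetilde{\sG}^{\,\mathrm{eq}}$, where the rigid eigenstructure is gone). Expand $X = \sum_w \bP_{\iota(w),\tau(w)}$ over length-$s$ non-backtracking walks $w$ in $\bG$. For the first moment: for a uniformly random start vertex the depth-$s$ neighborhood is, except with probability $O(1/n)$, a tree distributed as the colored offspring tree of the eSBM (root color uniform; a color-$i$ vertex with a color-$i$ parent has $a-1$ color-$i$ children and $b$ of every other color, while one with a color-$j$ parent, $j \ne i$, has $a$ color-$i$ children, $b-1$ of color $j$, and $b$ of each remaining color); on such a tree the length-$s$ non-backtracking walks from the root are exactly its depth-$s$ descendants, and following the color along the walk, the color-transfer operator on the tree has nontrivial eigenvalue $a-b = d\eta$, which upon matching against the defining recurrence $q_{s+1}(z) = z\,q_s(z) - (d-1)q_{s-1}(z)$ gives $\EE_\PP[X] = q_s(d\eta)\,n + O_{d,k,s}(1)$. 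For the variance: expand $X^2 = \sum_{w_1,w_2} \bP_{\iota(w_1),\tau(w_1)}\bP_{\iota(w_2),\tau(w_2)}$ and group the ordered pairs by the isomorphism type of $w_1 \cup w_2$; vertex-disjoint pairs with tree-like union reconstruct $(\EE_\PP[X])^2 = q_s(d\eta)^2 n^2 + O(n)$, whereas any pair whose union shares a vertex or contains a cycle is pinned down, up to $O_{d,k,s}(1)$ choices, by one of the two walks together with a point of contact, so there are only $O(n)$ such pairs, each contributing $O(1)$ in absolute value; hence $\mathrm{Var}_\PP(X) = O(n)$ and Chebyshev finishes.

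The step I expect to be the main obstacle is this last enumeration: carefully classifying the ``non-forest'' ways two length-$s$ non-backtracking walks can interact --- sharing vertices or edges, or closing up a cycle --- in a graph drawn from the \emph{equitable} regular model, keeping track of how the equitable edge placement alters the probability of a prescribed edge configuration relative to the ordinary stochastic block model, and confirming that the signed weights $\bP_{\iota,\tau} \in \{1,\, -1/(k-1)\}$ (whose products can partly cancel, but for an upper bound on the variance one may simply take absolute values) do not degrade the $O(n)$ bound. This is exactly the combinatorial content isolated in Lemma~\ref{lem:forests}.
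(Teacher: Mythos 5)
Your primary argument is correct, and it is both cleaner and strictly stronger than the bound in the statement. In the (noiseless) equitable model, the graph is exactly equitable: every color-$i$ vertex has precisely $dM_{i,j}$ neighbors of color $j$, so $A_{\bG}\bx_j = d\eta\,\bx_j + \tfrac{d(1-\eta)}{k}\onesvec$ for every realization, and the $(k-1)$-dimensional subspace $S = \mathrm{span}\{\bx_1,\dots,\bx_k\}\cap\onesvec^\perp$ is an honest eigenspace of $A_{\bG}$ with eigenvalue $d\eta$. Combined with $\nb{s}{\bG} = q_s(A_{\bG})$, $\Tr(\Pi_S) = k-1$, and $\bP = \tfrac{n}{k-1}\Pi_S$ (the paper's remark in Section~8.2 that $(k/n)\bP$ is the projector is a typo; the correct normalization is $\Pi_S = \tfrac{k-1}{n}\bP$, as you computed), one gets $\langle \bP, \nb{s}{\bG}\rangle = q_s(d\eta)\,n$ identically, so the probability in question is $0$ whenever $\Delta(n)>0$. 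That is a genuinely different route from the paper's.

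The paper instead proves the lemma combinatorially, exactly along the lines of your fallback: it applies Lemma~\ref{lem:forests} to the path polynomials $p_{(P_s,\{0,s\},\{i,j\})}$, identifies $(dM)^{(P_s,\{0,s\})}_{i,j}$ with $q_s(dM)_{i,j}$, assembles $\langle\bP,\sa{s}{\bG}\rangle = q_s(d\eta)n + o(n)$ using that $dM$ has spectrum $\{d,\underbrace{d\eta,\dots,d\eta}_{k-1}\}$, and then passes from the self-avoiding matrices $\sa{s}{\bG}$ to $\nb{s}{\bG}$ by the bound on short cycles (Lemma~\ref{cor:bad-vertices}). That detour through $\sa{s}{\bG}$ and the concentration argument are genuinely needed elsewhere in the section --- your deterministic shortcut is specific to the noiseless eSBM and collapses the moment the exact eigenstructure is perturbed --- but for this lemma as stated, your observation makes the proof immediate. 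Both approaches are correct; yours is sharper here, the paper's is the one that generalizes to the rest of the Local Statistics analysis.
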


\noindent The \textit{Symmetric Path Statistics SDP}, given as input a graph $G_0$, attempts to find a ``pseudo-partition matrix,'' i.e. PSD matrix with ones on the diagonal, and whose inner products with the matrices $\nb{s}{G_0}$ are equal to $q_s(d\eta)n$, at least up to the fluctuations in Lemma \ref{lem:X-inner-products}.

\begin{definition}[Symmetric Path Statistics]
    \label{def:local-path-stats}
    The level-$D$ \emph{Symmetric Path Statistics Algorithm} with error tolerance $\delta > 0$, on input a $d$-regular graph $G_0$ on $n$ vertices, is the following SDP: find $\widetilde P \succeq 0$ so that
    \begin{enumerate}
        \item $\widetilde P_{u,u} = 1$ for every $u \in [n]$
        \item $\langle \widetilde P, \onesmat_n \rangle   \in [-\delta n^2, \delta n^2]$
        \item $\langle \widetilde P, \nb{s}{G_0}\rangle \in q_s(d\eta)n + [-\delta n, \delta n]$ for every $s \in [D]$.
    \end{enumerate}
\end{definition}

\begin{theorem}
    \label{thm:local-path-stats}
    Let $\QQ$ and $\PP$ be as in Definition 8.1. If $(d\eta)^2 > 4(d-1)$, then for every $D \ge 2$ there exists an error tolerance $\delta > 0$ at which the level $D$ Symmetric Path Statistics SDP can w.h.p.\ distinguish $\PP$ and $\QQ$. If $(d\eta)^2 \le 4(d-1)$, then no such $D$ and $\delta$ exist.
\end{theorem}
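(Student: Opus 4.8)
The plan is to prove both directions of the equivalence after first reducing to a single statement. In \emph{both} regimes the Symmetric Path Statistics SDP is feasible on $\PP$ with high probability, via the true planted partition matrix $\bP = P^{(\bsigma)}$ (Definition \ref{def:labeling_partition}): it is PSD with unit diagonal by construction, it satisfies $\bP\onesvec = 0$ so that $\langle \bP, \onesmat_n\rangle = 0$, and by Lemma \ref{lem:X-inner-products} applied with $\Delta(n) = \delta n$, together with a union bound over the $D$ values $s \in [D]$, it has $\langle \bP, \nb{s}{\bG}\rangle \in q_s(d\eta)n + [-\delta n, \delta n]$ w.h.p.\ for every fixed $\delta > 0$. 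Since the algorithm reports $\textsc{p}$ exactly when the SDP is feasible, it distinguishes $\PP$ from $\QQ = \sG_{n,d}$ at tolerance $\delta$ precisely when the SDP is \emph{infeasible} on $\QQ$ w.h.p., and the whole theorem reduces to showing that this happens iff $(d\eta)^2 > 4(d-1)$. I would follow the strategy of \cite{banks2019local}.

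For the direction $(d\eta)^2 > 4(d-1)$ the plan is to rule out any feasible $\widetilde P$ on a null instance using only the $s = 2$, unit-diagonal, and $\onesmat_n$ constraints. Since $\nb{2}{\bG} = A_{\bG}^2 - d\,\id$ and $q_2(d\eta) = (d\eta)^2 - d$, feasibility forces $\langle \widetilde P, A_{\bG}^2\rangle = \langle \widetilde P, \nb{2}{\bG}\rangle + d\,\tr\widetilde P \in (d\eta)^2 n + [-\delta n, \delta n]$. On the other hand, writing $A_{\bG}^2 = d^2 n^{-1}\onesmat_n + (A_{\bG}^2 - d^2 n^{-1}\onesmat_n)$ and using $\widetilde P \succeq 0$, the $\onesmat_n$ constraint, $\tr \widetilde P = n$, and Friedman's theorem \cite{Friedman-2003-SecondEigenvalue} (the nontrivial eigenvalues of $\bG$ have modulus at most $\lam + o(1)$ w.h.p.), one obtains $\langle \widetilde P, A_{\bG}^2\rangle \le d^2\delta n + (4(d-1) + o(1))n$. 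These two estimates are contradictory once $\delta$ is a small enough constant (depending on $d,\eta$) and $n$ is large, so the SDP is infeasible w.h.p.; since the level-$D$ SDP contains all the constraints of the level-$2$ SDP, this covers every $D \ge 2$.

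The crux is the converse direction $(d\eta)^2 \le 4(d-1)$, where I must \emph{construct} a feasible $\widetilde P$ on a null instance. Here $d\eta$ lies in $[-\lam, \lam]$, the support of the Kesten--McKay measure $\mkm$ and, by McKay \cite{mckay1981expected} and Friedman, the limiting location of the nontrivial spectrum of $\bG$. I would fix a small constant $\eps' > 0$, let $S \colonequals \{i : \lambda_i(A_{\bG}) \in [d\eta - \eps', d\eta + \eps']\}$ (this omits the trivial eigenvalue $d$ since $d > \lam$), set $\Pi_S \colonequals \sum_{i \in S} u_i u_i^\top$, and take $\widetilde P_0 \colonequals \tfrac{n}{|S|}\Pi_S$. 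Then w.h.p.\ $|S| = (\mkm([d\eta - \eps', d\eta + \eps']) + o(1))n = \Theta(n)$, $\widetilde P_0 \succeq 0$, $\tr\widetilde P_0 = n$, $\langle \widetilde P_0, \onesmat_n\rangle = 0$ (as $\Pi_S\onesvec = 0$), and $\langle \widetilde P_0, \nb{s}{\bG}\rangle = \tfrac{n}{|S|}\sum_{i \in S}q_s(\lambda_i)$ is within $O_s(\eps' n)$ of $q_s(d\eta)n$. The one defect is that the diagonal of $\widetilde P_0$ is not exactly $\onesvec$; conjugating by $\Lambda \colonequals \diag\bigl((\widetilde P_0)_{vv}^{-1/2}\bigr)$ gives $\widetilde P \colonequals \Lambda \widetilde P_0 \Lambda \succeq 0$ with exact unit diagonal, and provided $\max_v |(\widetilde P_0)_{vv} - 1| = o(1)$ one has $\|\Lambda - \id\| = o(1)$, so the conjugation perturbs the remaining inner products by only $o(n)$; taking $\eps'$ small and then $n$ large makes $\widetilde P$ feasible at any tolerance $\delta$, for every $D$. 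The estimate $\max_v |(\widetilde P_0)_{vv} - 1| = o(1)$ is a uniform (entrywise) local Kesten--McKay law for random regular graphs. Alternatively, as in \cite{banks2019local}, one can avoid this law by building $\widetilde P$ directly from the projected non-backtracking matrices $(\id - n^{-1}\onesmat_n)\nb{s}{\bG}(\id - n^{-1}\onesmat_n)$ for $0 \le s \le S$, with coefficients matching the Christoffel--Darboux kernel of $\mkm$ at $d\eta$, so that the moment constraints are met because the nontrivial spectral distribution of $\bG$ converges to $\mkm$.

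I expect the converse construction to be the main obstacle. Both routes exploit that $d\eta$ is in the bulk $[-\lam, \lam]$, but converting that into a genuinely PSD matrix with an \emph{exact} unit diagonal is where the work concentrates: the spectral route needs a uniform local law, while the non-backtracking route must cope with the fact that the Christoffel--Darboux kernel $K_S(d\eta, \cdot)$ is not pointwise nonnegative on $[-\lam, \lam]$, which forces a nonnegative smoothing of the kernel plus a by-hand correction on the $O(1)$ vertices lying on cycles of length at most $S$. Finally, the reduction from the Symmetric Path Statistics SDP to the full degree-$(2,D)$ Local Statistics SDP, which is what yields Theorem \ref{thm::LS:equitable-model}, is a separate step and is also handled as in \cite{banks2019local}.
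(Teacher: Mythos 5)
Your overall strategy matches the paper's: establish feasibility on $\PP$ via the true partition matrix $\bP = P^{(\bsigma)}$ using Lemma~\ref{lem:X-inner-products}, then characterize feasibility on $\QQ$. Your infeasibility argument for $(d\eta)^2 > 4(d-1)$ is essentially identical to the paper's -- the paper also reduces to the degree-$2$ polynomial $f(z) = 2(d-1) + \tfrac12(d\eta)^2 - z^2$ (strictly positive on $[-2\sqrt{d-1},2\sqrt{d-1}]$ with $f(d\eta) < 0$), pairs $f(A_{\bG}) - f(d)\onesmat_n/n \succeq 0$ against $\widetilde P$, and derives a contradiction from the affine moment constraints for small enough $\delta$.

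The converse direction is where your proposal has a genuine gap. Your primary Route 1 (spectral projector onto a window around $d\eta$, rescaled and then conjugated by a diagonal matrix to force exact unit diagonal) requires $\max_v \bigl|(\Pi_S)_{vv} - |S|/n\bigr| = o(1)$, a \emph{uniform} local Kesten--McKay law for random $d$-regular graphs. For fixed $d$ (the regime of the theorem) no such result is available: local laws in the style of Bauerschmidt--Huang--Yau require $d$ to grow with $n$, and quantum-ergodicity results for fixed $d$ give $\ell^2$-averaged rather than entrywise-uniform delocalization. The conjugation $\Lambda \widetilde P_0 \Lambda$ cannot be shown to perturb the inner products by only $o(n)$ without that input, so Route 1 does not go through. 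Your Route 2 is the right instinct, but the Christoffel--Darboux kernel $K_S(d\eta,\cdot)$ is not pointwise nonnegative, as you note, and ``nonnegative smoothing plus by-hand corrections on short cycles'' does not identify the needed construction. The paper's fix -- Lemma~\ref{lem:feasible-from-polynomial} together with the quadrature argument -- is the missing step: fix $P \gg D$, let $r_1 < \dots < r_P$ be the roots of $q_P$, let $I$ index the $(D+1)/2$ roots closest to $d\eta$, and take
\[
g_\eta(z) = \frac{1}{\zeta}\prod_{i \notin I}(z - r_i)^2,
\]
normalized so $\langle g_\eta, 1\rangle_{\km} = 1$. This is manifestly nonnegative of degree $2P - D - 1$, and the Gauss quadrature rule (Lemma~\ref{lem:quadrature}) shows that $\langle g_\eta, q_s\rangle_{\km}$ for $s \le D$ is a weighted average of $q_s$ at the retained roots; since roots of orthogonal polynomials are dense in $(-\lam,\lam)$ and $d\eta$ lies in its closure, taking $P$ large forces $\langle g_\eta, q_s\rangle_{\km}$ within $\delta$ of $q_s(d\eta)$. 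Lemma~\ref{lem:feasible-from-polynomial} then converts $g_\eta$ into a feasible $\widetilde P$ (essentially $g_\eta(A_{\bG}) - g_\eta(d)\onesmat_n/n$, with the diagonal handled via the fact that closed non-backtracking walk counts vanish away from the $o(n)$ vertices near short cycles -- no local law required). Without supplying this concrete polynomial, the converse direction of your proof is incomplete.

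Finally, two minor corrections: the number of vertices near short cycles is $o(n)$ (Lemma~\ref{cor:bad-vertices}), not $O(1)$; and the reduction of the full degree-$(2,D)$ Local Statistics SDP to the Symmetric Path Statistics SDP is indeed a separate step, but it is not part of Theorem~\ref{thm:local-path-stats}, so you are right to defer it.
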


\begin{proof}
    By Lemma 8.5, this SDP is with high probability feasible on input $\bG \sim \PP$. Our proof will therefore show that when $\eta^2$ is sufficiently large, the SDP for some constant $D$ is \textit{infeasible} on input $\bG \sim \QQ$, whereas for $\eta^2$ sufficiently small, it is feasible for every constant $D$. 
    
    First, fix $D \ge 2$ and assume $(d\eta)^2 > 4(d-1)$. We will show that there exists $\delta > 0$ so that with high probability the level-$D$ Symmetric Path Statistics is infeasible on input $\bG \sim \QQ$. Our strategy will be to find a polynomial $f$ ith the property that $f(A_{\bG}) \succeq 0$ with high probability, but we can deduce $\langle P, f(A_{\bG}) \rangle < 0$ from the affine constraints in Definition \ref{def:local-path-stats}.
    
    Let $f$ be a degree $D$ polynomial which is strictly positive on the closed interval $[-2\sqrt{d-1},2\sqrt{d-1}]$ and satisfies $f(d\eta) < 0$; our assumption on $\eta$ ensures that this is possible, for instance by setting $f(z) = 2(d-1) + \tfrac{1}{2}(d\eta)^2 - z^2$. From our preliminaries on non-backtracking walks and the polynomials $q_s$ we know
    $$
        f = \sum_{s = 0}^D \frac{\langle f,q_s\rangle_{\km}}{\|q_s\|^2_{\km}} \, q_s.
    $$
    When $\bG \sim \QQ$, $A_{\bG}$ has an eigenvalue at $d$ whose eigenvector is the all-ones vector and by Friedman's Theorem \cite{Friedman-2003-SecondEigenvalue} its remaining eigenvalues with high probability have absolute value at most $2\sqrt{d-1} + o(1)$. Our assumptions on $f$ therefore imply $f(A_{\bG}) - f(d)\onesmat/n = f(A_{\bG} - d\onesmat/n) \succeq 0$ with high probability.
    
    On the other hand, if $\widetilde\bP \succeq 0$ is a feasible solution for the degree-$D$ Symmetric Path Statistics SDP on input $\bG \sim \QQ_n$, then
    \begin{align*}    
        0 
        &\le \langle \widetilde\bP, f(A_{\bG}) - f(d)\onesmat/n \rangle \\
        &= \sum_{s = 0}^D \frac{\langle f, q_s \rangle_{\km}}{\|q_s\|^2_{\km}} \langle \widetilde\bP, q_s(A_{\bG})\rangle + \delta |f(d)| n \\
        &\le \sum_{s = 0}^D \frac{\langle f,q_s \rangle_{\km}}{\|q_s\|^2_{\km}}q_s(d\eta) n + \delta\left( \|f\|^2_{\km} + |f(d)|\right)n \\
        &= \left(f(d\eta) + \delta\left( \|f\|^2_{\km} + |f(d)|\right)\right)n < 0,
    \end{align*}
    if we set $\delta$ sufficiently small.
    
    We can now turn to the case $(d\eta)^2 \le 4(d-1)$, seeking to prove that on input $\bG \sim \QQ$, with high probability \textit{every} level of the Symmetric Path Statistics hierarchy is feasible, for every error tolerance $\delta$. We will use the following lemma, which may be proved by adapting the proof of \cite[Proposition 4.8]{banks2019local}. The proof proceeds by setting $\widetilde\bP$ equal to a mild modification of the matrix $g(A_{\bG}) - g(d)\onesmat_n/n$.\footnote{The referenced result in \cite{banks2019local} was proved for an SDP which shared constraints (1) and (2) from Definition \ref{def:local-path-stats}, but in which constraint (3) read $\langle P,\nb{s}{G_0}\rangle = \lambda^s \|q_s\|^2_{\km} n$. The proof may be adapted simply by adopting the hypotheses below, changing every instance of the aforementioned constraints, and taking some care with the $\delta$ slack.}
    
    \begin{lemma}
        \label{lem:feasible-from-polynomial}
        Assume there exists a constant-degree polynomial $g \in \RR[z]$ that is strictly positive on $[-2\sqrt{d-1},2\sqrt{d-1}]$ and satisfies
        $$
            \langle g, q_s\rangle_{\km} \in q_s(d\eta) + [-\delta,\delta].
        $$
        for every $s = 1,...,D$. Then the level-$D$ Symmetric Path Statistics SDP with error tolerance $\delta$ is w.h.p.\ feasible on input $\bG\sim \QQ$.
    \end{lemma}
    
    Lemma 8.9 in hand, we need only to construct such a polynomial. Assume that $(d\eta)^2 \le 4(d-1)$, so that $d\eta \in [-2\sqrt{d-1},2\sqrt{d-1}]$, the interval in which lie the roots of every polynomial $q_s$. Let $P \gg D$, and write $r_1 < r_2 < \cdots < r_P$ for the roots of $q_P$. Let $I \subset [P]$ contain the indices of the $(D + 1)/2$ roots of $q_P$ closest to $d\eta$, and set
    $$
        g_\eta = \frac{1}{\zeta}\prod_{i \notin I}(z - r_i)^2,
    $$
    where $\zeta$ is a normalizing factor to ensure that $\langle g_\eta, 1 \rangle_{\km} = 1$. 
    
    This polynomial is certainly nonnegative, and its degree is $2P - D - 1$. From Lemma 8.4, then, there exist $w_1,...,w_P \ge 0$ so that for any $s = 0,...,D$
    $$
        \langle g_\eta, q_s \rangle_{\km} = \sum_{i \in I} w_i g_\eta(r_i) q_s(r_i).
    $$
    In particular, setting $s = 0$ and recalling the definition of $\zeta$, we have
    $$
        1 = \langle g_\eta, 1 \rangle_{\km} = \sum_{i \in I} w_i g_\eta(r_i).
    $$
    This means that for any $s = 0,...,D$, the inner product $\langle g_\eta, q_s \rangle$ is a weighted average of $q_s$ evaluated at the $(D + 1)/2$ roots of $q_P$ closest to the point $d\eta$. Since the roots of the $q$ polynomials are dense in $(-2\sqrt{d-1},2\sqrt{d-1})$, and $d\eta$ is in the closure of this interval, for each $D$ and $\delta > 0$, there exists a constant $P$ for which $|\langle g_\eta, q_s \rangle_{\km} - q_s(d\eta)| < \delta$ for every $s = 0,..,D$. 
\end{proof}

\subsection{Partially Labelled Subgraphs}

We are now ready to study the full Local Statistics algorithm. To start, we will need to develop a basis for the symmetric polynomials appearing as affine moment matching constraints in Definition~\ref{def:local-stats-informal}. Because $\EE p(\bx,\bG) = 0$ for any $p \in \mathcal{I}_k$, a constraint shared by the pseudoexpectation, it would suffice to study the subspace of $\RR[x,G]/\mathcal{I}_k$ fixed under the $S_n$ action inherited from $\RR[x,G]$. However, it is computationally favorable to work in a slightly larger vector space instead.

\begin{definition}
    \label{def:sym-polys}
    Let us write $\mathbb{S}[x,G]_{\le D_x,D_G} \subset \RR[x,G]$ for the vector space of polynomials that (1) satisfy $\deg_x \le D_x$ and $\deg_G \le D_G$, (2) are symmetric with respect to the $S_n$ action, (3) are multilinear in $G$ and $x$, and (4) for which at most one of $x_{u,1},...,x_{u,k}$ appears in each monomial, for every $u \in V$.
\end{definition}

Every polynomial appearing as a moment constraint in the level-$(D_x,D_G)$ Local Statistics algorithm belongs to $\mathbb{S}[x,G]_{\le D_x,D_G}$. We now give a combinatorially structured basis for this vector space, similar to the `shapes' of \cite{BHKKMP-2019-PlantedClique}.

\begin{definition}
    \label{def:plg}
    A \textit{partially labelled graph} $(H, S, \tau)$ consists of a graph $H$, a subset $S \subset V(H)$, and a map $\tau : S \to [k]$; we say that a graph is fully labelled if $S = V(H)$, and in this case write $(H,\tau)$ for short. A \emph{homomorhism} from $(H,S,\tau)$ into a fully labelled graph $(G,\sigma)$ is a map $\phi : V(H) \to V(G)$ that takes edges to edges and agrees on labels; an \emph{occurrence} of $(H,S,\tau)$ in $(G,\sigma)$ is an injective homomorphism. For each partially labelled graph $(H,S,\tau)$ there is an associated polynomial in $\mathbb{S}[x,G]$,
    $$
        p_{(H,S,\tau)}(x,G) = \sum_{\phi : V(H) \inj [n]}\prod_{(u,v) \in E(H)}G_{\phi(u),\phi(v)} \prod_{u \in S}x_{\phi(u),\tau(u)}.
    $$
    Each point in the zero locus of $\mathcal{I}_k$ may be identified with a fully labelled graph $(G,\sigma)$. Evaluated at such a point, this polynomial counts the number of occurrences of $(H,S,\tau)$ in $(G,\sigma)$. Finally, $\deg_x p_{(H,S,\tau)} = |S|$ and $\deg_G p_{(H,S,\tau)} = |E(H)|$.
\end{definition}

\begin{lemma}
    \label{lem:plg-basis}
    The polynomials $p_{(H,S,\tau)}$ with $|E(H)| \le D_G$ and $|S| \le D_x$ are a vector space basis for $\mathbb{S}[x,G]_{\le D_x,D_G}$
\end{lemma}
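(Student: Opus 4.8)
The plan is to establish spanning and linear independence separately, working throughout in the monomial basis of $\RR[x,G]$. The first step is to record the shape of a general element of $\mathbb{S}[x,G]_{\le D_x,D_G}$: by multilinearity (property (3)) together with property (4) of Definition~\ref{def:sym-polys}, every such polynomial is a linear combination of \emph{reduced monomials} $m_{F,T,j} = \prod_{(u,v)\in F} G_{u,v}\,\prod_{u\in T} x_{u,j(u)}$, where $F$ is a set of unordered pairs of distinct vertices of $[n]$, $T\subseteq[n]$, $j:T\to[k]$, with $|F|\le D_G$ and $|T|\le D_x$. Such a monomial is visibly in bijection with a partially labelled graph $(H,S,\tau)$: take $V(H)$ to be the set of vertices appearing in $F$ or in $T$, edge set $F$, $S=T$, and $\tau=j$; crucially this $H$ has no isolated unlabelled vertex, since every vertex of $V(H)$ occurs in an edge or carries a label.

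For spanning, I would invoke the $S_n$-symmetry (property (2)): invariance of $p$ forces the coefficients in its reduced-monomial expansion to be constant on $S_n$-orbits, so $p$ is a linear combination of orbit sums $\sum_{m'\in\mathcal O} m'$. I would then compare the orbit sum of $m_{F,T,j}$ with $p_{(H,S,\tau)}$ for the associated partially labelled graph: two injections $\phi,\phi':V(H)\inj[n]$ produce the same monomial precisely when $\phi^{-1}\phi'$ is an automorphism of $(H,S,\tau)$, whence $p_{(H,S,\tau)}$ equals $|\mathrm{Aut}(H,S,\tau)|$ times the orbit sum of $m_{F,T,j}$. Since $|\mathrm{Aut}(H,S,\tau)|\ge 1$ and $|E(H)|=|F|\le D_G$, $|S|=|T|\le D_x$, this exhibits every orbit sum, and hence $p$, in the claimed span.

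For linear independence I would argue that non-isomorphic partially labelled graphs contribute \emph{disjoint} sets of monomials. Indeed, since $H$ has no isolated unlabelled vertex, the support of each monomial occurring in $p_{(H,S,\tau)}$ is exactly the image $\phi(V(H))$, and as a partially labelled graph that support is isomorphic to $(H,S,\tau)$; so the ``combinatorial type'' is read off from any single monomial. Given a vanishing combination $\sum_i c_i\,p_{(H_i,S_i,\tau_i)}=0$ over pairwise non-isomorphic shapes, restricting the identity to the monomials of type $(H_i,S_i,\tau_i)$ isolates $c_i\,p_{(H_i,S_i,\tau_i)}=0$; and $p_{(H_i,S_i,\tau_i)}\neq 0$ whenever $n\ge|V(H_i)|$, which holds for all $i$ once $n$ is large (the $H_i$ have boundedly many vertices). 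Hence all $c_i=0$.

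The one point genuinely needing care — and where I would be most careful — is the status of isolated unlabelled vertices. If $(H,S,\tau)$ had such a vertex $v$, then $p_{(H,S,\tau)}=(n-|V(H)\setminus\{v\}|)\,p_{(H\setminus\{v\},S,\tau)}$, so these polynomials are redundant and must be excluded for the count to yield an honest basis; fortunately the spanning construction never produces them, so no generality is lost. Past this caveat the argument is routine monomial bookkeeping, the only other mild wrinkle being the automorphism factors $|\mathrm{Aut}(H,S,\tau)|$, which are harmless since they are positive integers.
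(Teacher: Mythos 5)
Your proof is correct and takes essentially the same approach as the paper's: both decompose $\mathbb{S}[x,G]_{\le D_x,D_G}$ into $S_n$-orbit sums of multilinear, per-vertex-at-most-one-label monomials and identify these (up to scalar) with the $p_{(H,S,\tau)}$, then get independence from the fact that each monomial records its shape. You are in fact a touch more careful than the paper on two points, both of which are warranted: the paper writes $s(m) = p_{(H_m,S_m,\tau_m)}$ outright, whereas the honest relation carries the $|\mathrm{Aut}(H,S,\tau)|$ factor you supply (immaterial for spanning but worth noting); and the paper leaves implicit the convention that $V(H)$ is the union of $S$ and the edge endpoints — i.e.\ no isolated unlabelled vertices — which, as you observe, is actually needed for the stated family to be independent (a lone isolated vertex yields $p = n$, a scalar multiple of the empty shape's $p = 1$). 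The paper's own spanning construction only ever produces such shapes and its independence sentence tacitly assumes the restriction, so your reading is the intended one; making it explicit is an improvement, not a deviation.
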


\begin{proof}
    Let $s : \RR[x,G] \to \mathbb{S}[x,G]$ be the map that sends a polynomial $p$ to the sum over its $S_n$ orbit. The vector space $\mathbb{S}[x,G]_{\le D_x,D_G}$ is spanned by the images under $s$ of the multilinear monomials with $x$-degree $D_x$ and $G$-degree $D_G$ in $\RR[x]$ in which at most one of $x_{u,1},...,x_{u,k}$ appears for each $u \in [n]$. From each such monomial $m(x,G)$ one can extract a partially labelled graph $(H_m,S_m,\tau_m)$, where $|S_m| \le D_x$, $|E(H_m)| \le D_G$, and $H_m$ is a subgraph of the complete graph: $E(H_m)$ is the union of all pairs $(u,v)$ appearing as an index of a $G$ variable, $S_m$ is the union of all $u$ occurring as an index of an $x$ variable, $V(H_m)$ is the union of $S$ and the endpoints of every edge, and $\tau(u) = i$ if the variable $x_{u,i}$ appears. Because $S_n$ acts transitively on $[n]$, the orbit of $m(x)$ corresponds to every possible injection $V(H_m) \inj [n]$, and thus
    \[
        s(m(x,G)) = p_{(H_m,S_m,\tau_m)}(x,G).
    \]
    This shows that the $p_{(H,S,\tau)}$ span. To see that they are independent, observe that each monomial appears as a term in exactly one $p_{(H,S,\tau)}$. 
\end{proof}

In view of this lemma, the moment constraints in Definition~\ref{def:local-stats-informal} are equivalent to the requirement that $\pseudo p_{(H,S,\tau)}(x,G_0) \approx \EE p_{(H,S,\tau)}(\bx,\bG)$ for every $(H,S,\tau)$ with at most $D_G$ edges and $D_x$ distinguished vertices, up to isomorphism. In order to instantiate and analyze the Local Statistics algorithm, we now need to compute these expectations, and bound the fluctuations around them.

\subsection{Local Statistics in the Planted Model}

Instead of working directly in $\PP$, we will as usual work in the \textit{configuration model} $\widehat \PP$, a distribution on multigraphs with two key properties: (i) with probability bounded away from zero as $n\to\infty$, $\widehat\bG \sim \widehat\PP$ is simple, and (ii) the conditioal distribution of $\widehat\PP$ on this event is equal to $\PP$. In this model, conditional on a balanced partition $\sigma$, we adorn each vertex in every group $i$ with $dM_{i,j}$ `half-edges' labelled $i \to j$, and then for each $i,j \in [k]$ randomly match the $i\to j$ half-edges with the $j\to i$ half-edges. When $k = 1$, this is the usual configuration model on $d$-regular graphs, and when $k=2$ and $M_{i,i} = 0$, it gives bipartite regular graphs. Thus many results we prove for $\PP$ will apply to $\QQ$ as well. 

\begin{claim}
    \label{fact:config-aas}
    Write $\mathcal{G}_n$ and $\widehat{\mathcal{G}}_n$ for the sets of all $d$-regular, $n$-vertex graphs and multigraphs, respectively. If $\widehat{\mathcal{E}}_n \subset \widehat{\mathcal{G}}_n$ is a sequence of events holding w.h.p.\ in $\widehat\PP_n$, then $\widehat{\mathcal{E}}_n \cap \mathcal{G}_n$ holds w.h.p.\ in $\PP_n$.
\end{claim}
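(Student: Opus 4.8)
The plan is to reduce the claim to the two structural properties of the configuration model $\widehat\PP$ recorded above: (i) $\widehat\PP_n(\widehat\bG \text{ simple}) \ge c$ for some constant $c = c(d,k,\eta) > 0$ and all large $n$, and (ii) the law of $\widehat\bG \sim \widehat\PP_n$ conditioned on $\{\widehat\bG \text{ simple}\}$ is exactly $\PP_n$. Given these, the proof is a short conditioning estimate with no analytic content, so I will simply assemble it.

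First I would observe that a $d$-regular multigraph is simple exactly when it belongs to $\mathcal{G}_n$, so $\{\widehat\bG \text{ simple}\}$ and $\mathcal{G}_n$ name the same event; since $\PP_n$ is supported on $\mathcal{G}_n$ we have $\PP_n(\widehat{\mathcal{E}}_n \cap \mathcal{G}_n) = \PP_n(\widehat{\mathcal{E}}_n)$, and it therefore suffices to show $\PP_n(\widehat{\mathcal{E}}_n) = 1 - o_n(1)$, equivalently $\PP_n(\widehat{\mathcal{E}}_n^{\,c}) = o_n(1)$, where the complement is taken inside $\widehat{\mathcal{G}}_n$.

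Next, applying (ii) and then (i), I would write
\[
    \PP_n\!\left(\widehat{\mathcal{E}}_n^{\,c}\right)
    = \widehat\PP_n\!\left(\widehat{\mathcal{E}}_n^{\,c} \,\middle|\, \widehat\bG \text{ simple}\right)
    = \frac{\widehat\PP_n\!\left(\widehat{\mathcal{E}}_n^{\,c} \cap \{\widehat\bG \text{ simple}\}\right)}{\widehat\PP_n\!\left(\widehat\bG \text{ simple}\right)}
    \le \frac{\widehat\PP_n\!\left(\widehat{\mathcal{E}}_n^{\,c}\right)}{c}.
\]
By the hypothesis that $\widehat{\mathcal{E}}_n$ holds w.h.p.\ in $\widehat\PP_n$ we have $\widehat\PP_n(\widehat{\mathcal{E}}_n^{\,c}) = o_n(1)$, and since $c$ is a fixed positive constant the right-hand side is $o_n(1)$. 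Hence $\PP_n(\widehat{\mathcal{E}}_n \cap \mathcal{G}_n) = 1 - o_n(1)$, which is the claim.

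The only ingredient with any substance is property (i): that the $d$-regular, equitable-block configuration model yields a simple graph with probability bounded away from zero as $n \to \infty$. This is a classical fact, provable via a Poisson limit for the joint counts of self-loops and parallel edges among matched half-edges, and it is already built into the setup of $\widehat\PP$; so I do not expect any real obstacle. If one wanted a fully self-contained treatment, establishing (i) would be the only step requiring actual work.
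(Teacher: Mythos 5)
Your argument is correct and is essentially the paper's: both rest on the two facts that $\widehat\PP_n$ conditioned on simplicity is $\PP_n$ and that $\widehat\PP_n(\mathcal{G}_n)$ is bounded below, and both reduce to a one-line conditioning estimate. The only cosmetic difference is that you bound the complementary probability $\PP_n(\widehat{\mathcal{E}}_n^{\,c})$ while the paper bounds $\PP_n(\widehat{\mathcal{E}}_n\cap\mathcal{G}_n)$ directly; the computations are equivalent.
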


\begin{proof}
    Since $\widehat\PP_n[\mathcal{G}_n]$ is bounded away from zero and $\widehat\PP[\widehat{\mathcal{E}}_n] = 1 - o_n(1)$, we have
    \[
        \PP_n[\widehat{\mathcal{E}}_n \cap \mathcal{G}_n] = \frac{\widehat\PP_n[\widehat{\mathcal{E}}_n \cap \mathcal{G}_n]}{\widehat\PP_n[\mathcal{G}_n]} \ge \frac{\hat\PP_n[\mathcal{G}_n] - o_n(1)}{\hat\PP_n[\mathcal{G}_n]} = 1 - o_n(1). \qedhere
    \]
\end{proof}

As a warm-up, let us recall some standard calculations of subgraph probabilities in these two simpler situations. 

\begin{lemma}
    \label{lem:config-hom}
    Let $\widehat\bG$ be a multi-graph produced by the $d$-regular configuration model on $n$ vertices. If $H$ is a simple graph, the probability that a fixed injection $\phi : V(H) \inj V(\widehat\bG)$ is a homomorphism is
    $$
        \prod_{v \in V(H)} \frac{d!}{(d - \deg(v)) !} \cdot \frac{(dn - 2|E(H)|-1)!!}{(dn - 1)!!} = \prod_{v \in V(H)} \frac{d!}{(d - \deg(v))!} \cdot (dn)^{-|E(H)|} + O(n^{-|E(H)| - 1}).
    $$
    Similarly, let $(\widehat\bG,\sigma)$ be generated by the $d$-biregular configuration model on $2n$ vertices, $\sigma : V(\widetilde\bG) \to [2]$ its left/right labelling. If $(H,\tau)$ is a simple bipartite graph with left/right labelling $\tau$, then the probability that a fixed injection $\phi :V(H) \inj V(\widehat\bG)$ agreeing on labels is a homomorphism is
    $$
        \prod_{v\in V(H)} \frac{d!}{(d - \deg(v))!} \cdot \frac{(dn - |E(H)|)!}{(dn)!} =  \prod_{v\in V(H)} \frac{d!}{(d - \deg(v))!} \cdot (dn)^{-|E(H)|} + O(n^{-|E(H)| - 1}).
    $$
\end{lemma}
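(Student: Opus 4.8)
The plan is to argue directly from the half-edge matchings that define the two configuration models, in each case writing the probability in question as (number of ways to ``route'' the edges of $H$ through half-edges) $\times$ (probability that a uniformly random matching realizes one fixed routing).

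First I would treat the $d$-regular case. Recall $\widehat\bG$ comes from a uniformly random perfect matching on the $dn$ half-edges ($d$ at each vertex), of which there are $(dn-1)!!$. Fix a simple $H$ with $m \colonequals |E(H)|$ and a fixed injection $\phi : V(H)\inj V(\widehat\bG)$. For each $v\in V(H)$, choose an injective assignment of the $\deg(v)$ edges of $H$ incident to $v$ to the $d$ half-edges at $\phi(v)$; because $H$ is simple these choices are independent across vertices and across incident edges, so there are $\prod_{v\in V(H)} d(d-1)\cdots(d-\deg(v)+1)=\prod_{v\in V(H)}\frac{d!}{(d-\deg(v))!}$ routings in all. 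A routing designates, for each $(u,v)\in E(H)$, one half-edge at $\phi(u)$ paired with one at $\phi(v)$; $\phi$ is a homomorphism witnessed by this routing exactly when the random matching pairs these $m$ designated pairs to one another, which, revealing them one at a time, has probability $\prod_{j=0}^{m-1}\frac{1}{dn-1-2j}=\frac{(dn-2m-1)!!}{(dn-1)!!}$. Summing over routings yields the claimed product, and the asymptotic form is the elementary estimate $\frac{(dn-2m-1)!!}{(dn-1)!!}=\prod_{j=1}^{m}\frac{1}{dn-(2j-1)}=(dn)^{-m}(1+O(1/n))$.

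The $d$-biregular case is structurally identical, with the single change that the randomness is a uniform bijection between the $dn$ left half-edges and the $dn$ right half-edges (there are $(dn)!$ of these), and each edge of the bipartite $H$ joins a left to a right vertex. The routing count is again $\prod_{v\in V(H)}\frac{d!}{(d-\deg(v))!}$; the probability that a uniform bijection matches the $m$ designated left--right half-edge pairs is $\prod_{j=0}^{m-1}\frac{1}{dn-j}=\frac{(dn-m)!}{(dn)!}=(dn)^{-m}(1+O(1/n))$; multiplying gives the stated formula. The hypothesis that $\phi$ respects the left/right labelling (i.e.\ $\sigma\circ\phi=\tau$) is used only to ensure the left/right half-edge bookkeeping is consistent.

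The one genuinely delicate point — and the step I would be most careful about — is that a single matching can realize $\phi$ as a homomorphism through several distinct routings, namely precisely when $\phi(H)$ carries a repeated edge in $\widehat\bG$, so the sum over routings a priori overcounts the homomorphism probability. The fix is to observe that this discrepancy is supported on the event that some edge of $\phi(V(H))$ has multiplicity at least two, which (needing the other $m-1$ edges present as well) has probability $O(n^{-m-1})$ and is therefore absorbed into the stated error term; equivalently, if one reads $G_{u,v}$ in the configuration model as the edge multiplicity, then by linearity of expectation over routings the first displayed equality holds exactly for simple $H$. Everything else reduces to the factorial/double-factorial ratio estimates noted above, together with the observation (used repeatedly elsewhere in the section) that $\widehat\bG$ is simple with probability bounded away from zero, so these statements transfer to $\PP$ and $\QQ$ via Claim~\ref{fact:config-aas}.
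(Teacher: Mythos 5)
Your argument follows the same route as the paper's (admittedly terse) sketch: count routings of $E(H)$ through the half-edges at the $\phi$-image vertices, giving $\prod_{v}\frac{d!}{(d-\deg(v))!}$ of them, and multiply by the probability $\frac{(dn-2|E(H)|-1)!!}{(dn-1)!!}$ (resp.\ $\frac{(dn-|E(H)|)!}{(dn)!}$ in the biregular case) that a uniform matching realizes one fixed routing. The one substantive thing you add, and it is worth adding, is the observation that summing over routings computes the expected number of witnessing routings, not the probability that at least one exists; the two differ precisely on the event that $\widehat\bG$ carries a repeated edge inside $\phi(V(H))$, which has probability $O(n^{-|E(H)|-1})$, so the first displayed ``='' in the lemma is really an equality only up to the stated error term---a subtlety the paper's proof does not spell out. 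Your proof is correct.
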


\begin{proof}[Sketch]
    In the case of the $d$-regular configuration model, once an injective $\phi$ has been chosen, there are $\tfrac{d!}{(d - \deg(v))!}$ ways to choose $\deg(v)$ stubs from the $d$ available at each vertex $v$ to be matched with the appropriate stubs at each intended neighbor of $v$, and then $(d_n - 2|E(H)| - 1)!!$ ways to match the remaining stubs. On the other hand, the total number of multi-graphs possible to output is $(dn - 1)!!$. The calculation is analogous in the bipartite regular case.
\end{proof}

We will use Lemma~\ref{lem:config-hom} to compute local statistics in the planted model, but first we will need a bit more notation. Let $(H,S,\tau)$ be a partially labelled graph, and $\widehat \tau$ an extension of $\tau$, i.e. $\widehat \tau : V(H) \to [k]$, and $\widehat\tau|_S = \tau$; for each $i,j \in[k]$ and $u \in V_i(H)$, write $\deg_j(u)$ for the number of neighbors that $u$ has in group $j$ according to $\widehat \tau$. Let us define
\begin{equation}
    (dM)^{(H,S)}_\tau \colonequals \sum_{\widehat \tau : \widehat \tau|_S = \tau} \frac{\prod_v \prod_{j \in [k]} dM_{\widehat\tau(v), j} (dM_{\widehat \tau(v),j} - 1) \cdots (dM_{\widehat \tau(v),j} - \deg_{i \to j}(v) + 1)}{\prod_{(u,v) \in E(H)} dM_{\widehat\tau(u),\widehat\tau(v)}}
\end{equation}
if $dM_{i,j} \ge \deg_j(v)$ for every $i,j \in [k]$ and $v \in \widehat\tau^{-1}(i)$, and zero otherwise. Note that this operation is multiplicative on disjoint unions: 
$$
    (dM)_{\tau_1 \sqcup \tau_2}^{(H_1\sqcup H_2,S_1\sqcup S_2)} = (dM)_{\tau_1}^{(H_1,S_1)} (dM)_{\tau_2}^{(H_2,S_2)}.
$$
Finally, let us write $\chi(H) \colonequals |V(H)| - |E(H)|$ and $\cc(H)$ for the number of connected components.

Since we are aiming to prove high probability statements regarding $p_{(H,S,\tau)}(\bx,\bG)$ for $(\bx,\bG) \sim \PP$ by studying the configuration model, we need to extend the quantities $p_{(H,S,\tau)}(x,G)$ to the case when $G$ is a multigraph with self-loops. For convenience, we will define an \textit{occurrence} of $(H,S,\tau)$ in a fully labelled, loopy multigraph as an occurence of $(H,S,\tau)$ in the simple graph obtained by deleting all self-loops and merging all multiedges between each pair of vertices. Our key lemma computes the expected number of occurrences in the configuration model.

\begin{lemma}
    \label{lem:config-occ}
    Let $(H,S,\tau)$ be a partially labelled graph on $O(1)$ edges, and  $(\bx,\widehat\bG)$ be drawn from the configuration model $\widehat\PP$. Then
    $$
        \EE p_{(H,S,\tau)}(\bx,\widehat\bG) = (n/k)^{\chi(H)}(dM)^{(H,S)}_\tau + O(n^{\chi(H) - 1}).
    $$
\end{lemma}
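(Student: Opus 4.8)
The plan is to unfold the expectation into a sum over injections and evaluate each term in the configuration model, applying Lemma~\ref{lem:config-hom} block by block. First I would write
\[
  \EE\, p_{(H,S,\tau)}(\bx,\widehat\bG) = \sum_{\phi : V(H) \inj [n]} \widehat\PP\!\left[\, \phi \text{ is an occurrence of } (H,S,\tau) \text{ in } (\widehat\bG, \bsigma) \,\right],
\]
where ``$\phi$ is an occurrence'' means $\bsigma(\phi(u)) = \tau(u)$ for all $u \in S$ and $\{\phi(u),\phi(v)\}$ is an edge of the simplification of $\widehat\bG$ for every $(u,v) \in E(H)$. Since $|V(H)| = O(1)$ there are $n(n-1)\cdots(n-|V(H)|+1) = (1 + O(1/n))\, n^{|V(H)|}$ such injections, and by the vertex-exchangeability of $\widehat\PP$ the probability above does not depend on $\phi$. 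So it suffices to show, for one fixed $\phi$,
\[
  \widehat\PP[\phi \text{ is an occurrence}] = k^{-\chi(H)}\, n^{-|E(H)|}\, (dM)^{(H,S)}_\tau \,(1 + O(1/n));
\]
multiplying by the number of $\phi$'s, using $n^{|V(H)|-|E(H)|} = k^{\chi(H)}(n/k)^{\chi(H)}$, and noting $(dM)^{(H,S)}_\tau = O(1)$ then gives the claimed asymptotics.

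To evaluate $\widehat\PP[\phi \text{ is an occurrence}]$ I would condition on the labels of the $O(1)$ vertices in $\phi(V(H))$. For any extension $\widehat\tau : V(H) \to [k]$ of $\tau$, the event $\bsigma\circ\phi = \widehat\tau$ has probability $k^{-|V(H)|}(1 + O(1/n))$ uniformly in $\widehat\tau$ --- it is a ratio of falling factorials $\prod_i (n/k)(n/k-1)\cdots(n/k-|V_i(H)|+1) \big/ \big(n(n-1)\cdots(n-|V(H)|+1)\big)$ coming from the balanced prior, and every labelling of $O(1)$ vertices extends to a balanced partition since $|V(H)| \ll n/k$. Conditioned on $\bsigma$, the multigraph $\widehat\bG$ is an edge-disjoint superposition of \emph{independent} pieces: a $dM_{ii}$-regular configuration model on each class $\bsigma^{-1}(i)$, and a $dM_{ij}$-biregular configuration model between $\bsigma^{-1}(i)$ and $\bsigma^{-1}(j)$ for $i<j$. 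Because $\phi$ is injective and $H$ is simple, self-loops and repeated edges of $\widehat\bG$ are irrelevant to whether $\phi$ is an occurrence, so up to a $(1+O(1/n))$ factor ``$\phi$ is an occurrence'' is equivalent to a fixed stub-matching being realized inside each piece, and these events are independent across pieces. Splitting $E(H)$ into the sets $E_{ij}(H)$ of edges whose endpoints receive the pair of labels $\{i,j\}$ under $\widehat\tau$, and applying Lemma~\ref{lem:config-hom} to each piece with $(n,d)$ there replaced by $(n/k, dM_{ij})$, the probability factors over $i \le j$. Collecting the per-vertex falling-factorial weights over all pieces touching a given $v$ reassembles the numerator $\prod_v \prod_{j\in[k]} dM_{\widehat\tau(v),j}(dM_{\widehat\tau(v),j}-1)\cdots(dM_{\widehat\tau(v),j}-\deg_j(v)+1)$ --- which vanishes exactly when some $dM_{ij} < \deg_j(v)$, as in the definition of $(dM)^{(H,S)}_\tau$ --- while the powers of $n$ combine to $\prod_{i\le j}(dM_{ij}\cdot n/k)^{-|E_{ij}(H)|} = (n/k)^{-|E(H)|}\prod_{(u,v)\in E(H)}(dM_{\widehat\tau(u),\widehat\tau(v)})^{-1}$. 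Multiplying this conditional probability by $k^{-|V(H)|}(1+O(1/n))$ and summing over the $O(1)$ extensions $\widehat\tau$ of $\tau$ recognizes the sum defining $(dM)^{(H,S)}_\tau$; since there are only $O(1)$ pieces and $O(1)$ extensions, the accumulated relative errors stay $O(1/n)$, and we obtain $\widehat\PP[\phi \text{ is an occurrence}] = k^{-|V(H)|}(n/k)^{-|E(H)|}(dM)^{(H,S)}_\tau(1+O(1/n)) = k^{-\chi(H)}n^{-|E(H)|}(dM)^{(H,S)}_\tau(1+O(1/n))$, as required.

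I expect the main obstacle to be the combinatorial bookkeeping of this block decomposition: matching the falling-factorial vertex weights produced by Lemma~\ref{lem:config-hom} in each piece, summed over the piece a vertex belongs to, against the $\prod_v\prod_j$ structure in the definition of $(dM)^{(H,S)}_\tau$ (including the vanishing conventions), and carefully checking that replacing ``$\phi$ is an occurrence in the simplified graph'' by ``a prescribed stub-matching is realized'' costs only a $(1+O(1/n))$ factor --- the discrepancy coming from double edges or loops, each of which ties up extra stubs and is therefore down by $\Theta(1/n)$. The remaining ingredients --- counting injections, the balanced-prior factor, and ensuring that the $O(1)$ many $O(1/n)$ errors do not compound --- are routine given that $|V(H)|$ and $|E(H)|$ are $O(1)$.
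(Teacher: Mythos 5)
Your proof is correct and follows essentially the same route as the paper: condition on the community labels, apply Lemma~\ref{lem:config-hom} block by block to the (independent) regular/biregular configuration models on each pair of classes, reassemble the per-vertex falling-factorial weights and $n$-powers into $(dM)^{(H,S)}_\tau$ and $(n/k)^{-|E(H)|}$, and sum over the $O(1)$ extensions $\widehat\tau$. The only presentational difference is that you fix the injection $\phi$ and average over $\bsigma$ via exchangeability, whereas the paper fixes $\sigma$ and counts the $\approx (n/k)^{|V(H)|}$ label-agreeing injections; these are the same computation reorganized, and your explicit treatment of the double-edge/self-loop overcounting and the error propagation is, if anything, a bit more careful than the paper's sketch.
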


\begin{proof}
    Let $V(\widehat\bG) = [n]$, fix a labelling $\sigma : [n] \to [k]$, and let $\widehat\bG$ be drawn from the configuration model. Fix an extension $\widehat \tau$ of $\tau$. If $\phi: V(H) \inj V(\widehat\bG)$ is an injection that agrees on labels, applying Lemma 8.9 to the multigraphs on each set of vertices $\sigma^{-1}(i)$ and between each pair of sets $\sigma^{-1}(i)$ and $\sigma^{-1}(j)$,
    \begin{align*}
        \PP[\phi \text{ is an occurrence}] &= (n/k)^{|E(H)|} \cdot \prod_{i\le j} (dM_{i,j})^{-|E_{i,j}(H,\widehat\tau)|} \cdot \prod_{v \in V(H,\widehat\tau)} \prod_{i,j} \frac{(d M_{i,j})!}{(dM_{i,j} - \deg_{j}(v))!} \\
        &\qquad + O(n^{-|E(H)| - 1}),
    \end{align*}
    and there are $(n/k)^{V(H)} + O(n^{|V(H)| - 1})$ injective choices for $\phi$ that agree on labels.
    
    Finally, writing $\Phi(H,S,\tau)$ for the total number of occurrences of $(H,S, \tau)$ in $\widehat\bG$, and $\Phi(H,\widehat \tau)$ for the number of occurrences of the fully labelled graph $(H,\widehat\tau)$,
    \begin{align*}
        \EE \Phi(H,S,\tau)
        &= \EE \sum_{\widehat \tau : \widehat \tau|_S = \tau} \Phi(H,\widehat \tau) \\
        &= (n/k)^{\chi(H)} (dM)^{(H,S)}_{\tau} + O(n^{|V(H)| - |E(H)| - 1}). \qedhere
    \end{align*}
\end{proof}

This lemma has some immediate consequences. First, it tells us that occurrences of partially labelled forests are sharply concentrated.

\begin{lemma}
    \label{lem:forests}
    Let $(H,S,\tau) = \bigsqcup_{t \in [\cc(H)]} (H_t,S_t,\tau_t)$ be a partially labelled graph with $O(1)$ vertices, $\cc(H)$ connected components $H_t$, and no cycles. Then for any function $f(n) > 0$, 
    \begin{align*}
        \PP\left[\left| p_{(H,S,\tau)}(\bx,\bG) - (n/k)^{\cc(H)}\prod_{t \in [\cc(H)]} (dM)^{(H_t,S_t)}_{\tau_t}\right| > f(n) \right] &= O\left(\frac{n^{2\cc(H) - 1}}{f(n)^2}\right).
    \end{align*}
\end{lemma}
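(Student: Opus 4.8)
The plan is a standard second-moment (variance) argument, carried out in the configuration model $\widehat\PP$ and then transferred to $\PP$. Since $H$ is a forest we have $\chi(H) = |V(H)| - |E(H)| = \cc(H)$, so Lemma~\ref{lem:config-occ} together with the multiplicativity of $(dM)^{(\cdot,\cdot)}_{\cdot}$ on disjoint unions already gives
\[
\EE_{\widehat\PP}\, p_{(H,S,\tau)}(\bx,\widehat\bG) \;=\; (n/k)^{\cc(H)}\prod_{t\in[\cc(H)]}(dM)^{(H_t,S_t)}_{\tau_t} \;+\; O\!\left(n^{\cc(H)-1}\right),
\]
i.e.\ the mean already matches the stated target up to an $O(n^{\cc(H)-1})$ bias. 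It therefore suffices to prove $\mathrm{Var}_{\widehat\PP}\big(p_{(H,S,\tau)}\big) = O(n^{2\cc(H)-1})$ and then apply Chebyshev.

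To bound the variance, I would expand $p_{(H,S,\tau)}(x,G)^2$ as a sum over ordered pairs of injections $\phi_1,\phi_2 : V(H) \inj [n]$. Because the $G$- and $x$-variables are $\{0,1\}$-valued on (simplifications of) graphs, each pair collapses to a partially labelled graph $(H',S',\tau')$ obtained by identifying $\phi_1(u)$ with $\phi_2(v)$ whenever they coincide and discarding label-inconsistent identifications; grouping by the resulting isomorphism type yields, as an identity of functions on graphs,
\[
p_{(H,S,\tau)}^2 \;=\; \sum_{(H',S',\tau')} N_{(H',S',\tau')}\, p_{(H',S',\tau')},
\]
a finite sum (since $H$ has $O(1)$ vertices there are $O(1)$ merger types) with nonnegative integer multiplicities $N_{(H',S',\tau')}$. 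The fully disjoint merger $H' = H\sqcup H$ contributes, again by Lemma~\ref{lem:config-occ} and multiplicativity, $(n/k)^{2\cc(H)}\big(\prod_t(dM)^{(H_t,S_t)}_{\tau_t}\big)^2 + O(n^{2\cc(H)-1}) = \big(\EE_{\widehat\PP} p_{(H,S,\tau)}\big)^2 + O(n^{2\cc(H)-1})$.

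The main point — and the step I expect to be the only real obstacle — is to show that every other merger, i.e.\ one that identifies at least $j\ge 1$ vertices, satisfies $\chi(H') \le 2\cc(H)-1$, so that Lemma~\ref{lem:config-occ} bounds its contribution by $O(n^{2\cc(H)-1})$. This is where acyclicity of $H$ enters: identifying $j$ vertices drops the vertex count by $j$, while the number of edges that can be merged is at most the number of edges of $H$ induced on those $j$ vertices, hence at most $j-1$ because that induced subgraph is again a forest; thus $\chi(H') = 2\chi(H) - j + (\#\,\text{merged edges}) \le 2\cc(H) - 1$. Summing the $O(1)$ such terms gives $\EE_{\widehat\PP}[p^2] = (\EE_{\widehat\PP}p)^2 + O(n^{2\cc(H)-1})$, i.e.\ $\mathrm{Var}_{\widehat\PP}(p) = O(n^{2\cc(H)-1})$.

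Finally, Chebyshev's inequality gives $\widehat\PP[\,|p-\EE_{\widehat\PP}p| > f(n)/2\,] = O(n^{2\cc(H)-1}/f(n)^2)$, and combining this with the $O(n^{\cc(H)-1})$ bias bound proves the claim in $\widehat\PP$ (the inequality is vacuous once $f(n) = O(n^{\cc(H)-1})$, since then $n^{2\cc(H)-1}/f(n)^2 = \Omega(n)$, so the constant in the $O(\cdot)$ may be inflated to make the bound trivial in that range). Transferring to $\PP$ costs only a constant factor, since $\PP_n[A] \le \widehat\PP_n[A]/\widehat\PP_n[\widehat\bG\text{ simple}]$ and $\widehat\PP_n[\widehat\bG\text{ simple}]$ is bounded away from $0$ (cf.\ Claim~\ref{fact:config-aas}). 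Everything beyond the merger bookkeeping and the $\chi(H')\le 2\cc(H)-1$ estimate is routine.
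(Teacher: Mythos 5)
Your proof is correct and follows essentially the same route as the paper's: expand $\EE_{\widehat\PP}[p^2]$ over merger types $(H',S',\tau')$, observe that $H'=H\sqcup H$ reproduces $(\EE_{\widehat\PP}p)^2$ to leading order, use acyclicity to force $\chi(H')\le 2\cc(H)-1$ for all nontrivial mergers, and close with Chebyshev plus a transfer from $\widehat\PP$ to $\PP$ via Claim~\ref{fact:config-aas}. The only difference is cosmetic: you spell out the $j$-identifications-vs-$(j-1)$-merged-edges count that justifies $\chi(H')<2\chi(H)$, which the paper states without proof.
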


\begin{proof}
    The expectation $\EE [p_{(H,S,\tau)}(\bx,\bG)^2]$ is a sum over all pairs of injective, label-consistent maps $\phi_1,\phi_2$, of the probability that both maps are occurrences. The image of two disjoint copies of $H$ under these two injective maps is a graph $H'$ that is either $H\sqcup H$, or is obtained by identifying some pairs of vertices whose $\tau$ labels agree---each pair with one vertex from each copy of $H$. We can promote $H'$ to a partially labelled graph by taking the induced partial labelling $\tau'$ from $\tau$. Thus, let us think of the pair $\phi_1,\phi_2$ as a single injective map $\varphi :V(H') \inj V(\widetilde\bG)$ that agrees with $\tau'$. Thus
    \begin{align*}
        \EE [p_{(H,S,\tau)}(\bx,\bG)^2]
        &= \sum_{H'} \EE p_{(H',S',\tau')}(\bx,\bG) \\
        &= \sum_{H'}\left((n/k)^{\chi(H')} (dM)_{\tau'}^{(H',S'} + O(n^{\chi(H') - 1})\right).
    \end{align*}
    When $H' = H\sqcup H$, from our observation above
    $$
        (dM)_{\tau'}^{(H',S')} = ((dM)_{\tau}^{(H,S)})^2 = \left(\prod_{t \in [\cc(H)]} ((dM)_{\tau_t}^{(H_t,S_t)})\right)^2.
    $$ 
    As $H$ has no cycles, every other $H'$ satisfies $\chi(H') < 2\chi(H)$. Thus since $\cc(H) = \chi(H)$, the assertion is true in the configuration model by Chebyshev, and transfers immediately to the planted model. For good measure, an application of the triangle inequality shows as well that
    \[
        \PP\left[ \left| p_{(H,S,\tau)}(\bx,\bG) - \prod_{t \in [\cc(H)]}p_{(H_t,S_t,\tau_t)}(\bx,\bG)\right| > f(n) \right] = O\left(\frac{n^{2\cc(H) - 1}}{f(n)^2}\right). \qedhere
    \]
\end{proof}

Thus with high probability, the counts of partially labelled forests enjoy concentration of $\pm o(n^{\cc(H)})$. On the other hand, an immediate application of Markov in $\widetilde \PP$ tells us that there are very few occurrences of partially labelled graphs with cycles.

\begin{lemma}
    \label{lem:cycles}
    Let $(H,S,\tau)$ be a partially labelled graph with $O(1)$ edges and at least one cycle. Then for any function $f(n) > 0$,
    $$
        \PP \left[ p_{(H,S,\tau)}(\bx,\bG) > f(n)\right] = O\left(\frac{n^{\cc(H) - 1}}{f(n)}\right).
    $$
\end{lemma}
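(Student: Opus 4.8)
The plan is a one-line Markov bound whose only real content is an $O(n^{\cc(H)-1})$ estimate on the first moment $\EE_{\PP}\, p_{(H,S,\tau)}(\bx,\bG)$. Since $p_{(H,S,\tau)}$ counts occurrences and is in particular nonnegative, Markov's inequality gives
$$
    \PP\left[p_{(H,S,\tau)}(\bx,\bG) > f(n)\right] \le \frac{\EE_{\PP}\, p_{(H,S,\tau)}(\bx,\bG)}{f(n)},
$$
so it suffices to show $\EE_{\PP}\, p_{(H,S,\tau)}(\bx,\bG) = O(n^{\cc(H)-1})$.

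To get this bound I would pass to the configuration model. Since $p_{(H,S,\tau)}$ is nonnegative (with occurrences in a loopy multigraph defined via its simplification, matching the convention used in Lemma~\ref{lem:config-occ}), and since $\PP_n$ is the law of $\widehat{\PP}_n$ conditioned on the event $\mathcal{G}_n$ of simplicity, we have $\EE_{\widehat{\PP}}[p_{(H,S,\tau)}(\bx,\widehat\bG)\mid\mathcal{G}_n]\,\widehat{\PP}[\mathcal{G}_n] \le \EE_{\widehat{\PP}}\, p_{(H,S,\tau)}(\bx,\widehat\bG)$; as $\widehat{\PP}[\mathcal{G}_n]$ is bounded away from zero (property (i) of $\widehat{\PP}$), this yields $\EE_{\PP}\, p_{(H,S,\tau)}(\bx,\bG) = O\!\left(\EE_{\widehat{\PP}}\, p_{(H,S,\tau)}(\bx,\widehat\bG)\right)$. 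By Lemma~\ref{lem:config-occ} the right-hand expectation equals $(n/k)^{\chi(H)}(dM)^{(H,S)}_\tau + O(n^{\chi(H)-1})$ with $\chi(H)=|V(H)|-|E(H)|$, and the coefficient $(dM)^{(H,S)}_\tau$ is a constant (a sum of constantly many bounded terms, since $H$ has $O(1)$ vertices and $dM$ is fixed). The one genuine fact to invoke is the elementary graph-theoretic inequality $|E(H)| \ge |V(H)| - \cc(H) + 1$ whenever $H$ contains a cycle, i.e.\ $\chi(H) \le \cc(H)-1$, with $\chi(H)=\cc(H)$ holding precisely for forests. This upgrades the estimate to $\EE_{\widehat{\PP}}\, p_{(H,S,\tau)}(\bx,\widehat\bG) = O(n^{\cc(H)-1})$, hence $\EE_{\PP}\, p_{(H,S,\tau)}(\bx,\bG) = O(n^{\cc(H)-1})$, and substituting into the displayed Markov bound finishes the proof.

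There is no substantive obstacle here: the argument is routine, and the only points needing care are (i) that the configuration-to-planted transfer is valid, which rests solely on nonnegativity of $p_{(H,S,\tau)}$ and on simplicity occurring with probability $\Omega(1)$ in $\widehat{\PP}$, and (ii) correctly reading off from Lemma~\ref{lem:config-occ} that the presence of a cycle lowers the exponent from $\cc(H)$ to at most $\cc(H)-1$. An alternative would be to apply Markov directly inside $\widehat{\PP}$ and then transfer the resulting high-probability statement to $\PP$ using the reasoning of Claim~\ref{fact:config-aas}, but bounding the first moment in $\PP$ directly, as above, is cleaner.
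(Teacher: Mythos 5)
Your proof is correct and matches the paper's intended argument, which is only sketched: the paper itself says merely "an immediate application of Markov in $\widetilde\PP$" (the configuration model) and leaves the details to the reader. You have supplied exactly the right details — Markov's inequality (using nonnegativity of occurrence counts), the transfer of the first-moment bound from $\widehat\PP$ to $\PP$ via $\widehat\PP[\mathcal{G}_n] = \Omega(1)$, Lemma~\ref{lem:config-occ} for the first moment in $\widehat\PP$, and the elementary observation that a cycle forces $\chi(H) \le \cc(H)-1$. Your alternative route — Markov inside $\widehat\PP$, then transferring the probability bound using $\PP[\mathcal{E}] \le \widehat\PP[\mathcal{E}]/\widehat\PP[\mathcal{G}_n]$ — is equally valid and slightly closer to the paper's phrasing, but the two are essentially identical.
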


\noindent In particular, we will need later on the fact that there are very few vertices within constant distance of a constant length cycle. The proof is once again Markov, combined with a union bound.

\begin{lemma}
    \label{cor:bad-vertices}
    Let $\bG \sim \PP$ or $\QQ$. Fix constants $L$ and $C$, and call a vertex \emph{bad} if it is at most $L$ steps a way from a cycle of length at most $C$. Then w.h.p.\ there are fewer than $f(n)$ bad vertices, for any increasing function $f(n)$.
\end{lemma}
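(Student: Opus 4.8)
The plan is to dominate the number of bad vertices by a sum of occurrence counts of a constant-size family of ``lollipop'' graphs---a short cycle with a pendant path attached---and then apply Lemma~\ref{lem:cycles} together with a union bound, exactly as the statement suggests. I may assume $C \ge 3$, since otherwise (in a simple graph) there are no short cycles and hence no bad vertices.

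First I would set up the combinatorial reduction. For each $3 \le \ell \le C$ and $0 \le j \le L$, let $H_{\ell,j}$ be the \emph{unlabelled} graph (so $S = \emptyset$, $\tau$ trivial) consisting of a cycle on $\ell$ vertices with a simple path of $j$ edges attached at one cycle vertex; I distinguish its unique degree-one vertex (for $j \ge 1$), or an arbitrary fixed cycle vertex (for $j = 0$), as the ``tip.'' Then $H_{\ell,j}$ is connected with exactly one cycle and $O(1)$ edges, so $\cc(H_{\ell,j}) = 1$. The key inequality is
\[
    \#\{\text{bad vertices of } \bG\} \;\le\; \sum_{\ell=3}^{C}\sum_{j=0}^{L} p_{H_{\ell,j}}(\bx,\bG).
\]
To prove it, fix a bad vertex $v$, pick a cycle $W$ of length $\ell \le C$ within distance $L$ of $v$, and a \emph{shortest} path $P$ from $v$ to $W$, of some length $j' \le L$. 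Because $P$ is shortest, its vertices are distinct and only its far endpoint lies on $W$, so $P \cup W$ is a genuine subgraph of $\bG$ isomorphic to $H_{\ell,j'}$ with $v$ in the role of the tip; the associated inclusion is then an occurrence of $H_{\ell,j'}$ (in the sense of Definition~\ref{def:plg}) whose tip maps to $v$. Charging each bad vertex to one such occurrence (via a fixed choice of $W$ and $P$), distinct bad vertices assigned the same shape $(\ell,j')$ yield occurrences with distinct tip-images and hence distinct occurrences, so the number of bad vertices assigned shape $(\ell,j')$ is at most $p_{H_{\ell,j'}}(\bx,\bG)$; summing over shapes gives the display. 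When $\bG \sim \QQ = \sG_{n,d}$ the $x$-variables are vacuous and $p_{H_{\ell,j}}(\bG)$ is simply the number of injective homomorphisms $H_{\ell,j} \inj \bG$.

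Second, I would finish using Lemma~\ref{lem:cycles}, which applies to both models (for $\QQ$ it is the $k=1$ instance of the configuration-model computation; cf.\ the discussion around Claim~\ref{fact:config-aas}): for each of the finitely many shapes $H_{\ell,j}$ and any threshold $g(n) > 0$,
\[
    \PP\!\left[\, p_{H_{\ell,j}}(\bx,\bG) > g(n) \,\right] \;=\; O\!\left(\frac{n^{\cc(H_{\ell,j}) - 1}}{g(n)}\right) \;=\; O\!\left(\frac{1}{g(n)}\right),
\]
using $\cc(H_{\ell,j}) = 1$. Given $f$ as in the statement, take $g(n) = f(n)/\big(2(C-2)(L+1)\big)$ and a union bound over the $(C-2)(L+1)$ pairs $(\ell,j)$: with probability $1 - O(1/f(n)) = 1 - o_n(1)$ every term $p_{H_{\ell,j}}(\bx,\bG)$ is at most $f(n)/\big(2(C-2)(L+1)\big)$, whence $\#\{\text{bad vertices}\} \le f(n)/2 < f(n)$.

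I expect the only real obstacle to be the combinatorial reduction of the first step, and specifically two points that need care: one must use a \emph{shortest} path to the short cycle so that $P \cup W$ is an honest lollipop \emph{subgraph} (an occurrence), rather than a walk whose image could partially collapse onto $W$ and thereby escape the count; and one must charge each bad vertex to a \emph{distinct} occurrence so that the occurrence counts genuinely dominate the vertex count. Everything else is Markov's inequality (packaged in Lemma~\ref{lem:cycles}) and a finite union bound.
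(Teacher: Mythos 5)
Your proof is correct and follows exactly the approach the paper indicates (the paper's entire justification is the one-line remark ``The proof is once again Markov, combined with a union bound''): you dominate the bad-vertex count by occurrences of finitely many lollipop shapes, apply Lemma~\ref{lem:cycles} (and its $k=1$ analogue for $\QQ$) to each, and finish with a union bound. The care you take with the shortest-path argument (so that $P\cup W$ is an honest injective occurrence) and the tip-image injectivity is exactly the bookkeeping the paper leaves implicit.
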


We can now restate the local statistics algorithm more precisely. 
 
\begin{definition}[Local Statistics Algorithm with Formal Moment Constraints]
    \label{def:full-local-stats}
    The \emph{degree-$(D_x,D_G)$ Local Statistics algorithm} with error tolerance $\delta$ is the following SDP. Given an input graph $G_0$, find $\pseudo : \RR[x]_{\le D_x} \to \R$ s.t.
    \begin{enumerate}
        \item (Positivity) $\pseudo p(x)^2 \ge 0$ whenever $\deg p^2 \le D_x$ 
        \item (Hard Constraints) $\pseudo p(x,G_0) = 0$ for every $p \in \mathcal{I}_k$
        \item (Moment Constraints) For every $(H,S,\tau)$ with at most $D_x$ distinguished vertices, $D_G$ edges, and $\ell$ connected components,
        $$
            \pseudo p_{(H,S,\tau)}(x,G_0) = (dM)_\tau^{(H,S)}(n/k)^{\chi(H)} \pm \delta n^{\cc(H)}.
        $$
    \end{enumerate}
\end{definition}

\noindent The $n^{\chi(H)}$ vs. $n^{\cc(H)}$ scaling may seem ad hoc, but as promised above we have arranged things so that the the SDP is w.h.p.\ feasible when its input is drawn from the planted model.

\begin{lemma}
    \label{lem:feasible-on-planted}
    Fix $D_x,D_G$ constant, and $\delta > 0$. with high probability, the degree-$(D_x,D_G)$ Local Statistics Algorithm with error tolerance $\delta$ is feasible on input $\bG \sim \PP$.
\end{lemma}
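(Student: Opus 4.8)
The plan is to exhibit an explicit feasible solution: take $\pseudo$ to be the \emph{evaluation functional} at the planted assignment. Writing $\bsigma$ for the planted balanced partition and $\bx \in \{0,1\}^{n\times k}$ for its indicator encoding ($\bx_{u,i} = \indicator{\bsigma(u)=i}$), I would set $\pseudo p(x) \colonequals p(\bx)$ for $p \in \RR[x]_{\le D_x}$. Positivity is then automatic, since $\pseudo p(x)^2 = p(\bx)^2 \ge 0$, and the hard constraints hold because $\bx$ satisfies $\bx_{u,i}^2 = \bx_{u,i}$ and $\sum_i \bx_{u,i} = 1$, while the input $\bG \sim \PP$ is a \emph{simple} graph so that $\bG_{u,v}^2 = \bG_{u,v}$; hence $(\bx,\bG)$ lies in the zero locus of $\mathcal{I}_k$ and $\pseudo p(x,\bG) = p(\bx,\bG) = 0$ for every $p \in \mathcal{I}_k$. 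Everything therefore reduces to checking that the moment constraints of Definition~\ref{def:full-local-stats} hold with high probability, i.e.\ that
$$
    p_{(H,S,\tau)}(\bx,\bG) = (dM)_\tau^{(H,S)}(n/k)^{\chi(H)} \pm \delta n^{\cc(H)}
$$
for every partially labelled $(H,S,\tau)$ with at most $D_G$ edges and $D_x$ distinguished vertices.

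Since $D_x$ and $D_G$ are constants, there are only $O(1)$ isomorphism types of such $(H,S,\tau)$, so it is enough to establish this for one fixed type and then union bound. I would split into two cases according to whether $H$ is a forest. When $H$ is a forest, $\chi(H) = |V(H)| - |E(H)| = \cc(H)$, and Lemma~\ref{lem:forests} applied with $f(n) = \tfrac12\delta n^{\cc(H)}$ gives, with failure probability $O(1/(\delta^2 n)) = o(1)$, that $p_{(H,S,\tau)}(\bx,\bG)$ equals $(n/k)^{\cc(H)}\prod_t (dM)_{\tau_t}^{(H_t,S_t)}$ up to $\pm\tfrac12\delta n^{\cc(H)}$; by multiplicativity of $(dM)^{(\cdot,\cdot)}_{\cdot}$ over disjoint unions this leading term is exactly $(dM)_\tau^{(H,S)}(n/k)^{\cc(H)} = (dM)_\tau^{(H,S)}(n/k)^{\chi(H)}$, which is the desired estimate. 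When $H$ has a cycle, $\chi(H) \le \cc(H) - 1$, so the target value is $O(n^{\chi(H)}) = o(n^{\cc(H)})$ and in particular lies in $[-\delta n^{\cc(H)}, \delta n^{\cc(H)}]$ for $n$ large; since $p_{(H,S,\tau)}(\bx,\bG) \ge 0$ always, it then suffices to bound it above by $2\delta n^{\cc(H)}$, which follows from Lemma~\ref{lem:cycles} with $f(n) = 2\delta n^{\cc(H)}$ (failure probability $O(1/n) = o(1)$). A final union bound over the $O(1)$ types closes the argument.

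I do not expect a genuine obstacle here: the real work has been front-loaded into Lemmas~\ref{lem:config-occ}, \ref{lem:forests}, and~\ref{lem:cycles}, and what remains is bookkeeping. The one point that needs a little care is reconciling the two ways the leading term is written---the $n^{\chi(H)}$-scaling emerging from the expectation computation versus the $n^{\cc(H)}$-scaling of the concentration window---which is precisely why the SDP was set up with $n^{\chi(H)}$ as the target value and $n^{\cc(H)}$ as the slack: on forests the identity $\chi = \cc$ makes these compatible, while on graphs with a cycle the strict gap $\chi(H) < \cc(H)$ is exactly what makes the crude first-moment bound of Lemma~\ref{lem:cycles} sufficient.
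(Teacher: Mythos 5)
Your proposal is correct and matches the paper's proof essentially step for step: the same evaluation functional $\pseudo p(x) = p(\bx)$, the same observation that positivity and the hard constraints are immediate because $(\bx,\bG)$ lies in the zero locus of $\mathcal{I}_k$, the same split of the moment constraints into the forest case (handled by Lemma~\ref{lem:forests} via $\chi = \cc$) and the cyclic case (handled by Lemma~\ref{lem:cycles} via $\chi < \cc$), and the same union bound over the $O(1)$ isomorphism types. The only nit is a cosmetic factor in the cyclic case: with $f(n) = 2\delta n^{\cc(H)}$ the resulting bound on $|p_{(H,S,\tau)}(\bx,\bG) - (dM)_\tau^{(H,S)}(n/k)^{\chi(H)}|$ comes out to roughly $2\delta n^{\cc(H)}$ rather than $\delta n^{\cc(H)}$; choosing $f(n) = \delta n^{\cc(H)}/2$ (as you did in the forest case) fixes this trivially.
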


\begin{proof}
    Let $(\bx,\bG) \sim \PP$, and for each $p(x,G) \in \RR[x,G]_{\le D_x,D_G}$ define
    $$
        \pseudo p(x,\bG) = p(\bx,\bG).
    $$
    This satisfies positivity, as $\pseudo p(x,\bG)^2 = p(\bx,\bG)^2 \ge 0$, and obeys the hard constraints because $(\bx,\bG)$ lies in the zero locus of $\mathcal{I}_k$. Finally, let $(H,S,\tau)$ be a partially labelled graph. If $H$ has a cycle, then by Corollary 8.15, w.h.p.
    $$
        \left|p_{(H,S,\tau)}(\bx,\bG) - (n/k)^{\chi(H)} (dM)_\tau^{(H,S)}\right| \le p_{(H,S,\tau)}(\bx,\bG) + O(n^{\chi(H)}) \le \delta n^{\cc(H)}.
    $$
    On the other hand, if $H$ has no cycles, then $\chi(H) = \cc(H)$ and w.h.p.
    $$
        \left|p_{(H,S,\tau)}(\bx,\bG) - (n/k)^{\chi(H)}(dM)_\tau^{(H,S)}\right| \le \delta n^{\cc(H)}
    $$
    by Proposition 8.14. There are only constantly many partially labelled subgraphs with at most $D_x$ vertices and $D_G$ edges, so a union bound finishes the proof.
\end{proof}

Finally, we end this subsection with the proof of Lemma \ref{lem:X-inner-products}, which concerned the affine constraints in the Symmetric Path Statistics SDP

\begin{proof}[Proof of Lemma \ref{lem:X-inner-products}]
    Recall the partition matrix
    $$
        \bP = \frac{k}{k-1}\left(\sum_{i \in [k]} \bx_i \bx_i^\top - \frac{1}{k}\sum_{i,j \in [k]} \bx_i \bx_j^\top\right)
    $$
    from Section 8.2, where each $\bx_i \in \{0,1\}^n$ is the indicator vector for membership in the $i$th group. We are interested in $\langle \bP, \nb{s}{\bG}\rangle$.
    
    Let $(P_s,\{0,s\},\{i,j\})$ denote a path of length $s$ with distinguished endpoints labelled $i$ and $j$, and write its vertices as $V(P_s) = \{0,1,...,s\}$. From Lemma \ref{lem:forests}, w.h.p.\ for $(\bx,\bG)\sim\PP$,
    $$
        p_{(P_s,\{0,s\},\{i,j\}}(\bx,\bG) = \frac{1}{k} (dM)_{i,j}^{(P_s,\{0,s\}}n \pm o(n).
    $$
    Expanding the right hand side,
    \begin{align*}
        (dM)_{i,j}^{(P_s,\{0,s\}))}
        &= \sum_{\widehat \tau : \widehat \tau |_S = \tau} \frac{\prod_{v \in V(P_s)} \prod_{j' \in [k]} dM_{\widehat\tau(v),j'}(dM_{\widehat\tau(v),j'} - 1)\cdots(dM_{\widehat\tau(v),j'} - \deg_{j}(v))}{\prod_{(u,v) \in E(P_s)} dM_{\widehat\tau(u),\widehat\tau(v)}} \\
        &= \sum_{\widehat\tau : \widehat\tau|_S = \tau} dM_{i,\widehat\tau(1)} \prod_{t = 1,...,s-1}\left(dM_{\widehat\tau(t),\widehat\tau(t+1)} - \{\widehat\tau(t+1)=\widehat\tau(t-1)\}\right) \\
        &= q_s(dM)_{i,j}.
    \end{align*}
    The expression in the second to last line counts the number of non-backtracking walks of length $s$ between vertices $i$ and $j$ on the multi-graph whose adjacency matrix is $dM$; from Section \ref{sec:nb-walks} these may be enumerated using the polynomial $q_s$ applied to $dM$. 
    
    Now, let us define $\sa{s}{\bG}$ as the $n\times n$ matrix whose entries count \textit{self-avoiding} (as opposed to non-backtracking) walks on $\bG$. By definition
    $$
        p_{(P_s,\{0,s\},\{i,i\})}(x,\bG) = \sum_{u,v} \left(\sa{s}{\bG}\right)_{u,v} \cdot x_{u,i}x_{v,j}.
    $$
    Thus, with high probability
    \begin{align*}
        \langle \bP, \sa{s}{\bG} \rangle 
        &= \frac{k}{k-1}\left(\sum_i p_{(P_s,\{0,s\},\{i,i\})}(\bx,\bG) - \frac{1}{k}\sum_{i,j}p_{(P_s,\{0,s\},\{i,j\})}(\bx,\bG)\right) \\
        &= \frac{1}{k-1}\langle q_s(dM), \id - \onesmat/k \rangle n \pm o(n) \\
        &= \frac{1}{k-1}\Big(\Tr q_s(dM) - q_s(d)\Big) n + o(n)\\
        &= q_s(d\lambda)n + o(n).
    \end{align*}
    The last equality follows from the fact that the spectrum of $dM$ consists of an eigenvalue $d$ with multiplicity one, and an eigenvalue $d\lambda$ with multiplicty $k-1$.
    
    We need finally to ensure that the same inner product constraint holds for the matrices $\nb{s}{\bG}$. It is an easy consequence of Lemma 8.16 that w.h.p.\ for $\bG \sim \QQ$, the matrices $\nb{s}{\bG}$ and $\sa{s}{\bG}$ disagree in at most $o(n)$ rows. Thus, since the $L^1$ norm of every row is bounded by a constant (by degree-regularity), w.h.p.\ $\| \nb{s}{\bG} - \sa{s}{\bG}\|_F^2 = o(n)$. Since $X$ is PSD with ones on the diagonal, every off-diagonal element has magnitude at most one---thus
    $$
        \langle X,\nb{s}{\bG}\rangle = \langle X,\sa{s}{\bG}\rangle + o(n),
    $$
    and we are done.
\end{proof}

\subsection{Proof of Theorem \ref{thm::LS:equitable-model}: Upper Bound}

We will show that if $(d\lambda)^2 > 4(d-1)$, then the degree $(2,D)$ Local Statistics algorithm can distinguish $\PP$ and $\QQ$ for every $D \ge 2$. Specifically, we will show that for any such $D$, with high probability over input $\bG \sim \QQ$ there exists a $\delta$ at which the SDP is \textit{infeasible}. Our goal, here and in the proof of the lower bound, will be to reduce to our characterization of the Symmetric Path Statistics SDP in Theorem 8.7.

Assume that $\pseudo$ is a feasible pseudoexpectation for the degree $(2,D)$ Local Statistics SDP with tolerance $\delta>0$, on input $\bG\sim\QQ$, and consider the matrix $X$ with entries
$$
    \widetilde \bP_{u,v} = \frac{k}{k-1}\left(\sum_{i \in [k]} \pseudo (x_{u,i} - 1/k)(x_{v,i} -  1/k)\right) = \frac{k}{k-1}\left(\sum_{i \in [k]} \pseudo x_{u,i}x_{v,i} -  \frac{1}{k}\sum_{i,j\in[k]} \pseudo x_{u,i}x_{v,j}\right).
$$
We will show that $\widetilde \bP$ is a feasible solution to the level-$D$ Symmetric Path Statistics SDP with the input $\bG$, at some tolerance $\delta' = c\delta$---thus by Theorem 8.7, when $(d\lambda)^2 > 4(d-1)$ and $\delta$ is sufficiently small, we will have a contradiction. We observe first that $X$ is PSD with ones on the diagonal---these facts follow immediately from the hard constraints in Definition 8.17, which tell us that $\pseudo x_{u,i}^2 = \pseudo x_{u,i}$ for every $u$ and $i$, and $\pseudo\sum_{i \in [k]} x_{u,i} = 1$ for every $u$. 

We turn now to the moment constraints, with the goal of showing
\begin{align*}
    \langle \widetilde \bP, \nb{s}{\bG}\rangle &= q_s(d\lambda)n \pm \delta' n & & \forall s \in [D] \\
    \langle \widetilde \bP, \onesmat \rangle &= 0 \pm \delta' n^2.
\end{align*}
Rehashing our calculations from the proof of Lemma \ref{lem:X-inner-products}, we find that w.h.p.
\begin{align*}
    \langle \bP, \nb{s}{\bG} \rangle 
    &= \langle \widetilde \bP, \sa{s}{\bG} \rangle + o(n) \\
    &= \pseudo \frac{k}{k-1}\left(\sum_i p_{(P_s,\{0,s\},\{i,i\})}(x,\bG) - \frac{1}{k}\sum_{i,j}p_{(P_s,\{0,s\},\{i,j\})}(x,\bG)\right) + o(n) \\
    &= \frac{1}{k-1}\langle q_s(dM), \id - \onesmat/k \rangle n \pm \frac{k}{k-1}\cdot 2k\delta n + o(n) \\
    &= q_s(d\lambda)n \pm \frac{k^3}{k-1}\delta n + o(n).
\end{align*}

To verify that $X$ has the correct inner product against the all-ones matrix, consider two partially labelled subgraphs: a single vertex labelled $i \in [k]$, and two disjoint vertices both labelled $i \in [k]$. The sum of their corresponding polynomials is $\sum_{u,v} x_{u,i}x_{v,i}$, and identically in the planted model
$$
    \sum_{u,v}\bx_{u,i}\bx_{v,i} = (n/k)^2.
$$
Our pseudoexpectation is required to match this up to an additive $\delta(n + n^2)$, the $n$ and $n^2$ terms respectively coming from the additive slack in the one vs. two vertex graphs. Thus
$$
    \langle \widetilde \bP, \onesmat \rangle = \frac{k}{k-1}\left(n^2/k \pm \delta(n^2 + n) - n^2/k\right) = 0 \pm \frac{2k}{k-1}\delta n^2.
$$

\subsection{Proof of Theorem \ref{thm::LS:equitable-model}: Lower Bound}

Assume that $(d\lambda)^2 \le 4(d-1)$; we need to explicitly construct a degree-$(2,D)$ pseudoexpectation that is with high probability feasible for $\bG \sim \QQ$. Our tactic will be to show that such an operator can be constructed from a feasible solution to the Symmetric Path Statistics SDP guaranteed us by Theorem 8.7. 

Before building the degree-$(2,D)$ pseudoexpectation asserted to exist in the theorem statement, we will first prove a series of structural lemmas showing that it suffices to check only a subset of the moment constraints of a Local Statistics pseudoexpectation. 

First, we show that the moment constraints regarding the pseudo-expected counts of partially labelled graphs containing cycles are satisfied more or less for free.

\begin{lemma}
    \label{lem:cycles-for-free}
    Let $\bG \sim \QQ$, and $\pseudo = \pseudo(\bG)$ be a degree $(D_x,D_G)$ pseudoexpectation, perhaps dependent on $\bG$, that satisfies positivity and the hard constraints. For every error tolerance $\delta$, w.h.p.\ $\pseudo$ satisfies the moment constraints for all partially labelled subgraphs containing a cycle.
\end{lemma}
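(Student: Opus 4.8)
The plan is to show that when the partially labelled graph $(H,S,\tau)$ contains a cycle, \emph{both} sides of the moment constraint in Definition~\ref{def:full-local-stats} are $o(n^{\cc(H)})$ with high probability over $\bG\sim\QQ$, so the constraint is satisfied automatically for any fixed $\delta>0$ and all large $n$, no matter which positivity-and-hard-constraint-satisfying $\pseudo$ the SDP solver returns. The starting observation is the elementary bound $\chi(H)=|V(H)|-|E(H)|\le\cc(H)-1$ whenever $H$ has a cycle (a tree component contributes $1$ to $\chi$, a cyclic component at most $0$, and these sum over components); hence the target value $(dM)^{(H,S)}_\tau(n/k)^{\chi(H)}$ is $O(n^{\chi(H)})=o(n^{\cc(H)})$, and it suffices to prove $|\pseudo p_{(H,S,\tau)}(x,\bG)|=o(n^{\cc(H)})$ w.h.p.

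Next I would reduce $\pseudo p_{(H,S,\tau)}$ to a homomorphism count. Substituting the adjacency matrix of $\bG$ for the $G$-variables, $p_{(H,S,\tau)}(x,\bG)=\sum_\phi m_\phi$, where $\phi$ ranges over injective homomorphisms $V(H)\to\bG$ and $m_\phi=\prod_{u\in S}x_{\phi(u),\tau(u)}$ is a multilinear monomial of $x$-degree $|S|\le D_x$ in distinct variables. Using only positivity and the Boolean hard constraints $x_{u,i}^2-x_{u,i}\in\mathcal I_k$, one gets the uniform bound $|\pseudo m_\phi|\le1$: split $m_\phi=m'm''$ into disjoint sub-monomials of $x$-degree $\le D_x/2$; the pseudomoment matrix indexed by degree-$\le D_x/2$ monomials is PSD, so $|\pseudo m_\phi|=|\pseudo m'm''|\le(\pseudo(m')^2\,\pseudo(m'')^2)^{1/2}=(\pseudo m'\cdot\pseudo m'')^{1/2}\le1$, using that $\pseudo(m')^2=\pseudo m'\in[0,1]$ (the standard consequence of positivity and the Boolean constraints; for $D_x=2$, the only case used in Theorem~\ref{thm::LS:equitable-model}, this is just ``a PSD $2\times2$ matrix with $[0,1]$ diagonal has off-diagonal entry of modulus $\le1$''). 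Therefore $|\pseudo p_{(H,S,\tau)}(x,\bG)|\le N_H(\bG)$, the number of injective homomorphisms of $H$ into $\bG$.

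It then remains to bound $N_H(\bG)$, which I would do through the configuration model. In the $d$-regular configuration model $\widehat\QQ$ (the $k=1$ case of the configuration model of this section, so that Claim~\ref{fact:config-aas} transfers w.h.p.\ events to $\QQ$), Lemma~\ref{lem:config-hom} gives $\EE_{\widehat\bG\sim\widehat\QQ}N_H(\widehat\bG)=\sum_{\phi\colon V(H)\inj[n]}\widehat\QQ[\phi\text{ is a homomorphism}]=O(n^{|V(H)|})\cdot O(n^{-|E(H)|})=O(n^{\chi(H)})=O(n^{\cc(H)-1})$. By Markov's inequality $N_H(\widehat\bG)\le n^{\cc(H)-1}\log n$ w.h.p.\ in $\widehat\QQ$, hence w.h.p.\ in $\QQ$, and this is $o(n^{\cc(H)})$. (Equivalently, one may just invoke the $k=1$ instance of Lemma~\ref{lem:cycles}.) Combining the three steps gives $|\pseudo p_{(H,S,\tau)}(x,\bG)-(dM)^{(H,S)}_\tau(n/k)^{\chi(H)}|=o(n^{\cc(H)})\le\delta n^{\cc(H)}$ w.h.p., and a union bound over the finitely many $(H,S,\tau)$ with at most $D_x$ distinguished vertices and $D_G$ edges completes the argument.

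There is no genuine obstacle here; the proof is entirely ``soft.'' The only quantitative inputs are the worst-case pseudomoment bound $|\pseudo m_\phi|\le1$ and the rarity of homomorphic copies of a cyclic graph in a random regular graph. The one point requiring care is the degree bookkeeping in the $|\pseudo m_\phi|\le1$ step---making sure positivity and the hard constraints are invoked only on polynomials of degree $\le D_x$, which is immediate for $D_x=2$---and, because $\pseudo$ is adversarial, being careful that all of the randomness is spent on bounding $N_H(\bG)$ rather than on $\pseudo$ itself.
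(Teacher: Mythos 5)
Your proof is correct and follows essentially the same route as the paper's: bound $|\pseudo\mu|\le1$ for every monomial $\mu$ appearing in $p_{(H,S,\tau)}$ using positivity and the Boolean ideal, reduce to counting injective homomorphisms of $H$ into $\bG$, and observe that a cyclic $H$ has $\chi(H)<\cc(H)$ so both the homomorphism count and the target value $(dM)^{(H,S)}_\tau(n/k)^{\chi(H)}$ are $o(n^{\cc(H)})$, then union-bound. The one place you are more careful than the paper is the degree bookkeeping in the $|\pseudo\mu|\le1$ step: the paper writes $(\pseudo\mu)^2\le\pseudo\mu^2=\pseudo\mu$ directly, which is only licensed when $\deg\mu^2\le D_x$; your splitting $\mu=m'm''$ with each factor of degree $\le D_x/2$ and applying Cauchy--Schwarz on the pseudomoment matrix is the way to justify the same conclusion for all $\mu$ with $\deg\mu\le D_x$, and is the right patch. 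Your explicit routing through the $k=1$ configuration model and Markov to control $N_H(\bG)$ under $\QQ$ is also just a spelled-out version of the paper's appeal to its cycle-count lemma.
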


\begin{proof}
    It is a routine sum-of-squares calculation that for any monomial $\mu(x)$,
    $$
        (\pseudo \mu(x))^2 \le \pseudo \mu(x)^2 = \pseudo\mu(x),
    $$
    meaning that $|\pseudo \mu(x)| \le 1$. Thus for any $(H,S,\tau)$,
    $$
        |\pseudo p_{(H,S,\tau)}(x,\bG)| = \left| \sum_{\phi: V(H) \inj [n]} \prod_{(\alpha,\beta) \in E(H)} \bG_{\phi(\alpha),\phi(\beta)} \pseudo \prod_{\alpha \in S} x_{\phi(\alpha),\tau(\alpha)}\right| \le \left|\sum_{\phi : V(H)} \prod_{(\alpha,\beta) \in E(H)} \bG_{\phi(\alpha),\phi(\beta)}\right|,
    $$
    and the right hand side is simply the number of occurrences of the unlabelled graph $H$ in $\bG$. From Proposition 8.15, if $H$ has $\cc(H)$ connected components and at least one cycle, (i) w.h.p.\ this quantity is smaller than $\delta' n^{\cc(H)}$ for every $\delta' > 0$, and (ii) $\chi(H) < \cc(H)$. Thus trivially, if we set $\delta' < \delta$, w.h.p.
    \[
        |\pseudo p_{(H,S,\tau)}(x,\bG) - (dM)_\tau^{(H,S)}| \le \delta' n^{\cc(H)} + (dM)_\tau^{(H,S)}n^{\chi(H)} \le \delta n^{\cc(H)}. \qedhere
    \]
\end{proof}

\noindent It therefore suffices to check only the moment constraints for partially labelled forests. In fact, only a subset of these are important.

\begin{definition}
    \label{def:pruning}
    Let $(H,S,\tau)$ be a partialy labelled tree. The \emph{pruning} of $(H,S,\tau)$ is the unique partially labelled subtree in which every leaf belongs to $S$. The pruning of an unlabelled graph is the empty graph, and the pruning of a forest is defined tree-by-tree. We say that a partially labelled forest is \emph{pruned} if it is equal to its pruning.
\end{definition}

\begin{lemma}
    \label{lem:pruned-ratio}
    Let $(H,S,\tau)$ be a partially labelled forest with maximal degree $d$, $(\widetilde H,S,\tau)$ its pruning, and write $\deg$ and $\widetilde\deg$ for the vertex degrees in $H$ and $\widetilde H$ respectively. If $X$ is a symmetric nonnegative integer matrix with row and column sums equal to $d$, then
    $$
        \frac{X^{(H,S)}_\tau}{X^{(\widetilde H,S)}_\tau} = \prod_{v \in V(H)} \prod_{q = \widetilde\deg(v)}^{\deg(v)-1} (d - q).
    $$
\end{lemma}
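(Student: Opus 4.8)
The plan is to exhibit the pruning as an iterated deletion of leaves not lying in $S$, to control how $X^{(\cdot,\cdot)}_\tau$ changes under one such deletion, and then to telescope the single-step ratios, regrouping the resulting factors by vertex. Since $X^{(\cdot,\cdot)}_\tau$ is multiplicative over disjoint unions and pruning is performed component-by-component, it suffices to treat a single tree $H$ (the degenerate case of an isolated component being handled by the obvious convention). The pruning $\widetilde H$ is obtained from $H$ by repeatedly deleting a leaf $\ell \notin S$ together with its incident edge, and, being the minimal subtree spanning $S$, is independent of the order of deletions; fix any such order $H = H_0 \supset H_1 \supset \cdots \supset H_m = \widetilde H$, where $H_t = H_{t-1} - \ell_t$ and $w_t$ is the unique neighbour of $\ell_t$ in $H_{t-1}$.

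The heart of the argument is the single-step identity
\[
    X^{(H_{t-1},S)}_\tau = \bigl(d - \deg_{H_t}(w_t)\bigr)\, X^{(H_t,S)}_\tau .
\]
To prove it, observe that since $\ell_t \notin S$, every extension $\widehat\tau$ of $\tau$ to $V(H_{t-1})$ factors uniquely as $\widehat\tau = \widehat\tau' \cup \{\ell_t \mapsto c\}$, with $\widehat\tau'$ an extension of $\tau$ to $V(H_t)$ and $c \in [k]$ arbitrary. Comparing the corresponding summands in the definition of $X^{(H,S)}_\tau$, passing from $H_t$ back to $H_{t-1}$ does exactly three things: (i) it inserts one new vertex factor, coming from $\ell_t$, equal to $X_{c,\,\widehat\tau'(w_t)}$, since $\ell_t$'s single neighbour lies in group $\widehat\tau'(w_t)$; (ii) it appends to $w_t$'s falling-factorial factor the single extra bottom term $X_{\widehat\tau'(w_t),\,c} - \deg^{H_t}_c(w_t)$, because $\deg^{H_{t-1}}_c(w_t) = \deg^{H_t}_c(w_t)+1$ while all other partial degrees of $w_t$ are unchanged; and (iii) it multiplies the denominator by the single new edge factor $X_{\widehat\tau'(w_t),\,c}$ from the edge $(\ell_t,w_t)$. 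Using symmetry of $X$ to cancel (i) against (iii), the ratio of the $H_{t-1}$-summand to the $H_t$-summand is exactly $X_{\widehat\tau'(w_t),\,c} - \deg^{H_t}_c(w_t)$, and the feasibility indicators are consistent (whenever the $H_t$-summand is nonzero this quantity is $\ge 0$, and the $H_{t-1}$-summand vanishes precisely when it is $\le 0$). Summing over $c \in [k]$, the row of $X$ indexed by $\widehat\tau'(w_t)$ sums to $d$ and $\sum_c \deg^{H_t}_c(w_t) = \deg_{H_t}(w_t)$, so the sum of the summand ratios is $d - \deg_{H_t}(w_t)$; summing over $\widehat\tau'$ gives the displayed identity.

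It remains to multiply the $m$ single-step identities and regroup the factors $d - \deg_{H_t}(w_t)$ by the vertex $v=w_t$. Along the sequence $H_0, H_1, \dots$ the degree of any fixed $v$ is non-increasing and drops by exactly $1$ each time a leaf-neighbour of $v$ is deleted; when it falls from $q+1$ to $q$ the associated factor is $d - (\text{new degree of }v) = d-q$. A vertex $v \in V(\widetilde H)$ thus has its degree fall from $\deg(v)$ to $\widetilde\deg(v)$, contributing $\prod_{q=\widetilde\deg(v)}^{\deg(v)-1}(d-q)$; a vertex $v \notin V(\widetilde H)$ has its degree fall from $\deg(v)$ down to $1$ before $v$ itself is deleted (its own deletion contributes a factor charged to its surviving neighbour, not to $v$), contributing $\prod_{q=1}^{\deg(v)-1}(d-q)$, which is the same expression under the convention $\widetilde\deg(v):=1$ for pruned-away vertices. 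Taking the product over all $v \in V(H)$ and cancelling $X^{(\widetilde H,S)}_\tau$ yields $\frac{X^{(H,S)}_\tau}{X^{(\widetilde H,S)}_\tau} = \prod_{v\in V(H)}\prod_{q=\widetilde\deg(v)}^{\deg(v)-1}(d-q)$, as claimed.

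The only delicate point---hence the main obstacle---is the bookkeeping in the single-step identity: correctly isolating the three changes to a summand when a leaf is restored, invoking symmetry of $X$ at precisely the moment that makes the new vertex and edge factors cancel, and checking that the feasibility indicators do not spoil the telescoping. Everything after that is a routine regrouping of factors over vertices.
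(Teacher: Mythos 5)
You take a genuinely different route from the paper. The paper reinterprets $X$ as the adjacency matrix of a $d$-regular multigraph $\Gamma$ on $[k]$, identifies $X^{(H,S)}_\tau$ as the number of locally injective occurrences of $(H,S,\tau)$ in $\Gamma$, and then reads off the ratio as the number of ways to grow an occurrence of $\widetilde H$ outward into one of $H$, using $d$-regularity of $\Gamma$ directly. You instead telescope over single leaf deletions, proving $X^{(H_{t-1},S)}_\tau = (d-\deg_{H_t}(w_t))\,X^{(H_t,S)}_\tau$ by a direct algebraic comparison of summands: symmetry of $X$ cancels the restored vertex factor against the restored edge factor, and the row-sum hypothesis on $X$ enters when the residual falling-factorial term $X_{\widehat\tau'(w_t),c}-\deg^{H_t}_c(w_t)$ is summed over the restored leaf's color $c$. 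Both proofs are correct and comparably involved; the paper's route is conceptually cleaner once the counting interpretation is in place, while yours is self-contained at the level of the algebraic definition and requires you to check the feasibility indicators explicitly (which you do correctly). You also make explicit the convention $\widetilde\deg(v):=1$ for vertices of $H$ that are pruned away---a point the paper handles only implicitly, by rooting its extension count at a vertex of $V(\widetilde H)$---and, like the paper, your argument as written applies to forests whose every component meets $S$; your appeal to ``the obvious convention'' for a component disjoint from $S$ is no more precise than the paper's silence on the same point.
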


\begin{proof}
    Let's reason combinatorially. Any such matrix $X$ can be thought of as the adjacency matrix for a $d$-regular multigraph with self-loops; let's fix $X$ and call this graph $\Gamma$. Since $\Gamma$'s vertex set is $[k]$, we will think of it as a fully labelled graph. By multiplicativity on disjoint unions, we may freely assume that $(H,S,\tau)$ is a tree. Let's choose a root $r \in V(\widetilde H) \subset V(H)$; having done so, $E(H)$ is in bijection with $V(H) \setminus r$ (and similarly for $E(\widetilde H)$ and $V(\widetilde H)$). Let's write $p(v)$ for the unique parent of every vertex. We can thus write
    \begin{align*}
        X_\tau^{(H,S)} = \sum_{\widetilde \tau :\widetilde \tau|_S = \tau} & \prod_{j \in [k]} X_{\widetilde\tau(r),j}(X_{\widetilde \tau(r),j} - 1) \cdots (X_{\widetilde \tau(r),j} - \deg_j(r) + 1) \\
        &\qquad\times \prod_{v \in V(H) \setminus r} \frac{\prod_{j \in [k]} X_{\widetilde\tau(v),j} \cdots (X_{\widetilde\tau(v),j} - \deg_j(v) + 1)}{X_{\widetilde\tau(v),\widetilde\tau(p(v))}}.
    \end{align*}
    Thinking of each $\widetilde\tau$ as a map $V(H) \to V(\Gamma)$, the summand above gives the number of ways to map $\eta: E(H) \to E(\Gamma)$ with the following constraints: (1) each edge $(u,v)$ must be mapped to one of the $X_{\tau(u),\tau(v)}$ edges between $\tau(u)$ and $\tau(v)$, and (2) no two edges in $E(H)$ with the same endpoint may be mapped to the same edge in $\Gamma$. We'll call the pair $(\widetilde \tau, \eta)$ a \textit{locally injective occurrence of $(H,S,\tau)$} in the fully labelled graph $\Gamma$. Thus the expression $X_\tau^{(H,S)}$ gives the number of such occurrences.
    
    The same argument applies to the pruning $(\widetilde H,S,\tau)$. Now, the graph $(H,S,\tau)$ consists pruning $(\widetilde H,S,\tau)$, plus some trees hanging off the edges. For each locally injective occurrence of $(\widetilde H,S,\tau)$, there are
    $$
        \prod_{v \in V(H)} \prod_{q = \widetilde\deg(v)}^{\deg(v) - 1}(d - q)
    $$
    ways to extend it to a locally injective occurrence of $(H,S,\tau)$, since $\Gamma$ is $d$-regular.
\end{proof}

\begin{lemma}
    \label{lem:pruned-polys-bound}
    Let $\bG \sim \QQ$, $(H,S,\tau)$ be a partially labelled forest, and $(\widetilde H,S,\tau)$ its pruning. Then w.h.p.
    $$
        \left\|p_{(H,S,\tau)}(x,\bG) - n^{\cc(H) - \cc(\widetilde H)}\frac{(dM)^{(H,S)}_{\tau}}{(dM)^{(\widetilde H,S)}_\tau} p_{(\widetilde H,S,\tau)}(x,\bG)\right\|_1 = o(n^{\cc(H)}),
    $$
    where $\|\cdot\|_1$ is the coefficient-wise $L^1$ norm.
\end{lemma}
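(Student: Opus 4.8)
The plan is to organize each occurrence of $(H,S,\tau)$ in $\bG$ around its restriction to the pruned forest $\widetilde H$. Since $S\subseteq V(\widetilde H)$, grouping the injective homomorphisms $\phi\colon V(H)\inj[n]$ by $\phi_0:=\phi|_{V(\widetilde H)}$ gives
\begin{gather*}
    p_{(H,S,\tau)}(x,\bG)=\sum_{\phi_0}N(\phi_0)\prod_{u\in S}x_{\phi_0(u),\tau(u)},\\
    p_{(\widetilde H,S,\tau)}(x,\bG)=\sum_{\phi_0}\prod_{u\in S}x_{\phi_0(u),\tau(u)},
\end{gather*}
where the sums run over injective homomorphisms $\phi_0$ of $\widetilde H$ into $\bG$ and $N(\phi_0)$ is the number of extensions of $\phi_0$ to an injective homomorphism of $H$. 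Writing $R:=n^{\cc(H)-\cc(\widetilde H)}(dM)^{(H,S)}_\tau/(dM)^{(\widetilde H,S)}_\tau$, it then suffices to bound $\sum_{\phi_0}|N(\phi_0)-R|$, since this sum upper-bounds the coefficient-wise $L^1$ norm of $p_{(H,S,\tau)}-R\,p_{(\widetilde H,S,\tau)}$ (monomials shared by several $\phi_0$ can only cancel).

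The combinatorial heart is to evaluate $N(\phi_0)$ for \emph{good} $\phi_0$ --- those whose image avoids the set $B$ of vertices within distance $L$ of a cycle of length at most $C$, and for which the components of $\phi_0(V(\widetilde H))$ are pairwise at $\bG$-distance more than $2L$, where $L$ exceeds the depth of the hanging trees of $H$ and $C$ exceeds $|V(H)|$. Write $H=H'\sqcup U$ with $H'$ the union of the components of $H$ containing a distinguished vertex and $U$ the fully unlabelled components (so $\widetilde H=\widetilde{H'}$ and $\cc(U)=\cc(H)-\cc(\widetilde H)$). For good $\phi_0$ the radius-$L$ neighbourhood of the image is a genuine $d$-regular forest into which $\widetilde H$ embeds as a subforest; placing the hanging trees of $H'$ ``outward'' into this tree structure contributes a deterministic factor which, by Lemma~\ref{lem:pruned-ratio} applied with $X=dM$ (row and column sums $d$), equals $(dM)^{(H',S)}_\tau/(dM)^{(\widetilde H,S)}_\tau$, while embedding $U$ contributes the number of injective copies of $U$ in $\bG$ disjoint from the bounded enlarged image, i.e.\ $p_{(U,\emptyset)}(\bG)$ up to $O(n^{\cc(U)-1})$. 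By Lemma~\ref{lem:forests} applied in the $d$-regular ($k=1$) configuration model, $p_{(U,\emptyset)}(\bG)=n^{\cc(U)}(dM)^{(U,\emptyset)}\pm o(n^{\cc(U)})$ w.h.p.\ --- here Lemma~\ref{lem:pruned-ratio} is invoked once more to note that $(dM)^{(U,\emptyset)}$, a ratio against the empty pruning, does not in fact depend on $M$, so the $\QQ$-value matches. Multiplying these pieces shows $N(\phi_0)=R$ up to an error that is $o(n^{\cc(H)-\cc(\widetilde H)})$ after summing over the $O(n^{\cc(\widetilde H)})$ good $\phi_0$, so the good terms contribute $o(n^{\cc(H)})$ in total.

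For the \emph{bad} $\phi_0$: by Lemma~\ref{cor:bad-vertices}, $|B|=n^{o(1)}$ w.h.p., and since an injective homomorphism of the forest $\widetilde H$ with one vertex pinned has $O(1)$ completions inside its component and $O(n)$ per other component, there are at most $|V(\widetilde H)|\cdot|B|\cdot O(n^{\cc(\widetilde H)-1})=o(n^{\cc(\widetilde H)})$ homomorphisms whose image meets $B$; likewise, using that only $O(1)$ vertices lie within distance $2L$ of a given vertex, there are $O(n^{\cc(\widetilde H)-1})$ homomorphisms with two components close. For every $\phi_0$ one has the crude bound $N(\phi_0)=O(n^{\cc(H)-\cc(\widetilde H)})$ ($O(1)$ choices for each hanging-tree vertex, $O(n)$ for each component of $U$), and $R=O(n^{\cc(H)-\cc(\widetilde H)})$ as well, so the bad $\phi_0$ contribute $o(n^{\cc(\widetilde H)})\cdot O(n^{\cc(H)-\cc(\widetilde H)})=o(n^{\cc(H)})$, completing the bound.

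The main obstacle is pinning down the notion of good $\phi_0$ precisely enough: avoiding short cycles through the image is \emph{not} sufficient, because an edge of $\bG$ joining two components of $\phi_0(V(\widetilde H))$ --- or joining the image to a vertex that would otherwise serve as an extension target --- creates no short cycle yet changes $N(\phi_0)$, which is exactly what forces the ``components far apart'' condition and the requirement $L>\textrm{depth}$. A secondary, bookkeeping obstacle is checking that the per-vertex placement factors $\prod_{q=\widetilde\deg(v)}^{\deg(v)-1}(d-q)$ really telescope to the expression of Lemma~\ref{lem:pruned-ratio}; organizing the pruning as a sequence of single unlabelled-leaf removals --- with a factor-($\approx n$) isolated-vertex removal whenever a component collapses --- makes this transparent, at the cost of re-running the good/bad split at each of the constantly many steps, the errors staying $o(n^{\cc(H)})$ because homomorphism counts through a pinned vertex remain $O(n^{\cc-1})$ throughout.
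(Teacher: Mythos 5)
Your proposal follows essentially the same route as the paper's proof: both group injective occurrences of $(H,S,\tau)$ by the restriction to $V(\widetilde H)$ (your $\phi_0$ is the paper's $\widetilde\phi$), split into a ``good'' set whose images are locally treelike with well-separated components and a small ``bad'' set, and count extensions of a good $\phi_0$ via the hanging-tree product of Lemma~\ref{lem:pruned-ratio} together with a near-free placement of the unlabelled components. The one organizational difference is that you route the unlabelled-component count through concentration of $p_{(U,\emptyset)}(\bG)$ via Lemma~\ref{lem:forests}, whereas the paper counts those embeddings directly; you also make explicit the crude bound on bad $\phi_0$ (each contributing $O(n^{\cc(H)-\cc(\widetilde H)})$) that the paper's proof leaves implicit, which is a small but genuine improvement in rigor. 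The closing worry you flag---whether $(dM)^{(U,\emptyset)}$ as a ``ratio against the empty pruning'' really reduces to a $d$-dependent constant so that the $\QQ$-value matches---is a real normalization subtlety (a $k^{\cc(U)}$ factor distinguishes the $k=1$ and general-$k$ evaluations, and Lemma~\ref{lem:pruned-ratio}'s proof implicitly assumes a root in $V(\widetilde H)$, which fails for trees of $U$), but the paper's own proof leans on Lemma~\ref{lem:pruned-ratio} in exactly the same way, so this is an inherited concern rather than a gap you introduced.
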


\begin{proof}
    Let $(H,S,\tau)$ be a partially labelled forest, $(\widetilde H,S,\tau)$ its pruning. If $\widetilde\phi : V(\widetilde H) \inj V(\bG)$ is an occurrence of $(\widetilde H,S,\tau)$, then we call $\phi : V(H) \inj V(\bG)$ an \emph{extension} of $\widetilde \phi$ if its an occurrence of $(H,S,\tau)$ and agrees with $\widetilde \phi$ on $V(\widetilde H)$. Let's write $\widetilde\Phi$ for the set of occurrences of $(\widetilde H,S,\tau)$ in $(\bG,\sigma)$, and for each $\widetilde \phi \in \widetilde \Phi$, write $\Xi(\widetilde\phi)$ for its set of extensions. Thus, incorporating Lemma 8.20,
    \begin{align*}
        &\left\|p_{(H,S,\tau)}(x,\bG) - n^{\cc(H) - \cc(\widetilde H)}\frac{(dM)^{(H,S)}_{\tau}}{(dM)^{(\widetilde H,S)}_\tau} p_{(\widetilde H,S,\tau)}(x,\bG)\right\|_1 \\
        &\qquad \qquad \le \sum_{\widetilde\phi \in \widetilde \Phi} \left| |\Xi(\widetilde\phi)| - n^{\cc(H) - \cc(\widetilde H)}        \prod_{v \in V(H)} \prod_{q = \widetilde\deg(v)}^{\deg(v) - 1}(d - q)\right|.
    \end{align*}
    
    By Proposition 8.15, with high probability there are $o(n^{\cc(\widetilde H)}$ occurrences of $\widetilde H$ whose  $|E(H)|$ neighborhoods in $\widetilde \bG$ either intersect or contain a cycle, so we can safely restrict the right hand side above to the remaining ones. Let's fix such an occurrence and enumerate the possible extensions. First, for each connected component $\widetilde J$ of $\widetilde H$, and its corresponding component $J$ of $H$, because $\bG$ is $d$-regular and locally treelike in the neighborhood of $\phi(\widetilde J)$, there are exactly
    $$
        \prod_{v \in V(J)} \prod_{q = \widetilde\deg(v)}^{\deg(v) - 1} (d-q)
    $$
    ways to extend $\widetilde \phi$ to the remainder of $J$. Having already chosen how to extend the occurrence on these connected components, call $K$ the union of all connected components in $H$ that have no distinguished vertex. We need to find an injective homomorphism from $K$ into $\bG$ that does not collide with $\widetilde \phi(\widetilde H)$ or the portion of the extension that we have already constructed. Since $|V(H)| = O(1)$, there are
    $$
        n^{\cc(H) - \cc(\widetilde H)}\prod_{v \in V(K)} \prod_{q = 0}^{\deg(v) - 1}(d-q) + O(n^{\cc(H) - \cc(\widetilde H) - 1})
    $$
    ways to do this. Since $|\widetilde \Phi| = O(n^{\cc(\widetilde H)})$, 
    $$
        \sum_{\widetilde\phi \in \widetilde \Phi} \left| |\Xi(\widetilde\phi)| - n^{\cc(H) - \cc(\widetilde H)}        \prod_{v \in V(H)} \prod_{q = \widetilde\deg(v)}^{\deg(v) - 1}(d - q)\right| = O(n^{\cc(\widetilde H)}) \cdot O(n^{\cc(H) - \cc(\widetilde H) - 1}) = O(n^{\cc(H) - 1})
    $$
    as desired.
\end{proof}

\begin{lemma}
    \label{lem:pruned-suffice}
    Let $\bG \sim \QQ$ and $\pseudo = \pseudo(\bG)$ be a degree-$(D_x, D_G)$ pseudoexpectation, perhaps dependent on $\bG$. If $\pseudo$ w.h.p.\ satisfies the moment constraints for pruned partially labelled forests, w.h.p.\ it does so for every partially labelled forest.
\end{lemma}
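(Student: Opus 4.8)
The plan is to bootstrap the already-available moment constraints for pruned forests up to all moment constraints, using the structural comparison of Lemma~\ref{lem:pruned-polys-bound}. First, by Lemma~\ref{lem:cycles-for-free} the constraints attached to partially labelled subgraphs containing a cycle hold w.h.p.\ for free (they are implied by positivity and the hard constraints), so only the constraints for partially labelled \emph{forests} $(H,S,\tau)$ with at most $D_G$ edges and $D_x$ distinguished vertices remain; there are finitely many such $(H,S,\tau)$ up to isomorphism, so we may prove each constraint w.h.p.\ separately and union-bound at the end.

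Fix such a forest. It is convenient to first split off the connected components of $H$ carrying no distinguished vertex: their associated polynomials are $x$-free, so $\pseudo$ evaluates them to ordinary subgraph counts in $\bG$, and the corresponding moment constraints become concentration statements for forest counts that follow directly from Lemma~\ref{lem:forests}. After this reduction we may assume every component of $H$ contains a distinguished vertex, so its pruning $(\widetilde H,S,\tau)$ (Definition~\ref{def:pruning}) is again a forest with each component meeting $S$ --- hence a pruned forest with no more edges or distinguished vertices than $(H,S,\tau)$, whose moment constraint is satisfied w.h.p.\ by hypothesis. Lemma~\ref{lem:pruned-polys-bound} now gives, w.h.p.,
\[
    \Bigl\| \, p_{(H,S,\tau)}(x,\bG) - n^{\cc(H)-\cc(\widetilde H)}\,\tfrac{(dM)^{(H,S)}_\tau}{(dM)^{(\widetilde H,S)}_\tau}\, p_{(\widetilde H,S,\tau)}(x,\bG)\, \Bigr\|_1 = o\!\left(n^{\cc(H)}\right),
\]
with $\|\cdot\|_1$ the coefficient-wise norm in $x$. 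Since $\pseudo$ obeys positivity and the hard constraints, the sum-of-squares estimate $|\pseudo\mu(x)|\le 1$ for monomials $\mu$ (as in the proof of Lemma~\ref{lem:cycles-for-free}) yields $|\pseudo q(x,\bG)|\le\|q(x,\bG)\|_1$ for any $q$, and applying this to the displayed difference gives
\[
    \pseudo p_{(H,S,\tau)}(x,\bG) = n^{\cc(H)-\cc(\widetilde H)}\,\tfrac{(dM)^{(H,S)}_\tau}{(dM)^{(\widetilde H,S)}_\tau}\,\pseudo p_{(\widetilde H,S,\tau)}(x,\bG) \pm o\!\left(n^{\cc(H)}\right)
\]
w.h.p.

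It remains to substitute the hypothesized constraint $\pseudo p_{(\widetilde H,S,\tau)}(x,\bG)=(dM)^{(\widetilde H,S)}_\tau (n/k)^{\chi(\widetilde H)}\pm\delta n^{\cc(\widetilde H)}$ and check the arithmetic. The $(dM)$ factors telescope, and since $\widetilde H$ and $H$ are forests we have $\chi(\widetilde H)=\cc(\widetilde H)$ and $\chi(H)=\cc(H)$; together with Lemma~\ref{lem:pruned-ratio} the main term collapses to exactly $(dM)^{(H,S)}_\tau (n/k)^{\chi(H)}$, while the two error contributions combine to $O(\delta)\,n^{\cc(H)}+o(n^{\cc(H)})$, which is at most $\delta' n^{\cc(H)}$ for a constant $\delta'$ depending only on $d$, $k$, and the (bounded) size of $H$; shrinking $\delta$ and taking the union bound over all admissible $(H,S,\tau)$ finishes the proof.

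The step I expect to demand the most care is precisely this normalization bookkeeping --- matching the powers of $n$ and $k$ and the additive slacks between $H$, its pruning, and its distinguished-vertex-free components --- rather than any genuinely new idea; all the combinatorial content is contained in Lemmas~\ref{lem:cycles-for-free}, \ref{lem:pruned-ratio}, and~\ref{lem:pruned-polys-bound}.
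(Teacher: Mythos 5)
Your proposal is essentially correct and follows the same route as the paper: compare $p_{(H,S,\tau)}$ to its pruning via Lemma~\ref{lem:pruned-polys-bound}, transfer the coefficient-wise $L^1$ bound through $\pseudo$ using $|\pseudo\mu(x)|\le 1$, substitute the hypothesized moment constraint for the pruning, and union-bound over finitely many shapes. Two small remarks. First, the invocation of Lemma~\ref{lem:cycles-for-free} is superfluous here --- the lemma's hypothesis and conclusion already both quantify only over forests, so the cycle case never arises. Second, the ``split off the label-free components'' detour is not actually needed: Lemma~\ref{lem:pruned-polys-bound} already compares a forest directly to its pruning even when some components prune to the empty graph (this is where its $n^{\cc(H)-\cc(\widetilde H)}$ prefactor comes from), and as stated your split-off step hides a collision correction (since $p_{(K,\emptyset,\emptyset)}\cdot p_{(H',S,\tau)}\ne p_{(K\sqcup H',S,\tau)}$, only up to $O(n^{\cc(H)-1})$ in $\|\cdot\|_1$) that you would still have to bound. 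That said, the detour does happen to flag a real normalization subtlety that the paper's one-line proof glosses over --- matching the $(n/k)^{\chi}$-normalization across $H$ and $\widetilde H$ when $\cc(H)>\cc(\widetilde H)$ --- so the care you advertise in the final paragraph is warranted; once you carry that bookkeeping through (either directly as you outline, or by dropping the split-off and just using Lemma~\ref{lem:pruned-polys-bound}), the proof is sound.
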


\begin{proof}
    This is a direct consequence of Lemma 8.21. Retaining $(H,S,\tau)$ and $(\widetilde H,S,\tau)$ from the proof of that lemma, since the pseudoexpectation of any monomial has absolute value at most one,
    $$
        \pseudo p_{(H,S,\tau)}(x,\bG) = \frac{(dM)_\tau^{(H,S)}}{(dM)_\tau^{(\widetilde H,S)}} p_{(\widetilde H,S,\tau)}(x,\bG) \pm o(n^\ell) = (dM)_\tau^{(H,S)} \pm \delta n^\ell
    $$
    for every $\delta > 0$. We can take a union bound over all finitely many $(H,S,\tau)$.
\end{proof}

We are finally ready to describe our own degree $(2,D)$ pseudoexpectation. Our key building block will be the feasible solution $\bP \succeq 0$ to the degree-$D$ Symmetric Path Statistics on input $\bG \sim \QQ$ SDP whose asymptotic almost sure existence is guaranteed us by Theorem 8.7. Recall that this matrix satisfies
\begin{enumerate}
    \item $\bP_{u,u} = 1$ for every $u \in [n]$
    \item $\langle \bP, \onesmat \rangle$ = 0
    \item $\langle \bP, \nb{s}{\bG} \rangle = q_s(d\lambda) \pm \delta n$ for every $s = 1,...,D$.
\end{enumerate}
A degree-$2$ pseudoexpectation $\pseudo : \RR[x]_{\le 2} \to \RR$ may be expressed as a $(1 + nk)\times(1 + nk)$ block matrix
$$
    \begin{pmatrix} 1 & l^\top \\ l & Q \end{pmatrix} = \begin{pmatrix} 
    1 & l^\top_1 & \cdots & l^\top_n \\ 
    l_1 & Q_{1,1} & \cdots & Q_{1,n} \\
    \vdots & \vdots & \ddots & \vdots \\
    l_n & Q_{n,1} & \cdots & Q_{n,n}
    \end{pmatrix}
$$
where $(l_u)_i = \pseudo x_{u,i}$ and $(Q_{u,v})_{i,j} = \pseudo x_{u,i}x_{v,j}$. Our construction will set $(l_u)_i = 1/k$ for every $u$ and $i$, and
$$
    Q = \frac{1}{k}\left(\onesmat_{nk}/k + \bP \otimes (\id - \onesmat_k/k) \right)
$$

Let us first check the hard consstraints. For positivity it suffices to observe that
$$
    Q - ll^\top = \frac{1}{k} \bP \otimes (\id - \onesmat_k/k) \succeq 0,
$$
as $\bP,(\id - \onesmat_k/k) \succeq 0$. We also have
$$
    \pseudo x_{u,i}^2 = (Q_{u,u})_{i,i} = \frac{1}{k}\left(1/k + (1 - 1/k)\bP_{i,i}\right) = 1/k = \pseudo x_{u,i}
$$
since $\bP_{u,u} = 1$ for every $u\in [n]$. It remains only to check that $\pseudo (x_{u,1} + \cdots + x_{u,k})p(x) = \pseudo p(x)$ for every $p(x)$ of degree one. For this it is sufficient to verify that
$$
    \pseudo (x_{u,1} + \cdots + x_{u,k})x_{v,j} = \sum_i (Q_{u,v})_{i,j} = \frac{1}{k}\sum_i (1/k + \bP_{u,v}(1 - \onesmat_k/k)_{i,j}) = 1/k = \pseudo x_{v,j}.
$$

Finally, we need to verify the moment constraints. By Lemmas 8.21-22, if we do so only on the minimal partially labelled forests with at most two distinguished vertices, then w.h.p.\ the remainder of the moment constraints are satisfied. A minimal partially labelled forest with one distinguished vertex is just a single vertex with a label $i \in [k]$; the associated polynomial in this case is just $x_{1,k} + \cdots + x_{n,k}$, and its required pseudoexpectation is $n/k \pm \delta n$, since there are identically $n/k$ vertices with each label in the planted model. Our pseudoexpectation assigns a value of
$$
    \pseudo \sum_{u}x_{u,i} = \sum_u (l_u)_i = n/k
$$
as desired. 

A minimal partially labelled forest on two vertices is either a path of length $s \in [d]$ with endpoints labelled $i,j \in [k]$, or two isolated vertices labelled $i,j \in [k]$. In the former case, the pseudoexpectation is required to read $\tfrac{1}{k}q(dM)_{i,j}n \pm \delta n$ Recycling some calculations from Section 8.4, our pseudoexpectation on this polynomial reads
\begin{align*}
    \langle Q_{i,j}, \sa{s}{\bG}\rangle 
    &= \langle Q_{i,j},\nb{s}{\bG}\rangle \pm o(n) \\
    &= \frac{1}{k}\left(\langle \onesmat_n/k, \nb{s}{\bG}\rangle + \langle \bP, \nb{s}{\bG}\rangle (\id - \onesmat_k/k)_{i,j}\right) \\
    &= \frac{1}{k}\left(q_s(d)(\onesmat/k)_{i,j} + q_s(d\lambda)(\id - \onesmat_k/k)_{i,j}\right)n \pm (\delta/k)n  \\
    &= \frac{1}{k}q_s(dM)_{i,j} \pm (\delta/k)n.
\end{align*}
The last line follows since $dM = d\onesmat_k/k + d\lambda (\id - \onesmat_k/k)$ is the spectral decomposition of $dM$. 

Finally, we verify the case of two disjoint vertices labelled $i,j\in[k]$. The polynomial here is $\sum_{u\neq v} x_{u,i}x_{v,j}$, and the pseudoexpectation is requried to give a value of $(n/k)^2 \pm \delta n^2$. As needed, our pseudoexpectation gives
$$
    \langle Q_{i,j},\onesmat_n \rangle = \frac{1}{k}\left(\langle \onesmat_n/k, \onesmat_n\rangle + \langle \bP,\onesmat_n\rangle(\id - \onesmat_k/k)_{i,j}\right) = (n/k)^2,
$$
as $\langle \bP,\onesmat_n\rangle = 0$.

\section{Low-Degree Analysis of Spiked Models}
\label{sec:low-deg}

In this section, we develop machinery for low-degree analysis of general spiked Wigner and Wishart models, culminating in the proofs of Theorems~\ref{thm:general-wigner} and \ref{thm:general-wishart}.

\subsection{Preliminaries}

\begin{definition}
    A random vector $\bx$ is $\varepsilon$-local $c$-subgaussian if for any fixed vector $v$ with $\|v\| \le \varepsilon$,
    \[ 
        \EE \exp(\langle v,\bx \rangle) \le \exp\left(\frac{c}{2} \|v\|^2\right). 
    \]
\end{definition}

\noindent It is straightforward to verify the following fact.

\begin{fact}\label{fac:subg}
    Suppose $\bx$ is $\varepsilon$-local $c$-subgaussian. For a (non-random) scalar $\alpha \ne 0$, $\alpha \bx$ is $\varepsilon/|\alpha|$-local $c \alpha^2$-subgaussian. Also, the sum of $n$ independent copies of $\bx$ is $\varepsilon$-local $cn$-subgaussian.
\end{fact}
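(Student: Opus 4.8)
\textbf{Proof plan for Fact~\ref{fac:subg}.}
The plan is to verify both claims directly from the definition by manipulating the moment generating function $v \mapsto \EE\exp(\langle v,\bx\rangle)$ and tracking the norm constraint $\|v\|\le\varepsilon$ under the relevant transformations. There is no genuine obstacle here; the only thing requiring care is bookkeeping of which points $v$ the subgaussian bound is being applied at, and checking those points satisfy the (possibly rescaled) norm restriction.

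For the scaling claim, I would fix a scalar $\alpha\ne 0$ and any vector $v$ with $\|v\|\le \varepsilon/|\alpha|$, and write $\EE\exp(\langle v,\alpha\bx\rangle)=\EE\exp(\langle \alpha v,\bx\rangle)$. Since $\|\alpha v\|=|\alpha|\,\|v\|\le\varepsilon$, the hypothesis applies at the point $\alpha v$, giving $\EE\exp(\langle \alpha v,\bx\rangle)\le\exp\!\big(\tfrac{c}{2}\|\alpha v\|^2\big)=\exp\!\big(\tfrac{c\alpha^2}{2}\|v\|^2\big)$, which is exactly the statement that $\alpha\bx$ is $(\varepsilon/|\alpha|)$-local $c\alpha^2$-subgaussian.

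For the sum claim, let $\bx_1,\dots,\bx_n$ be independent copies of $\bx$ and set $\bS=\sum_{i=1}^n\bx_i$. For any $v$ with $\|v\|\le\varepsilon$, independence factors the moment generating function: $\EE\exp(\langle v,\bS\rangle)=\prod_{i=1}^n\EE\exp(\langle v,\bx_i\rangle)$. Each factor is bounded by $\exp(\tfrac{c}{2}\|v\|^2)$ by the hypothesis applied to $\bx$ (the norm restriction $\|v\|\le\varepsilon$ is the same for each factor), so the product is at most $\exp(\tfrac{cn}{2}\|v\|^2)$, establishing that $\bS$ is $\varepsilon$-local $cn$-subgaussian. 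Both parts together are a two-line computation; I would present them exactly in this order.
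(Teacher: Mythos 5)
Your verification is correct and is exactly the direct computation the paper leaves to the reader (the paper states the fact with no proof, calling it "straightforward to verify"). Both parts follow immediately from the definition as you write them, and the bookkeeping on the norm restriction is handled properly.
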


\begin{proposition}
    \label{prop:local-chernoff}
    If a random vector $\bx$ is $\varepsilon$-local $c$-subgaussian then it admits the following local Chernoff bound: for any $\|v\| = 1$ and $0 \le t \le \varepsilon c$,
    \[ 
        \Pr\{\langle v,\bx \rangle \ge t\} \le \exp\left(-\frac{t^2}{2c}\right).
    \]
\end{proposition}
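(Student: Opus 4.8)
The plan is to run the standard Chernoff (exponential Markov) argument, using the local subgaussian hypothesis to control the moment generating function of $\langle v, \bx\rangle$, but only for a sufficiently small value of the dual parameter; the restriction $t \le \varepsilon c$ is exactly what guarantees that the Chernoff-optimal choice of that parameter stays in the admissible range.

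First I would fix $v$ with $\|v\| = 1$ and a parameter $s \in (0,\varepsilon]$. Applying the definition of $\varepsilon$-local $c$-subgaussianity to the vector $sv$, which has norm $\|sv\| = s \le \varepsilon$, gives
\[
    \EE \exp\big(s\langle v,\bx\rangle\big) = \EE \exp\big(\langle sv, \bx\rangle\big) \le \exp\left(\frac{c}{2}\|sv\|^2\right) = \exp\left(\frac{cs^2}{2}\right).
\]
Next I would apply Markov's inequality to the nonnegative random variable $\exp(s\langle v,\bx\rangle)$: for any $t \ge 0$,
\[
    \Pr\{\langle v,\bx\rangle \ge t\} = \Pr\big\{\exp(s\langle v,\bx\rangle) \ge e^{st}\big\} \le e^{-st}\,\EE\exp\big(s\langle v,\bx\rangle\big) \le \exp\left(\frac{cs^2}{2} - st\right).
\]

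Finally I would optimize the exponent $\frac{cs^2}{2} - st$ over admissible $s$. Its unconstrained minimizer is $s = t/c$, and the hypothesis $0 \le t \le \varepsilon c$ ensures $s = t/c \in [0,\varepsilon]$, so this choice is legal in the bound above. Substituting $s = t/c$ gives exponent $\frac{c}{2}\cdot\frac{t^2}{c^2} - \frac{t^2}{c} = -\frac{t^2}{2c}$, which is the claimed inequality (the case $t = 0$ being trivial).

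There is no substantive obstacle here; the only point requiring attention is the bookkeeping that the Chernoff-optimal value $s = t/c$ lies in the interval $(0,\varepsilon]$ on which the subgaussian estimate is available — and that is precisely why the statement is restricted to $t \le \varepsilon c$. Beyond the admissible range one could only take the boundary value $s = \varepsilon$, yielding the weaker bound $\exp(\tfrac{c\varepsilon^2}{2} - \varepsilon t)$.
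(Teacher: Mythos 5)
Your proof is correct and follows essentially the same route as the paper: a standard Chernoff/exponential-Markov argument applied to $sv$, followed by noting that the hypothesis $t \le \varepsilon c$ keeps the optimal choice $s = t/c$ within the range where the local subgaussian estimate is available.
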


\begin{proof}
    Apply the standard Chernoff bound argument: for any $\alpha > 0$,
    \begin{align*}
        \Pr\{\langle v,\bx \rangle \ge t\} 
        &= \Pr\{\exp(\alpha \langle v,\bx \rangle) \ge \exp(\alpha t)\} \\
        &\le \EE[\exp(\alpha \langle v,\bx \rangle)] / \exp(\alpha t) \\
        &\le \exp(c \alpha^2/2-\alpha t) & & \text{provided } \alpha \le \varepsilon.
    \end{align*}
    Set $\alpha = t/c$ to complete the proof.
\end{proof}

\begin{proposition}
    \label{prop:norm-bound}
    If a random vector $\bx \in \RR^k$ is $\varepsilon$-local $c$-subgaussian then for any $\delta > 0$ and any $0 \le t \le \varepsilon c / (1-\delta)$,
    \[ \Pr\{ \|x\| \ge t \} \le C(\delta,k) \exp\left(-\frac{1}{2c} (1-\delta)^2 t^2 \right) \]
    where $C(\delta,k)$ is a constant depending only on $\delta$ and $k$.
\end{proposition}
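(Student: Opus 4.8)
The plan is to reduce the tail bound on $\|\bx\|$ to a union bound over a finite net of the unit sphere and then invoke the local Chernoff bound of Proposition~\ref{prop:local-chernoff}. First I would fix a $\delta$-net $\mathcal{N} \subset S^{k-1}$ of the Euclidean unit sphere in $\RR^k$, i.e.\ a finite set of unit vectors such that every unit vector lies within Euclidean distance $\delta$ of some element of $\mathcal{N}$; a standard volumetric argument produces such a net with $|\mathcal{N}| \le (1 + 2/\delta)^k =: C(\delta,k)$.

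Next I would record the net comparison $\|x\| \le \frac{1}{1-\delta}\max_{v \in \mathcal{N}} \langle v,x\rangle$, valid for every $x \in \RR^k$ (the case $x = 0$ being trivial). Indeed, for $x \ne 0$ write $u^\star = x/\|x\|$ and pick $v \in \mathcal{N}$ with $\|u^\star - v\| \le \delta$; then, by Cauchy--Schwarz, $\langle v,x\rangle = \|x\| - \langle u^\star - v, x\rangle \ge \|x\| - \delta\|x\| = (1-\delta)\|x\|$. Note that only a ``one-sided'' approximation of the direction $u^\star$ is needed, since $\|x\|$ is attained as a maximum of $\langle \cdot, x\rangle$ over $S^{k-1}$ precisely at $u^\star$.

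Combining these, for any $t \ge 0$,
\[
  \Pr\{\|\bx\| \ge t\} \le \Pr\Big\{\max_{v\in\mathcal{N}}\langle v,\bx\rangle \ge (1-\delta)t\Big\} \le \sum_{v\in\mathcal{N}}\Pr\{\langle v,\bx\rangle \ge (1-\delta)t\}.
\]
For each (unit) $v \in \mathcal{N}$ I would apply Proposition~\ref{prop:local-chernoff} at level $(1-\delta)t$; this is legitimate precisely when $0 \le (1-\delta)t \le \varepsilon c$, that is, when $t \le \varepsilon c/(1-\delta)$, which is exactly the hypothesis, and it yields $\Pr\{\langle v,\bx\rangle \ge (1-\delta)t\} \le \exp\!\big(-(1-\delta)^2 t^2/(2c)\big)$. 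Summing over the at most $C(\delta,k)$ elements of $\mathcal{N}$ gives the claimed bound.

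There is no serious obstacle here; the only points requiring care are (i) matching the mesh size of the net to the parameter $\delta$ of the statement so that the constant and the exponent come out exactly as written, and (ii) checking that the admissible range $0 \le t \le \varepsilon c/(1-\delta)$ is preserved after rescaling by the factor $1-\delta$ when passing to the Chernoff bound applied to $\langle v, \bx\rangle$ --- which it is, exactly. (Any of the standard variants of the net-size estimate, e.g.\ $(3/\delta)^k$, would serve equally well, since its precise value is absorbed into $C(\delta,k)$.)
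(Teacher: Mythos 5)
Your proof is correct and follows essentially the same route as the paper: pass to a finite $\delta$-net of the unit sphere, use the pointwise comparison $\max_{v\in\mathcal{N}}\langle v,x\rangle \ge (1-\delta)\|x\|$, take a union bound, and apply the local Chernoff bound at level $(1-\delta)t$. The only difference is that the paper simply \emph{defines} its net by the one-sided approximation property $\max_{u\in\mathcal{N}}\langle u,v\rangle\ge(1-\delta)\|v\|$ and sets $C(\delta,k)=|\mathcal{N}|$, whereas you derive that property from a standard mesh-$\delta$ net and quote the volumetric cardinality bound --- a harmless amount of extra detail.
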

\begin{proof}
    Let $\mathcal{N} \subseteq \RR^k$ be a $\delta$-net of the unit sphere in $\RR^k$, in the sense that for any $v \in \RR^k$,
    \[ 
        \max_{u \in \mathcal{N}}\, \langle u,v \rangle \ge (1-\delta) \|v\| 
    \]
    where $\|u\| = 1$ for all $u \in \mathcal{N}$. Let $C(\delta,k) = |\mathcal{N}|$. Using a union bound and the local Chernoff bound (Proposition~\ref{prop:local-chernoff}), for all $0 \le t \le \varepsilon c / (1-\delta)$,
    \[ 
        \Pr\{ \|\bx\| \ge t \} \le \Pr\left\{ \max_{u \in \mathcal{N}}\, \langle u,\bx \rangle \ge (1-\delta) t \right\} \le C(\delta,k) \exp\left(-\frac{1}{2c} (1-\delta)^2 t^2 \right). \qedhere
    \]
\end{proof}

\begin{proposition}
    \label{prop:exp-norm-bound}
    Let $\delta > 0$. If a random vector $\bx \in \RR^k$ is $\varepsilon$-local $c$-subgaussian with $c < (1-\delta)^2/2$ then
    \[ 
        \EE \left[\indicator{\|\bx\| \le \varepsilon c/(1-\delta)} \exp(\|\bx\|^2)\right] \le 1 + \frac{C(\delta,k)}{(1-\delta)^2/(2c) - 1} 
    \]
    where $C(\delta,k)$ is a constant depending only on $\delta$ and $k$.
\end{proposition}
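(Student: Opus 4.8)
The plan is to bound the truncated exponential moment by writing $\exp(\|\bx\|^2) = 1 + \int_0^{\|\bx\|^2} e^u\,du$ and then integrating the norm tail bound of Proposition~\ref{prop:norm-bound}. Throughout, set $R \colonequals \varepsilon c/(1-\delta)$, so that the quantity of interest is $\EE[\indicator{\|\bx\| \le R}\exp(\|\bx\|^2)]$.

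First I would expand
\[
    \EE\left[\indicator{\|\bx\| \le R}\exp(\|\bx\|^2)\right] = \EE\left[\indicator{\|\bx\| \le R}\left(1 + \int_0^{\|\bx\|^2} e^u\,du\right)\right] \le 1 + \EE\left[\indicator{\|\bx\| \le R}\int_0^{\|\bx\|^2} e^u\,du\right],
\]
using $\Pr\{\|\bx\| \le R\} \le 1$. Next, by Tonelli's theorem the remaining term equals $\int_0^\infty e^u \Pr\{\sqrt u \le \|\bx\| \le R\}\,du$; since this probability vanishes for $u > R^2$ and is at most $\Pr\{\|\bx\| \ge \sqrt u\}$ for $u \le R^2$, we obtain
\[
    \EE\left[\indicator{\|\bx\| \le R}\int_0^{\|\bx\|^2} e^u\,du\right] \le \int_0^{R^2} e^u \Pr\{\|\bx\| \ge \sqrt u\}\,du.
\]

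The key point is that the range of integration $u \in [0,R^2]$ corresponds exactly to $\sqrt u \in [0,\varepsilon c/(1-\delta)]$, which is precisely the regime in which Proposition~\ref{prop:norm-bound} applies with $t = \sqrt u$, giving $\Pr\{\|\bx\| \ge \sqrt u\} \le C(\delta,k)\exp\!\big(-\tfrac{(1-\delta)^2}{2c}u\big)$. (This is exactly why the truncation indicator appears in the statement: it keeps the argument of the tail bound inside its valid range.) Substituting and abbreviating $a \colonequals \tfrac{(1-\delta)^2}{2c} - 1$, which is strictly positive by the hypothesis $c < (1-\delta)^2/2$, we get
\[
    \int_0^{R^2} e^u \Pr\{\|\bx\| \ge \sqrt u\}\,du \le C(\delta,k)\int_0^{R^2} e^{-au}\,du \le C(\delta,k)\int_0^\infty e^{-au}\,du = \frac{C(\delta,k)}{(1-\delta)^2/(2c) - 1}.
\]
Combining with the first display yields the claimed bound.

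I do not expect any genuine obstacle here beyond bookkeeping. The only subtlety is to ensure every invocation of the tail bound keeps its argument within the admissible interval $[0,\varepsilon c/(1-\delta)]$, which the truncation handles automatically, and to use the assumption $c < (1-\delta)^2/2$ to guarantee $a > 0$ so that the final integral converges and produces the stated constant.
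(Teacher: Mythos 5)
Your proof is correct and takes essentially the same route as the paper: the paper applies the layer-cake identity $\EE[Z] = \int_0^\infty \Pr\{Z \ge t\}\,dt$ to $Z = \indicator{\|\bx\| \le \Delta}\exp(\|\bx\|^2)$ and then uses the tail bound, while you write $\exp(\|\bx\|^2) = 1 + \int_0^{\|\bx\|^2} e^u\,du$ and swap order of integration, but after the substitution $t = e^u$ these are literally the same integral. The bookkeeping is handled correctly, in particular the observation that the truncation keeps $\sqrt{u}$ within the admissible range of Proposition~\ref{prop:norm-bound} and that $c < (1-\delta)^2/2$ makes the exponential rate strictly positive.
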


\begin{proof}
    Let $\Delta = \varepsilon c/(1-\delta)$, and integrate the tail bound from Proposition~\ref{prop:norm-bound}:
    \begin{align*}
        \EE \left[\indicator{\|\bx\| \le \Delta}\exp(\|\bx\|^2)\right] &= \int_0^\infty \Pr\{\indicator{\|\bx\| \le \Delta}\exp(\|\bx\|^2) \ge t\} \,dt \\
        &\le 1 + \int_1^{\exp(\Delta^2)} \Pr\{\exp(\|\bx\|^2) \ge t\} \,dt \\
        &= 1 + \int_1^{\exp(\Delta^2)} \Pr\{\|\bx\| \ge \sqrt{\log t}\} \,dt \\
        &\le 1 + C(\delta,k) \int_1^\infty \exp\left(-\frac{1}{2c} (1-\delta)^2 \log t \right) dt \\
        &= 1 + \frac{C(\delta,k)}{(1-\delta)^2/(2c) - 1}.
    \end{align*}
\end{proof}

\subsection{The Wigner Model}

\begin{proof}[Proof of Theorem~\ref{thm:general-wigner}]
    We start with a formula for $\|L^{\le D}\|^2$ from \cite{lowdeg-notes} (adapted slightly for the case of symmetric Gaussian noise):
    \begin{equation}\label{eq:L-wig}
        \|L^{\le D}\|^2 = \Ex_{\bX,\bX'} \exp^{\le D} \left(\frac{\lambda^2 n}{2}\langle \bX,\bX' \rangle\right)
    \end{equation}
    where $\bX'$ is an independent copy of $\bX$ and $\exp^{\le D}(x) = \sum_{d=0}^D \frac{x^d}{d!}$ denotes the Taylor series truncation of $\exp$. Write $\|L^{\le D}\|^2 = L_1 + L_2$ where $L_1$ is the \emph{small deviations} term
    \[
        L_1 \colonequals \Ex_{\bX,\bX'} \indicator{\langle \bX,\bX' \rangle \le \Delta}\exp^{\le D} \left(\frac{\lambda^2 n}{2}\langle \bX,\bX' \rangle\right),
    \]
    and $L_2$ is the \emph{large deviations} term
    \[ 
        L_2 \colonequals \Ex_{\bX,\bX'} \indicator{\langle \bX,\bX' \rangle > \Delta}\exp^{\le D} \left(\frac{\lambda^2 n}{2}\langle \bX,\bX' \rangle\right), \]
    where $\Delta > 0$ is a small constant to be chosen later. Lemmas~\ref{lem:wig-small-dev} and \ref{lem:wig-large-dev}, proved in the following two subsections, show that for some choice of $\Delta$, $L_1$ and $L_2$ are both $O(1)$, completing the proof.
\end{proof}

\subsubsection{Small Deviations}

\begin{lemma}\label{lem:wig-small-dev}
    In the setting of Theorem~\ref{thm:general-wigner}, if $|\lambda| < 1$ then there exists $\Delta > 0$ such that $L_1 = O(1)$ for any $D$.
\end{lemma}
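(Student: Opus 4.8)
The plan is to drop the Taylor truncation immediately and then reduce the whole statement to the subgaussian toolkit developed above. Recall from Definition~\ref{def:pi} that $\bX = \tfrac1n\bU\bU^\top$ with $\bU$ having i.i.d.\ rows $\bu_1,\dots,\bu_n\sim\pi$, and likewise $\bX' = \tfrac1n\bU'\bU'^\top$. Then $\langle\bX,\bX'\rangle = \tfrac1{n^2}\|\bR\|_\F^2$ where $\bR \colonequals \bU^\top\bU' = \sum_{a=1}^n\bu_a(\bu'_a)^\top \in \RR^{k\times k}$, so in particular $\langle\bX,\bX'\rangle \ge 0$ and $\exp^{\le D}(t)\le\exp(t)$ at $t = \tfrac{\lambda^2 n}{2}\langle\bX,\bX'\rangle$. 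Identifying $\RR^{k\times k}$ with $\RR^{k^2}$ and setting $\bx \colonequals \tfrac{|\lambda|}{\sqrt{2n}}\bR$, so that $\|\bx\|^2 = \tfrac{\lambda^2 n}{2}\langle\bX,\bX'\rangle$, this gives
\[
    L_1 \;\le\; \EE\!\left[\indicator{\|\bx\|\le|\lambda|\sqrt{\Delta/2}\,\sqrt n}\,\exp(\|\bx\|^2)\right],
\]
which is exactly the shape handled by Proposition~\ref{prop:exp-norm-bound}, provided we know $\bx$ is $\eps$-local $c$-subgaussian for a good constant $c$ and a radius $\eps$ that is at least $\Theta(\sqrt n)$.

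The core estimate is a local subgaussian bound for a single summand $\bu(\bu')^\top$, where $\bu,\bu'$ are independent draws from $\pi$, regarded as a vector in $\RR^{k^2}$. For a test matrix $V$ with $\|V\|_\F = \|v\|\le\eps_0$ we have $\langle v,\bu(\bu')^\top\rangle = \bu^\top V\bu'$, which has mean zero (since $\EE[\pi]=0$ and $\bu$, $\bu'$ are independent) and satisfies $|\bu^\top V\bu'|\le M^2\|V\|_\F = M^2\|v\|$, where $M \colonequals \sup\{\|u\|:u\in\mathrm{supp}(\pi)\}$. Expanding $\exp(\bu^\top V\bu')$: the linear term vanishes; the quadratic term has expectation $\Tr(\cov(\pi)\,V\,\cov(\pi)\,V^\top)\le\|\cov(\pi)\|^2\|V\|_\F^2 = \|v\|^2$, using the normalization $\|\cov(\pi)\|=1$; and the $m$-th term ($m\ge 3$) is at most $M^{2(m-2)}\|v\|^m/m!$ in absolute value, so for $\eps_0=\eps_0(M)$ small enough these contribute at most $\delta'\|v\|^2$ for any prescribed $\delta'>0$. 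Hence $\EE\exp(\bu^\top V\bu')\le\exp\!\big(\tfrac12(1+2\delta')\|v\|^2\big)$, i.e.\ $\bu(\bu')^\top$ is $\eps_0$-local $(1+2\delta')$-subgaussian. Since the $n$ summands of $\bR$ are i.i.d., Fact~\ref{fac:subg} shows $\bR$ is $\eps_0$-local $(1+2\delta')n$-subgaussian, and therefore $\bx = \tfrac{|\lambda|}{\sqrt{2n}}\bR$ is $\tfrac{\sqrt{2n}\,\eps_0}{|\lambda|}$-local $\tfrac{\lambda^2}{2}(1+2\delta')$-subgaussian.

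To conclude, set $c \colonequals \tfrac{\lambda^2}{2}(1+2\delta')$; since $|\lambda|<1$ we may fix $\delta'>0$ and $\delta>0$ small enough that $c<\tfrac{(1-\delta)^2}{2}$. Proposition~\ref{prop:exp-norm-bound}, applied in $\RR^{k^2}$, then gives $\EE[\indicator{\|\bx\|\le\eps c/(1-\delta)}\exp(\|\bx\|^2)] = O(1)$, where the cutoff is $\eps c/(1-\delta) = \kappa\sqrt n$ for a positive constant $\kappa$ depending only on $\lambda,\delta',\delta$, and $M$. Choosing $\Delta>0$ small enough that $|\lambda|\sqrt{\Delta/2}\le\kappa$ makes the cutoff in the displayed bound for $L_1$ no larger than $\kappa\sqrt n$, yielding $L_1=O(1)$ for that $\Delta$ and every $D$, as required.

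I expect the only genuine difficulty to be the local subgaussian estimate in the second paragraph. The crude bound $|\bu^\top V\bu'|\le M^2\|v\|$ on its own produces a subgaussian constant of order $M^4$, which is useless — it would prove the lemma only under the much stronger hypothesis $|\lambda|<1/M^2$. The essential point is to expand to second order and use $\|\cov(\pi)\|=1$ to see the quadratic coefficient is $\le 1$ rather than $\sim M^2$; this is precisely what makes the threshold $|\lambda|<1$ emerge, and it is also why the bound is only \emph{local} (valid for $\|v\|\le\eps_0$), the cubic remainder being controlled only in that range. A secondary point requiring a little care is that, after the $\tfrac{|\lambda|}{\sqrt{2n}}$ rescaling, the locality radius grows like $\sqrt n$, so that the event $\{\langle\bX,\bX'\rangle\le\Delta\}$ for a suitably small constant $\Delta$ lies inside the range where Proposition~\ref{prop:exp-norm-bound} applies.
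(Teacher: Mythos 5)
Your proof is correct and follows essentially the same route as the paper: drop the truncation using nonnegativity of the overlap, write $\bR$ as a sum of i.i.d.\ rank-one terms, establish a local subgaussian estimate for a single term, upgrade via Fact~\ref{fac:subg}, and close with Proposition~\ref{prop:exp-norm-bound}. The one place you diverge is how the per-term local subgaussian constant $(1+O(\delta'))$ is obtained. The paper notes that since $\pi$ has bounded support the MGF $M(T)=\EE\exp\langle T,\bu(\bu')^\top\rangle$ exists and is smooth, with $\nabla M(0)=0$ and $\mathrm{Hess}\,M(0)=\cov(\pi)^{\otimes 2}\preceq \id_{k^2}$, and then appeals to continuity of the Hessian to get $M(T)\le\exp(\tfrac12(1+\eta)\|T\|_\F^2)$ for $\|T\|_\F\le\eps$. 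You instead expand $\EE\exp(\bu^\top V\bu')$ term-by-term, identify the quadratic coefficient as $\tfrac12\Tr(\cov(\pi)V\cov(\pi)V^\top)\le\tfrac12\|V\|_\F^2$, and control the $m\ge 3$ tail using the pointwise bound $|\bu^\top V\bu'|\le M^2\|V\|_\F$ together with the second-moment bound (your $M^{2(m-2)}\|v\|^m/m!$ bound is exactly $|\EE[x^2\cdot x^{m-2}]|\le\|v\|^2(M^2\|v\|)^{m-2}$, which is correct). These are two phrasings of the same estimate: the Hessian-continuity argument is shorter, while your explicit Taylor expansion makes quantitative the dependence of $\eps_0$ on $M$ and $\delta'$. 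Either way the argument pinpoints, as you note, why the threshold is $|\lambda|<1$ rather than $|\lambda|<1/M^2$: the quadratic coefficient is governed by $\|\cov(\pi)\|$, not the support radius.
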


\begin{proof}
    Note that
    \begin{equation}
        \label{eq:XXR}
        \langle \bX,\bX' \rangle = \frac{1}{n^2} \langle \bU \bU^\top, \bU'(\bU')^\top \rangle = \frac{1}{n^2} \|\bR\|_\F^2
    \end{equation}
    where $\bR = \bU^\top \bU'$. In particular, $\langle \bX,\bX' \rangle \ge 0$. Since $\exp^{\le D}(x) \le \exp(x)$ for all $x \ge 0$,
    \begin{equation}
        \label{eq:L1-wig}
        L_1 \le \EE \,\indicator{\|\bR\|_\F^2 \le \Delta n^2} \exp\left(\frac{\lambda^2}{2n}\|\bR\|_\F^2 \right).
    \end{equation}
    We have $\bR = \sum_{i=1}^n \bR_i$ where the $\bR_i$ are independent $k \times k$ matrices, each distributed as $\bpi (\bpi')^\top$. Since $\bpi$ has bounded support, the moment-generating function $M(T) = \EE \exp(\langle T, \bR_i \rangle)$ exists in a neighborhood of $T = 0$ (in fact, it exists everywhere) and thus, by the defining property of the MGF, has gradient $\nabla M(0) = \EE[\bR_i] = 0$ and Hessian $(\mathrm{Hess}\, M)(0) = \cov(\bR_i) = \cov(\bpi)^{\otimes 2} \preceq \id_{k^2}$. Thus for any $\eta > 0$ there exists $\varepsilon > 0$ such that
    \[ 
        M(T) \le \exp\left(\frac{1}{2} (1+\eta) \|T\|_\F^2\right) \qquad \text{for all } \|T\|_\F \le \varepsilon. 
    \]
    In other words, $\bR_i$ is $\varepsilon$-local $(1+\eta)$-subgaussian. From Fact~\ref{fac:subg}, this implies $\bR$ is $\varepsilon$-local $(1+\eta)n$-subgaussian, and $\lambda \bR/\sqrt{2n}$ is $\varepsilon \sqrt{2n}/\lambda$-local $(1+\eta)\lambda^2/2$-subgaussian. Since $\lambda^2 < 1$, we can choose $\delta > 0$ and $\eta > 0$ such that $(1+\eta)\lambda^2 < (1-\delta)^2$. Letting $\Delta = \big(\varepsilon (1+\eta) / (1-\delta)\big)^2$ and using Proposition~\ref{prop:exp-norm-bound},
    \[ 
        L_1 \le \EE \,\indicator{\|\lambda \bR/\sqrt{2n}\|_\F \le \lambda\sqrt{\Delta n/2}} \exp\left(\|\lambda \bR/\sqrt{2n}\|_\F^2 \right) = O(1). \qedhere
    \]
\end{proof}

\subsubsection{Large Deviations}

\begin{lemma}\label{lem:wig-large-dev}
    In the setting of Theorem~\ref{thm:general-wigner}, for any constants $\lambda \in \RR$ and $\Delta > 0$, and for any $D = o(n/\log n)$, we have $L_2 = o(1)$.
\end{lemma}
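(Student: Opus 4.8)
The plan is to exploit the hypothesis $D = o(n/\log n)$ by bounding the truncated exponential much more tightly than via $\exp^{\le D}(x)\le \exp(x)$, which is what was used for $L_1$. On the event $\{\langle\bX,\bX'\rangle>\Delta\}$ the argument $x\colonequals \frac{\lambda^2 n}{2}\langle \bX,\bX'\rangle$ is of order $n$, so $\exp(x)$ already grows exponentially in $n$ and cannot be absorbed by the probability of the event. The key observation is that $x = \Theta(n)$ while $D = o(n)$, so $x\ge D$ once $n$ is large; in that regime the terms $x^d/d!$ of $\exp^{\le D}(x)=\sum_{d=0}^D x^d/d!$ are nondecreasing in $d$, whence $\exp^{\le D}(x)\le (D+1)\,x^D/D!$. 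With $x = \Theta(n)$ and $D\ge 1$, this prefactor is at most $(\Theta(n)/D)^D = \exp\bigl(D\log\Theta(n/D)\bigr) = \exp(o(n))$ exactly because $D = o(n/\log n)$. So the surviving quantity is $\exp(o(n))$ and will be killed by an exponentially small bound on $\Pr\{\langle\bX,\bX'\rangle>\Delta\}$.

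Carrying this out, I would first record a deterministic bound. Writing (as in~\eqref{eq:XXR}) $\langle \bX,\bX'\rangle = \tfrac1{n^2}\|\bR\|_\F^2$ with $\bR = \bU^\top\bU' = \sum_{i=1}^n\bpi_i(\bpi_i')^\top$, and letting $B$ bound the support of $\pi$, each summand satisfies $\|\bpi_i(\bpi_i')^\top\|_\F = \|\bpi_i\|\,\|\bpi_i'\|\le B^2$, so $\|\bR\|_\F\le nB^2$ and hence $\langle\bX,\bX'\rangle\le B^4$ always. Combined with the truncated-exponential estimate above, for $n$ large this gives $L_2 \le (D+1)\,\frac{(\lambda^2 B^4 n/2)^D}{D!}\,\Pr\{\langle\bX,\bX'\rangle>\Delta\}$. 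Second, I would bound $\Pr\{\langle\bX,\bX'\rangle>\Delta\} = \Pr\{\|\bR\|_\F>\sqrt\Delta\,n\}$ by $e^{-c_\Delta n}$ for some constant $c_\Delta>0$: the function $(\bpi_1,\dots,\bpi_n,\bpi_1',\dots,\bpi_n')\mapsto\|\bR\|_\F$ has the bounded-differences property (changing one argument replaces a single summand $\bpi_i(\bpi_i')^\top$, of Frobenius norm $\le B^2$, so $\|\bR\|_\F$ moves by $\le 2B^2$), and $\EE\|\bR\|_\F\le(\EE\|\bR\|_\F^2)^{1/2} = \bigl(n\,(\EE\|\pi\|^2)^2\bigr)^{1/2} = O(\sqrt n)$, since the cross terms of $\EE\|\bR\|_\F^2 = \sum_{i,j}\EE[(\bpi_i^\top\bpi_j)(\bpi_i'^\top\bpi_j')]$ vanish by independence and $\EE\pi = 0$; so McDiarmid's inequality applied at a deviation of order $\sqrt\Delta\,n$ from the mean yields the claim. (Alternatively, $\bR$ is $\varepsilon$-local $O(n)$-subgaussian by Fact~\ref{fac:subg}, and Proposition~\ref{prop:norm-bound} applied at a threshold proportional to $n$ gives the same exponential decay.)

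Finally, I would combine the two bounds. Taking logarithms and using $\log D!\ge D\log D - D$, $\log L_2 \le \log(D+1) + D\log\!\frac{\lambda^2 B^4 e\,n}{2D} - c_\Delta n \le \log(D+1) + D\log(C_0 n) - c_\Delta n$ for a suitable constant $C_0$ (using $D\ge 1$). Since $D = o(n/\log n)$, both $\log(D+1)$ and $D\log(C_0 n) = D\log n + D\log C_0$ are $o(n)$, so $\log L_2\le -c_\Delta n + o(n)\to-\infty$ and $L_2 = o(1)$. The degenerate cases $\lambda = 0$ or $D = 0$ need no separate argument beyond noting that then $\exp^{\le D}\le 1$ on the relevant event, so $L_2\le\Pr\{\langle\bX,\bX'\rangle>\Delta\} = o(1)$ by the second step.

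The main obstacle is purely quantitative and lies in this last balancing act: the prefactor $(\lambda^2 B^4 n/2)^D/D!$ is super-polynomially large in $n$, so nothing weaker than genuinely exponential decay $e^{-c_\Delta n}$ (with a constant rate) for $\Pr\{\langle\bX,\bX'\rangle>\Delta\}$ will do, and simultaneously the prefactor must be only $e^{o(n)}$, which is precisely the role of the hypothesis $D = o(n/\log n)$. A secondary subtlety is that this exponential tail must hold for every constant $\Delta>0$, including small $\Delta$; for large $\Delta$ the cruder bound $\exp^{\le D}\le\exp$ together with $\langle\bX,\bX'\rangle\le B^4$ would already suffice, but in general one must use the boundedness of $\pi$ (not merely its subgaussianity) to obtain the bounded-differences estimate and hence the uniform exponential tail.
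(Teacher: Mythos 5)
Your argument is correct and follows essentially the same strategy as the paper: bound the truncated exponential deterministically by a quantity of size $\exp(o(n))$ on the large-deviation event (using $D = o(n/\log n)$), and multiply by an $\exp(-\Omega(n))$ tail bound for $\Pr\{\langle\bX,\bX'\rangle>\Delta\}$. The paper obtains the tail via Proposition~\ref{prop:norm-bound} (the local subgaussian route you mention as an alternative) and uses the cruder bounds $|\langle\bX,\bX'\rangle|\le n^C$ and $\exp^{\le D}(x)\le (D+1)x^D$, but these give the same $\exp(o(n))$ prefactor, so the sharper deterministic bound $\langle\bX,\bX'\rangle\le B^4$ and the $1/D!$ factor in your estimate, while correct, buy nothing extra.
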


\begin{proof}
    Recall from above that $\bR$ is $\varepsilon$-local $(1+\eta)n$-subgaussian. By Proposition~\ref{prop:norm-bound} (taking $t$ to be $n$ times a small constant),
    \begin{equation}
        \label{eq:large-dev-tail}
        \Pr\{ \langle \bX,\bX' \rangle > \Delta \} \le \Pr\{ \|\bR\|_\F > \sqrt{\Delta} n \} = \exp(-\Omega(n)).
    \end{equation}
    The boundedness of $\bpi$ guarantees $|\langle \bX,\bX' \rangle| \le n^C$ for some constant $C > 0$, and so
    \[ 
        L_2 \le \exp(-\Omega(n)) \sum_{d=0}^D \left(\frac{\lambda^2 n}{2} n^C\right)^d \le \exp(-\Omega(n)) (D+1) n^{O(D)}, 
    \]
    which is $o(1)$ provided $D = o(n / \log n)$.
\end{proof}

\subsection{The Wishart Model}\label{sec:wishart}

\begin{proof}[Proof of Theorem~\ref{thm:general-wishart}]
    In Appendix~\ref{app:wishart-LDLR} we derive a formula for $\|L^{\le D}\|^2$ in the general spiked Wishart model. The version we will need here is summarized in Proposition~\ref{prop:wish-facts}. The formula takes the form
    \begin{equation}
        \label{eq:wishart-LDLR}
        \|L^{\le D}\|^2 = \Ex_{\bX,\bX'} \sum_{d=0}^D r_d(\beta \bX,\beta \bX')
    \end{equation}
    for some polynomials $r_d$. As in the Wigner case, we write $\|L^{\le D}\|^2 = L_1 + L_2$ where
    \[ 
        L_1 \colonequals \Ex_{\bX,\bX'} \indicator{\langle \bX,\bX' \rangle \le \Delta} \sum_{d=0}^D r_d(\beta \bX,\beta \bX') \]
    and
    \[ 
        L_2 \colonequals \Ex_{\bX,\bX'} \indicator{\langle \bX,\bX' \rangle > \Delta} \sum_{d=0}^D r_d(\beta \bX,\beta \bX') \]
    for a small constant $\Delta > 0$ to be chosen later. Lemmas~\ref{lem:wish-small-dev} and \ref{lem:wish-large-dev}, proved in the following two subsections, show that for some choice of $\Delta$, $L_1$ and $L_2$ are both $O(1)$, completing the proof.
\end{proof}

\subsubsection{Small Deviations}

Before bounding the small deviations term in Lemma~\ref{lem:wish-small-dev}, we state two deterministic facts that will be useful in the proof.

\begin{proposition}
    \label{prop:det-cycle}
    For $n \times m$ matrices $A$ and $B$,
    \[ 
        \det(\id_n - AA^\top BB^\top) = \det(\id_m - A^\top B B^\top A). 
    \]
    \end{proposition}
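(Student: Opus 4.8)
The plan is to reduce the claim to the classical Sylvester (Weinstein--Aronszajn) determinant identity: for any $X \in \RR^{n \times m}$ and $Y \in \RR^{m \times n}$,
\[
    \det(\id_n - XY) = \det(\id_m - YX).
\]
With that in hand, everything is a matter of choosing the right factorization of the four-fold product $AA^\top BB^\top$ into a product $XY$ with matching shapes.

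Concretely, I would take $X = A$, which is $n \times m$, and $Y = A^\top BB^\top$, which is $m \times n$. Then $XY = AA^\top BB^\top$ and $YX = A^\top BB^\top A$, so a single invocation of the identity above yields exactly $\det(\id_n - AA^\top BB^\top) = \det(\id_m - A^\top BB^\top A)$. The one ``idea'' is to peel off only the leftmost factor $A$, rather than grouping as $(AA^\top)(BB^\top)$, which would keep both sides $n \times n$ and accomplish nothing.

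For completeness I would record the short, standard proof of the Sylvester identity itself: apply the Schur-complement formula $\det \begin{pmatrix} P & Q \\ R & S \end{pmatrix} = \det(P)\det(S - RP^{-1}Q) = \det(S)\det(P - QS^{-1}R)$ to the block matrix $\begin{pmatrix} \id_n & X \\ Y & \id_m \end{pmatrix}$; eliminating the $(1,1)$ block gives $\det(\id_m - YX)$ and eliminating the $(2,2)$ block gives $\det(\id_n - XY)$, and equating the two proves the identity. If one prefers to avoid invertibility bookkeeping, one can instead note that $XY$ and $YX$ have the same nonzero eigenvalues with the same multiplicities, so their characteristic polynomials agree up to a factor of $t^{|n-m|}$; evaluating at $t = 1$ gives the identity. (A density argument from the full-rank case also works.)

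I do not expect any real obstacle: the proposition merely repackages a standard fact, and the only thing to ``see'' is the choice of grouping. The sole point worth an explicit remark is that no invertibility or positivity hypotheses on $A$, $B$, $\id - XY$, or $\id - YX$ are needed --- this is built into either proof of the Sylvester identity --- so the statement holds verbatim regardless of the parameter regime in the surrounding Wishart analysis.
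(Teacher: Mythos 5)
Your proof is correct and rests on the same key idea as the paper's: peel off the single leftmost factor $A$ and cycle it to the right, which is exactly Sylvester's identity $\det(\id_n - XY) = \det(\id_m - YX)$ with $X = A$ and $Y = A^\top B B^\top$. The paper proves that special case by hand from $(\id - AA^\top BB^\top)A = A(\id - A^\top BB^\top A)$ for square nonsingular $A$ and then extends to general $A,B$ by zero-padding plus a limiting argument, whereas you cite the general Sylvester lemma up front, which sidesteps the padding and density bookkeeping but is the same underlying manipulation.
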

\begin{proof}
    If $A,B$ are square and $A$ is nonsingular, this can be shown by taking determinants on both sides of the equation $(I - AA^\top BB^\top)A = A(I - A^\top BB^\top A)$. For the general case, pad $A,B$ with zeros to make them square, and consider a sequence of nonsingular matrices converging to $A$.
\end{proof}

\begin{lemma}
    \label{lem:Delta}
    For any $\eta > 0$ there exists $\varepsilon > 0$ such that for all $0 \le t \le \varepsilon$, we have $(1-t)^{-1} \le \exp((1+\eta)t)$.
    \end{lemma}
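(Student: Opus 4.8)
The plan is to take logarithms and reduce the claim to a one-variable monotonicity statement. Since $(1-t)^{-1}$ and $e^{(1+\eta)t}$ are both positive for $t \in [0,1)$, the desired inequality on an interval $[0,\varepsilon]$ with $\varepsilon < 1$ is equivalent to
\[
    g(t) \colonequals (1+\eta)t + \log(1-t) \ge 0 \qquad \text{for all } t \in [0,\varepsilon],
\]
because exponentiating $g(t) \ge 0$ gives $e^{(1+\eta)t}(1-t) \ge 1$, i.e.\ $(1-t)^{-1} \le e^{(1+\eta)t}$.

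To establish this reduced statement I would argue by monotonicity of $g$ near $0$. One has $g(0) = 0$ and $g'(t) = (1+\eta) - (1-t)^{-1}$ on $[0,1)$, and $g'(t) \ge 0$ holds precisely when $(1-t)^{-1} \le 1+\eta$, that is, when $t \le \tfrac{\eta}{1+\eta}$. Thus taking $\varepsilon \colonequals \tfrac{\eta}{1+\eta} \in (0,1)$ makes $g$ nondecreasing on $[0,\varepsilon]$, so $g(t) \ge g(0) = 0$ there, which is exactly what is needed; this even yields the explicit admissible value $\varepsilon = \eta/(1+\eta)$.

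There is essentially no obstacle here: the only point requiring (a trivial amount of) care is that the passage to $g$ is valid only because $1 - t > 0$ on the relevant interval, which is guaranteed by the choice $\varepsilon < 1$. An alternative, equally painless route is to compare Taylor expansions at $t = 0$: $(1-t)^{-1} = 1 + t + O(t^2)$ while $e^{(1+\eta)t} = 1 + (1+\eta)t + O(t^2)$, so their difference equals $\eta t + O(t^2)$, strictly positive for all small enough $t > 0$; but the monotonicity argument is cleaner and comes with an explicit $\varepsilon$.
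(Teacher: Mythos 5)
Your proof is correct, and it is essentially the same elementary idea as the paper's: the paper compares $f(t) = (1-t)^{-1}$ and $g(t) = \exp((1+\eta)t)$ by noting $f(0)=g(0)=1$ and $f'(0) = 1 < 1+\eta = g'(0)$, which is precisely your ``alternative route.'' Your primary argument (take logs and show the difference is nondecreasing up to $t = \eta/(1+\eta)$) is a slightly more explicit version that supplies a concrete admissible $\varepsilon$ rather than appealing to continuity of the derivatives, but it is the same underlying observation.
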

\begin{proof}
    Letting $f(t) = (1-t)^{-1}$ and $g(t) = \exp((1+\eta)t)$, we have $f(0) = g(0) = 1$ and $f'(0) = 1 < 1+\eta = g'(0)$.
\end{proof}

\noindent We are now ready to bound the small deviations term.

\begin{lemma}
    \label{lem:wish-small-dev}
    In the setting of Theorem~\ref{thm:general-wishart}, if $\beta^2 < \gamma$ then there exists $\Delta > 0$ such that $L_1 = O(1)$ for any $D$.
\end{lemma}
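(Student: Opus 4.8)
The plan is to mimic the Wigner small-deviations argument (Lemma~\ref{lem:wig-small-dev}) as closely as possible, using the explicit Wishart formula~\eqref{eq:wishart-LDLR} in place of~\eqref{eq:L-wig}. The first step is to understand the polynomials $r_d$ appearing in Proposition~\ref{prop:wish-facts}. From the derivation in the appendix, the full (untruncated) sum $\sum_{d \ge 0} r_d(\beta X, \beta X')$ should equal something like $\det(\id_n - \beta^2 \gamma' XX')^{-1/2}$ (up to the precise normalization of $N$ versus $n$), which via Proposition~\ref{prop:det-cycle} can be rewritten in terms of the small matrix $\bR = \bU^\top \bU' \in \RR^{k \times k}$, since $\bX = \frac1n \bU\bU^\top$ and $\bX' = \frac1n \bU'(\bU')^\top$. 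Concretely, $\langle \bX, \bX'\rangle = \frac{1}{n^2}\|\bR\|_\F^2 \ge 0$ exactly as in~\eqref{eq:XXR}, and $\det(\id_n - cXX') = \det(\id_k - \frac{c}{n^2} \bR\bR^\top)$. The first thing I would do is establish the pointwise bound $\sum_{d=0}^D r_d(\beta X, \beta X') \le \det(\id_k - \frac{\beta^2}{n^2 \gamma'}\bR\bR^\top)^{-1/2}$ valid on the event $\langle \bX,\bX'\rangle \le \Delta$ (for $\Delta$ small enough that the matrix inside is positive definite); truncating a power series with nonnegative-enough terms, or a direct term-by-term comparison, should give this, possibly after first checking that the $r_d$ have a suitable sign/monotonicity structure or simply absorbing signs into a crude bound.

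The second step is to control $\det(\id_k - \frac{\beta^2}{n^2\gamma'}\bR\bR^\top)^{-1/2}$ in expectation. Using $\det(\id_k - M)^{-1/2} \le \exp(\frac12 \tr(M)/(1-\|M\|)) \le \exp(\frac{1}{2(1-\delta)}\tr M)$ on the event $\|M\| \le \delta$, or more simply $-\frac12\log\det(\id_k - M) = \frac12\sum_i -\log(1-\lambda_i(M)) \le \frac{1+\eta}{2}\tr(M)$ for $\|M\|$ small via Lemma~\ref{lem:Delta}, I get $L_1 \le \EE\big[\indicator{\|\bR\|_\F^2 \le \Delta n^2}\exp(\frac{(1+\eta)\beta^2}{2n^2\gamma'}\|\bR\|_\F^2)\big]$ — structurally identical to~\eqref{eq:L1-wig} with $\lambda^2$ replaced by $\beta^2/\gamma'$. (Here $\gamma' = \gamma + o(1)$ or $= n/N$; one must be careful whether the effective ratio is $n/N \to \gamma$ and hence whether the relevant condition is precisely $\beta^2 < \gamma$.) The third step is then verbatim the end of Lemma~\ref{lem:wig-small-dev}: $\bR = \sum_{i=1}^n \bR_i$ with $\bR_i \sim \bpi(\bpi')^\top$ independent, bounded, mean zero, with $\cov(\bR_i) = \cov(\bpi)^{\otimes 2} \preceq \id_{k^2}$, so $\bR$ is $\varepsilon$-local $(1+\eta)n$-subgaussian by Fact~\ref{fac:subg}, $\frac{\beta}{\sqrt{2n\gamma'}}\bR$ is $\varepsilon\sqrt{2n\gamma'}/\beta$-local $\frac{(1+\eta)\beta^2}{2\gamma'}$-subgaussian, and since $\beta^2 < \gamma$ we may pick $\delta, \eta$ with $(1+\eta)\beta^2/\gamma' < (1-\delta)^2$; then Proposition~\ref{prop:exp-norm-bound} with $\Delta = (\varepsilon(1+\eta)/(1-\delta))^2$ (adjusted by $\gamma'$) gives $L_1 = O(1)$.

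The main obstacle I anticipate is the first step: pinning down the exact form of the $r_d$ and proving the pointwise domination $\sum_{d=0}^D r_d \le \det(\cdots)^{-1/2}$ on the small-deviations event. The subtlety is that the Wishart low-degree formula is genuinely more complicated than the Wigner one — the $r_d$ are not simply $\frac{1}{d!}(\text{const}\cdot\langle X,X'\rangle)^d$, and a priori they need not have all-nonnegative coefficients in $\bR$, so a naive ``truncation $\le$ full series'' step may fail. I would handle this by working from Proposition~\ref{prop:wish-facts}'s explicit expression: either (a) show the generating function $\sum_d r_d z^d$ is, on the relevant domain, dominated coefficientwise by that of $\det(\id - z\beta^2 XX'/\gamma')^{-1/2}$, using that all eigenvalues of $\frac1{n^2}\bR\bR^\top$ are nonnegative and $\le \Delta$, or (b) bound $|r_d(\beta X,\beta X')|$ crudely by a power of $\langle\bX,\bX'\rangle$ times a combinatorial factor and show $\sum_{d=0}^D$ of these is still $O(1)$ in the small-deviations regime, trading a sharp constant for robustness. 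Approach (b) is less elegant but more likely to go through without delicate sign analysis; it may, however, cost the sharp threshold $\beta^2 < \gamma$, in which case (a) is needed. A secondary, minor, obstacle is bookkeeping the distinction between $n$ and $N$ (and the exact role of $\gamma$ vs.\ $n/N$) throughout, which I would resolve by carrying an explicit $\gamma' \colonequals n/N$ and only passing to the limit $\gamma' \to \gamma$ at the very end.
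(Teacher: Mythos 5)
Your proposal follows essentially the same route as the paper's proof: pass from $\sum_{d \le D} r_d$ to the full determinant generating function, rewrite $\det(\id_n - \beta^2 XX')$ as a $k \times k$ determinant via Proposition~\ref{prop:det-cycle} and $\bR = \bU^\top\bU'$, bound $(1-t)^{-1} \le \exp((1+\eta)t)$ for small $t$ (Lemma~\ref{lem:Delta}), and reduce to the Wigner small-deviations bound~\eqref{eq:L1-wig} with $\lambda^2$ replaced by $(1+\eta)\beta^2 N/n$. That skeleton is correct.

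However, the obstacle you flag as the main one --- whether the truncated sum $\sum_{d \le D} r_d$ is dominated by the full sum --- is not actually an obstacle, and you should not need fallback (b) or worry about losing the threshold. Proposition~\ref{prop:wish-facts}(c) (respectively (d) for $\beta<0$) gives an explicit representation
\[
r_d(\beta X,\beta X') = \sum_{\substack{d_1,\dots,d_N \in 2\NN \\ \sum d_i = d}}\prod_{i=1}^N \frac{1}{d_i!}\,\Ex_{\substack{\bx \sim \sN(0,\beta X)\\ \bx' \sim \sN(0,\beta X')}}\langle\bx,\bx'\rangle^{d_i},
\]
valid whenever $X,X' \succeq 0$ (which holds here because $\bX = \frac{1}{n}\bU\bU^\top$). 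Each inner expectation is an even Gaussian moment, hence nonnegative, so $r_d(\beta\bX,\beta\bX') \ge 0$ termwise and the truncation inequality $\sum_{d \le D} r_d \le \sum_{d\ge0} r_d$ is immediate. This is exactly your route (a), already established by Proposition~\ref{prop:wish-facts}; no coefficientwise domination argument and no crude bound is needed.

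Two smaller corrections. First, the generating function is $\sum_{d\ge0} r_d(X,X') = \det(\id_n - XX')^{-N/2}$ (exponent $-N/2$, no $\gamma'$ inside the determinant); the factor $N/n \to 1/\gamma$ emerges only after taking logarithms of the product of eigenvalue factors, not by rescaling inside the determinant, and the exact constant $(1+\eta)\beta^2 N/n$ in the resulting Wigner-type exponent is what makes the condition $\beta^2 < \gamma$ come out sharp. Second, the prior $\sX$ as used in Definition~\ref{def:general-wishart-model} thresholds $\widetilde\bX$ to $0$ on the event $\beta\widetilde\bX \not\succ -\id_n$; you should account for this case, which the paper handles by writing $L_1 \le 1 + \EE\,\indicator{\|\bR\|_\F^2 \le \Delta n^2}\det(\id_n - \beta^2\widetilde\bX\widetilde\bX')^{-N/2}$, the additive $1$ covering the (low-probability) thresholded event. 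With these two points fixed, your argument is the paper's.
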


\begin{proof}
    Since $\bX \succeq 0$, we have from Proposition~\ref{prop:wish-facts}(c,d) that $r_d(\beta \bX,\beta \bX') \ge 0$. Analogous to~\eqref{eq:XXR}, we have either $\langle \bX,\bX' \rangle = 0$ or $\langle \bX,\bX' \rangle = \frac{1}{n^2} \|\bR\|_\F^2$. 
    Recall from Definition~\ref{def:general-wishart-model} that, when drawing from $\PP$, we first draw $\widetilde \bX \sim \sX$, and then threshold to form $\bX$ as:
    \begin{equation}
    \label{eq:wishart-cases-2}
        \bX = \begin{cases}
            \widetilde \bX & \text{if }\beta \widetilde \bX \succ -\id_n, \\
            0 & \text{else}.
        \end{cases}
    \end{equation}
    
    \noindent We can upper bound $L_1$ by dropping the low-degree truncation and using Proposition~\ref{prop:wish-facts}(b):
    \begin{align*}
        L_1 &\le \EE_{\bX,\bX'} \,\indicator{\langle \bX,\bX' \rangle \le \Delta} \sum_{d=0}^\infty r_d(\beta \bX,\beta \bX') \\
        &= \EE_{\bX,\bX'} \,\indicator{\langle \bX,\bX' \rangle \le \Delta} \det(\id_n - \beta^2 \bX\bX')^{-N/2} \\
        &\le 1 + \EE \,\indicator{\|\bR\|_\F^2 \le \Delta n^2} \det(\id_n - \beta^2 \widetilde \bX \widetilde \bX')^{-N/2}
    \end{align*}
    where the $+1$ in the last line covers the second case of~\eqref{eq:wishart-cases-2}. 
    Let $\{\blambda_i\}$ be the eigenvalues of $\bR\bR^\top$ (which are nonnegative). Let $\eta > 0$ (to be chosen later), take $\varepsilon$ according to Lemma~\ref{lem:Delta}, and choose $\Delta \le \varepsilon/\beta^2$. Note that $\|\bR\|_\F^2 = \sum_i \blambda_i$. Provided $\|\bR\|_\F^2 \le \Delta n^2$, we have $\blambda_i \le \Delta n^2$ for all $i$, i.e., $\beta^2 n^{-2} \blambda_i \le \varepsilon$, and so
    \begin{align*}
        \det(\id_n - \beta^2 \widetilde \bX \widetilde \bX')^{-N/2} &= \det(\id_n - \beta^2 n^{-2} \bU\bU^\top \bU' \bU'^\top)^{-N/2} \\
        &= \det(\id_n - \beta^2 n^{-2} \bU^\top \bU' \bU'^\top \bU)^{-N/2}  \tag{using Proposition~\ref{prop:det-cycle}}\\
        &= \det(\id_n - \beta^2 n^{-2} \bR\bR^\top)^{-N/2} \\
        &= \prod_i (1 - \beta^2 n^{-2} \blambda_i)^{-N/2} \\
        &\le \prod_i \exp((1+\eta)\beta^2 n^{-2} \blambda_i)^{N/2} \tag{using Lemma~\ref{lem:Delta}} \\
        &= \exp\left((1+\eta)\frac{\beta^2 N}{2n^2} \sum_i \blambda_i \right) \\
        &= \exp\left((1+\eta)\frac{\beta^2 N}{2n^2} \|\bR\|_\F^2 \right).
    \end{align*}
    We now have
    \[ 
        L_1 \le 1 + \EE \,\indicator{\|\bR\|_\F^2 \le \Delta n^2} \exp\left((1+\eta)\frac{\beta^2 N}{2n^2} \|\bR\|_\F^2 \right). 
    \]
    Comparing this to~\eqref{eq:L1-wig}, we see that we have reduced to the case of Wigner small deviations. In place of $\lambda^2$, we have $(1+\eta) \frac{\beta^2 N}{n} \to (1+\eta) \frac{\beta^2}{\gamma}$. Thus, provided $\beta^2 < \gamma$, we can choose $\eta > 0$ and $\Delta > 0$ small enough so that $L_1 = O(1).$
\end{proof}

\subsubsection{Large Deviations}

\begin{lemma}
    \label{lem:wish-large-dev}
    In the setting of Theorem~\ref{thm:general-wishart}, for any constants $\beta > -1$, $\gamma > 0$, and $\Delta > 0$, and for any $D = o(n/\log n)$, we have $L_2 = o(1)$.
\end{lemma}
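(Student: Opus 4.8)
The plan is to follow the Wigner large–deviations argument (Lemma~\ref{lem:wig-large-dev}) essentially verbatim. The event $\{\langle\bX,\bX'\rangle > \Delta\}$ is a large deviation, occurring with probability only $\exp(-\Omega(n))$; on this event the integrand $\sum_{d=0}^D r_d(\beta\bX,\beta\bX')$ is bounded deterministically by $(D+1)\,n^{O(D)}$; and the product of these two bounds is $o(1)$ precisely when $D\log n = o(n)$.

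First I would bound the probability of the rare event. Because $\langle\bX,\bX'\rangle > \Delta > 0$ forces $\bX\neq 0$ and $\bX'\neq 0$, on this event both matrices lie in the first case of~\eqref{eq:wishart-cases-2}, so $\bX = \tfrac{1}{n}\bU\bU^\top$ and $\bX' = \tfrac{1}{n}\bU'(\bU')^\top$, and $\langle\bX,\bX'\rangle = \tfrac{1}{n^2}\|\bR\|_\F^2$ with $\bR = \bU^\top\bU'$. Exactly as in the proof of Lemma~\ref{lem:wig-small-dev}, $\bR = \sum_{i=1}^n \bR_i$ is a sum of $n$ independent, mean-zero, boundedly-supported $k\times k$ matrices (boundedness coming from the bounded support of $\bpi$), hence $\varepsilon$-local $(1+\eta)n$-subgaussian by Fact~\ref{fac:subg}. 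Applying Proposition~\ref{prop:norm-bound} with $t$ equal to $n$ times a sufficiently small constant gives
\[
    \Pr\{\langle\bX,\bX'\rangle > \Delta\} \le \Pr\{\|\bR\|_\F > \sqrt{\Delta}\,n\} = \exp(-\Omega(n)),
\]
for any fixed $\Delta > 0$.

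Next I would establish the deterministic bound on the integrand. The bounded support of $\bpi$ gives $\|\bX\|,\|\bX'\| = O(1)$ deterministically (indeed $\|\bX\| = \tfrac{1}{n}\|\bU\|^2 \le \tfrac{1}{n}\,\Tr(\bU^\top\bU) = O(1)$), so all scalar quantities built from $\beta$, $N$, and the entries of $\bX\bX'$ that appear in the polynomials $r_d$ are bounded by $n^{O(1)}$ — in particular the $N = \Theta(n)$ factors contribute at most $N^{O(d)}$ to $r_d$. From the explicit form of $r_d$ recorded in Proposition~\ref{prop:wish-facts} (derived in Appendix~\ref{app:wishart-LDLR}), each $r_d$ is a polynomial of degree $O(d)$ with $\mathrm{poly}(n)$-bounded coefficients in these quantities, so $|r_d(\beta\bX,\beta\bX')| \le n^{O(d)}$ deterministically, hence $\sum_{d=0}^D r_d(\beta\bX,\beta\bX') \le (D+1)\,n^{O(D)}$. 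Combining with the tail bound,
\[
    L_2 \le \exp(-\Omega(n))\cdot(D+1)\,n^{O(D)} = \exp\!\big(-\Omega(n) + O(D\log n)\big) = o(1)
\]
whenever $D = o(n/\log n)$.

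The only model-specific ingredient is the term-by-term magnitude bound $|r_d(\beta\bX,\beta\bX')| \le n^{O(d)}$; everything else is inherited from the Wigner analysis and from the boundedness of $\bpi$. I expect this to be the only delicate point, and the care needed is simply to track where factors of $N$ enter the polynomials $r_d$ (these play the role the explicit factor $\tfrac{\lambda^2 n}{2}$ plays in the Wigner exponential) and confirm that they contribute only $N^{O(d)} = n^{O(d)}$ per term — so that no sign or cancellation information about the $r_d$ is needed, and the truncation $D = o(n/\log n)$ is exactly what is required to overcome the $\exp(-\Omega(n))$ gain from the large-deviation estimate.
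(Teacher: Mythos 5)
Your proposal is correct and follows the same structure as the paper's proof: split off the large-deviation event, bound its probability by $\exp(-\Omega(n))$ using the local subgaussianity of $\bR$, bound the integrand by $n^{O(D)}$, and conclude for $D = o(n/\log n)$. The one place where you assert rather than prove is the deterministic bound $|r_d(\beta\bX,\beta\bX')| \le n^{O(d)}$; the paper fills this in concretely by starting from the formula in Proposition~\ref{prop:wish-facts}(c), applying $\EE\langle\bx,\bx'\rangle^{d_i} \le \EE\|\bx\|^{d_i}\cdot\EE\|\bx'\|^{d_i}$, and then bounding $\EE\|\bx\|^m$ via an eigendecomposition of $\bX$ and explicit Gaussian moments, yielding $r_d \le N^D (Dn)^{O(D)}$ — which is the same scaling you reach, so your identification of the $N$-dependence as the key factor is accurate, just not carried through in detail.
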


\begin{proof}
    Consider the case $\beta > 0$ so that $\beta \bX \succeq 0$; the case $\beta < 0$ is handled similarly. We will use the formula for $r_d$ given in Proposition~\ref{prop:wish-facts}(c): 
    \[ 
        r_d(\beta \bX,\beta \bX') = \sum_{\substack{d_1, \dots, d_N \in 2\NN \\ \sum_{i = 1}^N d_i = d}} \prod_{i = 1}^N \frac{1}{d_i!} \Ex_{\substack{\bx \sim \sN( 0, \beta \bX) \\ \bx' \sim \sN( 0, \beta \bX')}} \la \bx, \bx' \ra^{d_i}. 
    \]
    
    \noindent Let $\{\blambda_i,\bv_i\}$ be an eigendecomposition of $\bX$, and let $\bX_i = \blambda_i \bv_i \bv_i^\top$ so that $\bX = \bX_1 + \cdots + \bX_k$. Let $\bx_i \sim \mathcal{N}(0,\beta \bX_i)$ independently so that $\bx \colonequals \sum_{i=1}^k \bx_i \sim \mathcal{N}(0,\beta \bX)$. Similarly define $\bX_i', \bx_i', \bx'$. For fixed $\bX,\bX',\{\bX_i\},\{\bX'_i\}$,
    \[ 
        \EE [\langle \bx,\bx' \rangle^d] 
        \le \EE \|\bx\|^d \|\bx'\|^d = \EE \|\bx\|^d \cdot \EE \|\bx'\|^d. \]
    The boundedness of $\bpi$ guarantees $\sum_i \blambda_i^2 = \|\bX\|_\F^2 \le n^C$ for some constant $C > 0$. 
    Recall that $\bx_i \sim \sN(0, \beta \blambda_i \bv_i\bv_i^{\top})$, so for $\bg$ a standard Gaussian vector, $\bx_i$ has the same law as $\sqrt{\beta \blambda_i} \cdot \langle \bv_i, \bg \rangle \bv_i$.
    Since $\bv_i$ is a unit vector, $\|\bx_i\|$ then has the same law as $\sqrt{\beta \blambda_i} \cdot |\langle \bv_i, \bg \rangle|$, which in turn has the same law as $\sqrt{\beta \blambda_i} \cdot |\bg_1|$.
    So, for $d$ even, $\EE\|\bx_i\|^d = (\beta\blambda_i)^{d/2} (d-1)!! \le \beta^{d/2} n^{Cd/4} d^d$.
    This means
    \[ 
        \EE\|\bx\|^d 
        \le \EE \left(\sum_{i=1}^k \|\bx_i\| \right)^d 
        \le \EE\left(k \max_i \|\bx_i\|\right)^d 
        \le k^d\, \EE \sum_{i=1}^k \|\bx_i\|^d 
        \le k^{d+1} \beta^{d/2} n^{Cd/4} d^d. 
    \]
    Since $\EE\|\bx\|^d = 1$ when $d = 0$, we can rewrite this as $\EE\|\bx\|^d \le k^{2d} \beta^{d/2} n^{Cd/4} d^d \le (dn)^{O(d)}$. Thus for $d_1,\ldots,d_N \in 2\mathbb{N}$ with $\sum_i d_i = d \le D$,
    \[ 
        \prod_{i=1}^N \EE[\langle \bx,\bx' \rangle^{d_i}] \le \prod_{i=1}^N (d_i n)^{O(d_i)} 
        \le (Dn)^{O(D)}. 
    \]
    
    \noindent Now we have
    \[ 
        r_d(\bX,\bX') \le N^D (Dn)^{O(D)} 
    \]
    and so
    \[ 
        L_2 \le \Pr\{\langle \bX,\bX' \rangle > \Delta\} \sum_{d=0}^D N^D (Dn)^{O(D)}. 
    \]
    Similarly to~\eqref{eq:large-dev-tail} we have $\Pr\{\langle \bX,\bX' \rangle > \Delta\} \le \exp(-\Omega(n))$ and so $L_2 = o(1)$ provided $D = o(n/\log n)$.
\end{proof}

\bibliographystyle{alpha}
\bibliography{main}

\appendix

\section{Low-Degree Analysis of General Wishart Models}\label{app:wishart-LDLR}

In this section we derive the formula~\eqref{eq:wishart-LDLR} for $\|L^{\le D}\|^2$ in the general spiked Wishart model.

\subsection{Hermite Polynomial Facts}

We first define and give the key facts that we will use about the Hermite polynomials, the orthogonal polynomials with respect to the standard Gaussian measure.

\begin{definition}[Hermite polynomials]
    \label{def:hermite-polynomials}
    The \emph{univariate Hermite polynomials} are the sequence of polynomials $h_k(y) \in \RR[y]$ for $k \in \NN$, defined by the recursion
    \begin{align}
        h_0(y) &= 1, \\
        h_{k + 1}(y) &= yh_k(y) -  h_{k}^{\prime}(y).
    \end{align}
    The \emph{$n$-variate Hermite polynomials} are the polynomials $H_{\alpha}(y) \in \RR[y_1, \dots, y_n]$ indexed by $\alpha \in \NN^n$ and $H_{\alpha}(y) = \prod_{i = 1}^n h_{\alpha_i}(y_i)$.
    Finally, the \emph{normalized $n$-variate Hermite polynomials} are $\widehat{H}_{\alpha}(y) = (\alpha!)^{-1/2} H_{\alpha}(y)$, where we abbreviate $\alpha! = \prod_{i = 1}^n \alpha_i!$.
\end{definition}

The main and defining property of the Hermite polynomials is their orthogonality, which we record below.
\begin{proposition}[Orthogonality under Gaussian measure]
    For any $\alpha, \beta \in \NN^n$,
    \begin{equation}
        \Ex_{\by \sim \sN(0, I)}[\widehat{H}_{\alpha}(\by) \widehat{H}_{\beta}(\by)] = \delta_{\alpha \beta}.
    \end{equation}
\end{proposition}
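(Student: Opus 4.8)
The plan is to reduce the multivariate statement to the univariate one using the product structure, and then prove the univariate orthogonality via the classical Rodrigues formula for the probabilists' Hermite polynomials together with integration by parts.

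First I would observe that $\sN(0, I)$ on $\RR^n$ is a product of $n$ copies of $\sN(0,1)$, and that $\widehat{H}_\alpha(y) = \prod_{i=1}^n \widehat{h}_{\alpha_i}(y_i)$ where $\widehat{h}_k \colonequals (k!)^{-1/2} h_k$, so the expectation factorizes as $\Ex_{\by \sim \sN(0,I)}[\widehat{H}_\alpha(\by)\widehat{H}_\beta(\by)] = \prod_{i=1}^n \Ex_{\by_i \sim \sN(0,1)}[\widehat{h}_{\alpha_i}(\by_i)\widehat{h}_{\beta_i}(\by_i)]$. Hence it suffices to show $\Ex_{y \sim \sN(0,1)}[h_j(y) h_k(y)] = j!\,\delta_{jk}$: dividing by $\sqrt{j!\,k!}$ gives $\delta_{jk}$ for the normalized univariate polynomials, and the product over coordinates vanishes as soon as some $\alpha_i \ne \beta_i$, giving exactly $\delta_{\alpha\beta}$.

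Next I would record two elementary consequences of the recursion $h_0 = 1$, $h_{k+1}(y) = y h_k(y) - h_k'(y)$, each by a one-line induction. (i) $h_k$ is monic of degree exactly $k$, since $y h_k$ contributes the leading term while $h_k'$ has strictly smaller degree; consequently $h_k^{(k)} = k!$ and $h_j^{(k)} = 0$ whenever $j < k$. (ii) The Rodrigues formula
\[
    h_k(y)\,\gamma(y) = (-1)^k \gamma^{(k)}(y), \qquad \gamma(y) \colonequals \tfrac{1}{\sqrt{2\pi}} e^{-y^2/2},
\]
holds: the base case is immediate, and using $\gamma'(y) = -y\gamma(y)$ one differentiates $\gamma^{(k)} = (-1)^k h_k \gamma$ to get $\gamma^{(k+1)} = (-1)^k (h_k' - y h_k)\gamma = (-1)^{k+1}(y h_k - h_k')\gamma = (-1)^{k+1} h_{k+1}\gamma$.

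Finally, assuming without loss of generality $j \le k$, I would compute
\[
    \Ex_{y \sim \sN(0,1)}[h_j(y) h_k(y)] = \int_\RR h_j(y)\, h_k(y)\,\gamma(y)\, dy = (-1)^k \int_\RR h_j(y)\, \gamma^{(k)}(y)\, dy = \int_\RR h_j^{(k)}(y)\, \gamma(y)\, dy,
\]
where the second equality is (ii) and the third is $k$-fold integration by parts; all boundary terms vanish because each $\gamma^{(m)}$ is a polynomial times a Gaussian and so decays at $\pm\infty$. By (i) the last integral is $0$ when $j < k$ and equals $k!\int_\RR \gamma = k!$ when $j = k$, i.e.\ it is $j!\,\delta_{jk}$, as desired. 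I do not anticipate a genuine obstacle here: this is a textbook fact, and the only points requiring any care are justifying the vanishing of the boundary terms (handled by the Gaussian decay of $\gamma^{(m)}$) and keeping the signs and factorials straight in the induction.
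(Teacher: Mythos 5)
Your proof is correct and complete; the paper itself states this proposition as a standard fact without proof, so there is no paper argument to compare against. Your route — reduce to the univariate case by product structure, establish the Rodrigues formula $h_k\gamma = (-1)^k\gamma^{(k)}$ by induction from the defining recursion, then use $k$-fold integration by parts together with the fact that $h_k$ is monic of degree $k$ — is the classical textbook proof, and every step is carried out correctly: the factorization of the multivariate expectation is valid since both $\sN(0,I)$ and $\widehat H_\alpha$ factor over coordinates; the two inductive claims (monicity of degree $k$, and the Rodrigues formula) follow cleanly from $h_{k+1} = yh_k - h_k'$; and the boundary terms in the integration by parts do vanish since each $\gamma^{(m)}$ is a polynomial times a Gaussian. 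The final bookkeeping $\Ex[h_j h_k] = k!\,\delta_{jk}$ and the division by $\sqrt{j!\,k!}$ to normalize are right.
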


Beyond this, the key additional tool for our analysis is a generalization of the following fact from the ``umbral calculus'' of Hermite polynomials (a proof will be subsumed in our more general result below).
\begin{proposition}[Mismatched Variance Formula]
\label{prop:mismatched-variance-univar}
    Let $x > -1$.
    Then,
    \begin{equation}
        \Ex_{\by \sim \sN(0, 1 + x)}h_k(\by) = 
        \begin{cases} 
            (k-1)!! \cdot x^{k/2} & k \text{ even} \\
            0 & k \text{ odd}
        \end{cases}
    \end{equation}
\end{proposition}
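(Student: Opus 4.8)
The plan is to establish the "mismatched variance formula" $\Ex_{\by \sim \sN(0,1+x)} h_k(\by)$ via a generating function argument, which is the cleanest route and also naturally subsumes the multivariate generalization that the appendix needs. Recall that the univariate Hermite polynomials defined by $h_0 = 1$, $h_{k+1}(y) = y h_k(y) - h_k'(y)$ (the "probabilist's" normalization) have the exponential generating function $\sum_{k \ge 0} \frac{t^k}{k!} h_k(y) = \exp\left(ty - \tfrac{t^2}{2}\right)$. So first I would record (or quickly verify from the recursion, by checking that the right-hand side satisfies $\partial_t F = (y - t) F$ with $F(0,y) = 1$, which matches the recursion term-by-term) this generating identity.

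Next, I would compute $\Ex_{\by \sim \sN(0, 1+x)} \exp(t\by - t^2/2)$ directly: since for a Gaussian $\by$ with variance $\sigma^2 = 1+x$ we have $\Ex \exp(t\by) = \exp\left(\tfrac{1}{2}(1+x) t^2\right)$, the expectation of the generating function is $\exp\left(\tfrac{1}{2}(1+x)t^2 - \tfrac{1}{2} t^2\right) = \exp\left(\tfrac{x}{2} t^2\right)$. Now I would expand both sides in powers of $t$ and match coefficients: the left side gives $\sum_k \frac{t^k}{k!}\, \Ex\, h_k(\by)$, while the right side is $\sum_{j \ge 0} \frac{1}{j!} \left(\tfrac{x}{2}\right)^j t^{2j} = \sum_{j} \frac{x^j}{2^j j!} t^{2j}$. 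Comparing the coefficient of $t^k$: if $k$ is odd, there is no term, so $\Ex\, h_k(\by) = 0$; if $k = 2j$ is even, then $\frac{1}{(2j)!}\Ex\, h_{2j}(\by) = \frac{x^j}{2^j j!}$, hence $\Ex\, h_{2j}(\by) = \frac{(2j)!}{2^j j!} x^j = (2j-1)!!\, x^j$, using the standard identity $(2j-1)!! = \frac{(2j)!}{2^j j!}$. With $k = 2j$ this reads $(k-1)!!\, x^{k/2}$, which is exactly the claimed formula.

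The only thing to be careful about is the convergence/interchange of expectation and the infinite sum when passing $\Ex$ inside $\sum_k \frac{t^k}{k!} h_k(\by)$; since $x > -1$ guarantees $1 + x > 0$ so that $\exp(t\by - t^2/2)$ is integrable against $\sN(0,1+x)$ for every real $t$, and the partial sums are dominated appropriately, Fubini/dominated convergence applies on a neighborhood of $t = 0$ (indeed for all $t$), so matching Taylor coefficients is legitimate. I do not anticipate a genuine obstacle here — this is a textbook umbral-calculus computation; the "main obstacle," such as it is, is merely keeping the double-factorial bookkeeping straight, and the generating-function packaging makes that transparent. For the general $n$-variate version used later, the same argument applies coordinatewise by independence, multiplying the per-coordinate generating functions, which is why I would phrase the proof through the generating function in the first place.
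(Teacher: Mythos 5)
Your proof is correct, but it takes a genuinely different route from the paper's. The paper does not prove the univariate statement directly: it remarks that the proof is ``subsumed'' in the multivariate version (Proposition~\ref{prop:multi-mismatched-var}), which it establishes by Gaussian integration by parts and induction on $|\alpha|$, deriving the recursion $\ell_{\alpha + e_i} = \sum_{k:\alpha_k > 0} \alpha_k X_{ik}\,\ell_{\alpha - e_k}$ and matching it against the Wick pairing sum. You instead use the exponential generating function $\sum_{k\ge 0} \frac{t^k}{k!} h_k(y) = \exp(ty - t^2/2)$, take the expectation under $\sN(0,1+x)$ to get $\exp(\tfrac{x}{2}t^2)$, and read off Taylor coefficients. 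Both are valid; your generating-function route is more self-contained and arguably cleaner for the scalar case, while the paper's integration-by-parts argument is tailored to producing the multivariate Wick-pairing formula directly, which is what the appendix actually needs downstream.

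One small inaccuracy in your closing remark: the multivariate generalization does not follow ``coordinatewise by independence.'' When the covariance is $\id_n + X$ with $X$ non-diagonal, the coordinates of $\by$ are \emph{not} independent, so you cannot simply multiply per-coordinate generating functions. The generating-function approach still works, but you must use the joint MGF
\[
\Ex_{\by \sim \sN(0,\id_n+X)}\exp\left(\langle t,\by\rangle - \tfrac12\|t\|^2\right) = \exp\left(\tfrac12\, t^\top X t\right),
\]
and extract the coefficient of $\prod_i t_i^{\alpha_i}$, which yields the pairing sum $\sum_{P\in\mathcal{P}(\alpha)} X^P$ rather than a product of univariate answers. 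This does not affect the correctness of your proof of the univariate proposition as stated.
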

\noindent
Note that the formula on the right-hand side is that for the moments of a Gaussian random variable with variance $x$, but we extend it to apply even for negative $x$, which is the ``umbral''  case of the result, admitting an interpretation in terms of a fictitious Gaussian of negative variance---even if $x \in (-1, 0)$, the right-hand side may be viewed formally as the value of  ``$\EE_{\bg \sim \sN(0, x)} \bg^k$.'' 
A thorough exposition of such analogies arising in combinatorics and the theory of orthogonal polynomials is given in \cite{Roman-2005-Umbral}.

In fact, the same holds even for multivariate Gaussians.
The correct result in this case is given by imitating the formula for the moments of a multivariate Gaussian, via Wick's (or Isserlis') formula.
While Proposition~\ref{prop:mismatched-variance-univar} is well-known in the literature on Hermite polynomials and the umbral calculus, we are not aware of previous appearances of the formula below.
\begin{proposition}[Multivariate Mismatched Variance Formula]
    \label{prop:multi-mismatched-var}
     Let $X \in \RR^{n \times n}_\sym$ with $X \succ -\id_n$.
     For $\alpha \in \NN^n$ viewed as a multiset of elements of $[n]$, let $\mathcal{P}(\alpha)$ be the set of pairings of elements of $\alpha$, and for each $P \in \mathcal{P}(\alpha)$ write $X^P$ denote the product of the entries of $X$ located at paired indices from $P$.
     Then,
     \begin{equation}
         \label{eq:multi-mismatched-var}
         \Ex_{\by \sim \sN( 0, I + X)} H_{\alpha}(\by) = \sum_{P \in \mathcal{P}(\alpha)} X^P.
     \end{equation}
 \end{proposition}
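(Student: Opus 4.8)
The plan is to prove Proposition~\ref{prop:multi-mismatched-var} by reducing to a generating-function identity, which simultaneously subsumes the univariate case (Proposition~\ref{prop:mismatched-variance-univar}) and the classical Wick formula. First I would recall the \emph{exponential generating function} of the univariate Hermite polynomials, $\sum_{k \ge 0} h_k(y) \frac{t^k}{k!} = \exp(ty - t^2/2)$, which immediately tensorizes to $\sum_{\alpha \in \NN^n} H_\alpha(y) \frac{t^\alpha}{\alpha!} = \exp(\langle t, y \rangle - \tfrac12 \|t\|^2)$ for $t \in \RR^n$, where $t^\alpha = \prod_i t_i^{\alpha_i}$. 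The strategy is then to compute $\EE_{\by \sim \sN(0, I+X)} \exp(\langle t, \by \rangle - \tfrac12\|t\|^2)$ in closed form and extract the coefficient of $t^\alpha/\alpha!$.

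The main computation is a Gaussian moment-generating-function evaluation: for $\by \sim \sN(0, I+X)$ with $I + X \succ 0$, we have $\EE \exp(\langle t, \by\rangle) = \exp(\tfrac12 \langle t, (I+X) t\rangle)$, so
\begin{equation*}
    \Ex_{\by \sim \sN(0,I+X)} \exp\!\left(\langle t, \by\rangle - \tfrac12\|t\|^2\right) = \exp\!\left(\tfrac12 \langle t, X t\rangle\right) = \exp\!\left(\tfrac12 \sum_{i,j} X_{ij} t_i t_j\right).
\end{equation*}
Now I would expand the right-hand side. Write $\tfrac12\langle t, Xt\rangle = \sum_{i \le j} c_{ij} t_i t_j$ with $c_{ii} = \tfrac12 X_{ii}$ and $c_{ij} = X_{ij}$ for $i < j$; expanding $\exp$ as a product of exponentials and collecting the coefficient of $t^\alpha$ gives exactly a sum over ways of partitioning the multiset $\alpha$ into pairs, each pair $\{i,j\}$ contributing a factor $X_{ij}$ (the factor-of-$2$ in the off-diagonal case and the $1/k!$-type combinatorial bookkeeping conspire, as in the standard derivation of Wick's theorem, to yield precisely $\sum_{P \in \mathcal{P}(\alpha)} X^P$, with $X^P = \prod_{\{i,j\}\in P} X_{ij}$). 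Matching this against the coefficient of $t^\alpha$ in $\sum_\alpha \big(\EE H_\alpha(\by)\big) t^\alpha/\alpha!$ yields $\EE_{\by} H_\alpha(\by) = \sum_{P \in \mathcal{P}(\alpha)} X^P$, which is the claim; note in particular that this vanishes when $|\alpha|$ is odd since there are no perfect pairings. Finally, the hypothesis $X \succ -I_n$ is exactly what is needed for $I + X$ to be a valid (positive-definite) covariance so that the Gaussian expectation is literally defined; by analyticity in the entries of $X$ the polynomial identity then extends to all symmetric $X$ (umbral interpretation), though we only need the $X \succ -I_n$ case.

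The step I expect to require the most care is the combinatorial bookkeeping converting the coefficient extraction of $\exp(\tfrac12\langle t, Xt\rangle)$ into the pairing sum $\sum_P X^P$ with the correct multiplicities — in particular handling diagonal entries $X_{ii}$ (which come with the $\tfrac12$) versus off-diagonal entries, and verifying the count against the definition of $\mathcal{P}(\alpha)$ as pairings of a \emph{multiset}. This is the standard ``Wick's theorem'' manipulation, so it is routine, but it is the one place where an error in a factor of $2$ or a factorial would propagate; I would verify it by first checking $n = 1$ against Proposition~\ref{prop:mismatched-variance-univar}, where $\tfrac12\langle t, Xt\rangle = \tfrac12 X t^2$ and the coefficient of $t^k/k!$ is $(k-1)!! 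X^{k/2}$ for even $k$, matching the number of pairings of a $k$-element set.
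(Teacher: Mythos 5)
Your proof is correct, but it takes a genuinely different route from the paper's. The paper proves the identity by induction on $|\alpha|$: setting $\ell_\alpha := \EE_{\by\sim\sN(0,I+X)}H_\alpha(\by)$, it uses the Hermite recursion $h_{k+1}(y) = y\,h_k(y) - h_k'(y)$ together with Gaussian integration by parts to show that $\ell_{\alpha+e_i} = \sum_{k:\,\alpha_k>0}\alpha_k X_{ik}\,\ell_{\alpha - e_k}$, with base cases $\ell_0 = 1$ and $\ell_{e_i} = 0$, and then observes that the pairing sum $\sum_{P\in\mathcal{P}(\alpha)}X^P$ satisfies the same recursion and initial conditions. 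You instead package everything into a single generating-function computation: the tensorized Hermite generating function $\sum_\alpha H_\alpha(y)\,t^\alpha/\alpha! = \exp(\langle t,y\rangle - \tfrac12\|t\|^2)$ combined with the Gaussian MGF of $\sN(0,I+X)$ collapses to $\exp(\tfrac12\langle t,Xt\rangle)$, and then Wick-theorem coefficient extraction produces the pairing sum. Both are correct; the paper's inductive route avoids the $2^m m!$ overcounting bookkeeping entirely (the recursion encodes it implicitly, pair by pair), which is why it is shorter on the page, while your route makes the structural reason for the ``umbral'' phenomenon transparent in one line: once the $\tfrac12\|t\|^2$ from the Hermite generating function cancels the identity part of the covariance, what remains is formally the MGF of a fictitious $\sN(0,X)$, whether or not $X\succeq 0$. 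Your final remarks on the role of $X\succ -I_n$ and the extension by analyticity are also accurate. The one step you flag as requiring care --- the Wick combinatorics, including the $\tfrac12$ on diagonal terms and the factor of $2$ from the two off-diagonal copies of $X_{ij}$ --- does work out, and your sanity check against the univariate Proposition~\ref{prop:mismatched-variance-univar} ($(k-1)!! = $ number of pairings of $k$ elements) is the right way to confirm it.
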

 \noindent
 Note that if $X \succeq 0$, then the right-hand side equals $\EE_{\bx \sim \sN( 0, X)} x^{\alpha}$ by Wick's formula, but we again have an umbral extension to non-PSD matrices $X$.
 \begin{proof}
     Define
     \begin{equation}
         \ell_{\alpha} \colonequals \Ex_{\by \sim \sN(0, I + X)} H_{\alpha}(\by).
     \end{equation}
     Let $e_i \in \NN^n$ have $i$th coordinate equal to 1 and all other coordinates equal to zero, and write $0 \in \NN^n$ for the vector with all coordinates equal to zero.
     Clearly $\ell_{0} = 1$ and $\ell_{e_i} = 0$ for any $i \in [n]$.
     We then proceed by induction and use Gaussian integration by parts:
     \begin{align}
       \ell_{\alpha + e_i}
       &= \Ex_{\by \sim \sN(0, I + X)}h_{\alpha_i + 1}(\by_i)\prod_{j \in [n] \setminus \{i\}} h_{\alpha_j}(\by_j) \nonumber \\
       &= \Ex_{\by \sim \sN(0, I + X)}\left(\by_i h_{\alpha_i}(\by_i) - h_{\alpha_i}^\prime(\by_i)\right)\prod_{j \in [n] \setminus \{i\}}  h_{\alpha_j}(\by_j) \tag{Definition~\ref{def:hermite-polynomials}} \\
       &= \Ex_{\by \sim \sN(0, I + X)}\left[\sum_{k = 1}^n(I + X)_{ik}\prod_{j \in [n]} h_{\alpha_j}^{(\delta_{jk})}(\by_j) - \prod_{j \in [n]} h_{\alpha_j}^{(\delta_{ij})}(\by_j)\right] \tag{integration by parts} \\
       &= \sum_{\substack{k \in [n] \\ \alpha_k > 0}}X_{ik}\Ex_{\by \sim \sN(0, I + X)}\prod_{j \in [n]} h_{\alpha_j}^{(\delta_{jk})}(\by_j) \nonumber \\
       &= \sum_{\substack{k \in [n] \\ \alpha_k > 0}}\alpha_k X_{ik}\Ex_{\by \sim \sN(0, I + X)}\prod_{j \in [n]} h_{\alpha_j - \delta_{jk}}(\by_j) \nonumber \\
       &= \sum_{\substack{k \in [n] \\ \alpha_k > 0}}\alpha_k X_{ik} \ell_{\alpha - e_k} \tag{inductive hypothesis}
     \end{align}
     so $\ell_{\alpha}$ satisfy the same recursion and initial condition as the sum-of-products formula on the right-hand side of \eqref{eq:multi-mismatched-var}.
 \end{proof}

\subsection{Components of the LDLR}

Let $\QQ$, $\PP$ be as in the general spiked Wishart model (Definition~\ref{def:general-wishart-model}), and let $L$ be the associated likelihood ratio. Throughout this section, we assume without loss of generality that (i) $\beta = 1$ (since $\beta$ can be absorbed into $X$), and (ii) the prior $\sX$ is supported on $X$ for which $X \succ -\id_n$. For $\alpha \in (\NN^n)^N$, we denote
\begin{align}
  m_{\alpha}
  &\colonequals \Ex_{\bY \sim \QQ} H_{\alpha}(\bY)L(\bY), \\
  \what{m}_{\alpha}
  &\colonequals \Ex_{\bY \sim \QQ} \what{H}_{\alpha}(\bY)L(\bY) = \frac{1}{\sqrt{\alpha!}}m_{\alpha}.
\end{align}

We may compute these numbers as follows.
For any $\alpha \in (\NN^n)^N$, we have, passing to an expectation under $\PP$ rather than $\QQ$ and then using Proposition~\ref{prop:multi-mismatched-var},
\begin{align}
    m_{\alpha}
      &= \Ex_{\bY \sim \PP} H_{\alpha}(\bY) \nonumber \\
      &= \Ex_{\bX \sim \sX} \prod_{i = 1}^N \Ex_{\by \sim \sN(0, I + \bX)} H_{\alpha_i}(y) \nonumber \\
      &= \Ex_{\bX \sim \sX} \prod_{i = 1}^N\left(\sum_{P \in \sP(\alpha_i)} \bX^P\right).
\end{align}

\subsection{Taylor Expansion of the LDLR}

\begin{lemma}
    Suppose $\sX$ is as in Definition~\ref{def:general-wishart-model}. 
    Denote by $\mathcal{T}^{\leq D}$ the operation of truncating the Taylor series of a function that is real-analytic in a neighborhood of zero to degree-$D$ polynomials.
    For the sake of clarity, this will only ever apply to the variable named $t$.
    Then,
    \begin{equation}
        \label{eq:general-wishart-ldlr-taylor}
        \|L^{\leq D}\|^2 = \Ex_{\bX, \bX^{\prime} \sim \sX}\mathcal{T}^{\leq D}\left[\det(\id_n - t^2\bX \bX^{\prime})^{-N / 2}\right](1),
    \end{equation}
    by which we mean evaluation of the function of $t$ on the RHS at $t=1$.
\end{lemma}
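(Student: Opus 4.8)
The goal is to establish formula~\eqref{eq:general-wishart-ldlr-taylor} for $\|L^{\leq D}\|^2$ in the general spiked Wishart model. The plan is to use the standard expansion of the low-degree likelihood ratio in the orthonormal Hermite basis, namely $\|L^{\leq D}\|^2 = \sum_{\alpha : |\alpha| \leq D} \what{m}_\alpha^2$, where the sum is over $\alpha \in (\NN^n)^N$ with total degree $|\alpha| = \sum_{i,j} (\alpha_i)_j \leq D$, and then to substitute the explicit formula for $\what{m}_\alpha = m_\alpha / \sqrt{\alpha!}$ derived just above the lemma. The bulk of the work is then a combinatorial resummation: one must recognize the resulting sum over pairings (coming from two independent copies of the mismatched-variance formula, Proposition~\ref{prop:multi-mismatched-var}, one for $\bX$ and one for $\bX'$) as the Taylor coefficients of the determinant $\det(\id_n - t^2 \bX\bX')^{-N/2}$.

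First I would write $\what{m}_\alpha = \frac{1}{\sqrt{\alpha!}} \Ex_{\bX} \prod_{i=1}^N \big(\sum_{P \in \sP(\alpha_i)} \bX^P\big)$, and therefore $\what{m}_\alpha^2 = \Ex_{\bX, \bX'} \frac{1}{\alpha!} \prod_{i=1}^N \big(\sum_{P \in \sP(\alpha_i)} \bX^P\big)\big(\sum_{P' \in \sP(\alpha_i)} (\bX')^{P'}\big)$ using an independent copy $\bX'$. Summing over all $\alpha$ with $|\alpha| \leq D$ and introducing a formal variable $t$ to track degree, the key identity to prove is that $\sum_{\alpha \in (\NN^n)^N} t^{|\alpha|} \what{m}_\alpha^2 = \prod_{i=1}^N \big(\sum_{\alpha_i \in \NN^n} \frac{t^{|\alpha_i|}}{\alpha_i!} \sum_{P,P' \in \sP(\alpha_i)} \bX^P (\bX')^{P'}\big)$, so it suffices to handle one factor. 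For a single multi-index $\beta \in \NN^n$, the inner quantity $\sum_\beta \frac{t^{|\beta|}}{\beta!} \sum_{P, P'} \bX^P (\bX')^{P'}$ should be identified with the generating function whose coefficient extraction reproduces $\det(\id_n - t^2 \bX\bX')^{-1/2}$; raising to the $N$-th power gives the $-N/2$ exponent, and truncating at degree $D$ in $t$ and evaluating at $t=1$ yields the claimed formula. The standard route here is to recognize $\det(\id - t^2 \bX\bX')^{-1/2}$ as a Gaussian integral $\Ex_{\bg \sim \sN(0,\id)} \exp(\text{quadratic in } \bg, \bX, \bX', t)$ and to expand both the exponential and the Gaussian moments via Wick's formula, matching the pairing structure.

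The main obstacle I expect is the bookkeeping in the combinatorial identity relating the double sum over pairings $\sum_{\beta} \frac{1}{\beta!}\sum_{P \in \sP(\beta), P' \in \sP(\beta)} \bX^P (\bX')^{P'}$ to the determinant generating function: one has to correctly count how a pairing $P$ of a multiset and a second pairing $P'$ of the \emph{same} multiset combine to form a union of cycles alternating between $\bX$-edges and $\bX'$-edges, with each cycle of length $2\ell$ contributing a factor $\tr((\bX\bX')^\ell)$, and then verify that the multiplicities and the $1/\beta!$ normalization conspire to give exactly $\exp\big(\frac{1}{2}\sum_{\ell \geq 1} \frac{t^{2\ell}}{\ell}\tr((\bX\bX')^\ell)\big) = \det(\id - t^2 \bX\bX')^{-1/2}$. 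An alternative, possibly cleaner, approach that sidesteps explicit cycle-counting is to go back one step further and compute $\what m_\alpha$ directly as an integral: writing the density ratio and using that under $\PP$ each $\by_u \sim \sN(0, \id + \bX)$, one has $L(\bY) = \Ex_{\bX}\prod_u \frac{\sqrt{\det(\id + \bX)^{-1}}\exp(-\frac12 \by_u^\top (\id+\bX)^{-1}\by_u)}{\exp(-\frac12\|\by_u\|^2)}$, and then $\|L^{\leq D}\|^2 = \Ex_{\bY \sim \QQ}[(L^{\leq D})^2]$ can be expanded by projecting onto degree-$\leq D$ Hermite polynomials; the Gaussian integral over $\bY \sim \QQ$ against the two copies of the $\PP$-density (one for $\bX$, one for $\bX'$) is an explicit Gaussian computation producing $\det(\cdots)$, and the low-degree truncation corresponds exactly to the $\mathcal{T}^{\leq D}$ operation after homogenizing with $t$. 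Either way, the final steps—checking that the first case of~\eqref{eq:wish-prior} holds with high probability so that the thresholding is harmless, and confirming convergence of the Taylor series in a neighborhood of $t=0$ (which holds since $\|\bX\|, \|\bX'\|$ are bounded)—are routine.
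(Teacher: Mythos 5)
Your starting point and overall scaffolding match the paper's: expand $\|L^{\leq D}\|^2 = \sum_{|\alpha| \leq D}\widehat m_\alpha^2$, substitute the pairing formula from Proposition~\ref{prop:multi-mismatched-var} for each copy of the spike, factor over $i \in [N]$, and reduce to identifying a single-factor generating function with $\det(\id_n - t^2 X X')^{-1/2}$. The two routes you then propose differ from the paper in an instructive way.

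Your Route 1 (direct cycle-counting) is a genuinely different argument. The paper does \emph{not} count cycles. Instead, it defines $r_d(X, X')$ as the $d$th Taylor coefficient of the determinant and $r_d'(X, X')$ as the pairing sum, observes that both are rational in $(X, X')$, and therefore reduces the identity $r_d = r_d'$ to the PSD slice $X, X' \succeq 0$. On that slice it applies Wick's formula \emph{in reverse} to convert pairing sums into Gaussian moments $\Ex_{\bx \sim \sN(0,X)}\bx^{\alpha_i}$, resums into the moment generating function $\phi_{X,X'}(t) = \Ex \exp(t \langle \bx, \bx'\rangle)$ of the scalar overlap variable, and evaluates $\phi_{X,X'}(t) = \det(\id_n - t^2 X X')^{-1/2}$ by a Gaussian integral (with Proposition~\ref{prop:det-cycle} to collapse the square roots). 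Your cycle-counting route, if carried out carefully, would prove the same termwise identity purely combinatorially and therefore for all $X, X' \succ -\id_n$ directly, without any rationality continuation; what you pay for that is the bookkeeping you yourself flag. The paper's detour through rationality is precisely what lets it sidestep that bookkeeping.

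Your Route 2 is essentially the paper's route seen from the opposite direction, but as written it has a gap you should close: the Gaussian-integral / Wick interpretation of the pairing sums requires $X, X' \succeq 0$ (to have $\sqrt{X}, \sqrt{X'}$ and convergent Gaussian moments), yet after absorbing $\beta$ into $X$ the prior is supported only on $X \succ -\id_n$, which can have negative eigenvalues when $\beta < 0$. So Route 2 proves the termwise identity $r_d = r_d'$ only on the PSD cone, and you need an explicit step to transfer it to the full support. The paper supplies this with the rationality argument (both sides are rational functions of the matrix entries, so agreement on a set of positive measure gives agreement everywhere). Without stating something like this, Route 2 does not cover the $\beta < 0$ case that actually matters in the application. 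Separately, your closing remark about checking the first case of~\eqref{eq:wish-prior} with high probability is not needed here: the lemma is a deterministic algebraic identity under the standing assumption that $\sX$ is supported on $X \succ -\id_n$, and the high-probability fact is only relevant in the downstream applications, not in this lemma's proof.
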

\begin{proof}
    We have
    \begin{align}
        \|L^{\leq D}\|^2 
        &= \sum_{|\alpha| \leq D}\what{m}_{\alpha}^2 \nonumber \\
        &= \Ex_{\bX, \bX^{\prime} \sim \sX} \sum_{|\alpha| \leq D} \prod_{i = 1}^N\frac{1}{ \alpha_i!}\left(\sum_{P \in \mathcal{P}(\alpha_i)} \bX^P\right)\left(\sum_{P \in \mathcal{P}(\alpha_i)} (\bX^{\prime})^P\right) \nonumber \\
        &= \Ex_{\bX, \bX^{\prime} \sim \sX}\sum_{d = 0}^D \sum_{|\alpha| = d}\prod_{i = 1}^N\frac{1}{ \alpha_i!}\left(\sum_{P \in \mathcal{P}(\alpha_i)} \bX^P\right)\left(\sum_{P \in \mathcal{P}(\alpha_i)} (\bX^{\prime})^P\right).
    \end{align}
    On the other hand, we may expand the right-hand side of \eqref{eq:general-wishart-ldlr-taylor} by repeatedly differentiating with respect to $t$ to extract Taylor coefficients.
    In doing so we will repeatedly apply the chain rule, and since $\frac{d}{dt}\det(I - tA) = \Tr(A)$ each derivative will be a rational function in $(t, X, X^{\prime})$. Therefore, the Taylor coefficients are some rational functions $r_d(X, X^{\prime})$ (depending on $N$) for $d \in \NN$, and for these coefficients
    \begin{equation}\label{eq:wish-taylor}
        \Ex_{\bX, \bX^{\prime} \sim \sX}\mathcal{T}^{\leq D}\left[\det(\id_n - t^2\bX \bX^{\prime})^{-N / 2}\right](1) = \Ex_{\bX, \bX^{\prime} \sim \sX} \sum_{d = 0}^D r_d(\bX, \bX^{\prime}).
    \end{equation}
    We will show that in fact termwise equality holds, \emph{inside the expectations}, namely
    \begin{equation} 
        \label{eq:general-wishart-ldlr-coeff}
        r_d(X, X^{\prime}) = \sum_{|\alpha| = d}\prod_{i = 1}^N\frac{1}{ \alpha_i!}\left(\sum_{P \in \mathcal{P}(\alpha_i)} X^P\right)\left(\sum_{P \in \mathcal{P}(\alpha_i)} (X^{\prime})^P\right) \equalscolon r_d^\prime(X, X^{\prime})
    \end{equation}
    for each $d \in \NN$ and for all (deterministic) $X, X^{\prime} \in \RR^{n \times n}_{\sym}$.
    Since either side of \eqref{eq:general-wishart-ldlr-coeff} is a rational function of $(X, X^{\prime})$, it suffices to show that this is true on a set of matrices of positive measure.
    We will show that it holds for all $X, X^{\prime} \succeq 0$.

    In this case, we have the convenient Gaussian interpretation of the expression for $r_d^\prime(X, X^{\prime})$ from Wick's formula:
    \begin{align}
        r_d^\prime(X, X^{\prime}) 
        &= \sum_{|\alpha| = d}\prod_{i = 1}^N\frac{1}{ \alpha_i!} \left(\Ex_{\bx \sim \sN( 0, X)} \bx^{\alpha_i}\right)\left(\Ex_{\bx \sim \sN( 0, X^{\prime})} \bx^{\alpha_i}\right) \nonumber \\
        &= \Ex_{\substack{\bx_1, \dots, \bx_N \sim \sN( 0, X) \\ \bx_1^{\prime}, \dots, \bx_N^{\prime} \sim \sN( 0, X^{\prime})}}\sum_{\substack{|\alpha_i| \text{ even} \\ |\alpha| = d}}\prod_{i = 1}^N\frac{1}{ \alpha_i!}
        (\bx_i)^{\alpha_i}(\bx_i^{\prime})^{\alpha_i}
        \intertext{and grouping by the values of $|\alpha_i|$,}
        &= \Ex_{\substack{\bx_1, \dots, \bx_N \sim \sN( 0, X) \\ \bx_1^{\prime}, \dots, \bx_N^{\prime} \sim \sN( 0, X^{\prime})}} \sum_{\substack{d_1, \dots, d_N \in 2\NN \\ \sum_{i = 1}^N d_i = d}} \frac{1}{\prod_{i = 1}^N d_i!} \sum_{\substack{ \alpha_1, \dots, \alpha_N \in \NN^n \\ |\alpha_i| = d_i}} \prod_{i = 1}^N \binom{d_i}{\alpha_i} \prod_{j = 1}^n ((\bx_i)_j(\bx_i^{\prime})_j)^{\alpha_i(j)} \nonumber \\
        &= \Ex_{\substack{\bx_1, \dots, \bx_N \sim \sN( 0, X) \\ \bx_1^{\prime}, \dots, \bx_N^{\prime} \sim \sN( 0, X^{\prime})}} \sum_{\substack{d_1, \dots, d_N \in 2\NN \\ \sum_{i = 1}^N d_i = d}} \prod_{i = 1}^N \frac{\la \bx_i, \bx_i^{\prime} \ra^{d_i}}{d_i!} \nonumber \\
        &= \sum_{\substack{d_1, \dots, d_N \in 2\NN \\ \sum_{i = 1}^N d_i = d}} \prod_{i = 1}^N \frac{1}{d_i!} \Ex_{\substack{\bx \sim \sN( 0, X) \\ \bx^{\prime} \sim \sN( 0, X^{\prime})}} \la \bx, \bx^{\prime} \ra^{d_i}.
    \end{align}
    From here, we introduce the moment-generating function of the inner overlap variables.
    Define
    \begin{equation}
        \phi_{X, X^{\prime}}(t) 
        \colonequals \Ex_{\substack{\bx \sim \sN( 0, X) \\ \bx^{\prime} \sim \sN( 0, X^{\prime})}} \exp\left(t\la \bx, \bx^{\prime} \ra\right) = \sum_{d = 0}^\infty \frac{t^d}{d!}\Ex_{\substack{\bx \sim \sN( 0, X) \\ \bx^{\prime} \sim \sN( 0, X^{\prime})}} \la \bx, \bx^{\prime} \ra^d.
    \end{equation}
    Then, $r^\prime_d(X, X^{\prime})$ is simply the coefficient of $t^d$ in the Taylor series of $\phi_{X, X^{\prime}}(t)^N$.
    
    On the other hand, we may actually compute $\phi_{X,X^{\prime}}(t)$:
    \begin{align}
        \phi_{X,X^{\prime}}(t)
        &= \Ex_{\bg, \bh \sim \sN( 0,  \id_n)} \exp\left(t\bg^\top \sqrt{X} \sqrt{X^{\prime}} \bh \right) \nonumber \\
        &= \Ex_{\bg \sim \sN( 0,  \id_{2n})}\exp\left(\bg^\top \left[\begin{array}{cc}  0 & \frac{t}{2} \sqrt{X}\sqrt{X^{\prime}} \\ \frac{t}{2}\sqrt{X^{\prime}}\sqrt{X} &  0 \end{array}\right] \bg\right) \nonumber
        \intertext{which may be calculated as a moment generating function of the ``matrix $\chi^2$'' variable $gg^\top$, giving}
        &= \det\left(\left[\begin{array}{cc}  \id_n & -t\sqrt{X^{\prime}}\sqrt{X^{\prime}} \\ -t\sqrt{X^{\prime}}\sqrt{X} &  \id_n \end{array}\right]\right)^{-1/2} \nonumber \\
        &= \det\left( \id_n - t^2\sqrt{X}X^{\prime}\sqrt{X}\right)^{-1/2} \nonumber 
        \intertext{and applying Proposition~\ref{prop:det-cycle},}
        &= \det\left( \id_n - t^2XX^{\prime}\right)^{-1/2}. \label{eq:inner-mgf-formula}
    \end{align}
    Thus $r^\prime_d(X, X^{\prime})$ is also the coefficient of $t^d$ in the Taylor series of $\det(\id_n - t^2XX^{\prime})^{-N/2}$, whereby $r^\prime_d(X, X^{\prime}) = r_d(X, X^{\prime})$, completing the proof.
\end{proof}

\noindent Implicit in the above proof are the following facts which we have made use of in Section~\ref{sec:wishart}.

\begin{proposition}\label{prop:wish-facts}
    Consider the general spiked Wishart model (Definition~\ref{def:general-wishart-model}), and assume (without loss of generality) that $\beta = 1$ and $\sX$ is supported on $X$ for which $X \succ -\id_n$. The following formulas hold:
    \begin{enumerate}
        \item[(a)] $\displaystyle \|L^{\le D}\|^2 = \Ex_{\bX,\bX^{\prime} \sim \sX} \sum_{d=0}^D r_d(\bX,\bX^{\prime})$ for some polynomials $r_0,\dots, r_d, \dots$, 
        \item[(b)] $\displaystyle \sum_{d=0}^\infty r_d(X,X^{\prime}) = \det(\id_n - X X^{\prime})^{-N/2}$,
        \item[(c)] if $X \succeq 0$ and $X^{\prime} \succeq 0$ then $\displaystyle r_d(X,X^{\prime}) = \sum_{\substack{d_1, \dots, d_N \in 2\NN \\ \sum_{i = 1}^N d_i = d}} \prod_{i = 1}^N \frac{1}{d_i!} \Ex_{\substack{\bx \sim \sN( 0, X) \\ \bx^{\prime} \sim \sN( 0, X^{\prime})}} \la \bx, \bx^{\prime} \ra^{d_i}$, and
        \item[(d)] if $X \preceq 0$ and $X^{\prime} \preceq 0$ then $\displaystyle r_d(X,X^{\prime}) = \sum_{\substack{d_1, \dots, d_N \in 2\NN \\ \sum_{i = 1}^N d_i = d}} \prod_{i = 1}^N \frac{1}{d_i!} \Ex_{\substack{\bx \sim \sN( 0, -X) \\ \bx^{\prime} \sim \sN( 0, -X^{\prime})}} \la \bx, \bx^{\prime} \ra^{d_i}$.
    \end{enumerate}
\end{proposition}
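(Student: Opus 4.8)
My intention is to read off all four items directly from the ``Taylor expansion'' lemma proved just above, and more precisely from the identity $r_d = r_d^\prime$ established in the course of its proof. Recall from there that $r_d(X,X^\prime)$ is \emph{defined} as the coefficient of $t^d$ in the Taylor expansion around $t=0$ of
\[
    F_{X,X^\prime}(t) \colonequals \det(\id_n - t^2 X X^\prime)^{-N/2},
\]
that $r_d$ vanishes for odd $d$ (as $F_{X,X^\prime}$ is a function of $t^2$), and that the lemma's proof shows
\[
    r_d(X, X^\prime) = \sum_{|\alpha| = d}\prod_{i=1}^N \frac{1}{\alpha_i!}\Big(\sum_{P \in \mathcal{P}(\alpha_i)} X^P\Big)\Big(\sum_{P \in \mathcal{P}(\alpha_i)} (X^\prime)^P\Big) \equalscolon r_d^\prime(X,X^\prime)
\]
for all symmetric $X, X^\prime$ (the two sides being rational in the entries and agreeing on the positive-measure set of PSD pairs). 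Two observations will then do most of the work. First, $r_d^\prime$ is visibly a \emph{polynomial} in the entries of $X$ and $X^\prime$ — the sum over $\alpha$ with $|\alpha| = d$ is finite — so the a priori merely rational functions $r_d$ are in fact polynomials; combining this with $\mathcal{T}^{\le D}[F_{\bX,\bX^\prime}](1) = \sum_{d=0}^D r_d(\bX,\bX^\prime)$ (immediate from the definitions of $\mathcal{T}^{\le D}$ and $r_d$) and the lemma yields item (a). Second, $r_d(X,X^\prime)$ depends on $(X,X^\prime)$ only through the product $XX^\prime$, straight from its definition via $F_{X,X^\prime}$; this is the only extra input I will need for item (d).

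For (b) I would simply resum the untruncated Taylor series: $\sum_{d=0}^\infty r_d(X,X^\prime) = F_{X,X^\prime}(1) = \det(\id_n - XX^\prime)^{-N/2}$, valid wherever the series converges — e.g.\ whenever the spectral radius of $XX^\prime$ is less than $1$, which is exactly the regime in which Proposition~\ref{prop:wish-facts}(b) is invoked in Lemma~\ref{lem:wish-small-dev} (there, on the event $\|\bR\|_\F^2 \le \Delta n^2$). For (d), if $X \preceq 0$ and $X^\prime \preceq 0$ then $XX^\prime = (-X)(-X^\prime)$ with $-X, -X^\prime \succeq 0$, so by the ``depends only on $XX^\prime$'' observation $r_d(X,X^\prime) = r_d(-X,-X^\prime)$, and (d) follows by applying (c) to $-X, -X^\prime$.

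It then remains to spell out (c). When $X \succeq 0$ and $X^\prime \succeq 0$, Wick's (Isserlis') formula identifies $\sum_{P \in \mathcal{P}(\alpha_i)} X^P = \Ex_{\bx \sim \mathcal{N}(0,X)} \bx^{\alpha_i}$, and likewise for $X^\prime$, so in $r_d = r_d^\prime$ I may pull the expectations outside, group the multi-indices $\alpha = (\alpha_1,\dots,\alpha_N)$ by the values $d_i \colonequals |\alpha_i| \in 2\NN$, and apply the multinomial theorem $\sum_{|\alpha_i| = d_i}\binom{d_i}{\alpha_i}\prod_j \big((\bx_i)_j(\bx_i^\prime)_j\big)^{\alpha_i(j)} = \langle \bx_i, \bx_i^\prime\rangle^{d_i}$ (for independent $\bx_i \sim \mathcal{N}(0,X)$, $\bx_i^\prime \sim \mathcal{N}(0,X^\prime)$) to collapse everything to
\[
    r_d(X,X^\prime) = \sum_{\substack{d_1,\dots,d_N \in 2\NN \\ \sum_{i=1}^N d_i = d}}\ \prod_{i=1}^N \frac{1}{d_i!}\ \Ex_{\substack{\bx \sim \mathcal{N}(0,X) \\ \bx^\prime \sim \mathcal{N}(0,X^\prime)}} \langle \bx, \bx^\prime\rangle^{d_i},
\]
which is the asserted identity; in particular every term is $\ge 0$ since the $d_i$ are even, so $r_d(X,X^\prime) \ge 0$ under these hypotheses — the nonnegativity that is actually used in Lemma~\ref{lem:wish-small-dev} to drop the low-degree truncation.

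\textbf{Main obstacle.} Because the substance is already in the proof of the preceding lemma, I do not expect a real obstacle here; the one place that needs a sentence of care is the convergence statement in (b) and the hypotheses under which it is applied. Had the groundwork not been laid, the genuinely hard step would be the identity $r_d = r_d^\prime$ itself, resting on the multivariate mismatched-variance formula (Proposition~\ref{prop:multi-mismatched-var}) for the Hermite moments, on the evaluation of the overlap moment-generating function $\phi_{X,X^\prime}(t) = \det(\id_n - t^2 XX^\prime)^{-1/2}$ via a Gaussian (``matrix $\chi^2$'') integral, and on the rational-function continuation extending the PSD computation to all symmetric $X, X^\prime$.
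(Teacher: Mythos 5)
Your proposal is correct and follows the same route the paper intends: the paper's ``proof'' of this proposition is the single sentence ``Implicit in the above proof are the following facts,'' and you are simply making those implications explicit by reading (a)--(d) off the Taylor-expansion lemma and its proof of $r_d = r_d^{\prime}$. Your extra observations — that $r_d$ is a polynomial because $r_d^{\prime}$ is, that $r_d$ depends on $(X,X^{\prime})$ only through $XX^{\prime}$ (which is precisely what delivers (d) from (c)), and that (b) requires $\rho(XX^{\prime}) < 1$, which is exactly the regime in which Lemma~\ref{lem:wish-small-dev} applies it — are all correct and fill in exactly the details the paper leaves to the reader.
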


\section{Exponential Low-Degree Hardness for SBM}
\label{app:low-deg-sbm}

The goal of this section is to prove Theorem~\ref{thm:sbm}. We first prove a general statement that reduces binary-valued models to the analogous Gaussian model.

\subsection{Comparing Binary-Valued Models to Gaussian}

\begin{proposition}
    \label{prop:binary-compare}
    Consider the following general binary-valued problem.
    \begin{itemize}
        \item Under the null distribution $\QQ$, we observe $\bY \in \RR^N$ where $\bY_i$ are independent, satisfy $\EE[\bY_i] = 0$ and $\EE[\bY_i^2] = 1$, and each take two possible values: $\bY_i \in \{a_i,b_i\}$ with $a_i < b_i$.
        \item Under the planted distribution $\PP$, a signal $\bX \in \RR^N$ is drawn from some prior, and then $\bY_i \in \{a_i,b_i\}$ are drawn independently (conditioned on $\bX$) such that $\EE[\bY_i | \bX_i] = \bX_i$. (This requires $\bX_i \in [a_i,b_i]$.)
    \end{itemize}    
    In the above setting,
    \begin{equation}
        \label{eq:L-gauss}
        \|L^{\le D}\|^2 \le \sum_{d=0}^D \frac{1}{d!} \Ex_{\bX,\bX'} \langle \bX,\bX' \rangle^d
    \end{equation}
    where $\bX'$ denotes an independent copy of $\bX$.
\end{proposition}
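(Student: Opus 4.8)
The plan is to carry out the standard orthogonal-polynomial computation of $\|L^{\le D}\|^2$ for a product null measure, exploiting the fact that each coordinate of $\bY$ is two-valued under $\QQ$.

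First I would recall the general machinery. Since $\QQ = \bigotimes_{i=1}^N \QQ_i$ is a product measure, let $\{\phi_i^{(k)}\}_{k \ge 0}$ be an orthonormal polynomial basis of $L^2(\QQ_i)$ with $\deg \phi_i^{(k)} = k$, and set $\Phi_\alpha(y) = \prod_{i=1}^N \phi_i^{(\alpha_i)}(y_i)$ for $\alpha \in \NN^N$; the $\Phi_\alpha$ form an orthonormal basis of $L^2(\QQ)$, and $\{\Phi_\alpha : |\alpha| \le D\}$ spans the polynomials of degree $\le D$. Hence $L^{\le D} = \sum_{|\alpha| \le D} \hat L_\alpha \Phi_\alpha$ with $\hat L_\alpha = \langle L, \Phi_\alpha \rangle_{L^2(\QQ)} = \Ex_{\bY \sim \PP}[\Phi_\alpha(\bY)]$, and $\|L^{\le D}\|^2 = \sum_{|\alpha| \le D} \hat L_\alpha^2$ (see, e.g., \cite{lowdeg-notes}). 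The key observation is that each $\QQ_i$ is supported on the two points $a_i < b_i$, so $L^2(\QQ_i)$ is two-dimensional; since $\EE_{\QQ_i}[\bY_i] = 0$ and $\EE_{\QQ_i}[\bY_i^2] = 1$ we may take $\phi_i^{(0)} \equiv 1$ and $\phi_i^{(1)}(y) = y$, and there is no basis element of degree $\ge 2$. Consequently only multi-indices $\alpha \in \{0,1\}^N$ contribute, and for such $\alpha$ one has $\Phi_\alpha(y) = y^\alpha \colonequals \prod_{i : \alpha_i = 1} y_i$.

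Next I would compute the Fourier coefficients. For $\alpha \in \{0,1\}^N$, conditional independence of the $\bY_i$ given $\bX$ together with $\EE[\bY_i \mid \bX] = \bX_i$ gives $\hat L_\alpha = \Ex_{\bY \sim \PP}[\bY^\alpha] = \Ex_{\bX}\!\big[\prod_{i : \alpha_i = 1} \EE[\bY_i \mid \bX]\big] = \Ex_{\bX}[\bX^\alpha]$. Writing $\bX'$ for an independent copy of $\bX$, Parseval then gives
\[
    \|L^{\le D}\|^2 = \sum_{\substack{\alpha \in \{0,1\}^N \\ |\alpha| \le D}} \big(\Ex_{\bX}[\bX^\alpha]\big)^2 = \sum_{\substack{\alpha \in \{0,1\}^N \\ |\alpha| \le D}} \Ex_{\bX,\bX'}\big[\bX^\alpha (\bX')^\alpha\big].
\]
Finally I would expand the right-hand side of the claimed bound with the multinomial theorem: $\langle \bX,\bX' \rangle^d = \sum_{|\beta| = d} \binom{d}{\beta} \bX^\beta (\bX')^\beta$, so that
\[
    \sum_{d=0}^D \frac{1}{d!}\Ex_{\bX,\bX'}\langle \bX,\bX' \rangle^d = \sum_{\substack{\beta \in \NN^N \\ |\beta| \le D}} \frac{1}{\beta!}\Ex_{\bX,\bX'}\big[\bX^\beta (\bX')^\beta\big],
\]
where $\beta! = \prod_i \beta_i!$. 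The terms with $\beta \in \{0,1\}^N$ have $\beta! = 1$ and reproduce $\|L^{\le D}\|^2$ exactly; every remaining term equals $\frac{1}{\beta!}\big(\Ex_{\bX}[\bX^\beta]\big)^2 \ge 0$ since $\bX$ and $\bX'$ are i.i.d. Discarding these nonnegative terms yields the asserted inequality.

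There is no genuine obstacle here: the argument is routine once one notices that the two-point structure of the null forces the orthonormal polynomial basis to terminate at degree one. The only mild points requiring care are (i) that the term-by-term manipulations inside the expectations are legitimate---which holds because $N$ is finite and $\QQ$ is supported on the $2^N$ atoms, so $L$ is bounded and every polynomial lies in $L^2(\QQ)$---and (ii) the nonnegativity of the discarded terms, which relies on $\bX'$ being an independent copy of $\bX$, making $\Ex_{\bX,\bX'}[\bX^\beta (\bX')^\beta] = (\Ex_{\bX}[\bX^\beta])^2$.
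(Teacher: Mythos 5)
Your proof is correct and follows essentially the same route as the paper's: both observe that two-point support collapses the orthonormal basis for $L^2(\QQ)$ to multilinear monomials, identify the Fourier coefficients as $\Ex_{\bX}[\bX^\alpha]$, and bound $\|L^{\le D}\|^2$ by the full sum over $\NN^N$-indexed multi-indices, each of whose extra terms is a nonnegative square. The only cosmetic difference is bookkeeping---you sum over unordered multi-indices $\beta$ with weight $1/\beta!$, while the paper sums over ordered tuples $\alpha$ with weight $1/d!$---which are related by the usual $\binom{d}{\beta}$ count and amount to the same computation.
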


\noindent The significance of~\eqref{eq:L-gauss} is that the right-hand side is the exact formula for $\|L^{\le D}\|^2$ in the following additive Gaussian model (see \cite{lowdeg-notes}): under $\QQ$, $\bY \sim \mathcal{N}(0,\id_N)$; and under $\PP$, $\bY = \bX + \bZ$ with $\bZ \sim \mathcal{N}(0,\id_N)$ and $\bX$ drawn from some prior. Thus, Proposition~\ref{prop:binary-compare} can be interpreted as saying that a binary-valued problem is at least as hard as the corresponding Gaussian problem.

\begin{proof}[Proof of Proposition~\ref{prop:binary-compare}]
    The Fourier characters $\chi_S(Y) = \prod_{i \in S} Y_i$ for $S \subseteq [N]$ with $|S| \le D$ are orthonormal with respect to $\QQ$, in the sense that $\EE_{\bY \sim \QQ} [\chi_S(\bY) \chi_T(\bY)] = \indicator{S = T}$. They also span the subspace of degree $\le D$ polynomials, since for any $r \in \NN$, any $Y_i^r$ can be written as a degree-1 polynomial in $Y_i$. Thus, $\{\chi_S\}_{|S| \le D}$ is an orthonormal basis for the degree $\le D$ polynomials. It is a standard fact that this allows us to write $\|L^{\le D}\|^2 = \sum_{|S| \le D}(\EE_{\bY \sim \PP}[\chi_S(\bY)])^2$ (see e.g., \cite{HS-bayesian,sam-thesis}). We will use $S$ to denote a subset of $[N]$ and use $\alpha$ to denote an ordered multi-set of $[N]$, with $\chi_\alpha(Y) \colonequals \prod_{i \in \alpha} Y_i$. Now compute
    \begin{align*}
        \|L^{\le D}\|^2 &= \sum_{|S| \le D} (\Ex_{\bY \sim \PP} \chi_S(\bY))^2 \\
        &= \sum_{|S| \le D} (\Ex_{\bX} \chi_S(\bX))^2 \\
        &\overset{(\ast)}\le \sum_{d=0}^D \sum_{|\alpha| = d} \frac{1}{d!} (\Ex_{\bX} \chi_\alpha(\bX))^2 \qquad \tag{see below} \\
        &= \sum_{d = 0}^D \sum_{|\alpha| = d} \frac{1}{d!} \Ex_{\bX,\bX'} \chi_\alpha(\bX) \chi_\alpha(\bX') \\
        &= \sum_{d = 0}^D \frac{1}{d!} \Ex_{\bX,\bX'} \langle \bX,\bX' \rangle^d.
    \end{align*}
    To see that inequality $(\ast)$ holds, note that it would be an \emph{equality} if restricted to ordered multi-sets $\alpha$ that contain distinct elements.
\end{proof}

\subsection{Stochastic Block Model}

We now specialize to the case of the stochastic block model (Definition~\ref{def:sbm}), and use Proposition~\ref{prop:binary-compare} to reduce to a certain spiked Wigner model.

\begin{proof}[Proof of Theorem~\ref{thm:sbm}]
    It will be convenient to consider a modification of the SBM that allows self-loops. Specifically, the edge $(i,i)$ occurs with probability $\frac{d}{n}$ under $\QQ$ and with probability $(1+\frac{\eta}{\sqrt{2}}(k-1))\frac{d}{n}$ under $\PP$. It is clear from the variational formula~\eqref{eq:ldlr-var} for $\|L^{\le D}\|$ that revealing this extra information can only increase $\|L^{\le D}\|$.
    
    In order to place the SBM in the setting of Proposition~\ref{prop:binary-compare}, take $N = n(n+1)/2$ with a variable $\bY_{i,j}$ for every $i \le j$. Let $p = d/n$. In order to ensure $\EE_\QQ[\bY_{i,j}] = 0$ and $\EE_\QQ[\bY_{i,j}^2] = 1$, take $\bY_{i,j} = b \colonequals \sqrt{(1-p)/p}$ if edge $(i,j)$ is present, and $\bY_{i,j} = a \colonequals -\sqrt{p/(1-p)}$ otherwise.
    
    We now define $\bX$ appropriately. Conditioned on the community structure, edge $(i,j)$ occurs with probability $(1+\bDelta_{i,j})p$ where $\bDelta_{i,i} = \eta(k-1)/\sqrt{2}$ and $\bDelta_{i,j}$ (for $i < j$) is either $\eta(k-1)$ or $-\eta$ depending on whether $i$ and $j$ belong to the same community or not, respectively. Using the fact $pb + (1-p)a = 0$ (twice), this means we should take
    \[ 
        \bX_{i,j} = \EE[\bY_{i,j} | \bX_{i,j}] = (1+\bDelta_{i,j})pb + [1-(1+\bDelta_{i,j})p]a = \bDelta_{i,j} p (b-a) = -\bDelta_{i,j} a = \bDelta_{i,j} \sqrt{\frac{p}{1-p}}. 
    \]
    Let $\bU$ be the $n \times k$ matrix whose $i$th row is $\sqrt{k} e_{\bk_i} - \onesvec/\sqrt{k}$ where $\bk_i \in [k]$ is the community assignment of vertex $i$. One can check that $(\bU\bU^\top)_{i,j} = k \indicator{\bk_i = \bk_j} - 1$ and so $\bX_{i,j} = \eta \sqrt{\frac{p}{1-p}} (\bU\bU^\top)_{ij}$ for $i < j$, and $\bX_{i,i} = \frac{\eta}{\sqrt{2}} \sqrt{\frac{p}{1-p}} (\bU\bU^\top)_{i,i}$. Therefore $\langle \bX,\bX' \rangle = \frac{\eta^2}{2} \frac{p}{1-p} \langle \bU\bU^\top, \bU'(\bU')^\top \rangle$. By Proposition~\ref{prop:binary-compare},
    \[ 
        \|L^{\le D}\|^2 \le \sum_{d=0}^D \frac{1}{d!} \EE \left(\frac{\eta^2}{2} \frac{p}{1-p} \langle \bU\bU^\top, \bU'(\bU')^\top \rangle\right)^d. 
    \]
    Comparing this to~\eqref{eq:L-wig} reveals that this is precisely the expression for $\|L^{\le D}\|^2$ in the general spiked Wigner model with spike prior $\sX_k$ (see Definition~\ref{def:pi-k}), except in place of $\lambda^2$ we have $\frac{\eta^2 d}{1-p} = (1+o(1))\eta^2 d$. Therefore, appealing to the Wigner result (Theorem~\ref{thm:general-wigner}), we have $\|L^{\le D}\| = O(1)$ provided $d \eta^2 < 1$.
\end{proof}

\end{document}